\newcommand{\var}[1]{\mbox{Var} \left[ #1 \right]}
\newcommand{\cov}[1]{\mbox{Cov} \left[ #1 \right]}
\newcommand{\be}{\begin{equation}}
\newcommand{\ee}{\end{equation}}
\newcommand{\bear}{\begin{eqnarray}}
\newcommand{\eear}{\end{eqnarray}}
\newcommand{\bears}{\begin{eqnarray*}}
\newcommand{\eears}{\end{eqnarray*}}
\newcommand{\bi}{\begin{itemize}}
\newcommand{\ei}{\end{itemize}}
\newcommand{\ben}{\begin{enumerate}}
\newcommand{\een}{\end{enumerate}}
\newcommand{\beq}{\begin{equation}}
\newcommand{\eeq}{\end{equation}}
\newcommand{\lp}{ \left(}
\newcommand{\rp}{ \right)}
\newtheorem{theorem}{Theorem}[section]
\newtheorem{defn}[theorem]{Definition} 
\newtheorem{lemma}[theorem]{Lemma}
\newtheorem{corollary}[theorem]{Corollary}
\newcommand{\abf}{\mbox{${\bf a }$} }
\newcommand{\bbf}{\mbox{${\bf b }$} }
\newcommand{\gbf}{\mbox{${\bf g }$} }
\newcommand{\ubf}{\mbox{${\bf u }$} }
\newcommand{\xbf}{\mbox{${\bf x }$} }
\newcommand{\ybf}{\mbox{${\bf y }$} }
\newcommand{\zbf}{\mbox{${\bf z }$} }
\newcommand{\Abf}{\mbox{${\bf A }$} }
\newcommand{\Bbf}{\mbox{${\bf B }$} }
\newcommand{\Fbf}{\mbox{${\bf F }$} }
\newcommand{\Gbf}{\mbox{${\bf G }$} }
\newcommand{\Hbf}{\mbox{${\bf H }$} }
\newcommand{\Ibf}{\mbox{${\bf I }$} }
\newcommand{\Qbf}{\mbox{${\bf Q }$} }
\newcommand{\Sbf}{\mbox{${\bf S }$} }
\newcommand{\Ubf}{\mbox{${\bf U }$} }
\newcommand{\Vbf}{\mbox{${\bf V }$} }
\newcommand{\Xbf}{\mbox{${\bf X }$} }
\newcommand{\Ybf}{\mbox{${\bf Y }$} }
\newcommand{\Zbf}{\mbox{${\bf Z }$} }
\newcommand{\Zerobf}{\mbox{${\large{\bf 0 }}$} }
\newcommand{\FF}{\mbox{$\mathbb{F}$} }
\newcommand{\CC}{\mbox{$\mathbb{C}$} }
\newcommand{\ZZ}{\mbox{$\mathbb{Z}$} }
\newcommand{\EE}{\mbox{$\mathbb{E}$} }
\newcommand{\Prob}{\mbox{${\mathbb P}$} }
\newcommand{\prob}[1]{\Prob \left\{ #1 \right\}}
\newenvironment{proofOutline}{
    \noindent
    {\bf Proof outline:}
}{
    \hfill$\blacksquare$
}
\newcommand{\SNR}{{\sf SNR}}
\begin{document}

\title{Wireless Network Information Flow:\\ A Deterministic Approach}


\author{A.~Salman~Avestimehr,~\IEEEmembership{Member,~IEEE,}
        Suhas~N.~Diggavi,~\IEEEmembership{Member,~IEEE,}
        and~David~N.~C. Tse,~\IEEEmembership{Fellow,~IEEE}
\thanks{A. S. Avestimehr is with the School of Electrical and Computer Engineering, Cornell University, Ithaca, USA. Email: {\sffamily avestimehr@ece.cornell.edu}.}
\thanks{S. N. Diggavi is with the Department of Electrical Engineering, UCLA, Los Angeles, USA. Email: {\sffamily suhas@ee.ucla.edu}.}
\thanks{D. N. C. Tse is with the Department of Electrical Engineering and Computer Sciences, UC Berkeley, Berkeley, USA. Email: {\sffamily dtse@eecs.berkeley.edu}.}
\thanks{The research of D. Tse and A. Avestimehr were supported in part by
the National Science Foundation under grants 0326503, 0722032 and 0830796, and by a gift from Qualcomm Inc. The
research of S. Diggavi was supported in part by the Swiss National
Science Foundation NCCR-MICS center.}
\thanks{Manuscript received July 28, 2009; revised June 1, 2010 and August 5, 2010. Date of current version September 16, 2010.}
\thanks{Communicated by M. Franceschetti, Associate Editor for Communication Networks. }}



\maketitle

\begin{abstract}
In a wireless network with a single source and a single destination
and an arbitrary number of relay nodes, what is the maximum rate of
information flow achievable? We make progress on this long standing
problem through a two-step approach. First we propose a deterministic
channel model which captures the key wireless properties of signal
strength, broadcast and superposition. We obtain an exact
characterization of the capacity of a network with nodes connected by
such deterministic channels. This result is a natural generalization
of the celebrated max-flow min-cut theorem for wired networks. Second, we use the
insights obtained from the deterministic analysis to design a new {\em
  quantize-map-and-forward} scheme for Gaussian networks. In this scheme, each relay quantizes the received signal at the
noise level and maps it to a random Gaussian codeword for forwarding, and the final destination decodes the source's message based on  the received signal. We show
that, in contrast to existing schemes, this scheme can achieve the
cut-set upper bound to within a gap which is independent of the
channel parameters. In the case of the relay channel with a single
relay as well as the two-relay Gaussian diamond network, the gap is $1$
bit/s/Hz. Moreover, the scheme is universal in the sense that the
relays need no knowledge of the values of the channel parameters to (approximately) achieve the rate supportable by the network.  We also present
extensions of the results to multicast networks, half-duplex networks
and ergodic networks.
\end{abstract}
\begin{IEEEkeywords}
Information flow, network capacity, network information theory, relay networks, wireless networks. \end{IEEEkeywords}

\section{Introduction}
\label{sec:intro}

Two main distinguishing features of wireless communication are:
\begin{itemize}

\item {\em broadcast}:
  wireless users communicate over the air and signals from any one
  transmitter are heard by multiple nodes with possibly different
  signal strengths.

\item {\em superposition}: a wireless node receives
  signals from multiple simultaneously transmitting nodes, with the
  received signals all superimposed on top of each other.
\end{itemize}

Because of these effects, links in a wireless network are never
isolated but instead interact in seemingly complex ways. On the one
hand,  this facilitates the spread of information among users in a
network; on the other hand it can be harmful by creating signal
interference among users. This is in direct contrast to wired
networks, where transmitter-receiver pairs can be thought of as
isolated point-to-point links. Starting from the max-flow-min-cut theorem of Ford-Fulkerson \cite{FF56}, there has been significant progress in understanding
network flow over wired networks. Much less, however, is known for wireless networks.

The linear additive Gaussian channel model is a commonly used model to
capture signal interactions in wireless channels. Over the past couple
of decades, capacity study of Gaussian networks has been an active
area of research. However, due to the complexity of the Gaussian
model, except for the simplest networks such as the one-to-many
Gaussian broadcast channel and the many-to-one Gaussian multiple
access channel, the capacity of most Gaussian networks is still
unknown. For example, even the capacity of a Gaussian single-relay
network, in which a point to point communication is assisted by one
relay, has been open for more than 30 years.  In order to make
progress on this problem, we take a two-step approach. We first focus
on the signal interaction in wireless networks rather than on the
noise. We present a new deterministic channel model which is
analytically simpler than the Gaussian model but yet still captures
three key features of wireless communication: channel strength, broadcast. and
superposition. A motivation to study such a model is that in contrast
to point-to-point channels where noise is the only source of
uncertainty, networks often operate in the {\em interference-limited}
regime where the noise power is small compared to signal
powers. Therefore, for a first level of understanding, our focus is on
such signal interactions rather than the background noise.  Like the
Gaussian model, our deterministic model is linear, but unlike the
Gaussian model, operations are on a finite-field. The
  simplicity of {\em scalar} finite-field channel models has also been noted
  in \cite{SevenPeoplePaper}. We provide
a complete characterization of the capacity of a network of nodes
connected by such deterministic channels. The first result is a natural
generalization of the max-flow min-cut theorem for wired networks.

The second step is to utilize the insights from the deterministic
analysis to find ``approximately optimal'' communication schemes for  Gaussian
relay networks. The analysis for  deterministic networks not only
gives us insights for potentially successful coding schemes for the
Gaussian case, but also gives tools for the proof techniques used.
We show that in Gaussian networks, an {\em approximate}
max-flow min-cut result can be shown, where the approximation is
within an additive constant which is universal over the values of the channel parameters
 (but could depend on the number of nodes in
the network). For example, the additive gap for both the single-relay
network and for the two-relay diamond network is $1$ bit/s/Hz. This is
the first result we are aware of that provides such performance
guarantees on relaying schemes. To highlight the strength of this
result, we demonstrate that none of the existing strategies in the
literature, like amplify-and-forward, decode-and-forward and Gaussian
compress-and-forward, yield such a universal approximation for
arbitrary networks. Instead, a scheme, which we term {\em
  quantize-map-and-forward}, provides such a universal approximation.

In this paper we focus on unicast and multicast communication
scenarios. In the unicast scenario, one source wants to communicate
to a single destination. In the multicast scenario source wants
to transmit the {\em same} message to multiple destinations. Since in
these scenarios, all destination nodes are interested in the same
message, there is effectively only one information stream in the network. Due to
the broadcast nature of the wireless medium, multiple copies of a
transmitted signal {\em are}  received at different relays and
superimposed with other received signals.  However, since they are
all a function of the same message, they are not considered as
interference. In fact, the quantize-map-and-forward strategy exploits
this broadcast nature by forwarding all the available information
received at the various relays to the final destination. This is in
contrast to more classical approaches of dealing with simultaneous
transmissions by either avoiding them through transmit scheduling or
treating signals from all nodes other than the intended transmitter
as interference adding to the noise floor. These approaches attempt
to convert the wireless network into a wired network but  are
strictly sub-optimal.

\subsection{Related Work}

In the literature, there has been extensive research over the last
three decades   to characterize the capacity of relay networks. The
single-relay channel was first introduced in 1971 by van der Meulen
\cite{Meulen} and the most general strategies for this network were
developed by Cover and El Gamal \cite{coverElgamal}. There has also
been a significant effort to generalize these ideas to arbitrary
multi-relay networks with simple channel models. An early
attempt was done in the Ph.D. thesis of Aref \cite{ArefThesis} where
a max-flow min-cut result was established to characterize the
unicast capacity of a deterministic broadcast relay network  \emph{without superposition}. This was an early precursor to network coding which
established the multicast capacity of wired networks, a
deterministic capacitated graph without  broadcast or superposition
\cite{ACLY00,LinNetCod,KoetterMedard}. These two ideas were combined
in \cite{RK06}, which established a max-flow min-cut
characterization for multicast flows for ``Aref networks''. However,
such complete characterizations are not known for arbitrary (even
deterministic) networks with both broadcast and superposition. One
notable exception is the work \cite{GuptaBhadraShakkottaiIntNet}
which takes a scalar deterministic linear finite-field model and
uses probabilistic erasures to model channel failures. For this
model using results of erasure broadcast networks
\cite{HassibiErasureNet}, they established an asymptotic result on
the unicast capacity as the field size grows. However, in all these
works there is no connection between the proposed channel
model and the physical wireless channel.

There has also been a rich body of literature in directly tackling
the noisy relay network capacity problem. In \cite{ScheinThesis} the
``diamond'' network of parallel relay channels with no direct link
between the source and the destination was examined.  Xie and Kumar
generalized the decode-forward encoding scheme for a network of
multiple relays \cite{XieKumarNetInfTheory}. Kramer et al.
\cite{KramerGastparGuptaRelay} also generalized the compress-forward
strategy to networks with a single layer of relay nodes. Though
there have been many interesting and important ideas developed in
these papers, the capacity characterization of Gaussian relay
networks is still unresolved. In fact even a performance guarantee,
such as establishing how far these schemes are from an upper bound
is unknown. In fact, as we will see in Section \ref{sec:motivation}, these
strategies do not yield an approximation guarantee for general
networks.

There are subtle but critical differences between the
quantize-map-forward strategy, proposed in this paper, with the natural extension of
compress-forward to networks for the following reasons. The
compress-forward scheme proposed in \cite{coverElgamal}, quantized the
received signal and then mapped the digital bin index onto the transmit
sequence. This means that we need to make choices on the binning rates
at each relay node. However, the quantize-map-forward scheme proposed
in this paper directly maps the the quantized sequence to the transmit
sequence, and therefore does not make such choices on the binning
rates. In fact this gives the scheme a ``universality'' property, which
allows the same relay operation to work for multiple destinations
(multicast) and network situations (compound networks); a property
that could fail to hold if specific choices of binning rates were
made.  Moreover, our scheme unlike the classical compress-forward scheme, does not require the
quantized values at the relays to be reconstructed at the destination, while it is
attempting to decode the transmitted message. These are the essential differences
between our scheme and the  traditional compress-forward, or the natural network generalization
of it.

Our results are connected to the concept of network coding in several ways.  The most direct connection is that our results on the multicast capacity of deterministic networks are direct generalizations of network coding results  \cite{ACLY00,LinNetCod,KoetterMedard,YeungMono,ChristinaMono} as well as Aref networks \cite{ArefThesis,RK06}. The coding techniques for the deterministic case are inspired by and generalize the random network coding technique of  \cite{ACLY00} and the linear coding technique of \cite{LinNetCod,KoetterMedard,Ho07}. The quantize-map-and-forward technique proposed in this paper for the Gaussian wireless networks uses the insights from the deterministic framework and is philosophically the network coding technique generalized to noisy wireless networks.

\subsection{Outline of the paper}

We first develop an analytically simple linear finite-field model and motivate it by
connecting it to the Gaussian model in the context of several simple
multiuser networks. We also discuss its limitations. This is done in Section \ref{sec:model}. This
model also suggests achievable strategies to explore in
Gaussian relay networks, as done in Section \ref{sec:motivation},
where we illustrate the deterministic approach on several
progressively more complex example networks. The deterministic model
also makes clear that several well-known strategies can be in fact
arbitrarily far away from optimality in these example networks.

Section \ref{sec:mainResults} summarizes the main results of the
paper.  Section \ref{sec:detCapacity} focuses on the capacity
analysis of networks with nodes connected by deterministic channels.
We examine arbitrary deterministic channel model (not
necessarily linear nor finite-field) and establish an achievable
rate for an arbitrary network. For the special
case of linear finite-field deterministic models, this
achievable rate matches the cut-set bound, therefore
exact characterization is possible. The achievable strategy
involves each node randomly mapping the received signal to a
transmitted signal, and the final destination solving for the
information bits from all the received equations.

The examination of the deterministic relay network motivates the
introduction of a simple {\em quantize-map-and-forward} strategy for
general Gaussian relay networks. In this scheme each relay first
quantizes the received signal at the noise level, then randomly maps
it to a Gaussian codeword and transmits it\footnote{This is distinct
  from the compress and forward scheme studied in \cite{coverElgamal}
  where the quantized value is to be reconstructed at the
  destination. Our scheme does not require the quantized values to be
  reconstructed, but just the source codeword to be decoded.}.  In
Section \ref{sec:GaussCapacity} we use the insights of the
deterministic result to demonstrate that we can achieve a rate that is
guaranteed to be within a constant gap from the cut-set upper bound on
capacity. As a byproduct, we show in Section \ref{sec:connections}
that a deterministic model formed by quantizing the received signals
at noise level at all nodes and then removing the noise is within a
constant gap to the capacity of the Gaussian relay network.

In Section \ref{sec:extensions}, we show that the quantize-map-and-forward scheme has the
desirable property that the relay nodes do not need the knowledge of
the channel gains. As long as the network can support a
given rate, we can achieve it without the relays' knowledge of the
channel gains. In Section \ref{sec:extensions}, we also establish
several other extensions to our results, such as relay networks with
half-duplex constraints, and relay networks with fading or frequency
selective channels.

\section{Deterministic modeling of wireless channel}
\label{sec:model}

The goal of this section is to introduce the linear deterministic
model and illustrate how we can deterministically model three key
features of a wireless channel.

\subsection{Modeling signal strength}
Consider the \textit{real} scalar Gaussian model for a point-to-point
link, \beq \label{eq:p2p} y=hx+z \eeq where $z \sim \mathcal{N}
(0,1)$. There is also an average power constraint $E[|x|^2]\leq 1$ at
the transmitter. The transmit power and noise power are both
normalized to be equal to 1 and the channel gain $h$ is related to the signal-to-noise ratio ($\SNR$) by \beq
|h|=\sqrt{\SNR}. \eeq It is well known that the capacity of this
point-to-point channel is
\begin{eqnarray} 
\label{eq:p2p_Gaussian_cap} 
C_{\text{AWGN}}& =&  \frac{1}{2} \log \lp 1+ \SNR \rp. 
\end{eqnarray} 
To get an intuitive understanding of this capacity formula let us
write the received signal in Equation (\ref{eq:p2p}), $y$, in terms of
the binary expansions of $x$ and $z$. For simplicity assuming $h$, $x$
and $z$ are positive real numbers and $x$ has a peak power constraint
of 1, we have \beq y= 2^{\frac{1}{2}\log \SNR}
\sum_{i=1}^{\infty}x(i)2^{-i} +
\sum_{i=-\infty}^{\infty}z(i)2^{-i}.\eeq To simplify the effect of
background noise assume it has a peak power equal to 1. Then we can
write
\begin{eqnarray} y&=& 2^{\frac{1}{2}\log \SNR}
  \sum_{i=1}^{\infty}x(i)2^{-i} +
  \sum_{i=1}^{\infty}z(i)2^{-i} \end{eqnarray}
or,
\beq y \approx 2^{n} \sum_{i=1}^{n}x(i)2^{-i} + \sum_{i=1}^{\infty}\lp x(i+n)+z(i) \rp 2^{-i}
\eeq
where $n= \lceil \frac{1}{2} \log \SNR \rceil^+$. Therefore if we just
ignore the 1 bit of the carry-over from the second summation
($\sum_{i=1}^{\infty}\lp x(i+n)+z(i) \rp 2^{-i}$) to the first
summation ($2^{n} \sum_{i=1}^{n}x(i)2^{-i}$) we can approximate
a point-to-point Gaussian channel as a pipe that truncates the
transmitted signal and only passes the bits that are above the noise
level. Therefore think of transmitted signal $x$ as a sequence of bits
at different signal levels, with the highest signal level in $x$ being
the most significant bit  and the lowest level being the least
significant bit. In this simplified model the receiver can see
the $n$ most significant bits of $x$ without any noise and the rest
are not seen at all. There is a correspondence between $n$ and
$\SNR$ in dB scale,
\beq
n \leftrightarrow \lceil \frac{1}{2} \log  \SNR \rceil^+.
\eeq
This simplified model, shown in Figure \ref{fig:p2p_det}, is deterministic. Each circle in the figure represents a signal level which holds a binary digit for transmission. The most significant $n$ bits are received at the destination while less significant bits are not.

These signal levels can potentially be created using a multi-level
lattice code in the AWGN channel \cite{ForneyMultiLevel}. Then the
first $n$ levels in the deterministic model represent those levels (in
the lattice chain) that are above noise level, and the remaining are
the ones that are below noise level. We can algebraically write this input-output relationship
by shifting $\mathbf{x}$ down by $q-n$ elements
\beq \mathbf{y}={\bf S^{q-n}}\mathbf{x} \eeq
where $\mathbf{x}$ and $\mathbf{y}$ are binary vectors of length $q$
denoting transmit and received signals respectively and $\Sbf$ is the
$q \times q$ shift matrix, \beq \Sbf=\left(
         \begin{array}{ccccc}
           0 & 0 & 0 & \cdots & 0 \\
           1 & 0 & 0 & \cdots & 0 \\
           0 & 1 & 0 & \cdots & 0 \\
           \vdots & \ddots & \ddots & \ddots & \vdots \\
           0 & \cdots & 0 & 1 & 0 \\
         \end{array}
       \right).
 \eeq

\begin{figure}
     \centering

       \includegraphics{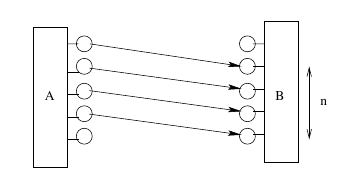}
 \caption{Pictorial representation of the deterministic model for point-to-point channel. \label{fig:p2p_det}}
\end{figure}

The capacity of this deterministic point-to-point channel is
$n$, where $n= \lceil \frac{1}{2} \log \SNR \rceil^+$. This capacity is within $\frac{1}{2}$-bit
approximation of the capacity of the AWGN channel. In the
case of complex Gaussian channel we set $n= \lceil \log \SNR \rceil^+$
and we get an approximation within 1-bit of the capacity.

\subsection{Modeling broadcast}

Based on the intuition obtained so far, it is straightforward to think
of a deterministic model for a broadcast scenario. Consider the real
scalar Gaussian broadcast channel (BC). Assume there are only two
receivers. The received $\SNR$ at receiver $i$ is denoted by $\SNR_i$
for $i=1,2$ ($\SNR_2 \le
\SNR_1$). Consider the binary expansion of the transmitted signal,
$x$. Then we can deterministically model the Gaussian broadcast
channel as the following:
\begin{itemize}
\item Receiver 2 (weak user) receives only the most significant $n_2$ bits in the
  binary expansion of $x$. Those bits are the ones that arrive above
  the noise level.
\item Receiver 1 (strong user) receives the most significant $n_1$ ($n_1>n_2$)
  bits in the binary expansion of $x$. Clearly these bits contain what
  receiver 2 gets.
\end{itemize}

The deterministic model makes explicit the functioning of superposition coding and successive interference cancellation decoding
in the Gaussian broadcast channel. The most significant $n_2$ levels in
the deterministic model represent the cloud center that is decoded by
both users, and the remaining $n_1-n_2$ levels represent the cloud
detail that is decoded only by the strong user (after decoding the
cloud center and canceling it from the received signal).

Pictorially the deterministic model is shown in Figure \ref{fig:bc} (a). In this particular example
$n_1=5$ and $n_2=2$, therefore both users receive the two most
significant bits of the transmitted signal. However user 1 (strong
user) receives three additional bits from the next three signal levels
of the transmitted signal. There is also the same correspondence
between $n$ and channel gains in dB: 
\beq 
\label{eq:chGainRelBC} 
n_i\leftrightarrow \lceil \frac{1}{2}\log \SNR_i \rceil^+, \quad i=1,2. 
\eeq

To analytically demonstrate how closely we are modeling the Gaussian
BC channel, the capacity region of the Gaussian BC channel and
the deterministic BC channel are shown in Figure \ref{fig:bc} (b). As it
is seen their capacity regions are very close to each other. In fact
it is easy to verify that for all SNR's these regions are always
within one bit per user of each other, that is, if $(R_1,R_2)$ is in the capacity region of the deterministic BC then
there is a rate pair within one bit component-wise of $(R_1,R_2)$ that is in
the capacity region of the Gaussian BC. However, this is only the
worst-case gap and in the typical case where $\SNR_1$ and $\SNR_2$ are
very different, the gap is much smaller than one bit.

\begin{figure*}
     \centering
     \subfigure[Pictorial representation of the deterministic model for Gaussian BC]{
       \includegraphics{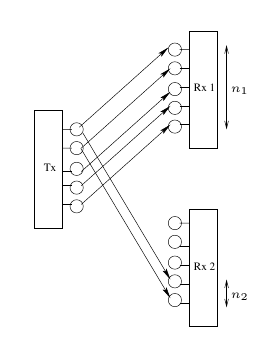} }
    \hspace{.5in}
    \subfigure[Capacity region of Gaussian BC (solid line) and deterministic BC (dashed line).]{
       \includegraphics{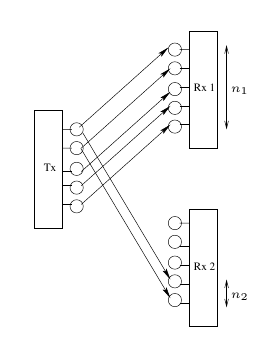}}
     \caption{Pictorial representation of the deterministic model for
       Gaussian BC is shown in (a). Capacity region of Gaussian and
       deterministic BC are shown in (b). \label{fig:bc}}
\end{figure*}

\subsection{Modeling superposition}

Consider a superposition scenario in which two users are
simultaneously transmitting to a node. In the Gaussian model the
received signal can be written as
\beq y=h_1 x_1+h_2 x_2+z. \eeq
To intuitively see what happens in superposition in the Gaussian
model, we again write the received signal, $y$, in terms of the binary
expansions of $x_1$, $x_2$ and $z$. Assume $x_1$, $x_2$ and $z$ are
all positive real numbers smaller than one, and also the channel gains are \beq
h_i=\sqrt{\SNR_i}, \quad i=1,2. \eeq Without loss of generality assume
$\SNR_2 < \SNR_1$. Then  
{\small \beq \nonumber y= 2^{\frac{1}{2}\log \SNR_1}
\sum_{i=1}^{\infty}x_1(i)2^{-i} + 2^{\frac{1}{2}\log \SNR_2}
\sum_{i=1}^{\infty}x_2(i)2^{-i} +
\sum_{i=-\infty}^{\infty}z(i)2^{-i}. \eeq } To simplify the effect of
background noise assume it has a peak power equal to 1. Then we can
write 
{\small \beq \nonumber y= 2^{\frac{1}{2}\log \SNR_1}
\sum_{i=1}^{\infty}x_1(i)2^{-i} + 2^{\frac{1}{2}\log \SNR_2}
\sum_{i=1}^{\infty}x_2(i)2^{-i} + \sum_{i=1}^{\infty}z(i)2^{-i}\eeq}
or,
{\small \begin{eqnarray*}
 \nonumber y& \approx & 2^{n_1} \sum_{i=1}^{n_1-n_2}x_1(i)2^{-i} + 2^{n_2} \sum_{i=1}^{n_2}\lp x_1(i+n_1-n_2)+x_2(i)\rp 2^{-i} \\ &&+ \sum_{i=1}^{\infty}\lp x_1(i+n_1)+x_2(i+n_2)+ z(i) \rp 2^{-i}
\end{eqnarray*}}
where $n_i= \lceil \frac{1}{2} \log \SNR_i \rceil ^+$ for
$i=1,2$. Therefore based on the intuition obtained from the
point-to-point and broadcast AWGN channels, we can approximately model
this as the following:
\begin{itemize}
\item That part of $x_1$ that is above $\SNR_2$ ($x_1(i)$, $1 \leq i
  \leq n_1-n_2$) is received clearly without any contribution from
  $x_2$.
\item The remaining part of $x_1$ that is above noise level ($x_1(i)$,
  $n_1-n_2 < i \leq n_1$) and that part of $x_2$ that is above noise
  level ($x_1(i)$, $1 \leq i \leq n_2$) are superposed on each other and
  are received without any noise.
\item Those parts of $x_1$ and $x_2$ that are below noise level are
  truncated and not received at all.
\end{itemize}

The key point is how to model the superposition of the bits that
are received at the same signal level. In our deterministic model we
ignore the carry-overs of the real addition and we model the
superposition by the modulo 2 sum of the bits that are arrived at the
same signal level. Pictorially the deterministic model is shown in Figure \ref{fig:mac} (a). Analogous to the
deterministic model for the point-to-point channel, as seen in Figure
\ref{fig:AlgRepShiftMat}, we can write
\beq
\mathbf{y}={\bf S^{q-n_1}}\mathbf{x_1} \oplus {\bf
  S^{q-n_2}}\mathbf{x_2}
\eeq
where the summation is in $\FF_2$
(modulo 2). Here $\mathbf{x_i}$ ($i=1,2$) and $\mathbf{y}$ are binary
vectors of length $q$ denoting transmitted and received signals
respectively and $\Sbf$ is a $q \times q$ shift matrix.  The relationship between $n_i$'s and the channel gains is the same as in Equation (\ref{eq:chGainRelBC}).

\begin{figure}[h]
\begin{center}
\includegraphics{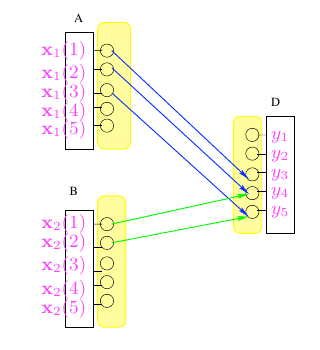}
\end{center}
\caption{Algebraic representation of shift matrix deterministic model.\label{fig:AlgRepShiftMat}}
\end{figure}

Compared to the point-to-point case we now have interaction
between the bits that are received at the same signal level at the
receiver. We limit the receiver to observe only the modulo
2 summation of those bits that arrive at the same signal level. This way of modeling signal interaction has two advantages over the simplistic collision model. First, if two bits arrive simultaneously at the same signal level, they are not both dropped and the receiver gets their modulo 2 summation. Second, unlike in the collision model
where the entire packet is lost when there is collision, the most
significant bits of the stronger user remain intact. This is
reminiscent of the familiar {\em capture} phenomenon in CDMA
systems: the strongest user can be heard even when multiple users
simultaneously transmit.

\begin{figure*}
     \centering
     \subfigure[Pictorial representation of the deterministic MAC.]{
       \includegraphics{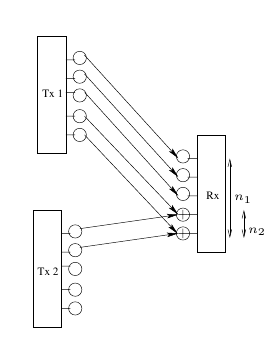}}
    \hspace{.5in}
    \subfigure[Capacity region of Gaussian MAC (solid line) and deterministic MAC (dashed line).]{
       \includegraphics{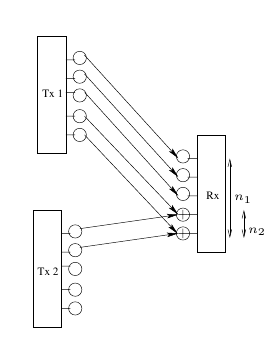}}
 \caption{Pictorial representation of the deterministic MAC is shown in (a). Capacity region of Gaussian and deterministic MACs are shown in (b). \label{fig:mac}}
\end{figure*}

Now we can apply this model to the Gaussian MAC,
in which \beq y=h_1 x_1+h_2 x_2+z \eeq where $z \sim \mathcal{CN}
(0,1)$. There is also an average power constraint equal to 1 at both
transmitters. A natural question is how close is the capacity region
of the deterministic model to that of the actual Gaussian
model. Assume $\SNR_2 < \SNR_1$.
The capacity region of this channel is known to be the set of non-negative pairs $(R_1,R_2)$ satisfying \begin{eqnarray}R_i & \leq & \log (1+\SNR_i), \quad i=1,2 \\
  R_1+R_2 & \leq & \log (1+\SNR_1+\SNR_2). \end{eqnarray} This region
is plotted with solid line in Figure \ref{fig:mac} (b).

It is easy to verify that the capacity region of the deterministic MAC
is the set of non-negative pairs $(R_1,R_2)$ satisfying
\begin{eqnarray}R_2 & \leq & n_2  \\
R_1+R_2 & \leq & n_1
\end{eqnarray}
where $n_i= \lceil \log \SNR_i \rceil^+ $ for $i=1,2$. This region is plotted with
dashed line in Figure \ref{fig:mac} (b). In this deterministic model
the ``carry-over'' from one level to the next that would happen with
real addition is ignored. However as we notice still the capacity
region is very close to the capacity region of the Gaussian model. In
fact it is easy to verify that they are within one bit per user of
each other. The intuitive explanation for this is that in real
addition once two bounded signals are added together the magnitude
can become as large as twice the larger of the two signals. Therefore the number of bits in the sum
is increased by at most one bit. On the other hand in finite-field
addition there is no magnitude associated with signals and the
summation is still in the same field as the individual
signals. So the gap between Gaussian and deterministic model for two
user MAC is intuitively this one bit of cardinality increase. Similar
to the broadcast example, this is only the worst case gap and when the
channel gains are different it is much smaller than one bit.

Now we define the linear finite-field deterministic model for the relay network.

\subsection{Linear finite-field deterministic model}
\label{subsec:LFFDetModel}

The relay network is defined using a set of vertices $\mathcal{V}$.
The communication
link from node $i$ to node $j$ has a non-negative integer
gain  $n_{ij}$ associated
with it. This number models the channel gain in the corresponding
Gaussian setting.  At each time $t$, node $i$ transmits a vector
${\bf x}_i[t] \in \FF_{p}^q$ and receives a vector ${\bf
  y}_i[t] \in \FF_{p}^q$ where $q=\max_{i,j}(n_{(ij)})$ and $p$ is a
positive integer indicating the field size. The received signal at
each node is a deterministic function of the transmitted signals at
the other nodes, with the following input-output relation: if the
nodes in the network transmit ${\bf x}_1[t], {\bf x}_2[t] , \ldots
{\bf x}_N[t]$ then the received signal at node j, $1 \leq j \leq N$
is:
\begin{equation}
\label{eq:LinDetModel}
{\bf y}_{j}[t]=\sum {\bf S}^{q-n_{ij}}{\bf x}_{i}[t]
\end{equation}
where the summations and the multiplications are in $\FF_{p}$. Throughout this
paper the field size, $p$, is assumed to be $2$, unless it is stated
otherwise.

\subsection{Limitation: Modeling MIMO}

The examples in the previous subsections may give the impression that
the capacity of any Gaussian channel is within a constant gap to that
of the corresponding linear deterministic model. The following example
shows that is not the case.

\begin{figure}
     \centering \subfigure[]{
       \scalebox{0.6}{\includegraphics{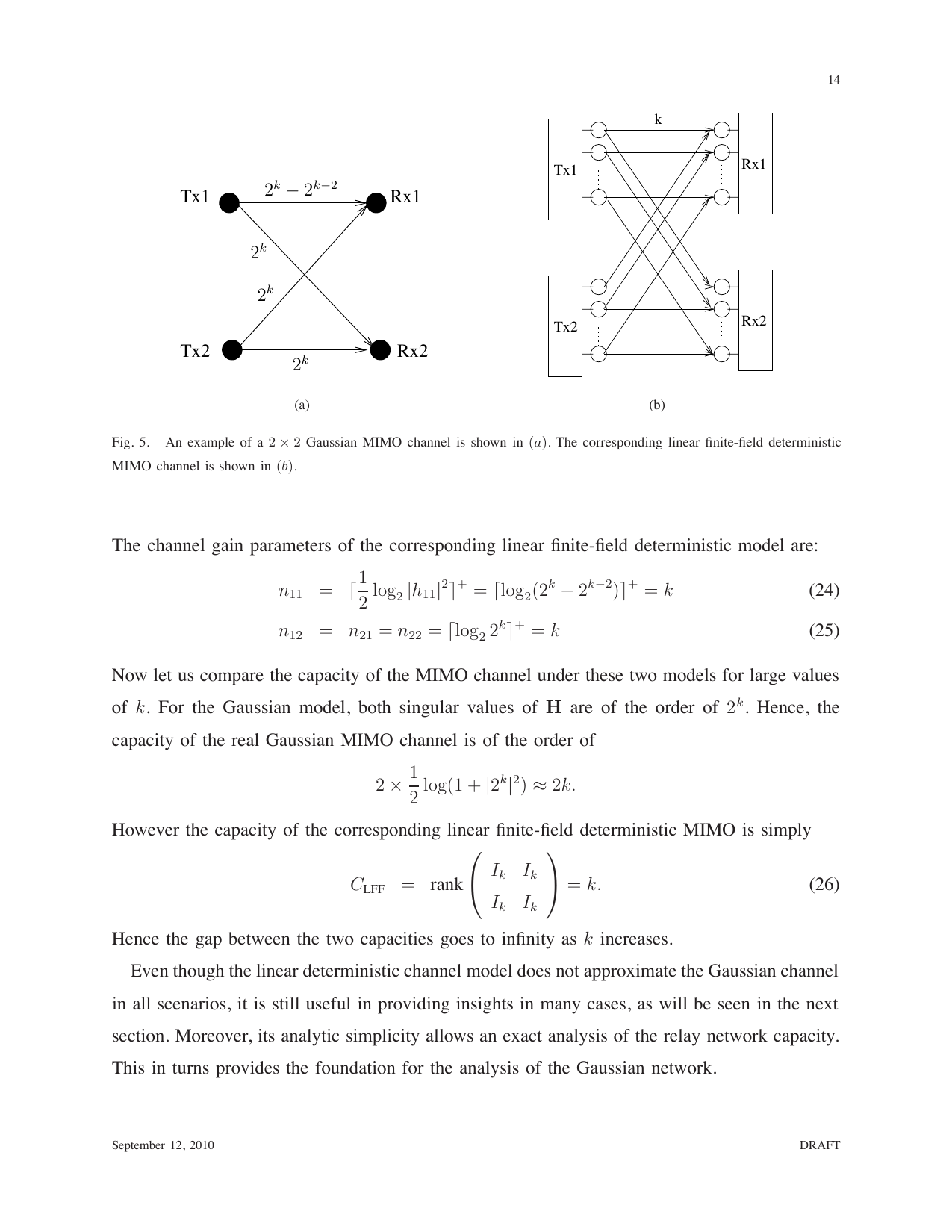}}
}
     \hspace{.1in}
     \subfigure[]{
       \scalebox{0.6}{\includegraphics{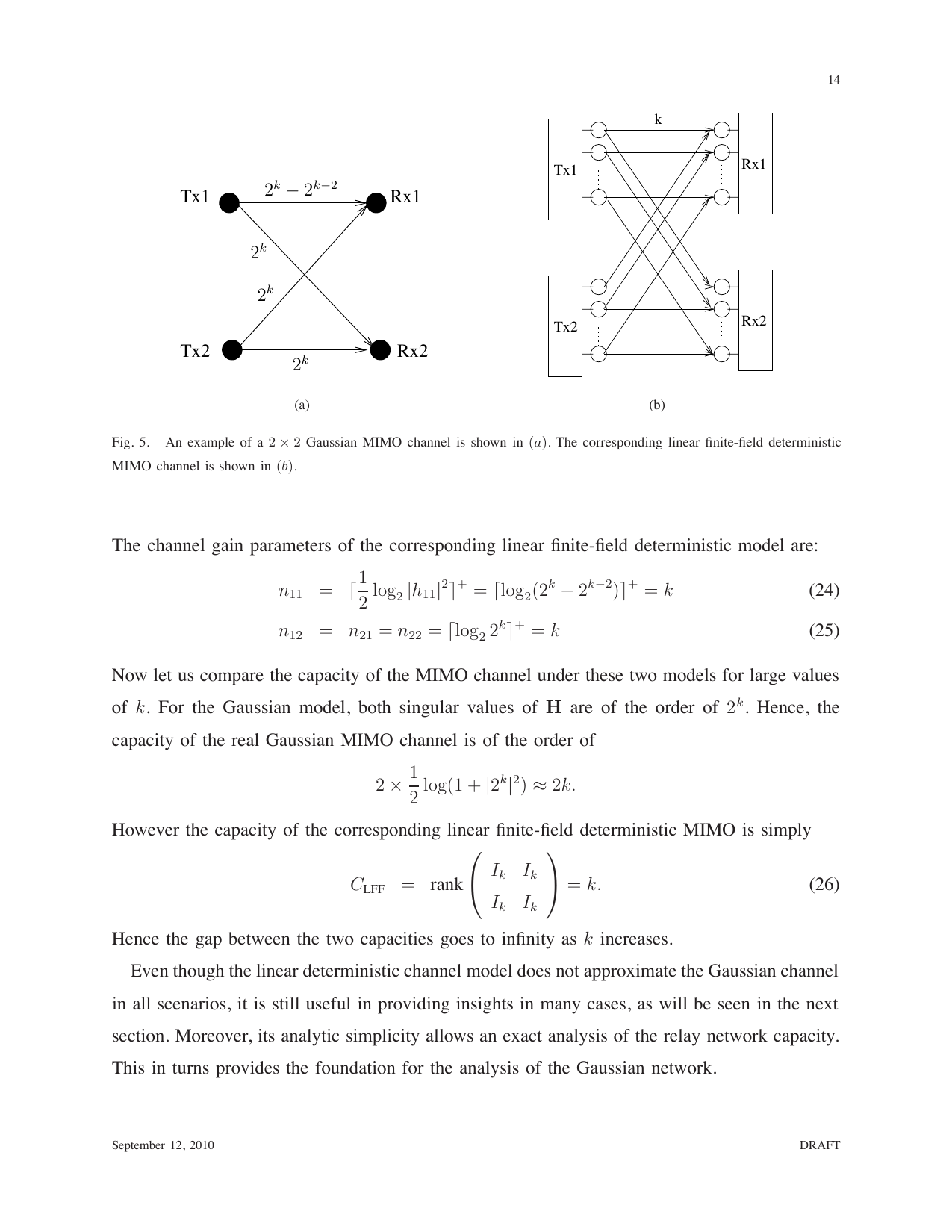}}
}
\caption{An example of a $2 \times 2$ Gaussian MIMO channel is shown in $(a)$.  The corresponding linear finite-field deterministic MIMO channel is shown in $(b)$.\label{fig:exMIMO}}
\end{figure}

Consider a $2 \times 2$ MIMO real Gaussian channel with channel gain values as shown in Figure \ref{fig:exMIMO} (a), where $k$ is an integer larger than 2. The channel matrix is
\beq {\bf H}=2^k \left(\begin{array}{cc}\frac{3}{4} & 1 \\1 & 1\end{array}\right).   \eeq
 The channel gain parameters of the corresponding linear finite-field deterministic model are:
\begin{eqnarray}
\nonumber n_{11} & = & \lceil \frac{1}{2}\log_2 |h_{11}|^2 \rceil^+ = \lceil \log_2 (2^k-2^{k-2}) \rceil^+=k\\
n_{12}&=& n_{21}=n_{22}= \lceil \log_2 2^k \rceil^+ =k
\end{eqnarray}
Now let us compare the capacity of the MIMO channel under these two models for large values of $k$. For the Gaussian model, both singular values of ${\bf H}$ are of the order of $2^k$. Hence, the capacity of the real Gaussian MIMO channel is of the order of
$$ 2 \times \frac{1}{2} \log (1+ |2^k|^2) \approx 2k.$$
However the capacity of the corresponding linear finite-field deterministic MIMO is simply
\begin{eqnarray}
\label{eq:LFFMIMOCap}C_{\text{LFF}}&=& \text{rank}  \left(\begin{array}{cc}I_k & I_k \\I_k & I_k\end{array}\right)  =k. 
\end{eqnarray}
Hence the gap between the two capacities goes to infinity as $k$ increases.

Even though the linear deterministic channel model does not
approximate the Gaussian channel in all scenarios, it is still 
useful in providing insights in many cases, as will be seen in the
next section. Moreover, its analytic simplicity allows an exact
analysis of the relay network capacity. This in turns provides the
foundation for the analysis of the Gaussian network.

\section{Motivation of our approach}
\label{sec:motivation}

In this section we motivate and illustrate our approach. We look at
three simple relay networks and illustrate how the analysis of these
networks under the simpler linear finite-field deterministic model
enables us to conjecture an approximately optimal relaying scheme for the
Gaussian case. We progress from the relay channel where
several strategies yield uniform approximation to more complicated
networks where progressively we see that several ``simple'' strategies
in the literature fail to achieve a constant gap. Using the
deterministic model we can whittle down the potentially successful
strategies. This illustrates the power of the deterministic
model to provide insights into transmission techniques for noisy
networks.

The network is assumed to be synchronized, {\em i.e.,} all
transmissions occur on a common clock. The relays are allowed to do
any {\em causal} processing. Therefore their current output depends
only its past received signals.  For any such network, there is a
natural information-theoretic cut-set bound \cite{CoverThomas}, which
upper bounds the reliable transmission rate $R$. Applied to the relay
network, we have the cut-set upper bound $\overline{C}$ on its
capacity: \beq
\label{eq:CutSetRef} \overline{C}=
\max_{p(\{\xbf_j\}_{j\in\mathcal{V}})} \min_{\Omega\in\Lambda_D}
I(\ybf_{\Omega^c};\xbf_{\Omega}|\xbf_{\Omega^c}) \eeq where
$\Lambda_D=\{\Omega:S\in\Omega,D\in\Omega^c\}$ is all
source-destination cuts. In words, the value of a
given cut $\Omega$ is the information rate achieved when the nodes
in $\Omega$ fully cooperate to transmit and the nodes in $\Omega^c$
fully cooperate to receive. In the case of Gaussian networks, this
is simply the mutual information achieved in a MIMO channel, the
computation of which is standard. We will use this cut-set bound to
assess how good our achievable strategies are.

\subsection{Single-relay network}
\label{subsec:oneRelayMotivation}

\begin{figure}
     \centering
     \subfigure[The Gaussian relay channel ]{
     \scalebox{0.6}{\includegraphics{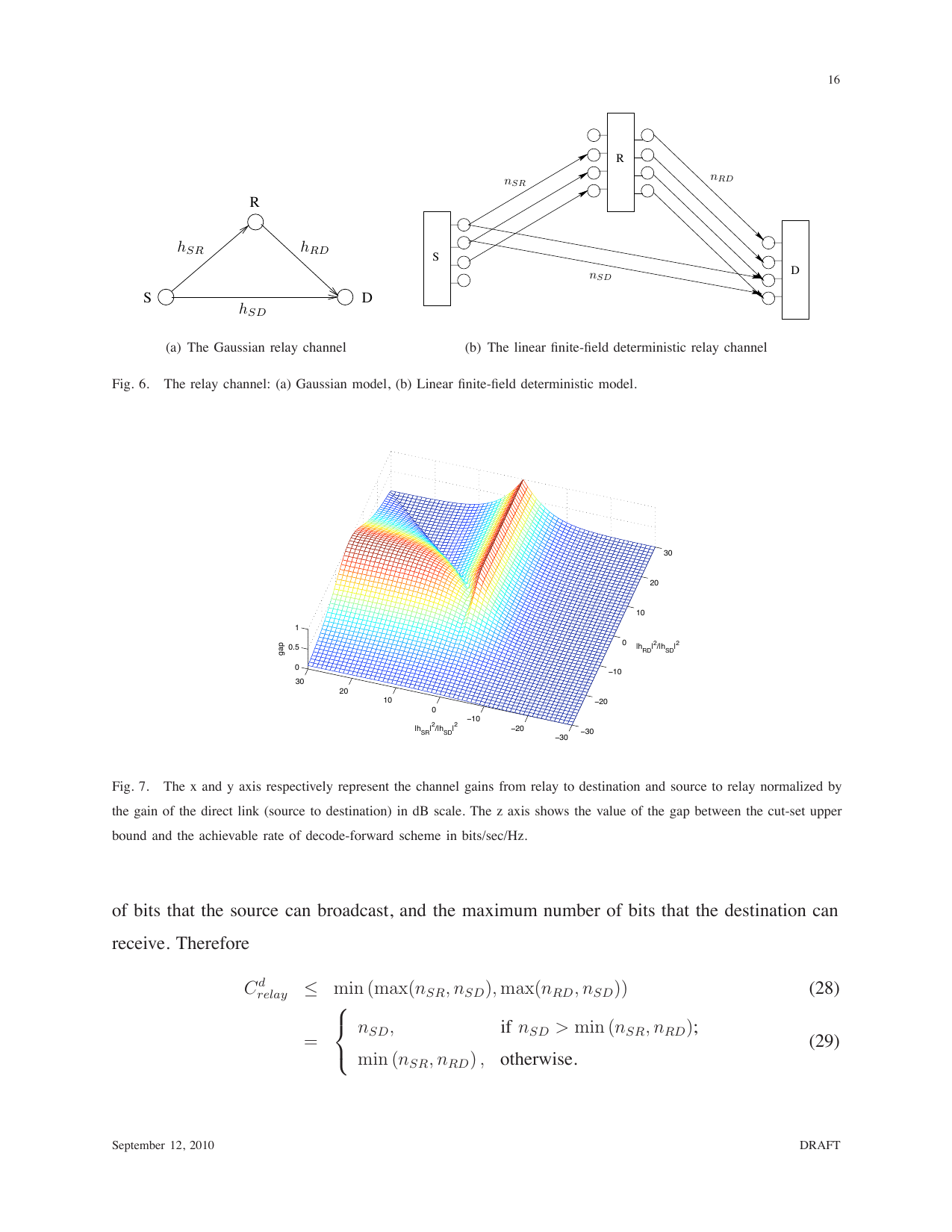}}
}
\hspace{-0.1in}
\subfigure[The  linear finite-field deterministic relay channel]{
       \scalebox{0.6}{\includegraphics{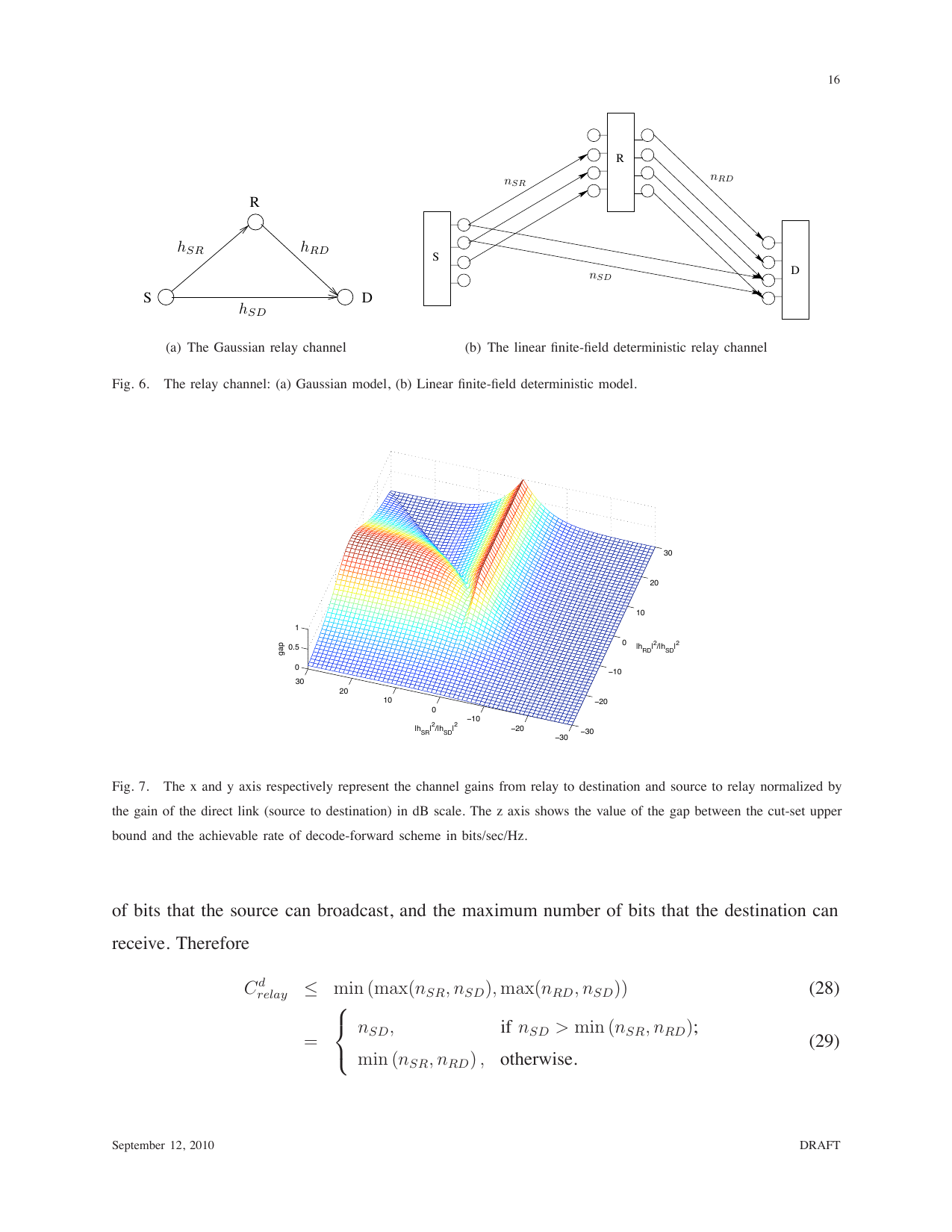}}
}
\caption{The relay channel: (a) Gaussian model, (b) Linear
finite-field deterministic model. \label{fig:oneRelay}}
\end{figure}

\begin{figure}
     \centering

       \scalebox{0.8}{\includegraphics{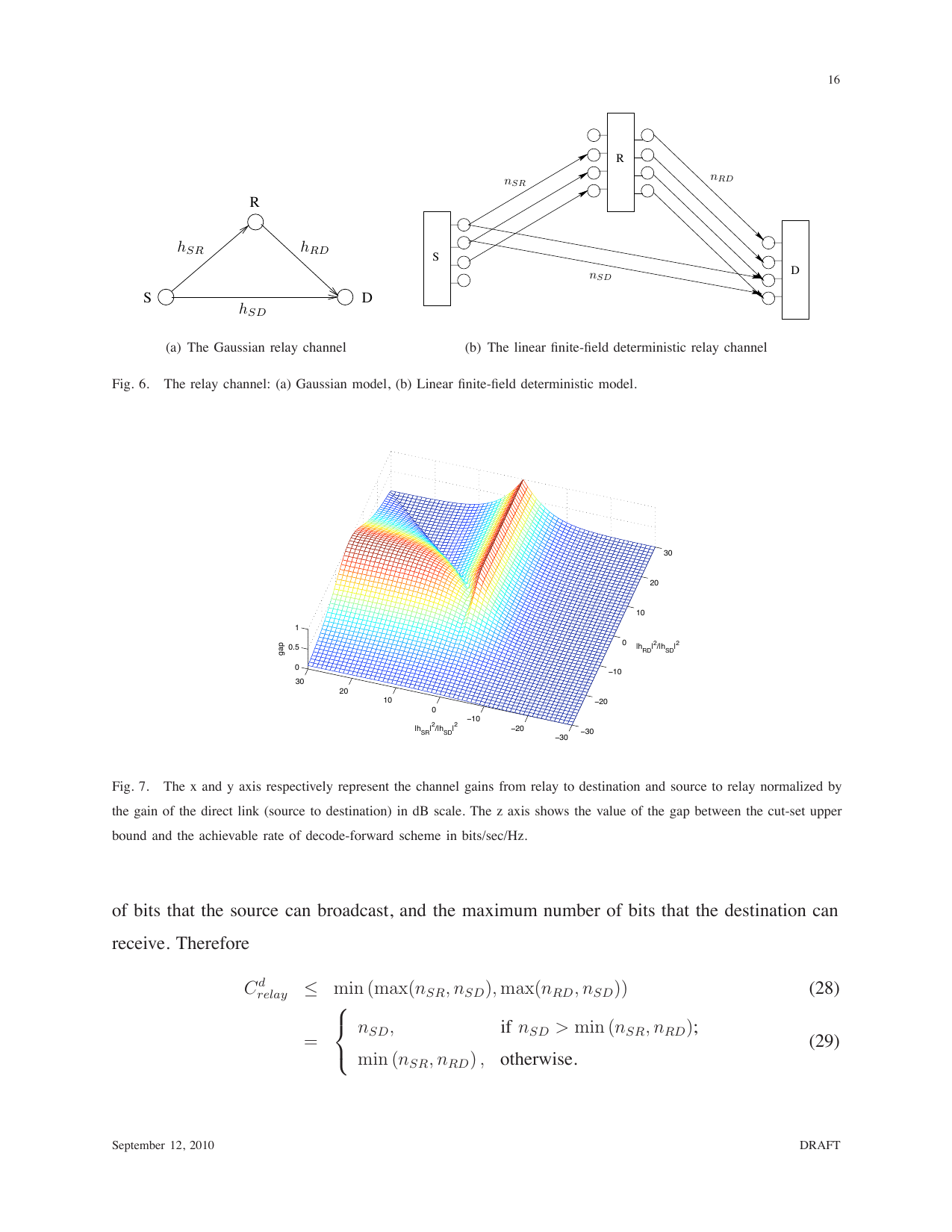}}
 \caption{The x and y axis respectively represent the channel gains from relay to
destination and source to relay normalized by the gain of the direct
link (source to destination) in dB scale. The z axis shows the value
of the gap between the cut-set upper bound and the achievable rate of decode-forward scheme in bits/sec/Hz.\label{fig:relay}}
\end{figure}

We start by looking at the simplest Gaussian relay network with only
one relay as shown in Figure \ref{fig:oneRelay} (a). To approximate its capacity uniformly (uniform
over all channel gains), we need to
find a relaying protocol that achieves a rate close to an upper bound
on the capacity for all channel parameters. To find such a scheme we
use the linear finite-field deterministic model to gain insight. The
corresponding linear finite-field deterministic model of this relay
channel with channel gains denoted by $n_{SR}$, $n_{SD}$ and $n_{RD}$
is shown in Figure \ref{fig:oneRelay} (b).  It is easy to see that the
capacity of this deterministic relay channel, $C_{relay}^d$, is
smaller than both the maximum number of bits that the source can broadcast, and the maximum number of bits that the destination can receive. Therefore
\begin{align} \nonumber & C_{relay}^d \leq \min \lp
\max(n_{SR},n_{SD}), \max(n_{RD},n_{SD}) \rp \\
\label{eq:det_relay_cap} & \quad =  \left\{%
\begin{array}{ll}
    n_{SD}, & \hbox{if $n_{SD}>\min\lp n_{SR},n_{RD} \rp $;} \\
    \min\lp n_{SR},n_{RD} \rp, & \hbox{otherwise.} \\
\end{array}%
\right.     \end{align} It is not difficult to see that this is
in fact the cut-set upper bound for the linear deterministic
network.

Note that Equation (\ref{eq:det_relay_cap}) naturally implies a
capacity-achieving scheme for this deterministic relay network: if
the direct link is better than any of the links to/from the relay
then the relay is silent, otherwise it helps the source by decoding
its message and sending innovations. In the example of Figure
\ref{fig:oneRelay}, the destination receives two bits directly from
the source, and the relay increases the capacity by $1$ bit by
forwarding the least significant bit it receives on a level that
does  not overlap with the direct transmission at the destination.
This suggests a decode-and-forward scheme for the original Gaussian
relay channel. The question is: how does it perform?  Although
unlike in the deterministic network, the decode-forward protocol
cannot achieve exactly the cut-set bound in the Gaussian nettwork,
the following theorem shows it is close.

\begin{theorem} \label{thm:oneRelay1Bit}
The decode-and-forward relaying protocol achieves within 1 bit/s/Hz
of the cut-set bound  of the single-relay Gaussian network, for all
channel gains.
\end{theorem}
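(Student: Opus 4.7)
The plan is to write both the cut-set upper bound and the Gaussian decode-and-forward achievable rate as $\max_\rho\min$ expressions over a single correlation parameter $\rho\in[0,1]$, and then show that a single explicit choice of strategy --- either standard DF at the cut-set optimizer, or the degenerate ``relay silent'' choice (direct transmission, viewed as the special case of partial decode-and-forward that allocates zero rate to the relay branch) --- is always within one bit of the upper bound. Specialising the Cover--El~Gamal DF inner bound and the cut-set \rref{eq:CutSetRef} to jointly Gaussian inputs with $\text{cov}(X,X_1)=\rho$ gives
\[
\overline C=\max_{\rho\in[0,1]}\min\{\hat I_1(\rho),\,I_2(\rho)\},\quad R_{DF}=\max_{\rho\in[0,1]}\min\{I_1(\rho),\,I_2(\rho)\},
\]
where $\hat I_1(\rho)=\tfrac12\log(1+(1-\rho^2)(\SNR_{SR}+\SNR_{SD}))$, $I_1(\rho)=\tfrac12\log(1+(1-\rho^2)\SNR_{SR})$, and $I_2(\rho)=\tfrac12\log(1+\SNR_{SD}+\SNR_{RD}+2\rho\sqrt{\SNR_{SD}\SNR_{RD}})$. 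The two rates share the MAC-type term $I_2$; the only slack is the extra $\SNR_{SD}$ that appears inside the log of the broadcast cut $\hat I_1$ but is absent from the relay-decoding term $I_1$, reflecting that the cut-set allows the destination to help the relay decode while DF does not.

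Next I split on the relative strength of the source--relay and direct links. Let $\rho^*$ maximize the cut-set expression. If $\SNR_{SR}\ge\SNR_{SD}$, then $(1-\rho^{*2})\SNR_{SR}\ge\tfrac12(1-\rho^{*2})(\SNR_{SR}+\SNR_{SD})$, so $I_1(\rho^*)\ge\hat I_1(\rho^*)-\tfrac12$. Evaluating $R_{DF}$ at the same $\rho^*$ yields
\[
R_{DF}\ge\min\{I_1(\rho^*),I_2(\rho^*)\}\ge\min\{\hat I_1(\rho^*)-\tfrac12,\,I_2(\rho^*)\}\ge\overline C-\tfrac12.
\]
If instead $\SNR_{SR}<\SNR_{SD}$, the relay is too weak to help via full decoding; in this regime direct transmission achieves $R_{\text{direct}}=\tfrac12\log(1+\SNR_{SD})$ and the broadcast cut at $\rho=0$ already suffices: $\overline C\le\hat I_1(0)=\tfrac12\log(1+\SNR_{SR}+\SNR_{SD})\le\tfrac12\log(1+2\SNR_{SD})\le\tfrac12+R_{\text{direct}}$.

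The only delicate point, and the place I would be most careful, is the convention that the ``decode-and-forward protocol'' is allowed to set the relay's contribution to zero (equivalently, to partial-DF with zero relayed rate); as the extreme $\SNR_{SR}=0$ shows, pure full-decoding DF by itself can have an unbounded gap to the cut-set, so one genuinely needs to combine active DF with the trivial direct-transmission fallback. Modulo this clarification, the two cases together give a universal gap of at most $\tfrac12$ bit in the real Gaussian model (hence at most $1$ bit/s/Hz, and the corresponding $1$ bit in the complex model), independent of all channel gains. Beyond invoking the Cover--El~Gamal achievability theorem, no new typicality or single-letterization argument is required; the proof is essentially a one-page comparison of two explicit logarithmic formulas.
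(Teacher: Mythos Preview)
Your proof is correct and follows essentially the same two-case split as the paper: when $\SNR_{SR}\ge\SNR_{SD}$ use DF, otherwise fall back to direct transmission, then bound each cut-set term by the corresponding achievable term via elementary $\log$ inequalities. The one minor refinement is that you evaluate the coherent Cover--El~Gamal DF rate at the cut-set optimizer $\rho^*$, so that the MAC-side term $I_2$ matches exactly and only the broadcast-side comparison $I_1(\rho^*)\ge\hat I_1(\rho^*)-\tfrac12$ is needed; the paper instead uses the simpler non-coherent ($\rho=0$) DF rate and must bound \emph{both} cut-set terms by 1 bit separately, arriving at the same 1-bit gap in the complex model.
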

\begin{proof}
See Appendix \ref{app:oneRelay1Bit}.
\end{proof}

We should point out that even this 1-bit gap is too conservative for
many parameter values. In fact the gap would be at the maximum value
only if two of the channel gains are exactly the same. This
is rare in wireless scenarios. In Figure \ref{fig:relay} the gap between the
achievable rate of decode-forward scheme and the cut-set upper bound
is plotted for different channel gains.

The deterministic network in Figure \ref{fig:oneRelay}
(b) suggests that several other relaying strategies are also
optimal. For example, compress-and-forward \cite{coverElgamal} will also
achieve the cut-set bound. Moreover a ``network coding'' strategy of
sending the sum (or linear combination) of the received bits is
also optimal as long as the destination receives linearly
independent equations. All these schemes can also be translated
to the Gaussian case and can be shown to be uniformly approximate
strategies. Therefore for the simple relay channel there are many
successful candidate strategies. 

\subsection{Diamond network}
\label{subsec:diamondMotivation}

\begin{figure*}
     \centering
     \subfigure[The Gaussian diamond network]{
     \includegraphics{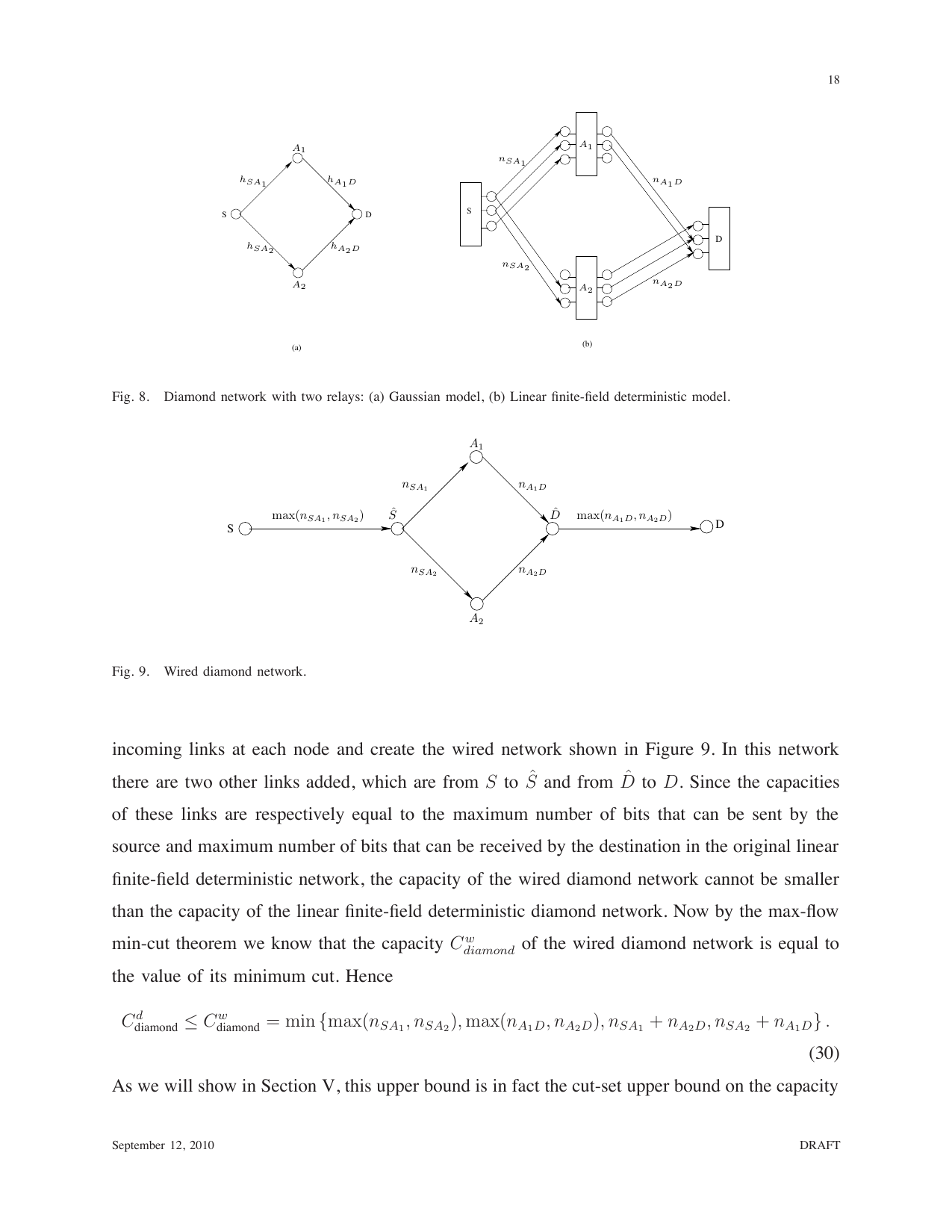}
}
\hspace{0in}
\subfigure[The  linear finite-field deterministic diamond network]{
       \scalebox{0.7}{\includegraphics{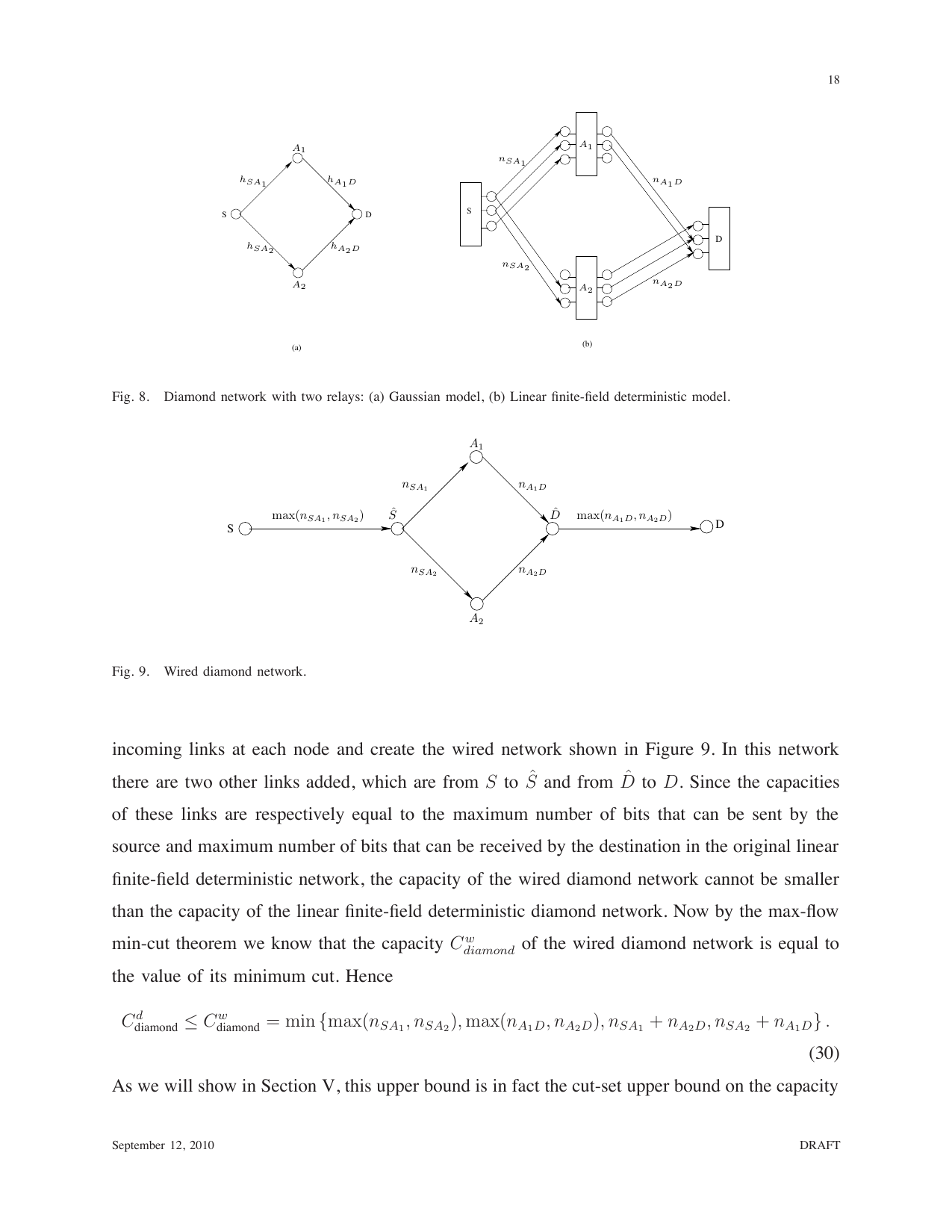}}
}
\subfigure[Wired diamond network]{
       \scalebox{0.6}{\includegraphics{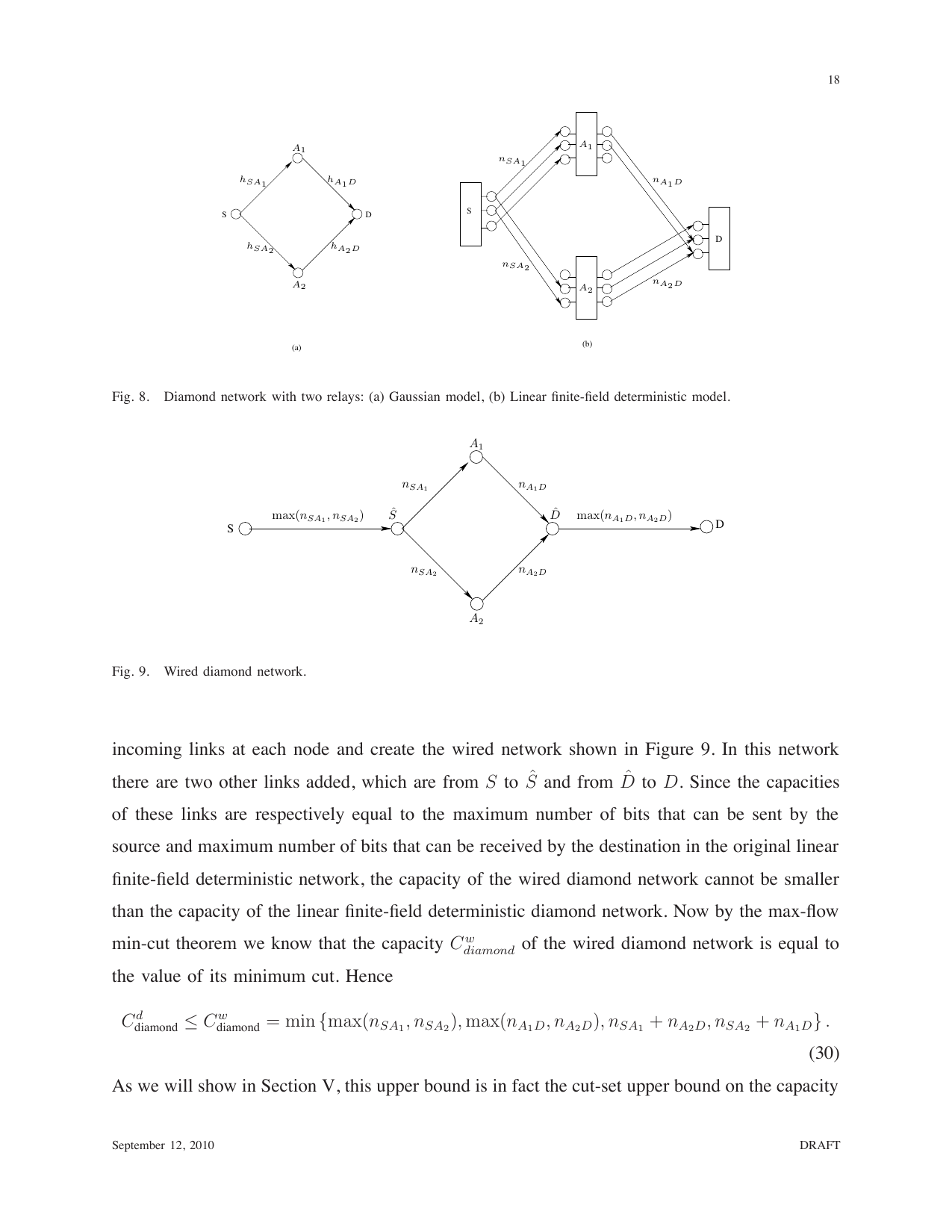}}
}
\caption{Diamond network with two relays: (a) Gaussian model, (b) Linear finite-field deterministic model, and (c) wired model. \label{fig:diamond}}
\end{figure*}


Now consider the diamond Gaussian relay network, with two relays, as
shown in Figure \ref{fig:diamond} (a). Schein introduced this network
in his Ph.D.  thesis \cite{ScheinThesis} and investigated its
capacity. However the capacity of this network is still open. We would like to uniformly approximate
its capacity.

First we build the corresponding linear finite-field deterministic
model for this relay network as shown in Figure \ref{fig:diamond}
(b). To investigate its capacity first we relax the interactions
between incoming links at each node and create the wired network
shown in Figure \ref{fig:diamond} (c). In this network there are two
other links added, which are from $S$ to $\hat{S}$ and from $\hat{D}$
to $D$. Since the capacities of these links are respectively equal to
the maximum number of bits that can be sent by the source and maximum
number of bits that can be received by the destination in the original
linear finite-field deterministic network, the capacity of the
wired diamond network cannot be smaller than the capacity of the
linear finite-field deterministic diamond network. Now by the max-flow
min-cut theorem we know that the capacity $C_{diamond}^w$ of the
wired diamond network is equal to the value of its minimum
cut. Hence
\begin{align}
\nonumber & C_{\text{diamond}}^d  \leq C_{\text{diamond}}^w  = \min  \{ \max(n_{SA_1},n_{SA_2}),\\
&\quad \max(n_{A_1D},n_{A_2D}) ,n_{SA_1}+n_{A_2D},n_{SA_2}+n_{A_1D}  \}.
\end{align}
As we will show in Section \ref{sec:detCapacity}, this upper bound is
in fact the cut-set upper bound on the capacity of the deterministic
diamond network.

Now, we know that the capacity of a wired network is
achieved by a routing solution. We can indeed mimic the wired network routing solution in the linear finite-field
deterministic diamond network and  send the same amount of
information through non-interfering links from source to relays and
then from relays to destination. Therefore the capacity of the
deterministic diamond network is equal to its cut-set upper bound.

A natural analogy of this routing scheme for the Gaussian network is the following partial-decode-and-forward strategy:
\begin{enumerate}
  \item The source broadcasts two messages, $m_1$ and $m_2$, at rate
  $R_1$ and $R_2$ to relays $A_1$ and $A_2$, respectively.
  \item Each relay $A_i$ decodes message $m_i$, $i=1,2$.
  \item Then $A_1$ and $A_2$ re-encode the messages and transmit them
  via the MAC channel to the destination.
\end{enumerate}
Clearly  the destination can decode both $m_1$ and $m_2$
if $(R_1,R_2)$ is inside the capacity region of the BC from source
to relays as well as the capacity region of the MAC from relays to
the destination.  The following theorem  shows how good this scheme
is.

\begin{theorem}\label{thm:twoRelay1Bit}
Partial-decode-and-forward relaying protocol achieves within 1 bit/s/Hz
of the cut-set upper bound of the two-relay diamond Gaussian
network, for all channel gains.
\end{theorem}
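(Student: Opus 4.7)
The plan is to show that the partial-decode-and-forward achievable region
\[
\mathcal{R}_{PDF}=\mathcal{R}_{BC}(\SNR_{SA_1},\SNR_{SA_2})\cap\mathcal{R}_{MAC}(\SNR_{A_1D},\SNR_{A_2D})
\]
contains a rate pair whose sum falls short of the cut-set upper bound $\overline{C}$ by at most $1$ bit/s/Hz. Here $\mathcal{R}_{BC}$ is the Gaussian broadcast capacity region for two private messages and $\mathcal{R}_{MAC}$ is the Gaussian multiple-access region. Without loss of generality assume $\SNR_{SA_1}\geq\SNR_{SA_2}$, so $\mathcal{R}_{BC}$ is parameterized by $\alpha\in[0,1]$ through $R_1\leq\log(1+\alpha\SNR_{SA_1})$ and $R_2\leq\log\bigl(1+(1-\alpha)\SNR_{SA_2}/(1+\alpha\SNR_{SA_2})\bigr)$, while $\mathcal{R}_{MAC}$ is the usual pentagon.

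The first step is to enumerate the four source--destination cuts of the diamond; under independent Gaussian inputs they yield
\begin{align*}
\overline{C}_{BC} &= \log(1+\SNR_{SA_1}+\SNR_{SA_2}),\\
\overline{C}_{MAC} &= \log(1+\SNR_{A_1D}+\SNR_{A_2D}),\\
\overline{C}_{1} &= \log(1+\SNR_{SA_2})+\log(1+\SNR_{A_1D}),\\
\overline{C}_{2} &= \log(1+\SNR_{SA_1})+\log(1+\SNR_{A_2D}),
\end{align*}
so $\overline{C}=\min\{\overline{C}_{BC},\overline{C}_{MAC},\overline{C}_1,\overline{C}_2\}$. Two elementary observations handle much of the work. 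First, the MAC sum-rate capacity equals $\overline{C}_{MAC}$ exactly. Second, the BC sum-rate capacity $\log(1+\SNR_{SA_1})$ differs from $\overline{C}_{BC}$ by at most $\log\bigl(1+\SNR_{SA_2}/(1+\SNR_{SA_1})\bigr)\leq 1$ bit.

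Next I would proceed by case analysis on which cut binds $\overline{C}$. When $\overline{C}_{BC}$ or $\overline{C}_{MAC}$ is the bottleneck, I would pick a rate pair on the corresponding sum-rate face (a BC or MAC corner) and use the binding cut inequalities to verify it also lies in the other region, possibly after a subsidiary split on the ordering of $\SNR_{SA_i}$ and $\SNR_{A_iD}$. For a crossed cut, say $\overline{C}=\overline{C}_1$, I would propose
\[
R_1=\log(1+\SNR_{A_1D}),\qquad R_2=\max\bigl(\log(1+\SNR_{SA_2})-1,\,0\bigr),
\]
so that $R_1+R_2\geq\overline{C}_1-1$. Expanding $\overline{C}_1\leq\overline{C}_{MAC}$ and $\overline{C}_1\leq\overline{C}_{BC}$ gives $\SNR_{SA_2}(1+\SNR_{A_1D})\leq\SNR_{A_2D}$ and $\SNR_{A_1D}(1+\SNR_{SA_2})\leq\SNR_{SA_1}$; together with the choice $\alpha=\SNR_{A_1D}/\SNR_{SA_1}\leq 1/(1+\SNR_{SA_2})$, these imply the MAC individual and sum constraints as well as the BC constraints (in particular $\alpha\SNR_{SA_2}\leq 1$ ensures the BC bound on $R_2$ exceeds $\log((1+\SNR_{SA_2})/2)$). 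The symmetric choice handles $\overline{C}=\overline{C}_2$.

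The main obstacle is the curved Gaussian BC boundary: in the crossed-cut cases the target rate pair is a non-corner point of $\mathcal{R}_{BC}$, so feasibility must be certified by explicitly solving for $\alpha$ and bounding $1+\alpha\SNR_{SA_2}$ in closed form via the binding cut inequalities, as sketched above. The overall $1$-bit gap arises because the only loss across all cases is the additive $\log 2$ between the BC sum-rate capacity and the broadcast cut $\overline{C}_{BC}$, which is absorbed into the single bit of slack.
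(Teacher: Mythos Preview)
Your overall strategy---exhibit an explicit $(R_1,R_2)\in\mathcal{R}_{BC}\cap\mathcal{R}_{MAC}$ whose sum is within one bit of the cut-set bound---matches the paper's, and your choice $\alpha=\SNR_{A_1D}/\SNR_{SA_1}$ together with the pivotal sub-split $\alpha\,\SNR_{SA_2}\lessgtr 1$ is exactly the paper's Case~2. However, there is a genuine gap in your setup: the four quantities you list are the cut values under \emph{independent} Gaussian inputs, and for the MAC cut that gives $\log(1+\SNR_{A_1D}+\SNR_{A_2D})$, whereas the actual cut-set bound allows correlation and yields $\log\bigl(1+(|h_{A_1D}|+|h_{A_2D}|)^2\bigr)$. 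Since $\overline{C}=\max_p\min_\Omega I_\Omega(p)$, your minimum of independent-input values is a \emph{lower} bound on $\overline{C}$, not an upper bound; hence showing $R_{PDF}\geq(\text{your min})-1$ does not establish $R_{PDF}\geq\overline{C}-1$. Concretely, the inequality $\SNR_{SA_2}(1+\SNR_{A_1D})\leq\SNR_{A_2D}$ you extract from ``$\overline{C}_1\leq\overline{C}_{MAC}$'' uses the wrong $\overline{C}_{MAC}$ and is not available in general; with the correlated value you only get the weaker $\SNR_{SA_2}(1+\SNR_{A_1D})\leq\SNR_{A_2D}+2\sqrt{\SNR_{A_1D}\SNR_{A_2D}}$, which does not obviously suffice for your MAC feasibility checks.

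The paper sidesteps this by organizing the case split differently: instead of branching on which cut binds (which forces you to work with the correlated MAC value everywhere), it branches on whether $\SNR_{SA_1}\leq\SNR_{A_1D}$, fixes the single power split $\alpha=\min(1,\SNR_{A_1D}/\SNR_{SA_1})$, and then compares the resulting $R_{PDF}$ directly against whichever cut happens to be weakest. In particular, when the MAC constraint binds the achievable sum rate is exactly $\log(1+\SNR_{A_1D}+\SNR_{A_2D})$, which is within one bit of the correlated MAC cut; and when it does not bind, only the BC and crossed cuts are needed, where your independent-input values are already the correct upper bounds. Your ``pick a corner and verify'' plan for the BC/MAC-binding cases is also underspecified: the BC corner $(\log(1+\SNR_{SA_1}),0)$ need not satisfy $R_1\leq\log(1+\SNR_{A_1D})$ under the hypothesis that $\overline{C}_{BC}$ is smallest, which is precisely why the paper splits on $\SNR_{SA_1}$ versus $\SNR_{A_1D}$ first.
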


\begin{proof}
See Appendix \ref{app:twoRelay1Bit}.
\end{proof}

We can also use the linear finite-field deterministic model to
understand why other simple protocols such as decode-forward and
amplify-forward are not universally-approximate strategies for the
diamond  network.

Consider an example linear finite-field  diamond
network shown in Figure \ref{fig:diamondDetEx} (a). The
cut-set upper bound on the capacity of this network is 3 bits/unit
time. In a decode-forward scheme, all participating relays should be
able to decode the message. Therefore the maximum rate of the
message broadcasted from the source can at most be 2 bits/unit time.
Also, if we ignore relay $A_2$ and only use the stronger relay,
still it is not possible to send information more at a rate more
than 1 bit/unit time. As a result we cannot achieve the capacity of
this network by using a decode-forward strategy.

We next show that this 1-bit gap can be translated into an unbounded
gap in the corresponding Gaussian network, as shown in Figure
\ref{fig:diamondDetEx} (b).  By looking at the cut between the
destination and the rest of the network, it can be seen that for
large $a$, the cut-set upper bound is approximately \beq
\label{eq:CutSetDiamond} \overline{C} \approx 3 \log a.\eeq The achievable rate of the decode-forward strategy is upper
bounded by \beq \label{eq:DFUpperBound} R_{DF} \leq 2 \log a. \eeq
Therefore, as $a$ gets larger, the gap between the achievable rate
of decode-forward strategy and the cut-set upper bound
(\ref{eq:CutSetDiamond}) increases.

Let us look at the amplify-forward scheme.  Although this scheme
does not require all relays to decode the entire message, it can be
quite sub-optimal if relays inject significant noise into the
system. We use the deterministic model to intuitively see this
effect. In a deterministic network, the amplify-forward operation
can be simply modeled by shifting bits up and down at each node.
However, once the bits are shifted up, the newly created LSB's
represent the amplified bits of the noise and we model them by
random bits. Now, consider the example shown in Figure
\ref{fig:diamondDetEx} (a). We notice that to achieve a rate of 3
from the source to the destination, the least significant bit of the source's signal should go through $A_1$ while the
remaining two bits go through $A_2$.  Now if $A_2$ is doing
amplify-forward, it will have two choices: to either forward the
received signal without amplifying it, or to amplify the received
signal to have three signal levels in magnitude and forward it.

The effective networks under these two strategies are respectively
shown in Figure \ref{fig:diamondDetEx} (c) and
\ref{fig:diamondDetEx} (d). In the first case, since the total rate
going through the MAC from $A_1$ and $A_2$ to $D$ is less than two,
the overall achievable rate cannot exceed two. In the second case,
however, the inefficiency of amplify-forward strategy comes from the
fact that $A_2$ is transmitting pure noise on its lowest signal
level. As a result, it is corrupting the bit transmitted by $A_1$
and reducing the total achievable rate again to two bits/channel use.
Therefore, for this channel realization, the amplify-forward scheme does
not achieve the capacity. This intuition can again be translated to
the corresponding Gaussian network to show that amplify-and-forward
is not a universally-approximate strategy for the diamond network.

\begin{figure}
     \centering
     \subfigure[]{
       \scalebox{0.6}{\includegraphics{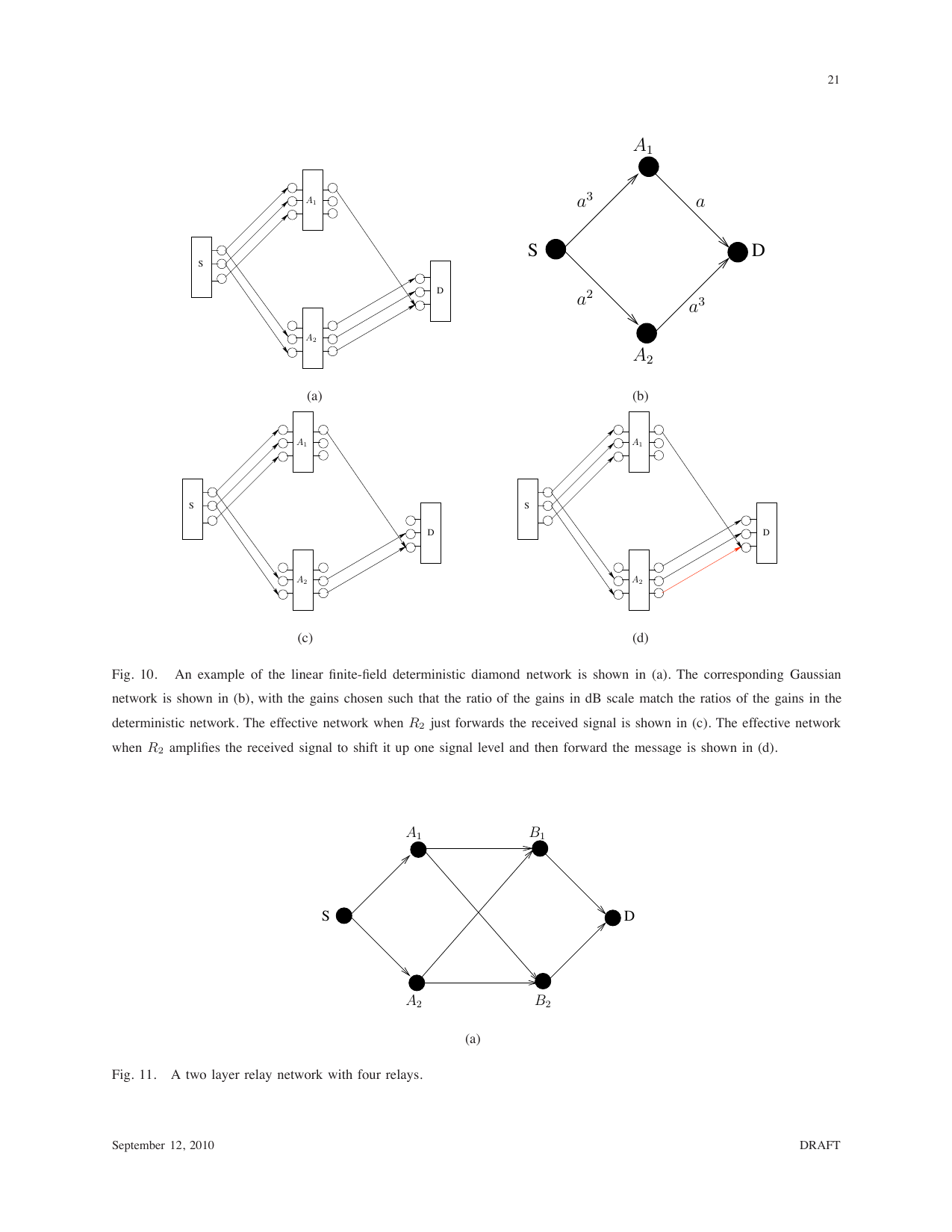}}}
    \hspace{0in}
    \subfigure[]{
       \scalebox{0.6}{\includegraphics{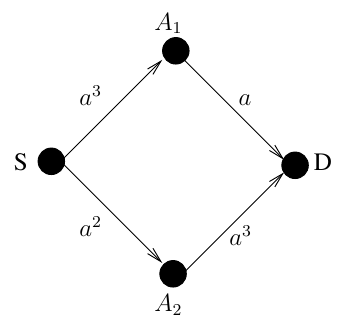}}}
    \hspace{0in}
    \subfigure[]{
       \scalebox{0.6}{\includegraphics{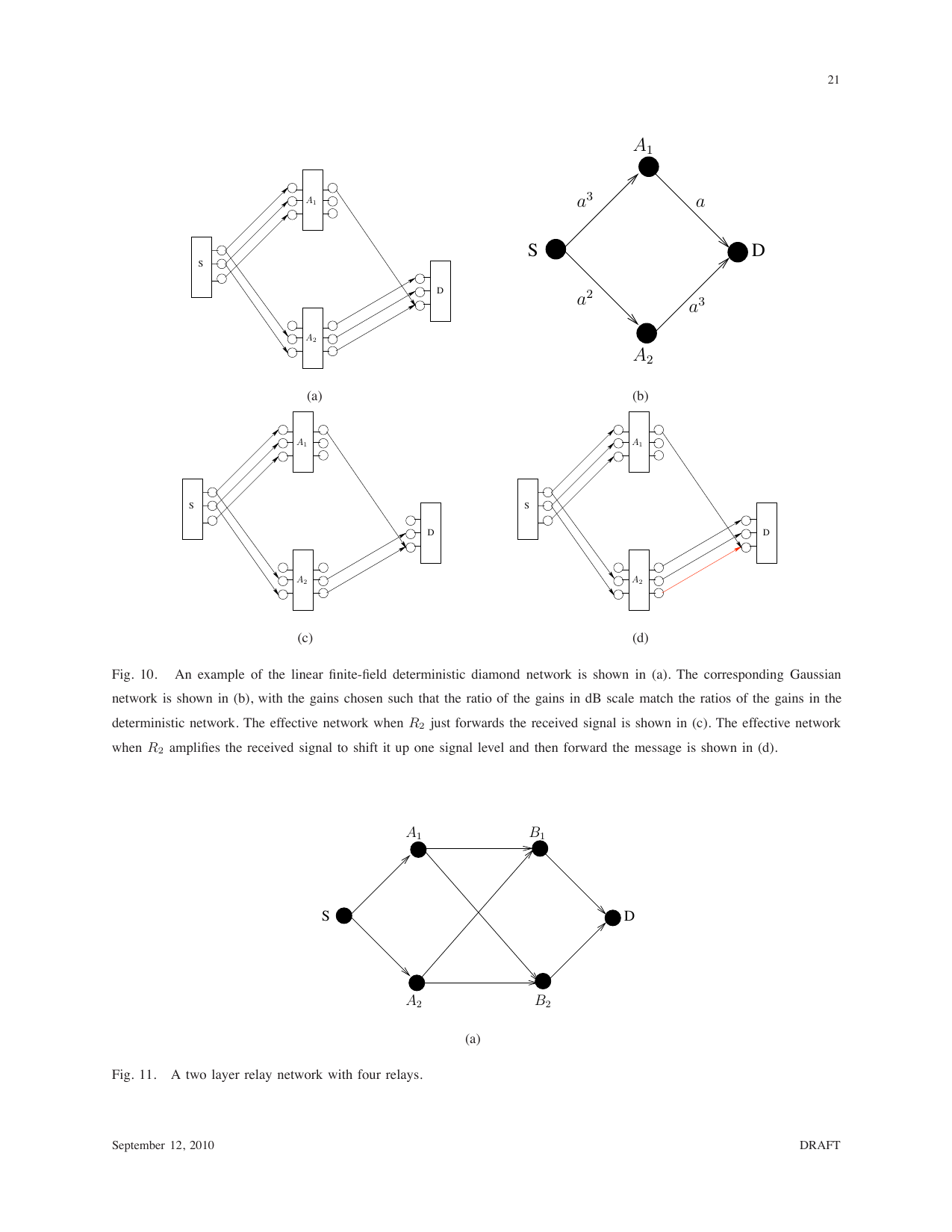}}}
       \hspace{0in}
    \subfigure[]{
       \scalebox{0.6}{\includegraphics{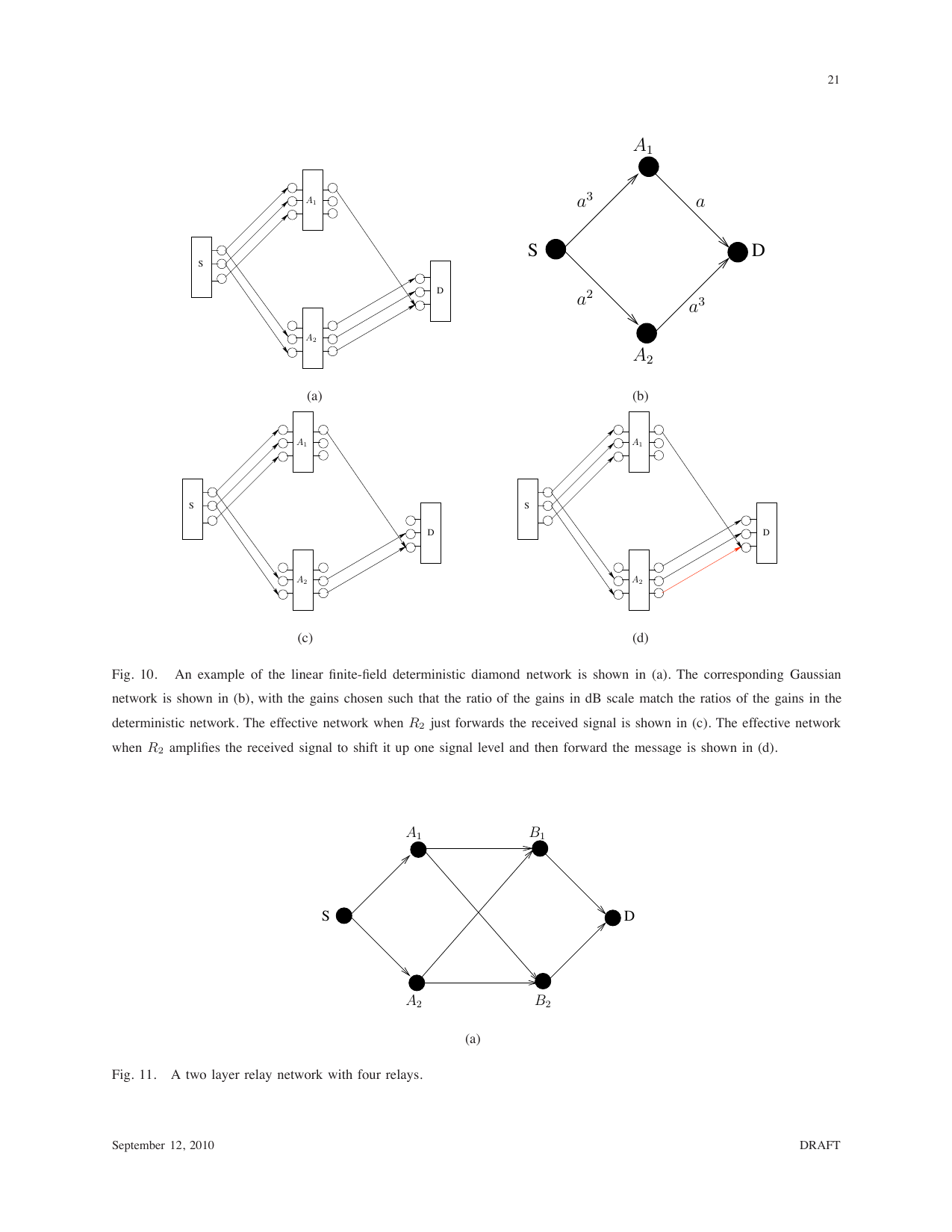}}}
\caption{An example of the linear finite-field deterministic diamond
network is shown in (a). The corresponding Gaussian network is shown
in (b), with the gains chosen such that the ratio of the gains in dB
scale match the ratios of the gains in the deterministic network.
The effective network when $R_2$ just forwards the received signal
is shown in (c). The effective network when $R_2$ amplifies the
received signal to shift it up one signal level and then forward the
message is shown in (d).} \label{fig:diamondDetEx}
\end{figure}

\subsection{A four-relay network}
\label{subsec:fourRelayMotivation}

\begin{figure}
     \centering
     \subfigure[ ]{
       \scalebox{0.7}{ \includegraphics{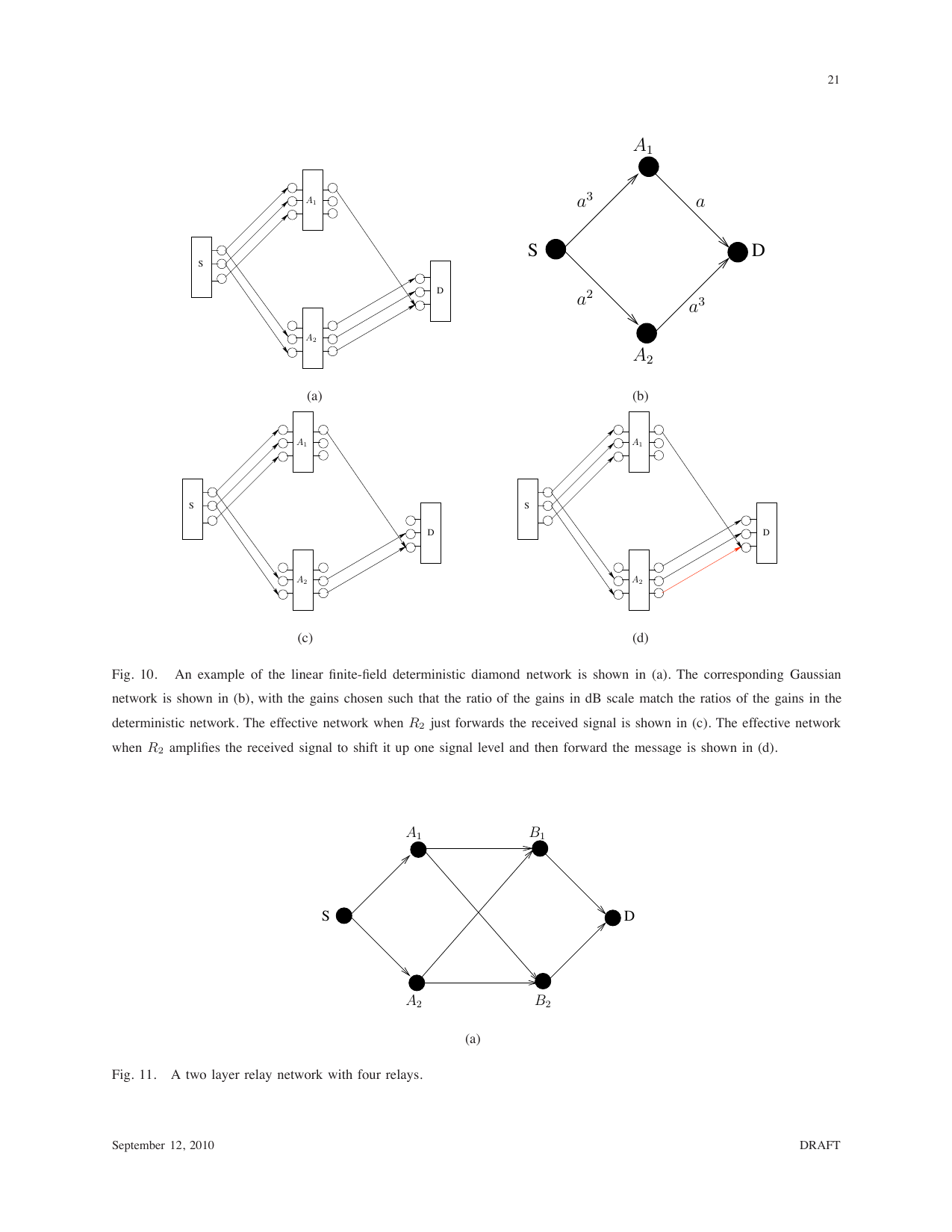}}
}
\caption{\label{fig:fourRelay} A two layer relay network with four relays.}
\end{figure}

\begin{figure*}
     \centering
     \subfigure[]{
       \scalebox{0.6}{\includegraphics{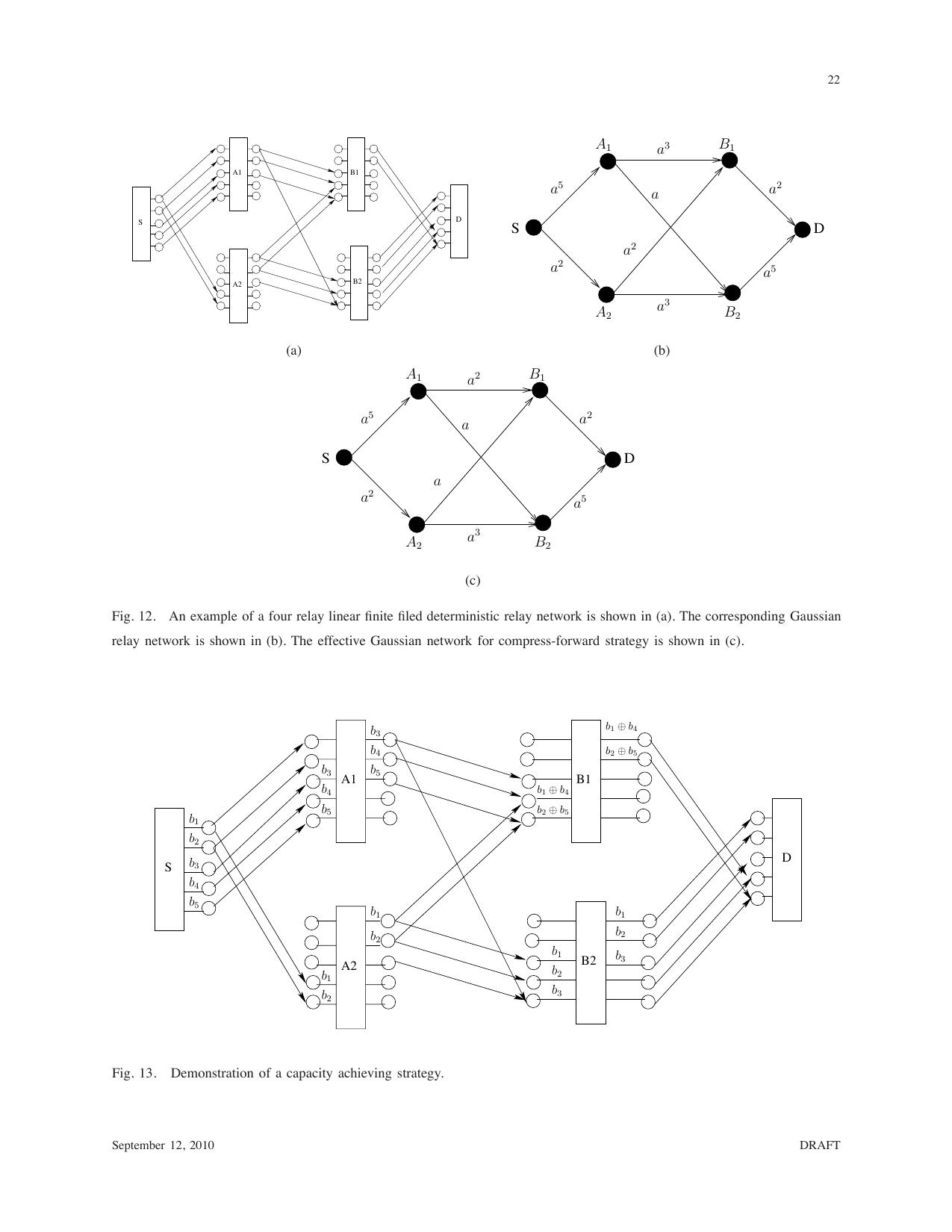}}}
    \hspace{0in}
    \subfigure[]{
       \scalebox{0.7}{\includegraphics{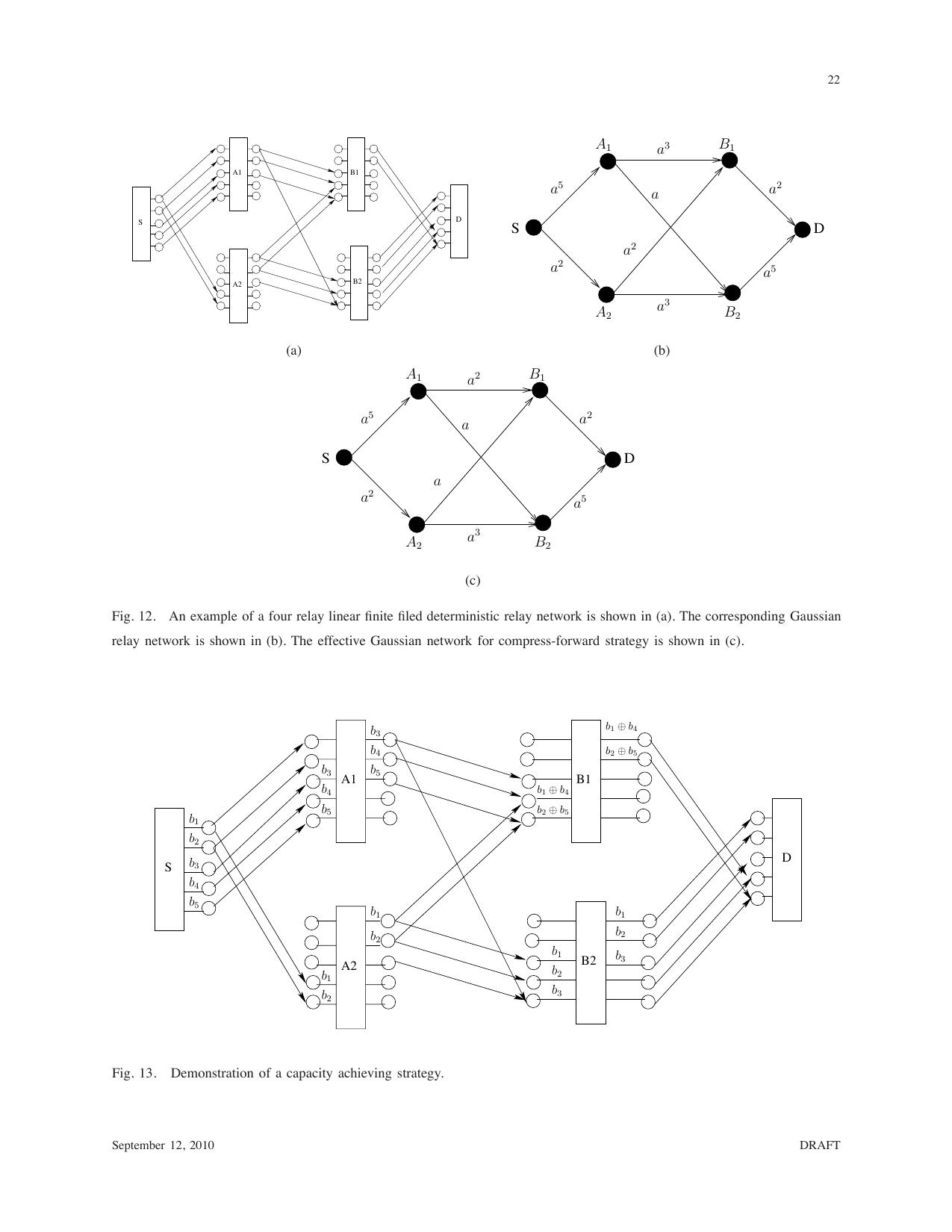}}}
    \hspace{0in}
        \subfigure[ ]{
       \scalebox{0.7}{ \includegraphics{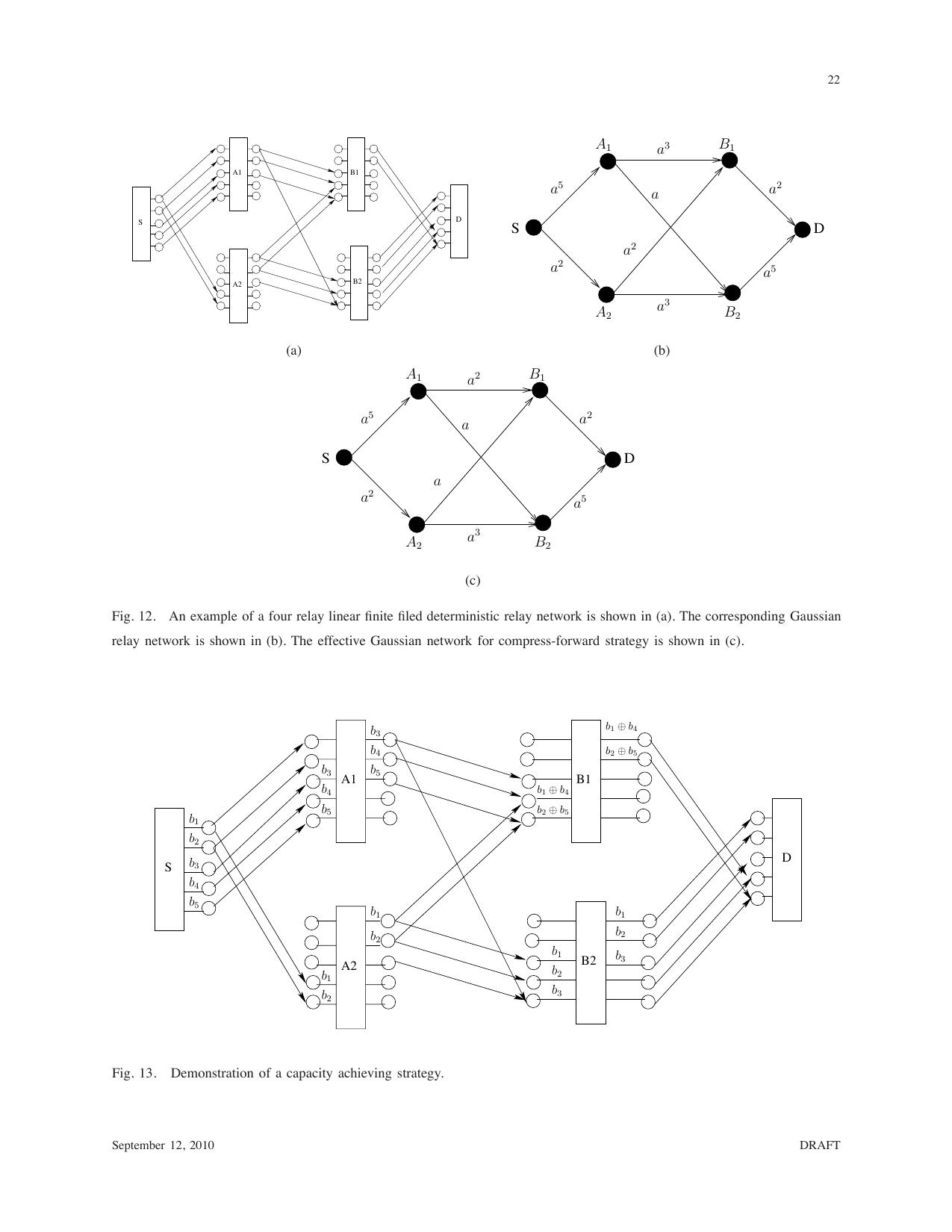}}
}
 \caption{An example of a four relay linear finite filed deterministic relay network is shown in (a). The corresponding Gaussian relay network is shown in (b). The effective Gaussian network for compress-forward strategy is shown in (c).} \label{fig:fourRelayDetEx2}
\end{figure*}

\begin{figure}
     \centering
       \scalebox{0.5}{ \includegraphics{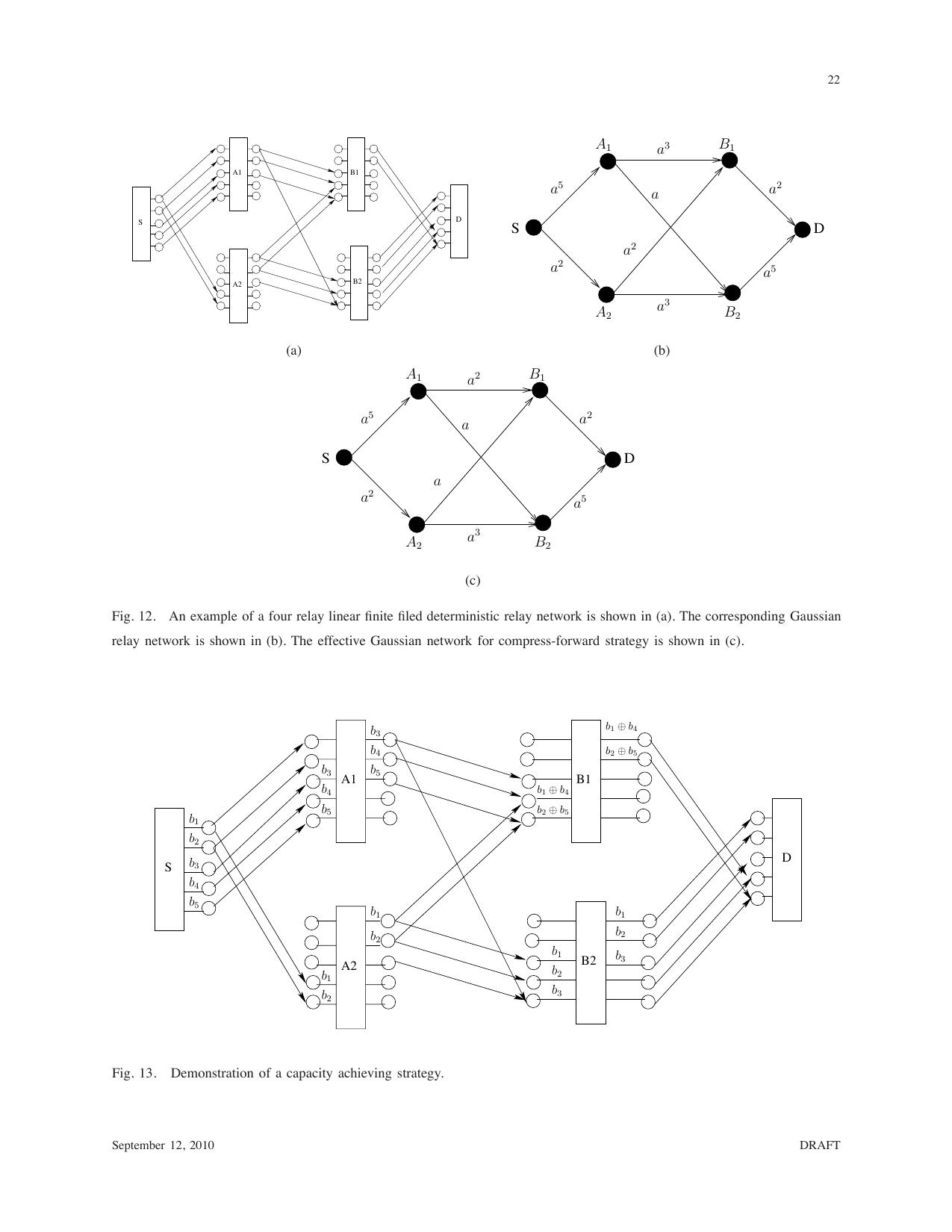}}
\caption{\label{fig:fourRelayStrategy} Demonstration of a capacity achieving strategy.}
\end{figure}

We now look at a more complicated relay network with four relays, as
shown in Figure \ref{fig:fourRelay}. As the first step let us find
the optimal relaying strategy for the corresponding linear finite
field deterministic model. Consider an example of a linear finite
field deterministic relay network shown in Figure
\ref{fig:fourRelayDetEx2} (a).  Now focus on the relaying strategy
that is pictorially shown in Figure \ref{fig:fourRelayStrategy}. In
this scheme,
\begin{itemize}
    \item Source broadcasts $\bbf=[b_1,\ldots,b_5]^t$
    \item Relay $A_1$ decodes $b_3,b_4,b_5$ and relay $A_2$ decodes $b_1,b_2$
    \item Relay $A_1$ and $A_2$ respectively send $\xbf_{A_1}=[b_3,b_4,b_5,0,0]^t$ and $\xbf_{A_2}=[b_1,b_2,0,0,0]^t$
    \item Relay $B_2$ decodes $b_1,b_2,b_3$ and sends $\xbf_{B_2}=[b_1,b_2,b_3,0,0]^t$
    \item Relay $B_1$ receives $\ybf_{B_1}=[0,0,b_3,b_4 \oplus  b_1,b_5 \oplus b_2]^t$ and forwards the last two equations, $\xbf_{B_1}=[b_4 \oplus b_1,b_5 \oplus b_2,0,0,0]^t$
    \item The destination gets $\ybf_{D}=[b_1,b_2,b_3,b_4 \oplus  b_1,b_5 \oplus b_2]^t$ and is able to  decode all five bits.
\end{itemize}

This scheme can achieve $5$ bits per unit time, clearly the best
that one can do since the destination only receives $5$ bits per
unit time. In this optimal scheme the relay $B_1$
is not decoding or partially decoding the original flows of bits that were broadcasted by the source; it is decoding and forwarding a linear combination of them. One may wonder if this is necessary. To answer this question note that since all transmitted signal levels of $A_1$ and $A_2$ are
interfering with each other, it is not possible to get a rate of
more than 3 bits/unit time by any scheme which does not allow mixing of the flows
of information bits originating from the source.

The last stage in the above scheme can actually be interpreted as a
compress-and-forward strategy: relays $B_1$ and $B_2$ want to send
their $3$-bit received vectors to the destination $D$, but because
the link from $B_1$ to $D$ only supports $2$ bits, the dependency
between these received vectors must be exploited. However, in the
Gaussian network, we {\em cannot} implement this strategy using a
standard compress-and-forward scheme pretending that the two
received signals at $B_1$ and $B_2$ are jointly Gaussian. They are
not. Relay $A_2$ sends nothing on its LSB, allowing the MSB of relay
$A_1$ to come through and appear as the LSB of the received signal
at $B_2$. In fact, the statistical correlation between the
real-valued received signals at $B_1$ and $B_2$ is quite weak since
their MSBs are totally independent. Only when one views the received
signals as vectors of bits, as guided by the deterministic model, 
the dependency between them becomes apparent. In fact, it can be
shown that a compress-and-forward strategy assuming jointly Gaussian
distributed received signals cannot achieve a constant gap to the
cut-set bound.

\subsection{Summary}

We learned two key points from the above examples:

\begin{itemize}

\item All the schemes that achieve capacity of the deterministic
networks in the examples forward the received bits at the various
signal levels.

\item Using the deterministic model as a guide, it is revealed that
commonly used schemes such as decode-and-forward, amplify-and-forward and Gaussian
compress-and-forward can all be very far-away from the cut-set
bound.
\end{itemize}

We devote the rest of the paper to generalizing the steps we took
for the examples. As we will show, in the deterministic relay
network the optimal strategy for each relay is to
simply shuffle and linearly combine the received signals at various levels and forward them. This insight leads to a natural {\em
quantize-map-and-forward} strategy for noisy (Gaussian) relay networks.
The strategy for each relay is to quantize the received signal at the 
distortion of the noise power. This in effect extracts the bits of
the received signals above the noise level. These bits are then
mapped randomly to a transmit Gaussian codeword. The main result of
our paper is to show that such a scheme is indeed universally
approximate for arbitrary noisy Gaussian relay networks.

\section{Main Results}
\label{sec:mainResults}

In this section we precisely state the main results of the paper and
briefly discuss their implications. The capacity of a relay
network, $C$, is defined as the supremum of all achievable rates of
reliable communication from the source to the destination. Similarly,
the multicast capacity of relay network is defined as the maximum rate
at which the source can send the same information simultaneously to all
destinations.

\subsection{Deterministic networks}

\subsubsection{General deterministic relay network}

\label{subsec:MainResGenDet}
\label{subsec:genDetModel}

In the general deterministic model the received vector signal
${\bf y}_j$ at node $j\in\mathcal{V}$ at time $t$ is given by
\begin{equation}
\label{eq:GenDetModel}
{\bf y}_j[t] = {\bf g}_j(\{{\bf x}_i[t]\}_{i\in \mathcal{V}}),
\end{equation}
where $\{{\bf x}_i[t]\}_{i\in\mathcal{V}}$ denotes the transmitted signals at all of the nodes in the network.
Note that this implies a deterministic multiple access channel for node $j$
and a deterministic broadcast channel for the transmitting nodes, so
both broadcast and multiple access is allowed in this model. This is
a generalization of Aref networks \cite{ArefThesis} which only allow
broadcast.

The cut-set bound of a general deterministic relay network is:
\begin{eqnarray}
\label{eq:CutSet}
\overline{C} &=&\max_{p(\{\textbf{x}_j\}_{j\in\mathcal{V}})} \min_{\Omega\in\Lambda_D}
I(\ybf_{\Omega^c};\textbf{x}_{\Omega}|\textbf{x}_{\Omega^c}) \\ \label{eq:CutSetGenDet}
& \stackrel{(a)}{=} &
\max_{p(\{\textbf{x}_j\}_{j\in\mathcal{V}})} \min_{\Omega\in\Lambda_D}
H(\ybf_{\Omega^c}|\textbf{x}_{\Omega^c})
\label{eq:CutSetLFF}
\end{eqnarray}
where $\Lambda_D=\{\Omega:S\in\Omega,D\in\Omega^c\}$ is all
source-destination cuts. Step $(a)$ follows since we are dealing with
deterministic networks.

The following are our main results for arbitrary deterministic networks.

\begin{theorem}
\label{thm:GenDetNet} A rate of
\begin{equation}
\label{eq:GenDetNet}
\max_{\prod_{i\in\mathcal{V}} p(\textbf{x}_i)} \min_{\Omega\in\Lambda_D}
H(\ybf_{\Omega^c}|\textbf{x}_{\Omega^c})
\end{equation}
can be achieved on a deterministic network.
\end{theorem}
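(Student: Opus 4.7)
The plan is to prove the theorem by a random coding argument in which each relay independently maps its received signal to a random codeword, exactly as the paper hints just before the statement. Because a general deterministic network may contain cycles, I would organize transmission into $B$ blocks of $N$ channel uses each and adopt block Markov encoding, so that every relay in block $b$ responds only to its observation in block $b{-}1$; this breaks any circular dependence and costs only a factor $(B-1)/B$ in rate, which vanishes as $B\to\infty$. It therefore suffices to show that any $R$ strictly smaller than the right-hand side of \eqref{eq:GenDetNet} is achievable within a single block.

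Fix a product distribution $\prod_{i\in\mathcal{V}}p(\xbf_i)$ and a blocklength $N$. Generate the source codebook as $2^{NR}$ i.i.d. length-$N$ sequences $\xbf_S(w)$ drawn according to $\prod_n p(x_{S,n})$. At each relay $i$, independently draw, for every possible received sequence $\ybf_i$, a codeword $\xbf_i(\ybf_i)$ i.i.d. according to $\prod_n p(x_{i,n})$. Encoding: in block $b$ the source transmits $\xbf_S(w_b)$ and each relay $i$ transmits $\xbf_i(\ybf_i[b-1])$. Decoding: after $B$ blocks the destination searches for the unique message tuple whose implied source codewords, together with the observed received signals, are jointly typical with respect to the product distribution and the (deterministic) channel law.

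The error analysis is cut based. Fix the true message $w$ and a competing $w'\neq w$. For each cut $\Omega\in\Lambda_D$ I would condition on the destination-side transmissions $\xbf_{\Omega^c}$ and argue that, because the relay maps $\{\xbf_i(\cdot)\}_{i\in\Omega}$ were drawn independently of the source codebook, the spurious transmissions induced by $w'$ look (conditionally) like an independent draw from $\prod_{i\in\Omega}p(\xbf_i)$. A standard joint typicality / packing lemma then bounds the probability of a spurious confusion across $\Omega$ by $2^{-N[H(\ybf_{\Omega^c}\mid\xbf_{\Omega^c})-\epsilon]}$. Union bounding over the $2^{NR}$ messages and the finitely many cuts, the error probability vanishes as long as $R<H(\ybf_{\Omega^c}\mid\xbf_{\Omega^c})-\epsilon$ for every $\Omega\in\Lambda_D$; sending $N\to\infty$, then $\epsilon\to 0$, and taking the supremum over product distributions gives the claim.

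The main obstacle is making the cut-based independence claim watertight. The random maps $\xbf_i(\cdot)$ are reused across the true and the spurious transmissions, so a priori the pairs $(\xbf_\Omega(w),\ybf_{\Omega^c}(w))$ and $(\xbf_\Omega(w'),\ybf_{\Omega^c}(w))$ are correlated through these shared maps. One has to show that with high probability the length-$N$ relay observations induced by $w$ and by $w'$ differ coordinate-wise enough that the codebook entries queried under the two hypotheses are independent draws, and that this independence persists even after conditioning on $\xbf_{\Omega^c}$. This is the deterministic-network analog of the randomness-recycling step in network coding, and formalizing it carefully — especially the multi-block coupling between relay outputs and the destination's joint-typicality test — is the real technical core of the argument.
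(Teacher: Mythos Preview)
Your high-level scheme (random relay maps, cut-based error analysis, block Markov to handle cycles) matches the paper's, and you correctly identify the crux: making the conditional independence of the spurious transmissions rigorous when the same random maps $\xbf_i(\cdot)$ are used under both $w$ and $w'$. But you do not resolve it, and the paper's resolution is precisely the missing idea.

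The paper does \emph{not} union bound over cuts in the way you describe. Instead, for a fixed pair $(w,w')$ it \emph{partitions} the error event according to the random set $\Omega$ of nodes that can distinguish $w$ from $w'$ (i.e.\ nodes $j$ with $\ybf_j(w)\neq\ybf_j(w')$). Since $S\in\Omega$ and the error event forces $D\in\Omega^c$, this set is always a source--destination cut, and the events indexed by different $\Omega$ are disjoint. Conditioning on a specific distinguishability set buys exactly what you need: every node in $\Omega^c$ has the \emph{same} received sequence under $w$ and $w'$, hence transmits the same codeword; every node in $\Omega$ has a \emph{different} received sequence, so its random map is queried at two distinct points and the outputs are genuinely independent. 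With this conditioning the packing calculation goes through layer by layer (in the layered/unfolded network) and produces $2^{-TH(\ybf_{\Omega^c}\mid\xbf_{\Omega^c})}$ for that particular $\Omega$; summing over the disjoint events then gives $|\Lambda_D|\cdot 2^{-T\min_\Omega H(\ybf_{\Omega^c}\mid\xbf_{\Omega^c})}$.

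By contrast, your version fixes an arbitrary cut $\Omega$ and asserts that $\xbf_\Omega(w')$ is an independent fresh draw. That fails: for nodes $i\in\Omega$ that happen to receive the same sequence under $w$ and $w'$, the map outputs coincide, and the typicality bound does not apply. You cannot repair this by a union bound over cuts because you have no control over \emph{which} nodes distinguish and which do not. The distinguishability-set decomposition is the device that converts ``the relay maps are shared'' from an obstacle into a feature.

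A secondary point: your one-block-memory block Markov scheme is essentially the paper's scheme for \emph{layered} networks. For non-layered networks the paper does not argue directly with block Markov; it time-expands the network over $K$ stages (adding infinite-capacity ``memory'' links between copies of the same node), proves the layered result there, and then shows the unfolded min-cut normalized by $K$ converges to the original min-cut. Your proposal to decode all $B$ messages jointly at the destination would implicitly require the same unfolded-network analysis and the same cut-comparison lemma.
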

This theorem easily extends to the multicast case, where we want to
simultaneously transmit one message from $S$ to all destinations in
the set $D\in\mathcal{D}$:
\begin{theorem}
\label{thm:GenDetNetMulticast} A multicast rate of
\begin{equation}
\label{eq:GenDetNetMulticast}
\max_{\prod_{i\in\mathcal{V}} p(\textbf{x}_i)} \min_{D\in\mathcal{D}} \min_{\Omega\in\Lambda_D}
H(\ybf_{\Omega^c}|\textbf{x}_{\Omega^c})
\end{equation}
to all the destinations $D\in\mathcal{D}$ can be achieved on a
deterministic network.

\end{theorem}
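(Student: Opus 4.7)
\begin{num_proof}
The plan is to lift the single-destination achievability of Theorem~\ref{thm:GenDetNet} to simultaneous decoding at every $D \in \mathcal{D}$ by exploiting a crucial structural feature of the unicast scheme: the encoding at the source and the random mappings at the relays are driven by a \emph{product} distribution $\prod_{i\in\mathcal{V}} p(\textbf{x}_i)$ that does not depend on the identity of the destination. This ``destination-obliviousness'' is exactly what makes the same network code work for everyone, in analogy with random linear network coding for wireline multicast \cite{ACLY00,KoetterMedard,Ho07}.

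First, I would fix the product distribution $\prod_{i\in\mathcal{V}} p^*(\textbf{x}_i)$ that attains the outer minimum in (\ref{eq:GenDetNetMulticast}), and reuse \emph{verbatim} the random codebook construction and relay mapping from the proof of Theorem~\ref{thm:GenDetNet}: the source picks $2^{NR}$ i.i.d. codewords according to $\prod p^*(\textbf{x}_S)$, and each relay $i$ independently draws, for every length-$N$ received block $\ybf_i^N$, a transmit block $\textbf{x}_i^N$ i.i.d. from $p^*(\textbf{x}_i)$ (these random maps are revealed to all nodes, so every destination knows the effective code). Critically, none of these choices references a particular $D$.

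Second, for each individual destination $D \in \mathcal{D}$, the achievability argument of Theorem~\ref{thm:GenDetNet} applies verbatim: it shows that if
\[
R < \min_{\Omega\in\Lambda_D} H(\ybf_{\Omega^c}\mid \textbf{x}_{\Omega^c}),
\]
then the (typical-set / joint-typicality) decoder at $D$ has error probability $P_e^{(D)} \to 0$ as the block length $N \to \infty$. Since we chose
\[
R < \min_{D\in\mathcal{D}}\min_{\Omega\in\Lambda_D} H(\ybf_{\Omega^c}\mid \textbf{x}_{\Omega^c}),
\]
this hypothesis is satisfied at every destination simultaneously under the \emph{same} realization of codebooks and relay maps. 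A union bound over the finite set $\mathcal{D}$ then gives
\[
\Prob\{\text{some } D\in\mathcal{D} \text{ decodes in error}\} \le \sum_{D\in\mathcal{D}} P_e^{(D)} \to 0,
\]
so there exists a deterministic choice of codebook and relay mappings for which all destinations decode correctly with arbitrarily small error. This is the standard random-coding-plus-union-bound reduction of multicast to unicast.

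The only substantive thing to check, and the step I would be most careful about, is that the unicast achievability of Theorem~\ref{thm:GenDetNet} really is structured this way --- i.e.\ that the codebook construction and relay operations can be specified before the destination is fixed, with the destination entering only through the decoder. Because the scheme uses independent random mappings at each relay drawn from marginals $p^*(\textbf{x}_i)$ (no joint design with $D$, no destination-specific binning rate as in classical compress-and-forward), this universality holds; the earlier discussion in the paper explicitly emphasizes this property as the reason quantize-map-and-forward generalizes to multicast. With universality in hand, the union bound does all the remaining work and no new combinatorial or information-theoretic inequalities are needed beyond those invoked for Theorem~\ref{thm:GenDetNet}.
\end{num_proof}
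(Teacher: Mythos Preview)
Your proposal is correct and matches the paper's approach: the paper (Appendix~\ref{app:GenDet}, end of Section~\ref{subsec:LayGen}) likewise observes that the random encoding at the source and the random relay mappings are chosen from a fixed product distribution independently of the destination, so the unicast error analysis applies verbatim to each $D\in\mathcal{D}$, and a union bound over the finitely many destinations finishes the argument. One small slip: you wrote ``outer minimum'' where you meant the outer \emph{maximum} in~(\ref{eq:GenDetNetMulticast}).
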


 Note that when we compare
(\ref{eq:GenDetNet}) to the cut-set upper bound in
(\ref{eq:CutSetGenDet}), we see that the difference is in the
maximizing set, {\em i.e.,} we are only able to achieve independent
(product) distributions whereas the cut-set optimization is over any
arbitrary distribution. In particular, if the network and the
deterministic functions are such that the cut-set is optimized by the
product distribution, then we would have matching upper and lower
bounds. This happens for deterministic networks with broadcast only, specializing to the result in \cite{RK06}. It also happens when we consider the linear finite-field
model, whose results are stated next.

\subsubsection{Linear finite-field deterministic relay network}
\label{subsec:MainResLinDet}

Applying the cut-set bound to the linear finite-field deterministic
relay network defined in Section \ref{subsec:LFFDetModel},
(\ref{eq:LinDetModel}), and using (\ref{eq:CutSetLFF}) since we have
a deterministic network, we get:
\beq
\label{eq:CutSetEv}
\overline{C}
=
\max_{p(\{\textbf{x}_j\}_{j\in\mathcal{V}})} \min_{\Omega\in\Lambda_D}
H(\ybf_{\Omega^c}|\textbf{x}_{\Omega^c})
\stackrel{(b)}{=}
 \min_{\Omega\in\Lambda_D}
\mathrm{rank}(\Gbf_{\Omega,\Omega^c})
\eeq
where $\Gbf_{\Omega,\Omega^c}$ is
the transfer matrix associated with the cut $\Omega$, {\em i.e.,} the matrix
relating the vector of all the inputs at the nodes in $\Omega$ to the
vector of all the outputs in $\Omega^c$ induced by
(\ref{eq:LinDetModel}). This is illustrated in Figure \ref{fig:LinCutSet}. Step $(b)$ follows since in the linear
finite-field model all cut values (\emph{i.e.},
$H(\ybf_{\Omega^c}|\textbf{x}_{\Omega^c})$) are simultaneously optimized by
independent and uniform distribution of $\{\textbf{x}_i\}_{i\in\mathcal{V}}$
and the optimum value of each cut $\Omega$ is logarithm of the size of
the range space of the transfer matrix $\Gbf_{\Omega,\Omega^c}$
associated with that cut. Theorems \ref{thm:GenDetNet}  and \ref{thm:GenDetNetMulticast} immediately imply that this cutset bound is achievable.

\begin{figure}[h]
\begin{center}
\scalebox{0.8}{\includegraphics{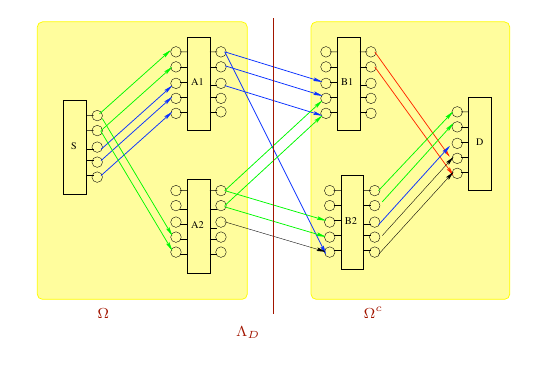}}
\end{center}
\vspace{-0.2in}
\caption{Illustration of cut-set bound and cut-set transfer matrix
  $\Gbf_{\Omega,\Omega^c}$.}
\label{fig:LinCutSet}
\end{figure}

\begin{theorem}
\label{thm:LinDetNet} The capacity $C$ of a linear finite-field
deterministic relay network is given by
\begin{eqnarray}
\label{eq:ThmLinDetNet}
C = \min_{\Omega\in\Lambda_D}
\mathrm{rank}(\Gbf_{\Omega,\Omega^c}).
\end{eqnarray}
\end{theorem}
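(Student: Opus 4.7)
The plan is to pair the converse, which is essentially computed already in equation~(\ref{eq:CutSetEv}), with an achievability argument obtained by specializing Theorem~\ref{thm:GenDetNet} to the linear finite-field setting. Since the paper already reduced the general cut-set bound to $\overline{C} = \min_{\Omega\in\Lambda_D} \mathrm{rank}(\Gbf_{\Omega,\Omega^c})$, I need only to verify the rank identity cleanly and then evaluate the achievable rate in Theorem~\ref{thm:GenDetNet} under a uniform product input.

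For the converse, I would start from $\overline{C}=\max_{p(\{\textbf{x}_j\})}\min_{\Omega}H(\textbf{y}_{\Omega^c}\mid\textbf{x}_{\Omega^c})$ and use the fact that, by (\ref{eq:LinDetModel}), $\textbf{y}_{\Omega^c}$ depends on $\textbf{x}_\Omega$ only through the linear map $\Gbf_{\Omega,\Omega^c}\textbf{x}_\Omega$. Hence for any joint distribution, $H(\textbf{y}_{\Omega^c}\mid\textbf{x}_{\Omega^c})\le H(\Gbf_{\Omega,\Omega^c}\textbf{x}_\Omega)\le \log_2|\mathrm{Range}(\Gbf_{\Omega,\Omega^c})|=\mathrm{rank}(\Gbf_{\Omega,\Omega^c})$, giving $\overline{C}\le\min_{\Omega}\mathrm{rank}(\Gbf_{\Omega,\Omega^c})$ cut-by-cut.

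For achievability, I invoke Theorem~\ref{thm:GenDetNet}, which asserts that the rate
\begin{equation*}
\max_{\prod_{i\in\mathcal{V}}p(\textbf{x}_i)}\min_{\Omega\in\Lambda_D}H(\textbf{y}_{\Omega^c}\mid\textbf{x}_{\Omega^c})
\end{equation*}
is achievable on any deterministic network. Specializing to the linear finite-field model, I would take each $\textbf{x}_i$ to be i.i.d.\ uniform on $\FF_2^q$; then $\textbf{x}_\Omega$ is uniform on $\FF_2^{q|\Omega|}$ and independent of $\textbf{x}_{\Omega^c}$, so $\Gbf_{\Omega,\Omega^c}\textbf{x}_\Omega$ is uniform on its range. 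This gives $H(\textbf{y}_{\Omega^c}\mid\textbf{x}_{\Omega^c})=\mathrm{rank}(\Gbf_{\Omega,\Omega^c})$ for every cut simultaneously, and the min reduces to $\min_\Omega\mathrm{rank}(\Gbf_{\Omega,\Omega^c})$, matching the converse.

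The key structural fact making this work—and the thing worth emphasizing—is that in the linear finite-field model a \emph{single} product distribution (uniform at each node) jointly maximizes the conditional entropy of every cut. This is precisely what closes the product-vs.-joint gap that is still present in the general statement of Theorem~\ref{thm:GenDetNet}. There is no real obstacle beyond this observation: the heavy lifting—the random-coding argument for arbitrary deterministic networks—has been delegated to Theorem~\ref{thm:GenDetNet}, and here the proof is a short calculation confirming that the linearity of $\Gbf_{\Omega,\Omega^c}$ over $\FF_2$ causes the lower and upper bounds to coincide.
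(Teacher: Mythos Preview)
Your proposal is correct. You derive Theorem~\ref{thm:LinDetNet} by specializing the general deterministic achievability result (Theorem~\ref{thm:GenDetNet}) with the uniform product input, and this is exactly the route the paper itself flags just before stating the theorem (``Theorems~\ref{thm:GenDetNet} and~\ref{thm:GenDetNetMulticast} immediately imply that this cutset bound is achievable''). However, the paper's actual proof in Section~\ref{sec:detCapacity} proceeds in the opposite logical order: it proves the linear finite-field case \emph{directly} first---using random \emph{linear} maps $\Fbf_j$ at the relays and a layer-by-layer error analysis that exploits the block-diagonal rank structure of $\Gbf_{\Omega,\Omega^c}$, followed by a time-expansion argument for non-layered networks---and only afterwards (Appendix~\ref{app:GenDet}) lifts these ideas to arbitrary deterministic functions via typicality-based random mappings. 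The paper's direct argument buys an additional structural conclusion that your black-box invocation of Theorem~\ref{thm:GenDetNet} does not: linear relay encoders already suffice for the linear model (see the Remark after Theorem~\ref{thm:LinDetNetMulticast}). Your route, on the other hand, is shorter once Theorem~\ref{thm:GenDetNet} is available and makes transparent precisely why the product-versus-joint distribution gap in (\ref{eq:GenDetNet}) versus (\ref{eq:CutSetGenDet}) collapses here.
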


\begin{theorem}
\label{thm:LinDetNetMulticast} The multicast capacity $C$ of a
linear finite-field deterministic relay network is given by
\begin{eqnarray}
\label{eq:ThmLinDetNetMulticast}
C = \min_{D\in\mathcal{D}}\min_{\Omega\in\Lambda_D}
\mathrm{rank}(\Gbf_{\Omega,\Omega^c})
\end{eqnarray}
where $\mathcal{D}$ is the set of destinations.
\end{theorem}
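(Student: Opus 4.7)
The plan is to reduce Theorem \ref{thm:LinDetNetMulticast} to Theorem \ref{thm:GenDetNetMulticast} together with the observation, already recorded in equation (\ref{eq:CutSetEv}), that for the linear finite-field model the product uniform distribution simultaneously optimizes every cut. The converse and the achievability then come essentially for free from material that has already been established in the excerpt.

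For the converse I would argue cut-by-cut and destination-by-destination. Fix any destination $D\in\mathcal{D}$. Since every multicast code is, in particular, a unicast code from $S$ to $D$, the multicast rate is upper bounded by the unicast cut-set bound for the sub-problem with sink $D$, which by (\ref{eq:CutSetEv}) equals $\min_{\Omega\in\Lambda_D}\mathrm{rank}(\Gbf_{\Omega,\Omega^c})$. Taking the minimum over $D\in\mathcal{D}$ yields the upper bound in (\ref{eq:ThmLinDetNetMulticast}).

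For the achievability I would invoke Theorem \ref{thm:GenDetNetMulticast} directly and then specialize the inner distribution. Choose each $\mathbf{x}_i$ independent and uniform on $\FF_p^q$; this is a product distribution, hence admissible in the supremum in (\ref{eq:GenDetNetMulticast}). For any cut $\Omega$, conditioning on $\mathbf{x}_{\Omega^c}$ makes the contribution of the sources on the receive side of the cut deterministic, so
\begin{equation*}
H(\mathbf{y}_{\Omega^c}\mid \mathbf{x}_{\Omega^c})=H(\Gbf_{\Omega,\Omega^c}\mathbf{x}_\Omega).
\end{equation*}
Because $\mathbf{x}_\Omega$ is uniform on a finite-dimensional vector space over $\FF_p$, the image $\Gbf_{\Omega,\Omega^c}\mathbf{x}_\Omega$ is uniform on the column space of $\Gbf_{\Omega,\Omega^c}$, so its entropy equals $\mathrm{rank}(\Gbf_{\Omega,\Omega^c})\log p$, i.e.\ $\mathrm{rank}(\Gbf_{\Omega,\Omega^c})$ bits for $p=2$. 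Thus this single product distribution simultaneously attains the optimum of each cut value for each destination, and Theorem \ref{thm:GenDetNetMulticast} delivers the multicast rate $\min_{D\in\mathcal{D}}\min_{\Omega\in\Lambda_D}\mathrm{rank}(\Gbf_{\Omega,\Omega^c})$.

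Because the heavy lifting (the coding theorem for general deterministic networks with a product distribution, including the simultaneous decodability at all destinations) is absorbed into Theorem \ref{thm:GenDetNetMulticast}, there is no genuine obstacle left. The only point that needs a moment of care is verifying that the same product uniform distribution, and not a destination-dependent one, maximizes every $H(\mathbf{y}_{\Omega^c}\mid \mathbf{x}_{\Omega^c})$; this is precisely the step $(b)$ in (\ref{eq:CutSetEv}), and it is what makes the $\min$ over $D$ commute with the maximization over input distributions in the linear finite-field setting. If one wanted to avoid appealing to Theorem \ref{thm:GenDetNetMulticast} as a black box, the only extra work would be a standard random linear coding argument at the relays plus a union bound over the $|\mathcal{D}|$ destinations, which does not affect the achievable rate.
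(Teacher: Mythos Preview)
Your proposal is correct and matches the route the paper itself points out just before stating Theorems \ref{thm:LinDetNet} and \ref{thm:LinDetNetMulticast}: once (\ref{eq:CutSetEv}) is established, Theorems \ref{thm:GenDetNet} and \ref{thm:GenDetNetMulticast} applied with the i.i.d.\ uniform product distribution immediately give achievability, and the converse is the cut-set bound for each $D\in\mathcal{D}$.

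The only difference worth noting is emphasis rather than substance. The paper's \emph{detailed} proof in Section \ref{sec:detCapacity} does not invoke Theorem \ref{thm:GenDetNetMulticast} as a black box; it carries out the direct random-linear-coding argument you sketch in your last paragraph (random linear maps $\Fbf_j$ at each relay, pairwise error probability bounded by $2^{-T\,\mathrm{rank}(\Gbf_{\Omega,\Omega^c})}$ for each cut, then a union bound over destinations in (\ref{eq:RateBndLinDetMult})), and only afterwards proves Theorem \ref{thm:GenDetNetMulticast} in Appendix \ref{app:GenDet} by the analogous typicality argument. So in the paper's logical order the linear case is the primitive and the general case is the extension, whereas you run the implication the other way. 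Either direction is fine; your reduction is shorter given Theorem \ref{thm:GenDetNetMulticast}, while the paper's direct argument has the side benefit of showing that \emph{linear} relay maps suffice in the linear finite-field model.
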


\textit{Remark:} Note that the results in Theorems \ref{thm:GenDetNet},
\ref{thm:GenDetNetMulticast}, \ref{thm:LinDetNet} and
\ref{thm:LinDetNetMulticast} apply to networks with arbitrary
topology, possible including cycles. For a single
source-destination pair the result in Theorem \ref{thm:LinDetNet}
generalizes the classical max-flow min-cut theorem for wired
networks and for multicast, the result in Theorem
\ref{thm:LinDetNetMulticast} generalizes the network coding result in
\cite{ACLY00}. As we will see in the proof, the encoding
functions at the relay nodes for the linear finite-field model can
be restricted to linear functions to obtain the result in Theorem
\ref{thm:LinDetNet}. 

\subsection{Gaussian relay networks}
\label{subsec:MainResGaussRelay}
\label{subsec:GaussianModel}
In the Gaussian model each node $j\in\mathcal{V}$ has $M_j$ transmit and $N_j$ receive antennas. The received signal ${\bf y}_j$ at node $j$ and
time $t$ is
\begin{equation}
  \label{eq:GaussianModel}
  {\bf y}_{j}[t]=\sum_{i\in \mathcal{V}} {\bf H}_{ij} {\bf x}_i[t] + {\bf z}_j[t]
\end{equation}
where ${\bf H}_{ij}$ is an $M_i \times N_j$ complex matrix whose
$(k,l)$ element represents the channel gain from the $k$-th transmit
antenna in node $i$ to the $l$-th receive antenna in node
$j$. Furthermore, we assume there is an average power constraint equal
to 1 at each transmit antenna. Also ${\bf z}_j$, representing the
channel noise, is modeled as complex Gaussian random vector. The
Gaussian noises at different receivers are assumed to be independent
of each other.

The following are our main results for Gaussian relay networks; it is proved in Section \ref{sec:GaussCapacity}.
\begin{theorem}
\label{thm:GaussMain}
The capacity $C$ of the Gaussian relay network satisfies
\begin{equation}
\label{eq:GaussRelayMain}
\overline{C}-\kappa \leq C \leq \overline{C},
\end{equation}
where $\overline{C}$ is the cut-set upper bound on the capacity of
$\mathcal{G}$ as described in Equation (\ref{eq:CutSetRef}), and
$\kappa$ is a constant and is upper bounded by
$12 \sum_{i=1}^{|\mathcal{V}|} N_i+ 3 \sum_{i=1}^{|\mathcal{V}|}M_i $.
\end{theorem}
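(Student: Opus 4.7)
The upper bound $C \le \overline{C}$ is simply the standard cut-set inequality~(\ref{eq:CutSetRef}), so all the content lies in the lower bound $C \ge \overline{C} - \kappa$. My plan is to push the quantize-map-and-forward scheme motivated in Section~\ref{sec:motivation} through a random-coding/joint-typicality analysis that parallels the proof of the deterministic-network result (Theorem~\ref{thm:GenDetNet}), and then to control the resulting achievable rate against the cut-set value $\overline{C}$ cut by cut.

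The scheme I would analyze is as follows. Generate the source codebook as $2^{NR}$ length-$N$ vectors with i.i.d.\ $\mathcal{CN}(\mathbf{0},\mathbf{I}_{M_S})$ entries (unit power per transmit antenna). At each relay $k$, independently generate a vector Gaussian quantizer for the received block with distortion covariance $\mathbf{I}_{N_k}$ (i.e.\ at the noise level) and a random codebook that maps each quantization index into an i.i.d.\ $\mathcal{CN}(\mathbf{0},\mathbf{I}_{M_k})$ transmit block; to deal with cycles and feedback I would run the scheme over $B$ blocks with a one-block relay delay and let $B\to\infty$, so that the per-symbol block overhead vanishes. Crucially, the destination performs joint-typicality decoding directly for the source message, \emph{without} attempting to reconstruct any of the relay quantized signals.

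Following the random-coding argument used for Theorem~\ref{thm:GenDetNet}, I expect each confusion with a wrong message to induce a partition $\Omega$ of the nodes into those carrying the correct codeword ($\Omega^c\ni D$) and those carrying the wrong codeword ($\Omega\ni S$). A typicality calculation should then give the rate constraint
\begin{equation}
R < \min_{\Omega\in\Lambda_D} I(\mathbf{x}_\Omega;\,\hat{\mathbf{y}}_{\Omega^c},\mathbf{y}_D \mid \mathbf{x}_{\Omega^c}),
\label{eq:achieveExpr}
\end{equation}
evaluated under the product i.i.d.\ Gaussian input distribution described above. I expect the hardest step to be making this typicality argument rigorous for arbitrary networks that may contain cycles and feedback: the plan is to time-unfold the network over $B$ blocks so that the resulting graph is acyclic, and to argue that the error event for each wrong message can be associated with a single cut in the unfolded graph.

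The last step is a gap-to-cut-set analysis. For any fixed cut $\Omega$, the achievable cut expression in~(\ref{eq:achieveExpr}) differs from the cut-set cut expression $I(\mathbf{x}_\Omega;\mathbf{y}_{\Omega^c}\mid \mathbf{x}_{\Omega^c})$ in two ways: (i) the input is fixed to product i.i.d.\ Gaussian rather than optimized, which a standard MIMO covariance-norm argument shows costs at most $2\min\{\sum_i M_i,\sum_i N_i\}$ bits; and (ii) each receiver observes $\hat{\mathbf{y}}_k$ rather than $\mathbf{y}_k$, which effectively doubles the noise covariance and costs at most one bit per receive antenna in $\Omega^c$. Summing these and taking the minimum over cuts yields the stated bound on $\kappa$; the constant $12$ in front of $\sum_i N_i$ absorbs the joint-typicality slack, the quantization rate overhead, and auxiliary $\log$-terms, and is not essential to the qualitative constant-gap statement.
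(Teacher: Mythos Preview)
Your high-level plan is right and matches the paper: quantize-map-and-forward, a cut-by-cut error analysis inherited from the deterministic proof, time-unfolding to reduce arbitrary topologies to layered ones, and a final ``i.i.d.\ versus optimal input'' gap bounded by the MIMO degrees of freedom. Where you diverge from the paper is in the analytic machinery, and in how the constant $12$ actually arises.

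The paper's primary proof does \emph{not} run a joint-typicality calculation leading to an expression like your~(\ref{eq:achieveExpr}). Instead it uses \emph{scalar} integer quantization $[\,\cdot\,]$ at every receiver and the identity
\[
\tfrac{1}{T}I(u;[\ybf_D]\mid F_\mathcal{V}) \;\ge\; \tfrac{1}{T}I(u;[\ybf_D]\mid \zbf_\mathcal{V},F_\mathcal{V}) \;-\; \tfrac{1}{T}H([\ybf_D]\mid u,F_\mathcal{V}).
\]
The first term is handled by \emph{conditioning on the noise}, which turns the network into a deterministic one; then the distinguishability-by-cut argument (exactly the one you sketch) applies verbatim, and Fano converts the error probability into a mutual-information lower bound. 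The second term is the noise-forwarding penalty, and it is bounded by an explicit calculation on the entropy of a quantized Gaussian: $H([v+z]\mid [v])\le 12$ per complex receive antenna, hence $H([\ybf_D]\mid u,F_\mathcal{V})\le 12T\sum_i N_i$. That is where the $12$ comes from --- not from typicality slack or quantizer rate overhead as you suggest. The remaining $2\min\{\sum M_i,\sum N_i\}$ is, as you say, the i.i.d.-vs-waterfilling loss plus the $\min(m,n)$ term in the MIMO pairwise-error bound.

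Your route (vector Gaussian quantizer at distortion $\Ibf$, direct typicality decoding) is also valid and is in fact the variant the paper sketches in Section~\ref{subsec:TypPf} and attributes to~\cite{OD10}; it yields a \emph{smaller} constant ($\approx 3$ per receive antenna). But be aware that your ``doubles the noise, one bit per antenna'' accounting underestimates the loss: the quantization noise is correlated with the signal and with the channel output, so getting a clean additive bound on $I(\xbf_\Omega;\ybf_{\Omega^c}\mid\xbf_{\Omega^c})-I(\xbf_\Omega;\hat{\ybf}_{\Omega^c}\mid\xbf_{\Omega^c})$ requires more than a noise-doubling argument. The paper's genie-aided decomposition sidesteps this by isolating all noise effects into the single entropy term $H([\ybf_D]\mid u,F_\mathcal{V})$.
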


\textit{Remark:} The gap $\kappa$ holds for all values of the channel gains and the result is
relevant particularly in the high rate regime. It is a stronger result than a degree-of-freedom result,
because it is non-asymptotic and provides a uniform guarantee to optimality for all channel SNRs. This is the first constant-gap approximation
of the capacity of Gaussian relay networks.  As shown in Section
\ref{sec:motivation}, the gap between the achievable rate of well
known relaying schemes and the cut-set upper bound in general depends
on the channel parameters and can become arbitrarily large. Analogous
to the results for deterministic networks, the result in Theorem
\ref{thm:GaussMain} applies to a network with arbitrary topology,
possibly with cycles.

The result in Theorem \ref{thm:GaussMain} easily extends to the
multicast case where we want to simultaneously transmit one message
from $S$ to all destinations in the set $D\in\mathcal{D}$.

\begin{theorem}
\label{thm:GaussMainMulticast} The multicast capacity $C_{\mathrm{mult}}$ of the Gaussian 
relay network satisfies
\begin{equation}
\label{eq:GaussRelayMainMult}
\overline{C}_{\mathrm{mult}}-\kappa \leq C_{\mathrm{mult}} \leq \overline{C}_{\mathrm{mult}},
\end{equation}
where $\overline{C}_{\mathrm{mult}}$ is the multicast cut-set upper bound on the capacity of
$\mathcal{G}$ given by 
\begin{equation}
\label{eq:MultCutSet}
\displaystyle
\overline{C}_{\mathrm{mult}} = \max_{p(\{\xbf_j\}_{j\in\mathcal{V}})} \min_{D\in\mathcal{D}}\min_{\Omega\in\Lambda_D}
I(\ybf_{\Omega^c};\xbf_{\Omega}|\xbf_{\Omega^c}),
\end{equation} 
and $\kappa$ is a constant and is upper bounded by $12
\sum_{i=1}^{|\mathcal{V}|} N_i+
3 \sum_{i=1}^{|\mathcal{V}|}M_i$.
\end{theorem}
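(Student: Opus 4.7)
My plan is to leverage the universality property of the quantize-map-and-forward (QMF) scheme used in the proof of Theorem \ref{thm:GaussMain}: since the encoding operation at each relay (quantize the received signal at the noise level, then map randomly to a Gaussian codeword) depends only on that relay's own received signal statistics and not on the identity of the destination, a single choice of source codebook and relay maps can be analyzed simultaneously with respect to every destination $D\in\mathcal{D}$. This mirrors exactly the way Theorem \ref{thm:GenDetNetMulticast} is obtained from Theorem \ref{thm:GenDetNet}: the achievable distribution is a product distribution $\prod_{i\in\mathcal{V}} p(\mathbf{x}_i)$ that does not depend on the destination, so the per-destination achievability argument applies verbatim to each $D\in\mathcal{D}$, and the overall rate is the minimum.

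Concretely, I would proceed as follows. First, fix the same i.i.d.\ Gaussian product input distribution $\prod_{i\in\mathcal{V}} p(\mathbf{x}_i)$ at the source and all relays used in the proof of Theorem \ref{thm:GaussMain}, and draw one realization of the random codebook at the source and of the random quantize-and-map functions at each relay. Second, for each individual destination $D\in\mathcal{D}$, run the same decoding and error analysis as in the unicast proof. That analysis gives, for each $D$, an achievable rate
\begin{equation}
R_D \;\geq\; \min_{\Omega\in\Lambda_D} I(\mathbf{y}_{\Omega^c};\mathbf{x}_{\Omega}\mid\mathbf{x}_{\Omega^c}) - \kappa_0,
\end{equation}
evaluated under the chosen product distribution, with $\kappa_0$ bounded by the same $12\sum_i N_i + 2\min\{\sum_i M_i,\sum_i N_i\}$ quantity; the key point is that this bound holds for the \emph{single} common code realization, not a destination-specific one. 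Third, take a union bound over the finite destination set $\mathcal{D}$: since the error probability at each destination decays exponentially in the block length, the probability that some destination fails is still vanishing, and hence there exists a deterministic code realization whose simultaneous multicast rate is at least $\min_{D\in\mathcal{D}} R_D$.

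The final step is the mild but necessary comparison between the achievable expression (in which the $\min_D\min_\Omega$ is evaluated at a \emph{product} input distribution) and the stated multicast cut-set $\overline{C}_{\mathrm{mult}}$ in (\ref{eq:MultCutSet}) (in which the outer maximum is over \emph{arbitrary} joint distributions). For each fixed cut $\Omega$, replacing the optimizing joint distribution by independent per-node Gaussians with the same per-antenna power constraint decreases the mutual information by a bounded additive amount (at most on the order of $\sum_i N_i$ bits per cut), exactly as in the unicast proof; swapping the order of the min over $D$ and $\Omega$ with this uniform bound yields the desired inequality $C_{\mathrm{mult}} \geq \overline{C}_{\mathrm{mult}} - \kappa$ with the stated $\kappa$. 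The upper bound $C_{\mathrm{mult}}\leq\overline{C}_{\mathrm{mult}}$ is the standard multicast cut-set inequality and requires no additional argument.

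The main obstacle, and the reason the argument is not entirely trivial given Theorem \ref{thm:GaussMain}, is ensuring that a \emph{single} random realization of source codebook and relay mappings works simultaneously for all destinations. The cleanest way to handle this is to keep the union bound inside the random-coding error analysis rather than applying Theorem \ref{thm:GaussMain} destination-by-destination; since the destination-dependence enters only through the MAC-type decoding step at $D$ and not through the relay operations, the error exponent analysis is essentially unchanged and the union bound over the finite set $\mathcal{D}$ does not inflate the gap $\kappa$.
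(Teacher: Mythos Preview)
Your proposal is correct and matches the paper's own argument essentially step for step: the paper also fixes the same i.i.d.\ Gaussian product distribution and relay quantize-map operations, observes that the per-destination error analysis of Theorem \ref{thm:GaussMain} then applies verbatim to each $D\in\mathcal{D}$ (so each destination decodes whenever $R<\overline{C}_{i.i.d.,D}-\kappa'$), and finally invokes Lemma \ref{lem:BeamForming} to pass from the product-distribution cut value to the general multicast cut-set $\overline{C}_{\mathrm{mult}}$ at the cost of the stated additive constant. Your explicit union bound over $\mathcal{D}$ is exactly what the paper leaves implicit in the phrase ``each destination \ldots\ will be able to decode.''
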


\textit{Remark:} The gap $\kappa$ stated in Theorems
\ref{thm:GaussMain}-\ref{thm:GaussMainMulticast} hold for 
scalar quantization scheme explored in detail in Section \ref{sec:GaussCapacity}. It is shown in \cite{OD10} that a vector
quantization scheme even with structured lattice codebooks can improve
this constant to $2\sum_{i=1}^{|\mathcal{V}|}
N_i+\min\{\sum_{i=1}^{|\mathcal{V}|}M_i,\sum_{i=1}^{|\mathcal{V}|}N_i\}\leq
3\sum_{i=1}^{|\mathcal{V}|} N_i$, which means when all nodes have
single antennas, the gap is at most $3|\mathcal{V}|$, for complex
Gaussian networks (or $1.5|\mathcal{V}|$ for real Gaussian networks). Also, the
results have been extended to the case when there are multiple
sources and {\em all} destinations need to decode {\em all} the
sources, \emph{i.e.,} multi-source multicast, in \cite{PDT09}.

\subsection{Proof program}
In the following sections we formally prove these main results. The
main proof program consists of first proving Theorem
\ref{thm:LinDetNet} and the corresponding multicast result for linear
finite-field deterministic networks in Section
\ref{sec:detCapacity}. Since the proof logic of the achievable rate
for general deterministic networks (\ref{eq:GenDetNet},
\ref{eq:GenDetNetMulticast}) is similar to that for the  linear
case, Theorems \ref{thm:GenDetNet} and \ref{thm:GenDetNetMulticast}
are proved in Appendix \ref{app:GenDet}. We use the proof ideas for the deterministic analysis to obtain the universally-approximate
capacity characterization for Gaussian relay networks in Section
\ref{sec:GaussCapacity}. In both cases we illustrate the proof by first
going through an example.

\section{Deterministic relay networks}
\label{sec:detCapacity}

In this section we characterize the capacity of linear finite-field
deterministic relay networks and prove Theorems \ref{thm:LinDetNet}
and \ref{thm:LinDetNetMulticast}. 

To characterize the capacity of linear finite-field deterministic
relay networks, we first focus on networks that have a layered
structure, i.e., all paths from the source to the destination have
equal lengths. With this special structure we get a major
simplification: a sequence of messages can each be encoded into a
block of symbols and the blocks do not interact with each other as
they pass through the relay nodes in the network. The proof of the
result for layered network is similar in style to the random coding
argument in Ahlswede et al. \cite{ACLY00}. We do this in Section
\ref{sec:LayFF}. Next, in Section \ref{sec:TimExpDet}, we extend the
result to an arbitrary network by expanding the network over
time\footnote{The concept of time-expanded network is also used in
  \cite{ACLY00}, but the use there is to handle cycles. Our main use
  is to handle interaction between messages transmitted at different
  times, an issue that only arises when there is superposition of
  signals at nodes.}. Since the time-expanded network is layered we
can apply our result in the first step to it and complete the proof.

\subsection{Layered networks}
\label{sec:LayFF}
The network given in Figure \ref{fig:Confusability} is an example of a
layered network where the number of hops for each path from
$S$ to $D$ is three. We start by describing the encoding scheme.

\subsubsection{Encoding for layered linear deterministic relay network}
\label{subsec:EncLinDetLay}
We have a single source $S$ with a sequence of messages $w_k
\in\{1,2,\ldots,2^{TR}\}$, $k=1,2,\ldots$.  Each message is encoded by
the source $S$ into a signal over $T$ transmission times (symbols),
giving an overall transmission rate of $R$. Relay $j$ operates over
blocks of time $T$ symbols, and uses a mapping
$f_j:\mathcal{Y}_j^T\rightarrow \mathcal{X}_j^T$ on its received symbols
from the previous block of $T$ symbols to transmitted signals in the next
block. For the linear deterministic model (\ref{eq:LinDetModel}), we use linear
mappings $f_j(\cdot)$, {\em i.e.,}
\begin{equation}
\label{eq:LinEncFnLay}
\xbf_j = \Fbf_j \ybf_j ,
\end{equation}
where the vectors $\xbf_j=[\xbf_j[1], \ldots, \xbf_j[T]]^t$ and
$\ybf_j=[\ybf_j[1], \ldots, \ybf_j[T]]^t$ respectively represent the
transmit and received signals over $T$ time units, and the matrix
$\Fbf_j$ is chosen uniformly randomly over all matrices in
$\FF_2^{qT\times qT}$.  Each relay does the encoding prescribed by
(\ref{eq:LinEncFnLay}).  Given the knowledge of all the encoding
functions $\Fbf_j$ at the relays, the destination $D$
attempts to decode each message $w_k$ sent by the source. This
encoding strategy is illustrated in Figure \ref{fig:LinEncStrategy}.

\begin{figure}
\centering
\scalebox{0.8}{\includegraphics{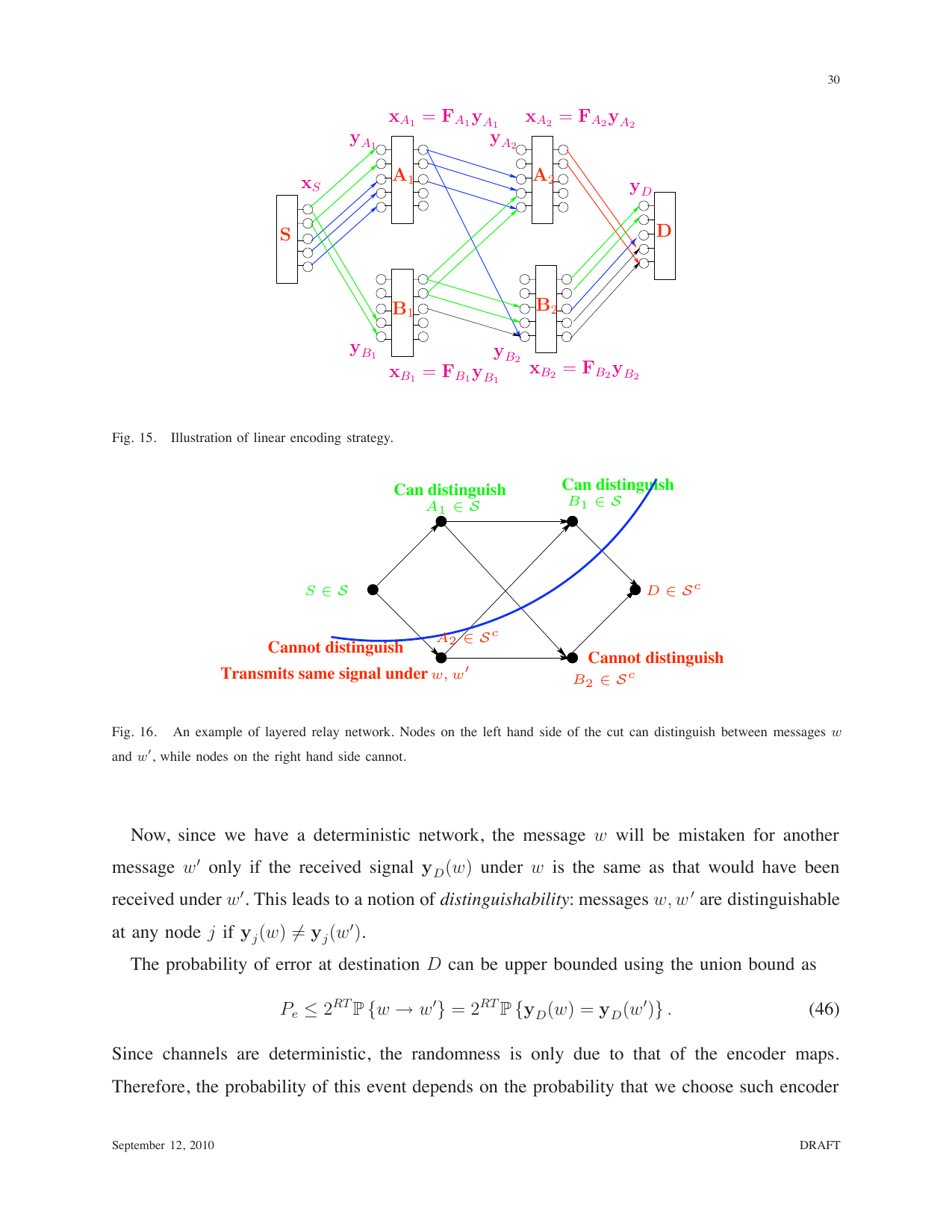}}
\caption{Illustration of linear encoding strategy.\label{fig:LinEncStrategy}}
\end{figure}


Suppose message $w_k$ is sent by the source in block $k$. Since each relay $j$ operates only on block of lengths $T$ and the network is layered, the signals received at block $k$ at any relay
pertain to only message $w_{k-l_j}$ where $l_j$ is the path length
from source to relay $j$.

\subsubsection{Proof illustration}
\label{subsec:PfIdeaDet}
In order to illustrate the proof ideas of Theorem \ref{thm:GenDetNet}
we examine the network shown in Figure \ref{fig:Confusability}.

\begin{figure}[h]
\centering
\scalebox{0.7}{\includegraphics{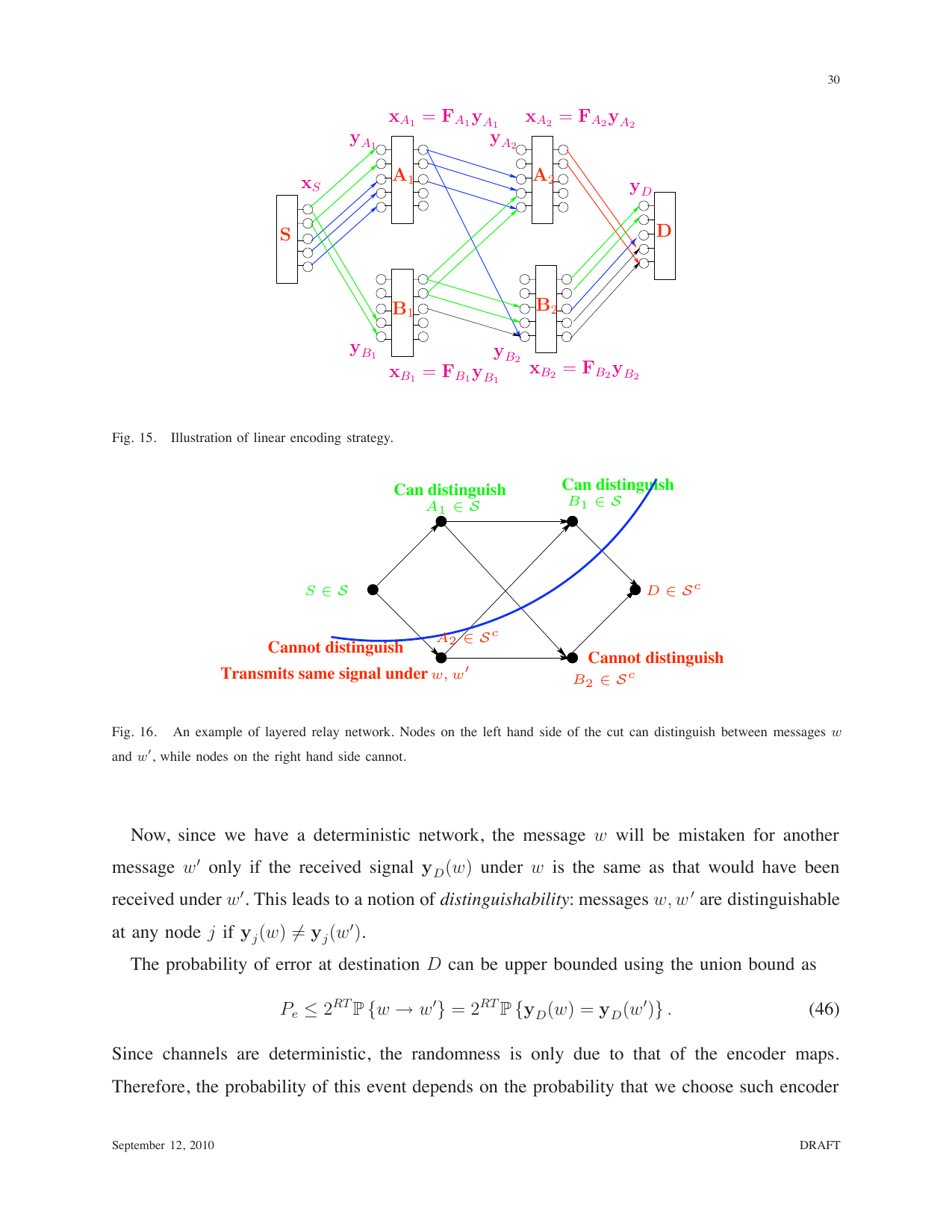}}
\caption{An example of layered relay network. Nodes on the left hand
  side of the cut can distinguish between messages $w$ and $w'$, while
  nodes on the right hand side cannot. \label{fig:Confusability} }
\end{figure}

Without loss of generality
consider the message $w=w_1$ transmitted by the source at block
$k=1$. At node $j$ the signals pertaining to this message are received
by the relays at block $l_j$. For notational simplicity we
will drop the block numbers associated with the transmitted and
received signals for this analysis.

Now, since we have a deterministic network, the message $w$ will be
mistaken for another message $w'$ only if the received signal
$\ybf_D(w)$ under $w$ is the same as that would have been
received under $w'$. This leads to a notion of {\em
distinguishability}: messages $w,w'$ are distinguishable
at any node $j$ if $\ybf_j(w)\neq \ybf_j(w')$.

The probability of error at destination $D$ can be upper bounded using the
union bound as
\begin{equation}
\label{eq:PeUB}
P_e \leq 2^{RT}\prob{w\rightarrow w'}
= 2^{RT}\prob{\ybf_D(w)=\ybf_D(w')} .
\end{equation}
Since channels are deterministic, the randomness is only due to that
of the encoder maps. Therefore, the probability of this event depends
on the probability that we choose such encoder maps.  Now, we can
write
{\small
\begin{align}
\nonumber & \prob{w\rightarrow w'} = \sum_{\Omega\in\Lambda_D}\\
& \label{eq:PwErr}
 \underbrace{\prob{\mbox{Nodes in } \Omega
\mbox{ can distinguish } w,w' \mbox{ and nodes in } \Omega^c \mbox{
cannot} }}_{\mathcal{P}}, ~
\end{align}
}
since the events that correspond to occurrence of the
distinguishability sets $\Omega\in\Lambda_D$ are disjoint. Let us
examine one term in the summation in (\ref{eq:PwErr}). For example,
consider the cut $\Omega=\{S,A_1,B_1\}$ shown in Figure
\ref{fig:Confusability}. A necessary condition for this cut to be the
distinguishability set is that $\ybf_{A_2}(w)=\ybf_{A_2}(w')$, along
with $\ybf_{B_2}(w)=\ybf_{B_2}(w')$ and $\ybf_{D}(w)=\ybf_{D}(w')$.
We first define the following events:
\begin{eqnarray}
  \nonumber \mathcal{A}_i&=& \text{the event that $w$ and $w'$ are undistinguished at node} \\
\nonumber  && \text{$A_i$ (\emph{i.e.}, $\ybf_{A_i}(w) = \ybf_{A_i}(w')$)},\quad i=1,2\\
  \nonumber \mathcal{B}_i&=& \text{the event that $w$ and $w'$ are undistinguished at node} \\
\nonumber   && \text{$B_i$ (\emph{i.e.}, $\ybf_{B_i}(w) = \ybf_{B_i}(w')$)},\quad i=1,2\\
\nonumber \nonumber \mathcal{D}&=& \text{the event that $w$ and $w'$ are undistinguished at node} \\
\label{eq:EventsDetDef} && \text{$D$ (\emph{i.e.}, $\ybf_{D}(w) = \ybf_{D}(w')$)}.
\end{eqnarray}
We now have
\begin{eqnarray}
\nonumber \mathcal{P} &=& \mathbb{P} \{\mathcal{A}_2,\mathcal{B}_2,\mathcal{D},\mathcal{A}_1^c,\mathcal{B}_1^c\} \\
\nonumber &=& \mathbb{P} \{\mathcal{A}_2 \} \times \mathbb{P} \{\mathcal{B}_2,\mathcal{A}_1^c|\mathcal{A}_2 \}  \times \mathbb{P} \{ \mathcal{D},\mathcal{B}_1^c| \mathcal{A}_2,\mathcal{B}_2,\mathcal{A}_1^c \} \\
\nonumber & \leq &  \mathbb{P} \{\mathcal{A}_2 \} \times \mathbb{P} \{\mathcal{B}_2|\mathcal{A}_1^c,\mathcal{A}_2 \} \times \mathbb{P} \{ \mathcal{D}|\mathcal{B}_1^c, \mathcal{A}_2,\mathcal{B}_2,\mathcal{A}_1^c \} \\
 & = & \mathbb{P} \{\mathcal{A}_2 \} \times \mathbb{P} \{\mathcal{B}_2|\mathcal{A}_1^c,\mathcal{A}_2 \}  \times \mathbb{P} \{ \mathcal{D}|\mathcal{B}_1^c,\mathcal{B}_2\} 
\end{eqnarray}

where the last step is true since there is an independent random
mapping at each node and we have the following Markov structure in the
network \beq X_S \rightarrow (Y_{A_1},Y_{A_2}) \rightarrow
(Y_{B_1},Y_{B_2}) \rightarrow Y_{D}. \eeq

As the source does a random linear mapping of the message onto
$\xbf_S(w)$, the probability of $\mathcal{A}_2$ is
\begin{eqnarray}
\nonumber
\mathbb{P} \{\mathcal{A}_2 \}&=&\prob{(\Ibf_T\otimes\Gbf_{S,A_2})(\xbf_S(w)-\xbf_S(w'))=\mathbf{0}}\\
\label{eq:ExLinEP1} & =&
2^{-T\mbox{rank}(\Gbf_{S,A_2})},
\end{eqnarray}
because the random mapping given in (\ref{eq:LinEncFnLay}) induces
independent uniformly distributed $\xbf_S(w),\xbf_S(w')$. Here,
$\otimes$ is the Kronecker matrix product\footnote{If $A$ is an m-by-n
  matrix and $B$ is a p-by-q matrix, then the Kronecker product $A
  \otimes B$ is the mp-by-nq block matrix $A \otimes B=
  \left[\begin{array}{ccc}a_{11}B & \cdots & a_{1n}B \\ \vdots &
      \ddots & \vdots \\ a_{m1}B & \cdots & a_{mn}
      B \end{array}\right]$}.  Now, in order to analyze the second
probability, we see that $\mathcal{A}_2$ implies
$\xbf_{A_2}(w)=\xbf_{A_2}(w')$, {\em i.e.,} the {\em same} signal is
sent under both $w,w'$. Also if $\ybf_{A_1}(w) \neq \ybf_{A_1}(w')$,
then the random mapping given in (\ref{eq:LinEncFnLay}) induces
independent uniformly distributed $\xbf_{A_1}(w),\xbf_{A_1}(w')$
Therefore, we get 
\begin{eqnarray}
\nonumber  \mathbb{P} \{\mathcal{B}_2|
\mathcal{A}_1^c,\mathcal{A}_2 \} & =&
\prob{(\Ibf_T\otimes\Gbf_{A_1,B_2})(\xbf_{A_1}(w)-\xbf_{A_1}(w'))=
  \mathbf{0}}\\
  \label{eq:ExLinEP2} &=& 2^{-T\mbox{rank}(\Gbf_{A_1,B_2})}.  \end{eqnarray}

Similarly, we get
\begin{eqnarray}
\nonumber 
\mathbb{P} \{ \mathcal{D}|\mathcal{B}_1^c,\mathcal{B}_2\} &=&
\prob{(\Ibf_T\otimes\Gbf_{B_1,D})(\xbf_{B_1}(w)-\xbf_{B_1}(w'))=\mathbf{0}}\\
\label{eq:ExLinEP3}  &=& 2^{-T\mbox{rank}(\Gbf_{B_1,D})}.  \end{eqnarray}

Putting these together we see
that in (\ref{eq:PwErr}), for the network in Figure
\ref{fig:Confusability}, we have,
\begin{eqnarray}
\nonumber \mathcal{P} & \leq &
2^{-T\mbox{rank}(\Gbf_{S,A_2})}2^{-T\mbox{rank}(\Gbf_{A_1,B_2})}
2^{-T\mbox{rank}(\Gbf_{B_1,D})}\\ \label{eq:ConfProb} &=& 2^{-T\{\mbox{rank}(\Gbf_{S,A_2})
+\mbox{rank}(\Gbf_{A_1,B_2})+\mbox{rank}(\Gbf_{B_1,D})\}}.
\end{eqnarray}
Note that since
\[
\Gbf_{\Omega,\Omega^c} = \left [ \begin{array}{ccc}\Gbf_{S,A_2} &
\Zerobf & \Zerobf \\ \Zerobf  & \Gbf_{A_1,B_2} & \Zerobf \\
\Zerobf & \Zerobf & \Gbf_{B_1,D} \end{array}\right ],
\]
the upper bound for $\mathcal{P}$ in (\ref{eq:ConfProb}) is exactly
$2^{-T\mbox{rank}(\Gbf_{\Omega,\Omega^c})}$. Therefore, by
substituting this back into (\ref{eq:PwErr}) and (\ref{eq:PeUB}), we
get
\begin{equation}
\label{eq:RateBndLinEx}
\displaystyle
P_e \leq 2^{RT} |\Lambda_D| 2^{-T\min_{\Omega\in\Lambda_D}
\mbox{rank}(\Gbf_{\Omega,\Omega^c})},
\end{equation}
which can be made as small as desired if $R<\min_{\Omega\in\Lambda_D}
\mbox{rank}(\Gbf_{\Omega,\Omega^c})$, which is the result claimed in
Theorem \ref{thm:LinDetNet}.

\subsubsection{Proof of Theorems \ref{thm:LinDetNet} and
  \ref{thm:LinDetNetMulticast} for general layered networks}
\label{subsec:LinDet}

Consider the message $w=w_1$ transmitted by the source at block
$k=1$. The message $w$ will be mistaken for another message $w'$ only
if the received signal $\ybf_D(w)$ under $w$ is the same as that would
have been received under $w'$. Hence the probability of error at
destination $D$ can be upper bounded by,
\begin{equation}
\label{eq:PeUB2}
P_e \leq 2^{RT}\prob{w\rightarrow w'}
= 2^{RT}\prob{\ybf_D(w)=\ybf_D(w')} .
\end{equation}
Similar to  Section \ref{subsec:PfIdeaDet}, we can write
\begin{align}
\nonumber & \prob{w\rightarrow w'} = \sum_{\Omega\in\Lambda_D}
\\
 \label{eq:PwErr2} & \underbrace{\prob{\mbox{Nodes in } \Omega
\mbox{ can distinguish } w,w' \mbox{ and nodes in } \Omega^c \mbox{
cannot} }}_{\mathcal{P}} ~
\end{align}
For any such cut $\Omega$, define the following sets:
\begin{itemize}
\item $L_l (\Omega)$: the nodes that are in $\Omega$ and are at layer
  $l$ (for example $S \in L_1 (\Omega)$),
\item $R_l (\Omega)$: the nodes that are in $\Omega^c$ and are at
  layer $l$ (for example $D \in R_{l_D} (\Omega)$).
\end{itemize}

We now define the following events:
\begin{itemize}
\item $\mathcal{L}_l$: Event that the nodes in $L_l$ can distinguish
  between $w$ and $w'$, \emph{i.e.}, $\ybf_{L_l}(w) \neq
  \ybf_{L_l}(w')$,
\item $\mathcal{R}_l$: Event that the nodes in $R_l$ cannot
  distinguish between $w$ and $w'$, \emph{i.e.}, $\ybf_{R_l}(w) =
  \ybf_{R_l}(w')$.
\end{itemize}

Similar to Section \ref{subsec:PfIdeaDet}, we can write
\begin{eqnarray}
  \mathcal{P} &=& \mathbb{P} \{ \mathcal{R}_l,\mathcal{L}_{l-1} ,l=2,\ldots,l_D  \} \\
  &=& \prod_{l=2}^{l_D}  \mathbb{P} \{\mathcal{R}_l,\mathcal{L}_{l-1}|\mathcal{R}_j,\mathcal{L}_{j-1}, j=2,\ldots,l-1\} \\
  &\leq& \prod_{l=2}^{l_D}  \mathbb{P} \{\mathcal{R}_l|\mathcal{R}_j,\mathcal{L}_{j}, j=2,\ldots,l-1\} \\
  &\stackrel{(a)}{=}& \prod_{l=2}^{l_D}  \mathbb{P} \{\mathcal{R}_l|\mathcal{R}_{l-1},\mathcal{L}_{l-1}\} 
 \end{eqnarray}
 where (a) is true due to the Markovian nature of the layered
 network. Note that as in the example, all nodes in $R_{l-1}$ transmit the same
 signal under both $w$ and $w'$ (\emph{i.e.}, $\xbf_j(w)=\xbf_j(w')$,
 $\forall j\in R_{l-1}$). Therefore, just as in Section
 \ref{subsec:PfIdeaDet}, we see that \emph{i.e.},
 \begin{align*}
&   \mathbb{P} \{\mathcal{R}_l|\mathcal{R}_{l-1},\mathcal{L}_{l-1}\}\\
   &=
   \mathbb{P} \{ \ybf_{R_l}(w)=\ybf_{R_l}(w')|\ybf_{L_{l-1}}(w)\neq\ybf_{L_{l-1}}(w'),\\  & \quad \quad \quad \ybf_{R_{l-1}}(w)=\ybf_{L_{R-1}}(w') \} \\
   &=  \mathbb{P} \{\ybf_{R_l}(w)=\ybf_{R_l}(w')|\ybf_{L_{l-1}}(w)\neq\ybf_{L_{l-1}}(w'), \\  & \quad \quad \quad \xbf_{R_{l-1}}(w)=\xbf_{L_{R-1}}(w')\}\\
   &= \mathbb{P} \{(\Ibf_T\otimes\Gbf_{L_{l-1},R_{l}})(\xbf_{L_{l-1}}(w)-\xbf_{L_{l-1}}(w'))= \mathbf{0} | \\  & \quad \quad \quad \ybf_{L_{l-1}}(w)\neq\ybf_{L_{l-1}}(w') \} \\
   &\stackrel{(a)}{=} 2^{-T\mbox{rank}(\Gbf_{L_{l-1},R_{l}})}.
\end{align*}
where $\Gbf_{L_{l-1},R_{l}}$ is the transfer matrix from transmitted
signals in $L_{l-1}$ to the received signals in $R_l$.  Step (a) is
true since $\ybf_{L_{l-1}}(w)\neq\ybf_{L_{l-1}}(w')$ and hence the
random mapping given in (\ref{eq:LinEncFnLay}) induces independent
uniformly distributed $\xbf_{L_{l-1}}(w),\xbf_{L_{l-1}}(w')$.

Therefore we get 
\begin{eqnarray}
\label{eq:GenPEPLin} \mathcal{P} \leq    \prod_{l=2}^d
2^{-T\mbox{rank}(\Gbf_{L_{l-1},R_{l}})}  = 2^{-T\mbox{rank}(\Gbf_{\Omega,\Omega^c})}. 
\end{eqnarray}

By substituting this back into (\ref{eq:PwErr2}) and (\ref{eq:PeUB2}),
we see that
\begin{equation}
\label{eq:RateBndLinDet}
\displaystyle
P_e \leq 2^{RT} |\Lambda_D| 2^{-T\min_{\Omega\in\Lambda_D}
\mbox{rank}(\Gbf_{\Omega,\Omega^c})},
\end{equation}
which can be made as small as desired if $R<\min_{\Omega\in\Lambda_D}
\mbox{rank}(\Gbf_{\Omega,\Omega^c})$, which is the result claimed in
Theorem \ref{thm:LinDetNet} for layered networks.

To prove Theorem \ref{thm:LinDetNetMulticast} for layered networks, we
note that for {\em any} destination $D\in\mathcal{D}$, the probability
of error expression in \eqref{eq:RateBndLinDet} holds. Therefore, if
{\em all} receivers in $\mathcal{D}$ have to be able to decode the
message, then an error occurs if any of them fails to
decode. Therefore, using the union bound and \eqref{eq:RateBndLinDet}
we can bound this error probability as,
\begin{eqnarray}
\nonumber P_e & \leq & 2^{RT} \sum_{D\in\mathcal{D}}|\Lambda_D| 2^{-T\min_{\Omega\in\Lambda_D}} \\
\label{eq:RateBndLinDetMult} &\leq  & 2^{RT} 2^{|\mathcal{V}|} 2^{-T\min_{D\in\mathcal{D}}\min_{\Omega\in\Lambda_D}
\mbox{rank}(\Gbf_{\Omega,\Omega^c})},
\end{eqnarray}
which clearly goes to zero as long as  $R<\min_{D\in\mathcal{D}}\min_{\Omega\in\Lambda_D}
\mbox{rank}(\Gbf_{\Omega,\Omega^c})$, which is the result claimed in
Theorem \ref{thm:LinDetNetMulticast} for layered networks.

Therefore, we have proved a special case of Theorem \ref{thm:LinDetNetMulticast} for
layered networks. 

\subsection{Arbitrary networks (not necessarily layered)}
\label{sec:TimExpDet}

\begin{figure*}
     \centering \subfigure[]{
       \scalebox{0.7}{\includegraphics{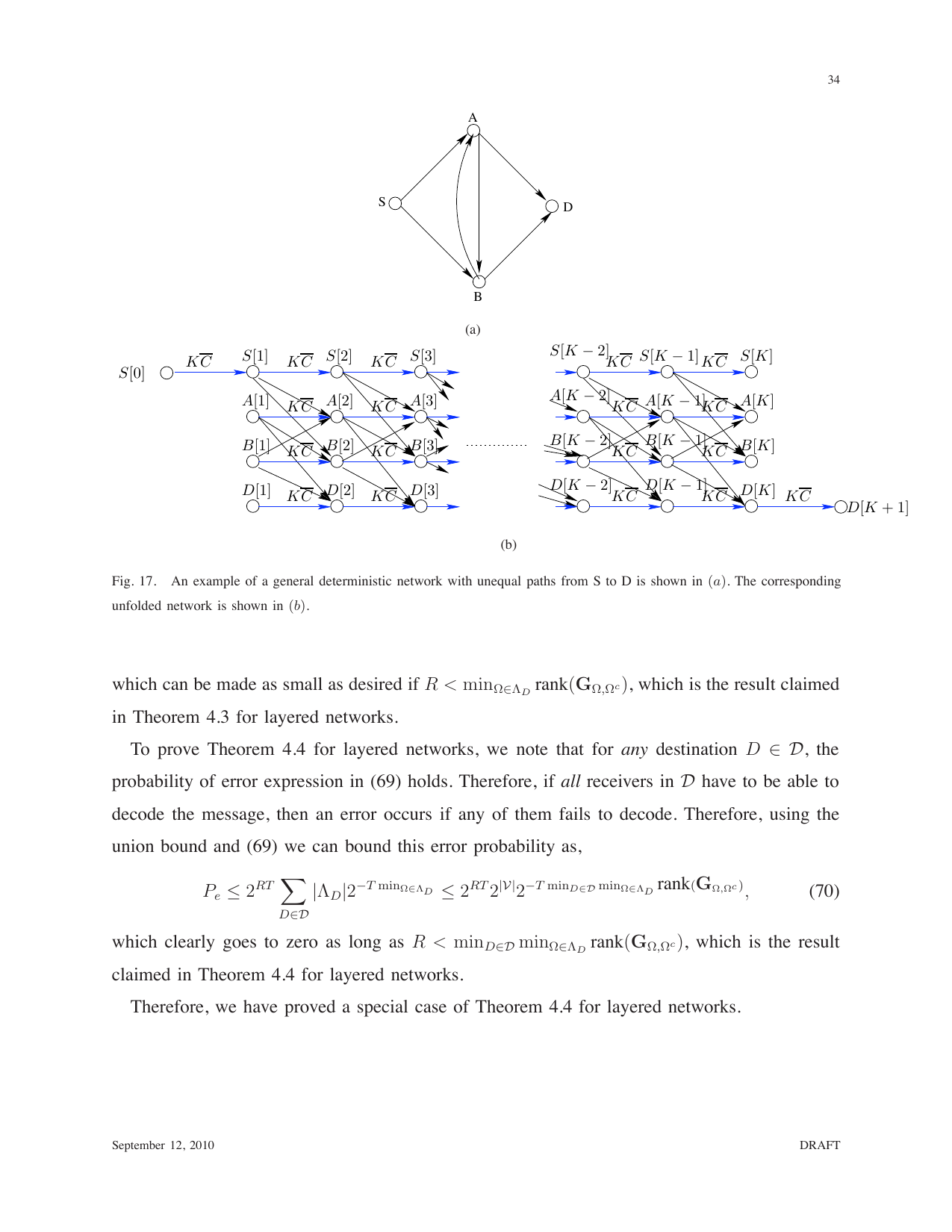}}
}
     \hspace{0in}
     \subfigure[]{
      \scalebox{0.7}{ \includegraphics{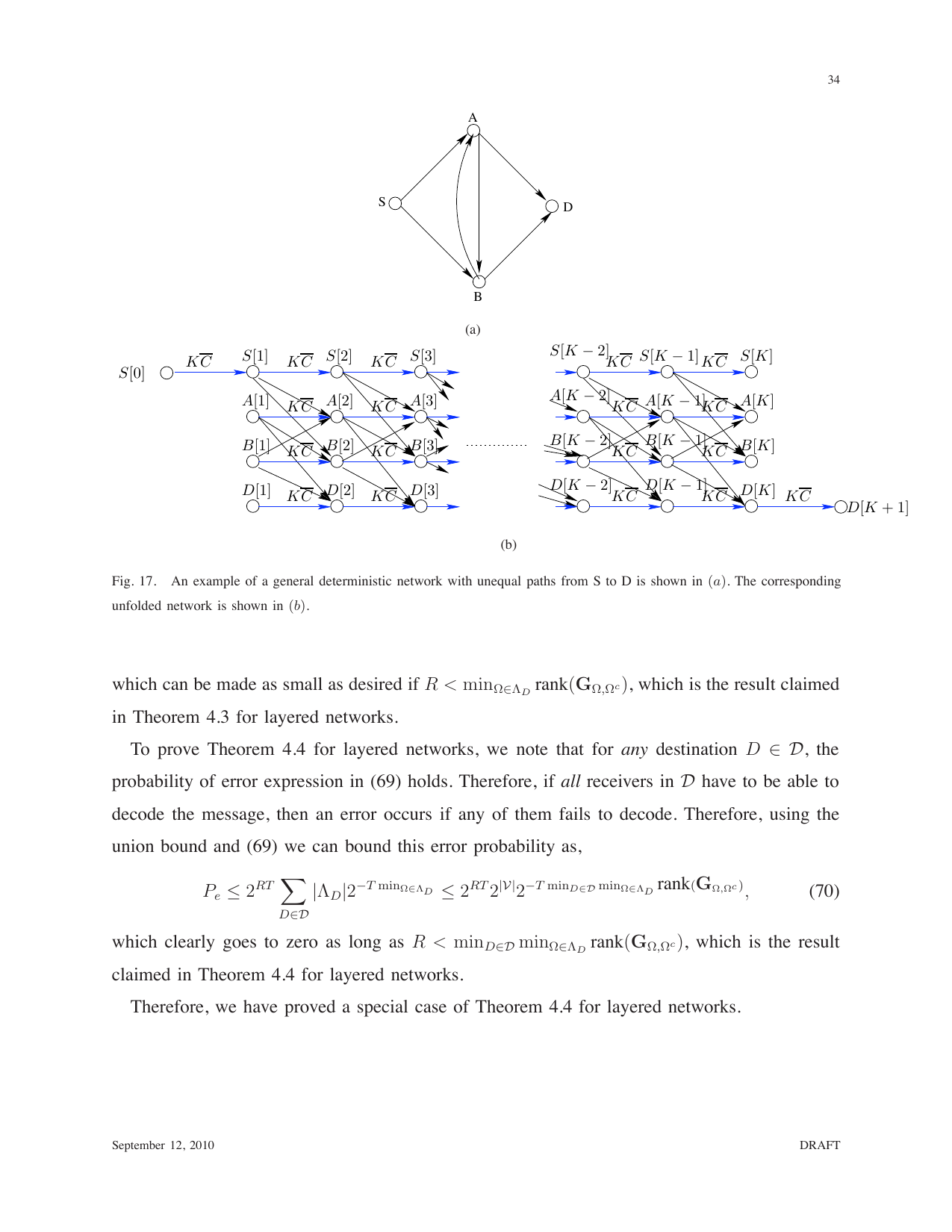}}
}
\caption{An example of a general deterministic network with unequal
paths from S to D is shown in $(a)$.  The corresponding unfolded
network is shown in $(b)$.\label{fig:unfolding}}
\end{figure*}

Given the proof for layered networks with equal path lengths, we are
ready to tackle the proof of Theorem \ref{thm:LinDetNet} and Theorem
\ref{thm:LinDetNetMulticast} for general relay networks. The
ingredients are developed below. 

We first unfold the network $\mathcal{G}$ over time to create a
layered network. The idea is to unfold the network to $K$ stages such
that i-th stage represents what happens in the network during $(i-1)T$
to $iT-1$ symbol times. More concretely, the $K$ time-steps unfolded
network, $\mathcal{G}_{\text{unf}}^{(K)}=( \mathcal{V}_{\text{unf}}^{(K)},  \mathcal{E}_{\text{unf}}^{(K)})$, is constructed from
$\mathcal{G}=(\mathcal{V},\mathcal{E})$ as follows:
\begin{itemize}
\item The network has $K+2$ stages (numbered from 0 to $K+1$)
\item Stage 0 has only node $S[0]$ and stage $K+1$ has only node
  $D[K+1]$. $S[0]$ and $D[K+1]$ respectively represent the source and
  the destination in $\mathcal{G}_{\text{unf}}^{(K)}$.
\item  Each node $v \in \mathcal{V}$ appears at stage $i$ as a relay denoted by $v[i]$, $i=1,\ldots,K$.
\end{itemize}

Also, the links in $\mathcal{G}_{\text{unf}}^{(K)}$ are as follows
\begin{itemize}
\item There are wired links (i.e., links that are orthogonal to all
  other transmissions in the network) of capacity $K \overline{C}$,
  where $\overline{C}=\min_{\Omega\in\Lambda_D}
\mathrm{rank}(\Gbf_{\Omega,\Omega^c})$ is the min-cut value of $\mathcal{G}$, between
\begin{enumerate}
\item $(S[0],S[1])$ and $(D[K],D[K+1])$
\item $(v[i],v[i+1])$, for all $v \in \mathcal {V}$ and $1 \leq i < K$,
\end{enumerate}
\item Node $v[i]$ is connected to node $w[i+1]$ with the linear
  finite-field deterministic channel of the original network
  $\mathcal{G}$, for all $(v,w) \in \mathcal{E}$, $v \neq w$.
\end{itemize}
The transmit vector of node $v[i] \in \mathcal{V}_{\text{unf}}^{(K)}$
is denoted by the pair $(\xbf_{v[i]}^{(1)},\xbf_{v[i]}^{(2)})$, where
$\xbf_{v[i]}^{(1)}\in \FF_2^{K\overline{C}}$ and $\xbf_{v[i]}^{(2)}\in
\FF_2^q$ ($q$ is the size of the vectors in the linear finite field
network) are respectively the inputs of the wired and the linear
finite-field channels. Intuitively, the wired channels represent the
memory at each node. Furthermore, the cut-set bound on the capacity of  $\mathcal{G}_{\text{unf}}^{(K)}$ is denoted by $\overline{C}_{\text{unf}}^{(K)}$, i.e.
\beq \label{eq:cutSetUnfold} \overline{C}_{\text{unf}}^{(K)}= \min_{\Omega_{\text{unf}}\in
  \Lambda_{\text{unf}}}
\mathrm{rank}(\Gbf_{\Omega_{\text{unf}},\Omega_{\text{unf}}^c}), \eeq
where the minimum is taken over all
cuts $\Omega_{\text{unf}}$ in $\mathcal{G}_{\text{unf}}^{(K)}$.

For example in Figure \ref{fig:unfolding} (a) a network with unequal
paths from $S$ to $D$ is shown. Figure \ref{fig:unfolding}(b) shows
the unfolded form of this network.

We now prove the following lemma.

\begin{lemma}
\label{lem:UnfLinDet} Any communication rate $R<\frac{1}{K}  \overline{C}_{\text{unf}}^{(K)}$ is achievable in $\mathcal{G}$, where $\overline{C}_{\text{unf}}^{(K)}$ is defined in (\ref{eq:cutSetUnfold}).
\end{lemma}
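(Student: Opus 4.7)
The plan is to reduce this statement to the layered case already handled in Section~\ref{subsec:LinDet}, by applying the layered-network achievability to the unfolded graph $\mathcal{G}_{\text{unf}}^{(K)}$ and then emulating the resulting scheme in $\mathcal{G}$ by having each physical node play the role of all $K$ of its temporal copies over $K$ consecutive blocks of symbols.

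First I would verify that $\mathcal{G}_{\text{unf}}^{(K)}$ fits the hypotheses of the layered special case of Theorem~\ref{thm:LinDetNet} proved in Section~\ref{subsec:LinDet}: by construction every edge goes from a stage-$i$ node to a stage-$(i{+}1)$ node, so all source-to-destination paths have common length $K{+}1$, and the wired links of capacity $K\overline{C}$ can be modelled as $K\overline{C}$ parallel noiseless bit-pipes, {\em i.e.,} as a trivial instance of (\ref{eq:LinDetModel}) with identity transfer matrix. Running the random-linear-encoding argument from Section~\ref{subsec:LinDet} on the transfer matrices $\Gbf_{\Omega_{\text{unf}},\Omega_{\text{unf}}^c}$ of $\mathcal{G}_{\text{unf}}^{(K)}$ then gives that any rate $R'<\overline{C}_{\text{unf}}^{(K)}$ is achievable on $\mathcal{G}_{\text{unf}}^{(K)}$ with arbitrarily small error for block length $T$ sufficiently large.

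Next I would emulate this scheme on $\mathcal{G}$. Partition $KT$ symbol times of $\mathcal{G}$ into $K$ consecutive blocks of length $T$; during block~$i$, each physical node $v\in\mathcal{V}$ plays $v[i]$ by feeding (i)~the superposition $\sum_u \Sbf^{q-n_{uv}}\xbf_{u[i-1]}^{(2)}$ that it physically received in block $i{-}1$ and (ii)~the wired-link input $\xbf_{v[i-1]}^{(1)}$ kept in its own memory, through the random linear map prescribed for $v[i]$ in the unfolded scheme, and then transmitting $\xbf_{v[i]}^{(2)}$ on the channel while storing $\xbf_{v[i]}^{(1)}$ locally for block $i{+}1$. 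By construction the inputs and outputs of every temporal copy $v[i]$ agree with what the corresponding node sees in $\mathcal{G}_{\text{unf}}^{(K)}$, so one successful transmission on the unfolded network, carrying $R'T$ message bits, is emulated in $KT$ symbol times of $\mathcal{G}$, yielding per-symbol rate $R'/K$, which can be made arbitrarily close to $\overline{C}_{\text{unf}}^{(K)}/K$.

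The one detail that requires scrutiny is the specific choice of wired-link capacity $K\overline{C}$: the emulation needs a physical node to retain $\xbf_{v[i]}^{(1)}$ across blocks, which is unproblematic since physical nodes have unlimited causal memory, but one also wants these wired links to not artificially cap $\overline{C}_{\text{unf}}^{(K)}$ in a way that would make the claim weaker than advertised. Any source-destination cut of $\mathcal{G}_{\text{unf}}^{(K)}$ crosses the linear finite-field sub-network of at most $K$ stages, each contributing rank at most $\overline{C}$, so $\overline{C}_{\text{unf}}^{(K)}\le K\overline{C}$ and the layered scheme never asks a wired link to carry more than its allotted capacity. Hence the emulation in $\mathcal{G}$ is faithful and any $R<\overline{C}_{\text{unf}}^{(K)}/K$ is achievable, as claimed.
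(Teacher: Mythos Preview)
Your proof is correct and follows the paper's approach exactly: apply the layered achievability of Section~\ref{subsec:LinDet} to $\mathcal{G}_{\text{unf}}^{(K)}$, then emulate the resulting scheme in $\mathcal{G}$ over $K$ consecutive blocks, with each physical node's memory playing the role of the wired self-links. The paper's proof is essentially your first two paragraphs.

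Your third paragraph, however, is both unnecessary for this lemma and contains a flawed step. Whether the wired-link capacity $K\overline{C}$ is large enough not to depress $\overline{C}_{\text{unf}}^{(K)}$ is the business of Lemma~\ref{lem:minCutConnection}, not of this one; here $\overline{C}_{\text{unf}}^{(K)}$ is simply whatever it is, and you achieve $1/K$ of it. Moreover, your argument that ``any source-destination cut \ldots\ crosses the linear finite-field sub-network of at most $K$ stages, each contributing rank at most $\overline{C}$'' is not correct: a cut that severs one of the wired links already picks up rank $K\overline{C}$ from that link alone, and the per-stage finite-field rank is not bounded by $\overline{C}$ either. The inequality $\overline{C}_{\text{unf}}^{(K)}\le K\overline{C}$ is true, but the way to see it is to exhibit a single ``steady'' cut $\mathcal{V}[1]=\cdots=\mathcal{V}[K]=\Omega^*$ (with $\Omega^*$ a min-cut of $\mathcal{G}$) whose value is exactly $K\overline{C}$, not to bound every cut from above.
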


\begin{proof}
  Note that $\mathcal{G}_{\text{unf}}^{(K)}$ is a layered linear
  finite-field network. Therefore, by our result of Section
  \ref{subsec:LinDet}, we can achieve any rate
  $R_{\text{unf}}<\overline{C}_{\text{unf}}^{(K)}$ in
  $\mathcal{G}_{\text{unf}}^{(K)}$. In particular, it is achieved by
  the encoding strategy described in Section
  \ref{subsec:EncLinDetLay}, in which each node $v[i] \in
  \mathcal{V}_{\text{unf}}^{(K)}$, $i=1,\ldots,K$, operates over
  blocks of size $T$ symbols and transmits $\xbf_{v[i]}^{(1)} =
  \Fbf_{v[i]}^{(1)} \ybf_{v[i]} $ and $\xbf_{v[i]}^{(2)} =
  \Fbf_{v[i]}^{(2)} \ybf_{v[i]} $ respectively over the wired and the
  linear finite-field channels.

Now, we can implement the scheme in $\mathcal{G}$ by  using $K$ blocks of size $T$ symbols. The construction is as follows:
\begin{itemize}
\item The source $S$ transmits $\xbf_{S[i]}^{(2)}$ at block $i$, $i=1,\ldots,K$,
\item Each node $v \in \mathcal{V}$, $v \notin \{S,D\}$, transmits $\xbf_{v[i]}^{(2)}$ and puts  $\xbf_{v[i]}^{(1)}$ in its memory at block $i$, $i=1,\ldots,K$ (note that this is possible, because  $\xbf_{v[i]}^{(1)}$ and $\xbf_{v[i]}^{(2)}$ are only a function of the received signal at node $v$ in the previous block and the the signal stored in the memory of node $v$ at the beginning of block $i$).
\end{itemize}
Finally, the destination decodes based on  $\xbf_{D[K]}^{(1)}$, which is a function of  the received signal at the destination during the $K$ blocks. Therefore, the rate $\frac{1}{K}R_{\text{unf}}$ is achievable in $\mathcal{G}$ and  the proof is complete.
\end{proof}

Now, if we show that $ \lim_{K \rightarrow \infty }\frac{1}{K} \overline{C}_{\text{unf}}^{(K)} = \overline{C}$, then by using Lemma \ref{lem:UnfLinDet}, the proof of Theorem \ref{thm:LinDetNet} will be complete. We will show this next.

\begin{lemma}
\label{lem:minCutConnection}
\beq \lim_{K \rightarrow \infty }\frac{1}{K} \overline{C}_{\text{unf}}^{(K)} =\overline{C}, \eeq
where $\overline{C}=\min_{\Omega\in\Lambda_D}
\mathrm{rank}(\Gbf_{\Omega,\Omega^c})$ and $\overline{C}_{\text{unf}}^{(K)}$ is defined in (\ref{eq:cutSetUnfold}).
\end{lemma}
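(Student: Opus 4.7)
The plan is to sandwich $\overline{C}_{\text{unf}}^{(K)}$ between $(K-|\mathcal{V}|+1)\overline{C}$ and $K\overline{C}$; dividing by $K$ and sending $K\to\infty$ will then yield the claim.

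For the upper bound I would exhibit a single cut of $\mathcal{G}_{\text{unf}}^{(K)}$ with small value. Fix any min-cut $\Omega^{*}$ of $\mathcal{G}$ with $\mathrm{rank}(\Gbf_{\Omega^{*},\Omega^{*c}}) = \overline{C}$, and build the time-invariant cut $\Omega^{*}_{\text{unf}} = \{S[0]\} \cup \{v[i]: v\in \Omega^{*},\,1\leq i\leq K\}$. Because membership is constant in $i$, no wired memory link $v[i]\to v[i+1]$ crosses the cut, and the boundary links $S[0]\to S[1]$, $D[K]\to D[K+1]$ both lie on a single side. The channel contribution decomposes into independent stage blocks: each transition $i\to i+1$ for $i=1,\ldots,K-1$ contributes $\Gbf_{\Omega^{*},\Omega^{*c}}$ of rank $\overline{C}$, while the final transition $K\to K+1$ contributes only $\Gbf_{\Omega^{*},\{D\}}$ of rank at most $\overline{C}$. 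Summing yields $\overline{C}_{\text{unf}}^{(K)} \leq K\overline{C}$.

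For the lower bound I would consider an arbitrary cut $\Omega_{\text{unf}}$ of $\mathcal{G}_{\text{unf}}^{(K)}$ and set $\Omega_i = \{v \in \mathcal{V}: v[i]\in\Omega_{\text{unf}}\}$ for $i=1,\ldots,K$. Two cases arise. (i) Some directed wired link crosses the cut, i.e.\ there exists $v$ and $i$ with $v\in\Omega_i$ but $v\notin\Omega_{i+1}$ (or an analogous violation at $S[0]$ or $D[K+1]$); that single link alone, of capacity $K\overline{C}$, immediately forces value $\geq K\overline{C}$. (ii) Otherwise no wired link crosses, which compels $S\in\Omega_i$ and $D\notin\Omega_i$ for every $i$ together with the monotonicity $\Omega_1\subseteq \Omega_2 \subseteq \ldots \subseteq \Omega_K$. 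The sequence can strictly grow at most $|\Omega_K\setminus\Omega_1|\leq |\mathcal{V}|-2$ times, so at least $K-|\mathcal{V}|+1$ of the transitions are ``constant'' with $\Omega_i=\Omega_{i+1}$. On each such constant transition the channel contribution is $\mathrm{rank}(\Gbf_{\Omega_i,\Omega_i^c})\geq\overline{C}$, because $\Omega_i\in\Lambda_D$ is a valid $S$-$D$ cut of $\mathcal{G}$. Summing gives $\overline{C}_{\text{unf}}^{(K)}\geq (K-|\mathcal{V}|+1)\overline{C}$.

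Combining yields $(K-|\mathcal{V}|+1)\overline{C} \leq \overline{C}_{\text{unf}}^{(K)} \leq K\overline{C}$, which gives the desired limit after dividing by $K$. The main obstacle is subcase (ii) of the lower bound: one must notice that once directed wired crossings are excluded, the cut cannot shrink in $i$, so a finite subset sequence in $\mathcal{V}$ can grow only boundedly many times. This converts the a priori hard problem of minimizing over time-varying cuts into the near-trivial observation that almost every stage is a time-invariant snapshot contributing a full $\overline{C}$ of rank, with the $|\mathcal{V}|$-dependent edge loss washing out after normalization.
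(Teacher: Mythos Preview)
Your proposal is correct and follows essentially the same route as the paper: an upper bound via the time-invariant (``steady'') cut induced by a min-cut of $\mathcal{G}$, and a lower bound obtained by noting that any cut either severs a wired memory link (value $\geq K\overline{C}$) or else satisfies the monotonicity $\Omega_1\subseteq\cdots\subseteq\Omega_K$, leaving all but at most $|\mathcal{V}|$ stage-to-stage transitions constant and each worth at least $\overline{C}$. Your constants are marginally sharper than the paper's ($K-|\mathcal{V}|+1$ versus $K-|\mathcal{V}|$), and your upper-bound bookkeeping on the final transition $K\to K+1$ is slightly off (that transition carries only the wired link $D[K]\to D[K+1]$, which does not cross, so it contributes $0$ rather than $\mathrm{rank}(\Gbf_{\Omega^{*},\{D\}})$), but neither affects the limit.
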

\begin{proof}
Any cut $\Omega_{\text{unf}} \in\Lambda_{\text{unf}}$ is a subset of nodes in $\mathcal{G}_{\text{unf}}^{(K)}$ such that $S[0] \in \Omega_{\text{unf}}$ and $D[K+1] \in \Omega_{\text{unf}}^c$. Now for any cut $\Omega_{\text{unf}}$ we define 
\beq \label{eq:cutIncRel} \mathcal{V}[i] = \{v \in \mathcal{V} | v[i] \in \Omega_{\text{unf}} \}, \quad i=0,\ldots K+1. \eeq
In other words, $\mathcal{V}[i]$ is the set of nodes of $\mathcal{G}$ such that at stage $i$ they appear in $\Omega_{\text{unf}}$.

Every cut $\Omega\in\Lambda_D$ in the original network $\mathcal{G}$
corresponds to a cut in the unfolded network
$\mathcal{G}_{\text{unf}}^{(K)}$, by choosing
$\mathcal{V}[1]=\cdots=\mathcal{V}[K]=\Omega$.   Also, the value of
such a ``steady'' cut is
$K\mathrm{rank}(\Gbf_{\Omega,\Omega^c})$, thereby
\beq \label{eq:need3}  \overline{C}_{\text{unf}}^{(K)}\leq K
\overline{C}. \eeq
Therefore, we need to only focus on cuts whose values
are smaller than $ K \overline{C}$.  We will next identify other cuts which
have value larger than $ K \overline{C}$, in order to reduce the set
of cuts to consider for $\overline{C}_{\text{unf}}^{(K)}$.

We claim that the value of any cut $\Omega_{\text{unf}}
\in\Lambda_{\text{unf}}$ is at least $K \overline{C}$, 
if the following is {\em not} satisfied:
\beq 
\label{eq:IncrSubsets}
\mathcal{V}[1] \subseteq \mathcal{V}[2] \subseteq \cdots \subseteq \mathcal{V}[K]
 \eeq

The reason is that if $\mathcal{V}[1] \subseteq \mathcal{V}[2]
\subseteq \cdots \subseteq \mathcal{V}[K]$ is {\em not} true, then there
exists a node $v \in \mathcal{V}$ and a stage $j$ ($1 \leq j < K$)
such that
\beq v[j] \in \mathcal{V}[j]  \quad \text{and} \quad v[j+1] \notin \mathcal{V}[j+1].\eeq
If this happens, then the edge $(v[j],v[j+1])$, which has capacity $K
\overline{C}$, traverses from $\Omega_{\text{unf}}$ to
$\Omega_{\text{unf}}^c$, hence the cut-value (i.e.,
$\mathrm{rank}(\Gbf_{\Omega_{\text{unf}},\Omega_{\text{unf}}^c})$)
becomes at least $K \overline{C}$.

Hence, we only need to focus on cuts, $\Omega_{\text{unf}}$ that satisfy
\eqref{eq:IncrSubsets}, {\em i.e.,} contain
an increasing set of nodes at the stages. Since there are total of
$|V|$ nodes in $\mathcal{G}$, we can have at most $|V|$ transitions in
the size of $\mathcal{V}[i]$s. Now, using the notation in Figure
\ref{fig:ordering} and the fact that the network is layered, for any
cut $\Omega_{\text{unf}} \in\Lambda_{\text{unf}}$ satisfying
(\ref{eq:IncrSubsets}) we can write
\begin{align*}
& \mathrm{rank}(\Gbf_{\Omega_{\text{unf}},\Omega_{\text{unf}}^c}) = \sum_{i=0}^K \mathrm{rank}(\Gbf_{\mathcal{V}[i],\mathcal{V}[i+1]^c}) \\
& = \sum_{i=1}^{|V|} (\ell_i-1) \mathrm{rank}(\Gbf_{\mathcal{U}_i,\mathcal{U}_i^c})
+ \sum_{i=1}^{|V|-1}\mathrm{rank}(\Gbf_{\mathcal{U}_i,\mathcal{U}_{i+1}^c}) + \\ & \quad \quad \mathrm{rank}(\Gbf_{\mathcal{V}[0],\mathcal{U}_{1}^c})+\mathrm{rank}(\Gbf_{\mathcal{U}_K,\mathcal{V}[K+1]^c})\\
& \geq  \sum_{i=1}^{|V|} (\ell_i-1) \mathrm{rank}(\Gbf_{\mathcal{U}_i,\mathcal{U}_i^c})\\
& \stackrel{(a)}{\geq}  (\sum_{i=1}^{|V|} (\ell_i-1)) \overline{C} \\
&\stackrel{(b)}{=} (K-|V|) \overline{C},  
\end{align*}
where $(a)$ follows because $\mathrm{rank}(\Gbf_{\Omega,\Omega^c})\geq
\overline{C}$, for
any cut $\Omega$ in $\mathcal{G}$; and $(b)$ is because there are at most
$|V|$ transitions implying that $\sum_{i=1}^{|V|} (\ell_i-1))=(K-|V|)$.
As a result,
\beq \label{eq:need4} 
\overline{C}_{\text{unf}}^{(K)}\geq (K-|V|)\overline{C}. 
\eeq

Combining (\ref{eq:need3}) and (\ref{eq:need4}), we get 
\beq \lim_{K \rightarrow \infty } \frac{1}{K} \overline{C}_{\text{unf}}^{(K)} =
  \overline{C}.\eeq
\end{proof}

\begin{figure}
     \centering 
      \scalebox{0.45}{ \includegraphics{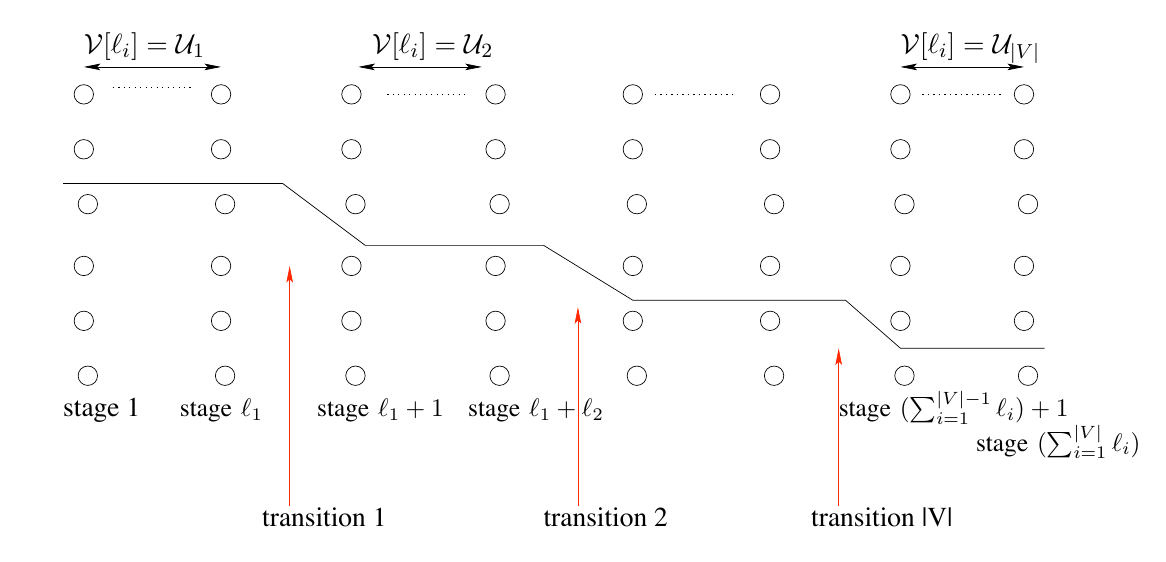}}
\caption{Illustration of cuts in $\mathcal{G}_{\text{unf}}^{(K)}$ which can have a value smaller than $K \overline{C}$.\label{fig:ordering}}
\end{figure}

This combined with Lemma \ref{lem:UnfLinDet} completes the proof of
Theorem \ref{thm:LinDetNet}\footnote{An alternate proof of the same result was
given in \cite{ADTAllerton07P2}. In that proof, only the previous received block was used
by the relays, instead of the larger number of blocks used above. However, we needed
to use the sub-modularity properties of entropy to demonstrate the performance of that
scheme \cite{ADTAllerton07P2}.}.

\section{Gaussian relay networks}
\label{sec:GaussCapacity}
So far, we have focused on deterministic relay networks. As we illustrated
in Sections \ref{sec:model} and \ref{sec:motivation}, linear
finite-field deterministic model captures some (but not all) aspects
of the high $\SNR$ behavior of the Gaussian model. Therefore we have
some hope to be able to translate the intuition and the techniques
used in the deterministic analysis to obtain approximate results for
 Gaussian relay networks. This is what we will accomplish in this
section.

Theorem \ref{thm:GaussMain} is the main result for Gaussian relay
networks and this section is devoted to proving it.  The proof of the
result for layered network is done in Section \ref{sec:Lay}. We extend
the result to an arbitrary network by expanding the network over time,
as done in Section \ref{sec:detCapacity}.  We first prove the theorem
for the single antenna case, then at the end we extend it to the
multiple antenna scenario.

\subsection{Layered Gaussian relay networks}
\label{sec:Lay}
In this section we prove Theorem \ref{thm:GaussMain} for the special
case of layered networks, where all paths from the source to the
destination in $\mathcal{G}$ have equal length. 

\subsubsection{Proof illustration}
\label{subsec:PfIdea}
Our proof has two steps. In the first step we propose a relaying
strategy, which is similar to our strategy for deterministic networks, and
show that by operating over a large block, it is possible to achieve
an end-to-end mutual information which is within a constant gap to the
cut-set upper bound. Therefore, the relaying strategy creates an inner code which provides certain end-to-end mutual
information between the transmit signal at the source and the received signal at the destination. Each symbol of
this inner code is a block. In the next step, we use an outer code to
map the message to multiple inner code symbols and send them to the
destination. By coding over many such symbols, it is possible to
achieve a reliable communication rate arbitrarily close to the mutual
information of the inner code, and hence the proof is complete. The
system diagram of our coding strategy is illustrated in Figure
\ref{fig:proofDiag}.

\begin{figure}
     \centering

       \scalebox{0.5}{\includegraphics{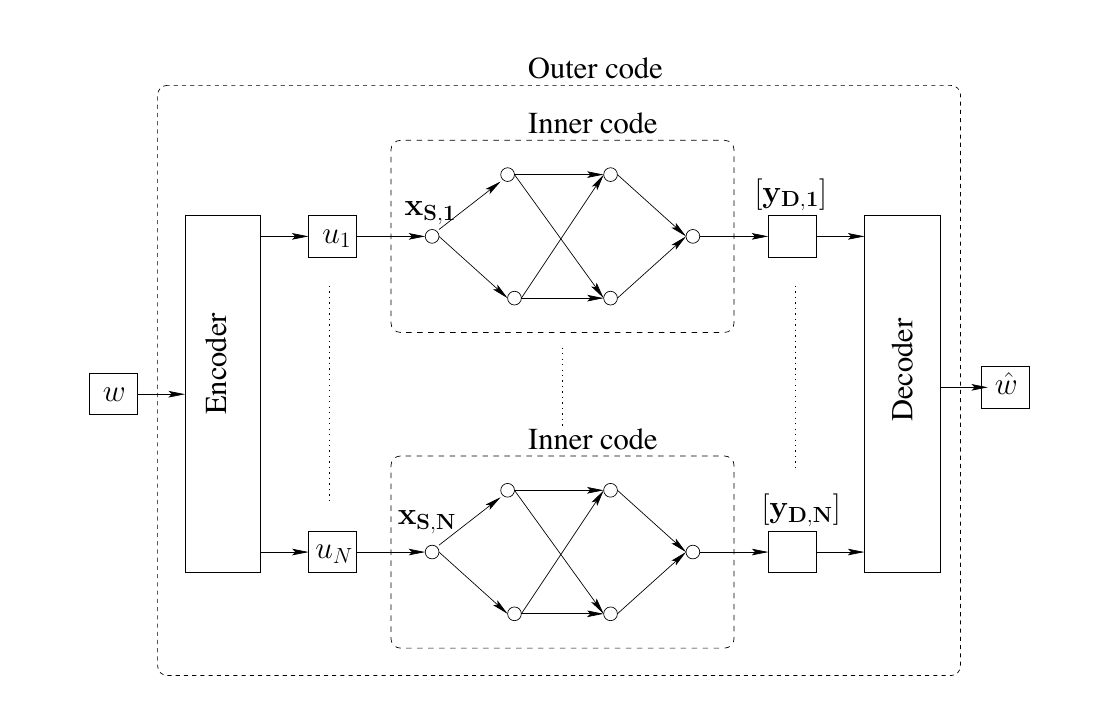} }
 \caption{System diagram. \label{fig:proofDiag}}
\end{figure}

We now explicitly describe our encoding strategy

\subsubsection{Encoding for layered Gaussian relay networks}
\label{subsec:EncLay}

We first define a quantization operation.
\begin{defn}
\label{defn:Quantization}
The quantization operation $[.]: \CC \rightarrow \ZZ \times \ZZ$ maps
a complex number $c=x+iy$ to $[c]=([x], [y])$, where $[x]$ and $[y]$
are the closest integers to $x$ and $y$, respectively. Since the Gaussian noise at all receive antennas has variance 1, this operation is basically scalar quantization at noise-level.
\end{defn}

As shown in Figure \ref{fig:proofDiag}, the encoding consists of an
inner code and an outer code:
\paragraph{Inner code} Each symbol of the inner code is represented by
$u\in\{1,\ldots,2^{R_{\text{in}}T}\}$, where $T$ and $R_{\text{in}}$
are respectively the block length and the rate of the inner code. The
source node $S$ generates a set of $2^{R_{\text{in}}T}$ independent
complex Gaussian codewords of length $T$ with components distributed
as i.i.d. $\mathcal{CN}(0,1)$, denoted by
$\mathcal{T}_{x_S}$. At relay node $i$, there is also a random mapping $F_i:
(\ZZ^{T},\ZZ^{T}) \rightarrow \mathcal{T}_{x_i}$ which maps each
quantized received signal vector of length $T$ independently into an
i.i.d. $\mathcal{CN}(\mathbf{0},1)$ random vector of
length $T$. A particular realization of $F_i$ is
denoted by $f_i$. Summarizing:
\begin{itemize}
\item Source: maps each inner code symbol
  $u\in\{1,\ldots,2^{R_{\text{in}}T}\}$ to $F_S(u) \in
  \mathcal{T}_{x_S}$.
\item Relay $i$: receives $\ybf_i$ of length $T$. Quantizes it to
  $[\ybf_i]$. Then maps it to $F_i([\ybf_i]) \in\mathcal{T}_{x_i}$.
\end{itemize}

\paragraph{Outer code}
The message is encoded by the source into $N$ inner
code symbols, $u_1,\ldots,u_N$. Each inner code symbol is then sent
via the inner code over $T$ transmission times, giving an overall
transmission rate of $R$. The received signal at the destination, corresponding to inner code symbol $u_i$, is denoted by $\ybf_{D,i}$, $i=1,\ldots,N$.

Now, given the knowledge of all the encoding functions $F_i$'s at the
relays and quantized received signals
$[\ybf_{D,1}],\ldots,[\ybf_{D,N}]$, the destination attempts to decode the
message sent by the source.

\subsubsection{Proof of Theorem \ref{thm:GaussMain} for layered networks}
\label{subsec:ProofLay}

Our first goal is to lower bound the average end-to-end mutual information, averaged over the random mappings
$F_\mathcal{V}=\{F_i: i \in \mathcal{V}\}$, achieved by the inner code
defined in Subsection \ref{subsec:EncLay}.

Note that
\beq \label{eq:MILB} \frac{1}{T} I(u
  ;[\ybf_D]|F_\mathcal{V})  \geq  \frac{1}{T} I(u ;[\ybf_D]
  |\zbf_{\mathcal{V}},F_\mathcal{V}) - \frac{1}{T}
  H([\ybf_D]|u,F_\mathcal{V}) \eeq where
$\zbf_{\mathcal{V}}$ is the vector of the channel noises at all
nodes in the network. The first term on the right hand side of
(\ref{eq:MILB}) is the average end-to-end mutual information
conditioned on the noise vector. Once we condition on a noise
vector, the network turns into a deterministic network. We use an
analysis technique similar to the one we used for linear deterministic
relay networks to upper bound the probability that the destination
will confuse an inner code symbol with another and then use Fano's
inequality to lower bound the end-to-end mutual information. This is
done in Lemma \ref{lem:MainScalar}. The second term on the RHS of
(\ref{eq:MILB}) is the average entropy of the received signal
conditioned on the source's transmit signal, and is upper bounded in
Lemma \ref{lem:condMI}. This term represents roughly the penalty due
to noise-forwarding at the relay, and is proportional to the number of
relay nodes.

\begin{defn} \label{def:cutSetiid}We define \beq \overline{C}_{i.i.d.}
  \stackrel{\triangle}{=} \min_{\Omega}
  I(x_{\Omega};y_{\Omega^c}|x_{\Omega^c}) \eeq where $x_i$, $i \in
  \mathcal{V}$, are i.i.d.  $\mathcal{CN}(0,1)$ random variables.
\end{defn}

\begin{lemma} \label{lem:MainScalar} Assume all nodes perform the operation
  described in subsection \ref{subsec:EncLay} (a) and the inner code
  symbol $U$ is distributed uniformly over
  $\{1,\ldots,2^{R_{\text{in}}T}\}$. Then
  \begin{align*} & I(u ;[\ybf_D] |\zbf_{\mathcal{V}},F_\mathcal{V})
    \geq R_{\text{in}} T- \\ & \quad (1+\min \{1, 2^{|\mathcal{V}|}
    2^{-T(\overline{C}_{iid}-|\mathcal{V}| -R_{\text{in}})}
    \}R_{\text{in}}T) \end{align*} where $\overline{C}_{iid}$ is
  defined in Definition \ref{def:cutSetiid}.

\end{lemma}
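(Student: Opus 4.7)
\begin{num_proof}
The plan is to mirror the deterministic argument in Section \ref{subsec:LinDet}, using the fact that once we condition on the additive noise vector $\zbf_{\mathcal{V}}$, the Gaussian network becomes a deterministic network: the received (and hence quantized) signal at every node is a deterministic function of the source symbol $u$ and the random mapping realizations $f_{\mathcal{V}}$. The target bound is a Fano-type inequality, so I would first write
\[
H(u \mid [\ybf_D], \zbf_{\mathcal{V}}, F_{\mathcal{V}}) \le 1 + P_e \cdot R_{\text{in}} T,
\]
where $P_e$ is the average (over messages, noise, and random maps) probability that the destination confuses $u$ with some other inner-code symbol $u'$. Combined with $H(u) = R_{\text{in}} T$, it suffices to show $P_e \le \min\{1, 2^{|\mathcal{V}|} 2^{-T(\overline{C}_{iid} - |\mathcal{V}| - R_{\text{in}})}\}$; the trivial bound handles the first term in the minimum, so the work is in the second.

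For the nontrivial bound, I would apply the union bound $P_e \le 2^{R_{\text{in}} T} \Prob(u \to u')$ for a fixed pair $u \neq u'$, and then decompose the confusion event over cuts exactly as in Section \ref{subsec:PfIdeaDet}:
\[
\Prob(u \to u') = \sum_{\Omega \in \Lambda_D} \Prob(\text{nodes in } \Omega \text{ distinguish } u, u'; \text{ nodes in } \Omega^c \text{ do not}).
\]
Using the layered Markov structure and the independence of the random maps $F_i$ across relays, each summand factors across layers. For each layer transition, conditioned on $u$ and $u'$ being distinguishable at the ``$\Omega$ side'' nodes of the previous layer, the independent random maps produce two independent i.i.d.\ $\mathcal{CN}(\mathbf{0},1)$ input vectors into the MIMO channel between cut sides, and I need to bound the probability that the resulting quantized outputs on the $\Omega^c$ side coincide.

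The heart of the argument, and the main technical obstacle, is converting this coincidence probability at the quantized level into a quantity controlled by $I(x_{\Omega}; y_{\Omega^c} \mid x_{\Omega^c})$ evaluated under the i.i.d.\ product input. I would relate the probability that two independent Gaussian inputs produce identical quantized outputs at nodes in $\Omega^c$ to $2^{-T H([y_{\Omega^c}] \mid x_{\Omega^c})}$ via a typical-set style bound, then relate the discrete entropy $H([y_{\Omega^c}] \mid x_{\Omega^c})$ to the differential entropy $h(y_{\Omega^c} \mid x_{\Omega^c})$ up to an additive constant of one bit per receive antenna in $\Omega^c$ (since quantizing a unit-variance Gaussian loses at most one bit of entropy per real dimension). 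Collecting these losses across nodes contributes the $|\mathcal{V}|$ penalty in the exponent, while the conditioning on noise $\zbf_{\mathcal{V}}$ lets us replace differential entropies by mutual informations without noise contributions. Summing over the at-most $2^{|\mathcal{V}|}$ cuts in $\Lambda_D$ yields the desired bound, after which Fano's inequality closes the proof.
\end{num_proof}
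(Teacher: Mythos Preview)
Your overall architecture---Fano's inequality, the union bound over $2^{R_{\text{in}}T}$ competitors, the cut decomposition of the pairwise error event, and the layer-by-layer factorization via the Markov structure---matches the paper exactly. The divergence, and the gap, is in the core technical step: bounding the per-layer probability that the quantized outputs at $R_l(\Omega)$ coincide under $u$ and $u'$.

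You propose to bound the collision probability of two conditionally i.i.d.\ quantized outputs by $2^{-TH([y_{R_l}]\mid x_{R_{l-1}})}$ and then pass to mutual information. But for two i.i.d.\ discrete variables with common law $p$, the collision probability equals $\sum_y p(y)^2 = 2^{-H_2(p)}$, and since R\'enyi-2 entropy satisfies $H_2\le H$, this is \emph{lower} bounded by $2^{-H}$, not upper bounded. A typical-set patch would need the per-time-step laws to be identical so that AEP applies, but once you condition on a \emph{fixed} noise realization $\zbf_{\mathcal{V}}=\abf$, the additive shift at time $t$ depends on $\abf[t]$ and the law of $[\Hbf_l x + c_t]$ varies with $t$; the i.i.d.\ structure across the block is lost. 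Your final sentence about conditioning on noise ``replacing differential entropies by mutual informations'' does not repair this and is not made precise.

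The paper avoids the whole entropy--collision issue by a cleaner observation you are missing: since the noise \emph{and} the $R_{l-1}$ transmissions are identical under $u$ and $u'$, the difference $\ybf_{R_l}(u)-\ybf_{R_l}(u') = \Hbf_l\bigl(\xbf_{L_{l-1}}(u)-\xbf_{L_{l-1}}(u')\bigr)$ is noise-free. Quantized equality then forces $\|\Hbf_l\tilde{\xbf}_j\|_\infty\le\sqrt{2}$ for every $j$, with $\tilde{\xbf}_j\sim\mathcal{CN}(0,2\Ibf)$ i.i.d.\ over $j$. The paper bounds this probability directly (Lemma~\ref{lem:quantScalar}) via the SVD of $\Hbf_l$ and a Chernoff bound, obtaining $2^{-T(I(\xbf;\Hbf_l\xbf+\zbf)-\min(m,n))}$; the $\min(m,n)$ penalties summed over layers give the $|\mathcal{V}|$ in the exponent. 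This geometric route is what makes the exponent equal to the Gaussian mutual information rather than some R\'enyi quantity, and it is the missing ingredient in your sketch.
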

\begin{proof}

Consider a fixed noise realization in the network $\zbf_{\mathcal{V}}=\abf$. Suppose the destination attempts to detect the transmitted symbol $u$ at the source given the received signal, all the mappings,
channel gains, and $\abf$. A symbol value $u$ will be mistaken for
  another value $u'$ only if the received signal $[\ybf_D(u)]$ under
  $u$ is the same as what would have been received under $u'$. This
  leads to a notion of {\em distinguishability} for a fixed $\abf$,
  which is that symbol values $u,u'$ are distinguishable at any node
  $j$ if $[\ybf_j(u)]\neq [\ybf_j(u')]$. Hence,
{\footnotesize
\begin{align}
\nonumber & \prob{u\rightarrow u'|\zbf_{\mathcal{V}}=\abf} = \sum_{\Omega\in\Lambda_D}\\
&  \label{eq:PwErrA}  \underbrace{\prob{\mbox{Nodes in } \Omega
\mbox{ can distinguish } u,u' \mbox{ and nodes in } \Omega^c \mbox{
cannot} | \zbf_{\mathcal{V}}=\abf}}_{\mathcal{P}} ~
\end{align}
}
For any cut $\Omega \in \Lambda_D$, define the following sets:
\begin{itemize}
\item $L_l (\Omega)$: the nodes that are in $\Omega$ and are at layer
  $l$ (for example $S \in L_1 (\Omega)$),
\item $R_l (\Omega)$: the nodes that are in $\Omega^c$ and are at
  layer $l$ (for example $D \in R_{l_D} (\Omega)$).
\end{itemize}
We also define the following events:
\begin{itemize}
\item $\mathcal{L}_l$: Event that the nodes in $L_l$ can distinguish
  between $u$ and $u'$, \emph{i.e.}, $[\ybf_{L_l}(u)] \neq
  [\ybf_{L_l}(u')]$,
\item $\mathcal{R}_l$: Event that the nodes in $R_l$ can not
  distinguish between $u$ and $u'$, \emph{i.e.}, $[\ybf_{R_l}(u)] =
  [\ybf_{R_l}(u')]$.
\end{itemize}

Note that the source node by definition distinguishes between the two distinct messages $u,u'$,
\emph{i.e.} $\mathbb{P}\{\mathcal{L}_{1} \}=1$.
\begin{eqnarray}
\nonumber  \mathcal{P} &=& \mathbb{P} \{ \mathcal{R}_l,\mathcal{L}_{l-1} ,l=2,\ldots,l_D | \zbf_{\mathcal{V}}=\abf \} \\
\nonumber  &=& \prod_{l=2}^{l_D}  \mathbb{P} \{\mathcal{R}_l,\mathcal{L}_{l-1}|\mathcal{R}_j,\mathcal{L}_{j-1}, j=2,\ldots,l-1,\zbf_{\mathcal{V}}=\abf\} \\
\nonumber  &\leq& \prod_{l=2}^{l_D}  \mathbb{P} \{\mathcal{R}_l|\mathcal{R}_j,\mathcal{L}_{j}, j=2,\ldots,l-1,\zbf_{\mathcal{V}}=\abf\} \\
\nonumber  &\stackrel{(a)}{=}& \prod_{l=2}^{l_D}  \mathbb{P} \{\mathcal{R}_l|\mathcal{R}_{l-1},\mathcal{L}_{l-1},\zbf_{\mathcal{V}}=\abf\} \\
\label{eq:ineqPr1}  &=& \prod_{l=2}^{l_D}   \mathbb{P} \{[\ybf_{R_l}(u)] = [\ybf_{R_l}(u')]|\mathcal{R}_{l-1},\mathcal{L}_{l-1},\zbf_{\mathcal{V}}=\abf\} \quad \quad
 \end{eqnarray}
 where (a) is true due to the Markov structure in the layered network.

 Note that if $\Abf$ and $\Bbf$ are complex $m\times n$ matrices, then
 \beq 
\label{eq:quantS1}
[\Abf_{i,j}]=[\Bbf_{i,j}], \forall i,j\,\, \Rightarrow ||\Abf-\Bbf||_{\infty} \leq
 \sqrt{2}n.
\eeq 
Therefore by (\ref{eq:ineqPr1}) and (\ref{eq:quantS1}) we have
{\small
 \begin{eqnarray} \nonumber \mathcal{P} & \leq&   \prod_{l=2}^{l_D}   \mathbb{P} \{|| \ybf_{R_l}(u)- \ybf_{R_l}(u')||_{\infty} \leq \sqrt{2} |\mathcal{R}_{l-1},\mathcal{L}_{l-1},\zbf_{\mathcal{V}}=\abf\} \\
   \label{eq:ineqPr11}&\stackrel{(a)}{=}& \prod_{l=2}^{l_D}   \mathbb{P} \{|| \ybf_{R_l}(u)- \ybf_{R_l}(u')||_{\infty} \leq \sqrt{2} |\mathcal{R}_{l-1},\mathcal{L}_{l-1}\}\end{eqnarray}}
 where (a) is true since conditioned on $\mathcal{R}_{l-1},\mathcal{L}_{l-1}$ the distribution of $ \ybf_{R_l}(u)- \ybf_{R_l}(u')$ does not depend on the noise (due to the random mapping). 

By defining $\Hbf_l$ to be the transfer matrix from the left side of the cut
at stage $l-1$ to the right side of the cut at stage $l$ (\emph{i.e.},
the MIMO channel from $L_{l-1}$ to $R_l$), we have

{\footnotesize
\begin{align} 
\nonumber &  \mathcal{P}  \stackrel{(\ref{eq:ineqPr11})}{\leq}
  \prod_{l=2}^{l_D}   \mathbb{P} \{|| \ybf_{R_l}(u)- \ybf_{R_l}(u')||_{\infty} \leq 
\sqrt{2} |\mathcal{R}_{l-1},\mathcal{L}_{l-1}\} \\
\nonumber  &= \prod_{l=2}^{l_D}   \mathbb{P} \{\forall 1 \leq j \leq T: || \Hbf_l 
\lp \xbf_{L_{l-1},j}(u)- \xbf_{L_{l-1},j}(u')\rp||_{\infty} 
\leq \sqrt{2} |\mathcal{R}_{l-1},\mathcal{L}_{l-1}\} \\
\label{eq:quantBound1} 
&\stackrel{(b)}{=} \prod_{l=2}^{l_D}   \mathbb{P} \{\forall 1 \leq j \leq T: 
|| \Hbf_l \lp \xbf_{L_{l-1},j}(u)- \xbf_{L_{l-1},j}(u')\rp||_{\infty} 
\leq \sqrt{2} |\mathcal{L}_{l-1}\}
\end{align}
}
where (b) is true since the nodes in $R_{l-1}(\Omega)$ transmit the
same codeword under both $u$ and $u'$.

Since $\xbf_{L_{l-1}}(u) \neq \xbf_{L_{l-1}}(u')$, due to the random mapping, $\xbf_{L_{l-1}}(u)$ and $\xbf_{L_{l-1}}(u')$ are two independent random vectors with i.i.d. $\mathcal{CN}(0,1)$ elements. Therefore, their difference is a random vector with i.i.d. $\mathcal{CN}(0,2)$ elements. Now, we state the following Lemma
which is proved in Appendix \ref{app:proofLemQuantScalar}.

\begin{lemma} \label{lem:quantScalar} Assume
  $[\tilde{x}_{i,1},\cdots,\tilde{x}_{i,T}]$, $i=1,\ldots,m$, are
  i.i.d. vectors of length $T$ with i.i.d. $\mathcal{CN}(0,2)$ elements, and $\Hbf \in
  \CC^{n \times m}$ is an $n \times m$ matrix. Then 
  \begin{align} \nonumber & \prob{ \forall
    1 \leq j \leq T: ||\Hbf[\tilde{x}_{1,j},\cdots,\tilde{x}_{m,j}]^t||_{\infty} \leq    \sqrt{2}} \leq  \\ &\quad \quad 2^{-T\lp I \lp \xbf; \Hbf\xbf+\zbf\rp-\min(m,n) \rp } 
    \end{align} where $\xbf$ and $\zbf$ are i.i.d.
  complex unit variance Gaussian vectors of length $m$ and $n$
  respectively.
\end{lemma}

By applying Lemma \ref{lem:quantScalar} to (\ref{eq:quantBound1}) we get
{\footnotesize
\begin{align} \nonumber & \mathbb{P} \{\forall 1 \leq j \leq T: ||
  \Hbf_l \lp \xbf_{L_{l-1},j}(u)- \xbf_{L_{l-1},j}(u')\rp||_{\infty} \leq
  \sqrt{2} |\mathcal{L}_{l-1}\} \leq  \\ &\quad \quad 2^{-T\lp I \lp
    \xbf_{L_{l-1}}; \ybf_{R_l}|\xbf_{R_{l-1}} \rp -\min \lp |L_{l-1}| ,|R_l|
    \rp  \rp } \end{align}
}
where $x_i$, $i \in \mathcal{V}$, are i.i.d. with Gaussian
distribution. Hence
{\footnotesize
\beq \mathcal{P}   \leq  \prod_{l=2}^{l_D} 2^{-T\lp I \lp \xbf_{L_{l-1}}; \ybf_{R_l}|\xbf_{R_{l-1}} \rp -\min \lp |L_{l-1}| ,|R_l| \rp   \rp }  \leq 2^{-T(\overline{C}_{iid}-|\mathcal{V}| )}
\eeq }
where $\overline{C}_{iid}$ is defined in Definition
\ref{def:cutSetiid}.

The average probability of symbol detection error at the
destination can be upper bounded as \beq P_e=\prob{ \hat{u} \neq u |
  \zbf_{\mathcal{V}}=\abf } \leq 2^{R_{\text{in}}T} \prob{u\rightarrow
  u'|\zbf_{\mathcal{V}}=\abf}.  \eeq By the union bound we have \beq
P_e \leq \sum_{\Omega} 2^{-T(\overline{C}_{iid}-|\mathcal{V}|-R_{\text{in}})} \leq 2^{|\mathcal{V}|}
2^{-T(\overline{C}_{iid}-|\mathcal{V}|-R_{\text{in}})}.  \eeq Now, using Fano's inequality we get
\begin{align*}
&  I(u ;[\ybf_D] |\zbf_{\mathcal{V}}=\abf,F_\mathcal{V}) = H(u)-H(u|[\ybf_D] , \zbf_{\mathcal{V}}=\abf,F_\mathcal{V}) \\
  &= R_{\text{in}}T-H(u|[\ybf_D] , \zbf_{\mathcal{V}}=\abf,F_\mathcal{V}) \\
  &= R_{\text{in}}T-\EE_{F_\mathcal{V}} [H(u|[\ybf_D] , \zbf_{\mathcal{V}}=\abf,F_\mathcal{V}=f_\mathcal{V})] \\
  & \stackrel{\text{Fano}}{\geq}  R_{\text{in}}T- (1+\EE_{F_\mathcal{V}}[\prob{ \hat{u} \neq u | \zbf_{\mathcal{V}}=\abf ,F_\mathcal{V}=f_\mathcal{V}) } ] R_{\text{in}}T) \\
  & =  R_{\text{in}}T- (1+P_e R_{\text{in}}T) \\
  & \geq  R_{\text{in}}T- (1+\min \{1,  2^{|\mathcal{V}|} 2^{-T(\overline{C}_{iid}-|\mathcal{V}| )-R_{\text{in}})}  \}R_{\text{in}}T) \hspace{.5in}
 \end{align*}
Hence, the proof is complete.
\end{proof}

The following lemma, which is proved in Appendix \ref{app:proofLemCondMI}, bounds the second term on the RHS of (\ref{eq:MILB}).
\begin{lemma} \label{lem:condMI} Assume all nodes perform the operation
  described in subsection \ref{subsec:EncLay} (a). Then \beq
  H([\ybf_D]|u,F_\mathcal{V})] \leq 12T |\mathcal{V}| \eeq
\end{lemma}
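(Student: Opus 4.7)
The plan is to apply the chain rule for entropy along a topological ordering of $\mathcal{V}$ and reduce each term to the entropy of a quantized shifted complex Gaussian. Since $\mathcal{G}$ is layered, there is an ordering $v_1,v_2,\ldots,v_{|\mathcal{V}|}$ of the nodes, with $v_1=S$ and $v_{|\mathcal{V}|}=D$, such that every channel goes strictly forward. First I would enlarge the quantity of interest to
$$H([\ybf_D]\,|\,u,F_\mathcal{V})\;\leq\; H\bigl(\{[\ybf_v]\}_{v\neq S}\,\bigm|\,u,F_\mathcal{V}\bigr)\;=\;\sum_{k=2}^{|\mathcal{V}|}H\bigl([\ybf_{v_k}]\,\bigm|\,[\ybf_{v_2}],\ldots,[\ybf_{v_{k-1}}],u,F_\mathcal{V}\bigr).$$

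The crucial observation is that, once we condition on $u$ together with $F_\mathcal{V}=f_\mathcal{V}$, the source's codeword $\xbf_S=f_S(u)$ is determined; and for any relay $v'$ preceding $v_k$ in the ordering, additionally conditioning on $[\ybf_{v'}]$ determines $\xbf_{v'}=f_{v'}([\ybf_{v'}])$. Because the network is layered, every node that transmits to $v_k$ lies in a layer strictly before that of $v_k$, so all such transmissions are already frozen by the conditioning, and the received signal collapses to $\ybf_{v_k}=c_{v_k}+\zbf_{v_k}$, where $c_{v_k}\in\CC^T$ is a deterministic function of the conditioning variables and $\zbf_{v_k}$ is the channel noise with i.i.d.\ $\mathcal{CN}(0,1)$ entries.

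Therefore, writing $[\ybf_{v_k}]$ coordinate-wise and using subadditivity of entropy, each term in the chain rule is at most $\sum_{t=1}^{T}H([c_t+z_t])$ for a complex Gaussian $z_t\sim\mathcal{CN}(0,1)$ and some constant $c_t\in\CC$. The final step is to show that $H([c+z])$ is bounded by a universal constant independent of $c$, in fact by $12$. I would do this by bounding the entropy of the real and imaginary parts separately: for $w\sim\mathcal{N}(0,1/2)$ and any real $a$, the probability $\mathbb{P}\{[a+w]=k\}$ is dominated by a Gaussian tail $\exp(-\Omega(k^2))$, so a direct $-p\log p$ computation yields a finite universal bound (a comfortable estimate of $6$ per real dimension suffices). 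Combining the two real dimensions and then summing over the $T$ time slots and the $|\mathcal{V}|$ nodes gives the claimed estimate $12T|\mathcal{V}|$.

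The only non-routine ingredient is the last step, and even that is a standard tail estimate; the constant $12$ is deliberately loose so that no delicate optimization over $c$ is needed. The first two steps rely only on the chain rule and on the fact that, after conditioning, fixing all upstream quantized outputs together with the relay mappings freezes all upstream transmissions---this is where the layered assumption is essential, since it guarantees an ordering along which this freezing propagates cleanly one term at a time.
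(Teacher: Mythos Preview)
Your proposal is correct and is essentially the paper's own argument. The paper also enlarges to $H([\ybf_{\mathcal{V}}]\mid u,F_{\mathcal{V}})$, applies the chain rule layer by layer (which in a layered network is the same as your topological ordering), observes that conditioning on the previous layer's quantized outputs together with $F_{\mathcal{V}}$ fixes the previous layer's transmissions, and then bounds each per-coordinate term by $12$ via a tail computation on the quantized shifted Gaussian (packaged there as Corollary~\ref{cor:condEntropyQ}, which gives $H([v+z]\mid[v])\le 12$ by bounding real and imaginary parts by $6$ each). The only cosmetic difference is that you phrase the key estimate as a direct bound on $H([c+z])$ for a deterministic shift $c$, whereas the paper states it as a conditional entropy bound; these are the same computation.
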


The next lemma, which is proved in Appendix \ref{app:proofLemBeamForming}, bounds the gap between $\overline{C}$ and  $\overline{C}_{iid}$.
\begin{lemma} \label{lem:BeamForming} For a Gaussian relay network
  $\mathcal{G}$, \beq \overline{C}-\overline{C}_{iid}< 2|\mathcal{V}|
  \eeq where $\overline{C}$ is the cut-set upper bound on the capacity
  of $\mathcal{G}$ and $\overline{C}_{iid}$ is defined in Definition
  \ref{def:cutSetiid}.
\end{lemma}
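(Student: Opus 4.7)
The plan has three steps: (i) reduce the outer maximum in $\overline{C}$ to jointly Gaussian inputs; (ii) prove a per-cut gap bound $\log\det(I+H_\Omega K_\Omega H_\Omega^*)-\log\det(I+H_\Omega H_\Omega^*)<|\Omega|$ via a trace inequality after preconditioning by $(I+H_\Omega H_\Omega^*)^{-1/2}$; and (iii) combine by evaluating at the cut attaining $\overline{C}_{iid}$.

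For (i) I would invoke the standard cut-by-cut Gaussian-maximizes-entropy argument. For any joint distribution on $\{x_i\}_{i\in\mathcal{V}}$ with per-antenna covariance $Q\succeq 0$, $Q_{ii}\le 1$, and any cut $\Omega$, $I(x_\Omega;y_{\Omega^c}\mid x_{\Omega^c})=h(y_{\Omega^c}\mid x_{\Omega^c})-h(z_{\Omega^c})$, and $h(y_{\Omega^c}\mid x_{\Omega^c})$ is maximized (over joint distributions sharing $Q$) by the Gaussian choice; the associated conditional covariance is the LMMSE covariance $K_\Omega := Q_{\Omega|\Omega^c}=Q_{\Omega\Omega}-Q_{\Omega\Omega^c}Q_{\Omega^c\Omega^c}^{-1}Q_{\Omega^c\Omega}$, which dominates the average MMSE error $\mathbb{E}[\mathrm{Cov}(x_\Omega\mid x_{\Omega^c})]$ of any non-Gaussian distribution with the same $Q$ in the PSD order, by the orthogonality principle. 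Combined with Jensen on $\log\det$, this gives $\overline{C}=\max_{Q\succeq 0,\,Q_{ii}\le 1}\min_{\Omega\in\Lambda_D}\log\det(I+H_\Omega K_\Omega H_\Omega^*)$ and $\overline{C}_{iid}=\min_\Omega\log\det(I+H_\Omega H_\Omega^*)$.

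The heart of the proof is (ii). Setting $H_0 := (I+H_\Omega H_\Omega^*)^{-1/2}H_\Omega$ and using the factorization $I+H_\Omega K_\Omega H_\Omega^* = (I+H_\Omega H_\Omega^*)^{1/2}\bigl(I+H_0(K_\Omega-I)H_0^*\bigr)(I+H_\Omega H_\Omega^*)^{1/2}$, the per-cut gap equals $\log\det(I+H_0(K_\Omega-I)H_0^*)$. The singular values of $H_0$ are $\sigma_i(H_\Omega)/\sqrt{1+\sigma_i(H_\Omega)^2}<1$, so $H_0H_0^*\prec I$ strictly and the Hermitian matrix $H_0(K_\Omega-I)H_0^*$ has spectrum in $(-1,\infty)$; applying $\log(1+\lambda)\le\lambda$ eigenvalue-wise and then the PSD trace inequality $\mathrm{tr}(AB)\le\|A\|_{\mathrm{op}}\mathrm{tr}(B)$ gives
\begin{equation*}
\log\det\bigl(I+H_0(K_\Omega-I)H_0^*\bigr)\le \mathrm{tr}(K_\Omega H_0^*H_0)-\mathrm{tr}(H_0^*H_0)\le \|H_0^*H_0\|_{\mathrm{op}}\,\mathrm{tr}(K_\Omega)<|\Omega|,
\end{equation*}
using $\mathrm{tr}(K_\Omega)\le\mathrm{tr}(Q_{\Omega\Omega})\le|\Omega|$ (from $K_\Omega\preceq Q_{\Omega\Omega}$ and $Q_{ii}\le 1$) and the strict inequality $\|H_0^*H_0\|_{\mathrm{op}}<1$.

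For (iii), let $\Omega^\star$ attain $\overline{C}_{iid}$. For every feasible $Q$, $\min_\Omega \log\det(I+H_\Omega K_\Omega H_\Omega^*)\le \log\det(I+H_{\Omega^\star}K_{\Omega^\star}H_{\Omega^\star}^*)<\overline{C}_{iid}+|\Omega^\star|\le \overline{C}_{iid}+|\mathcal{V}|-1$, and maximizing over $Q$ yields $\overline{C}-\overline{C}_{iid}<|\mathcal{V}|$. The main technical obstacle is step (ii): a naive reduction $K_\Omega\preceq|\Omega|I$ followed by $\log\det(I+|\Omega|H_\Omega H_\Omega^*)\le\log\det(I+H_\Omega H_\Omega^*)+\min(m,n)\log|\Omega|$ produces a spurious $\log|\mathcal{V}|$ factor per cut; the preconditioning $(I+H_\Omega H_\Omega^*)^{-1/2}$ is precisely what replaces this by the strictly-less-than-$1$ operator norm $\|H_0^*H_0\|_{\mathrm{op}}$, eliminating the log and producing the clean linear-in-$|\mathcal{V}|$ constant.
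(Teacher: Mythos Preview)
Your proof is correct and takes a genuinely different route from the paper's. The paper, in its Appendix, relaxes the per-antenna power constraints to a sum-power constraint, diagonalizes $\Gbf_\Omega$ via SVD, and compares the waterfilling allocation $\tilde{Q}_{ii}$ to equal power across the singular modes; the per-cut gap is then $\log\prod_i\bigl(1+K\tilde{Q}_{ii}/(mP)\bigr)$, which AM--GM bounds by $K=\min\{|\Omega|,|\Omega^c|\}$. Your argument never diagonalizes: the preconditioning $H_0=(I+H_\Omega H_\Omega^*)^{-1/2}H_\Omega$ reduces the gap exactly to $\log\det\bigl(I+H_0(K_\Omega-I)H_0^*\bigr)$, and then $\log\det(I+M)\le\mathrm{tr}(M)$ together with $\|H_0^*H_0\|_{\mathrm{op}}<1$ and $\mathrm{tr}(K_\Omega)\le|\Omega|$ finishes. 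Your approach is more compact and avoids the waterfilling detour; it also works directly with the conditional covariance $K_\Omega$ rather than relaxing to $Q_{\Omega\Omega}$ and then to a sum-power ball. The paper's approach has one advantage: it yields $\min\{|\Omega|,|\Omega^c|\}$ rather than $|\Omega|$, which in the multi-antenna extension becomes $\min\{\sum_i M_i,\sum_i N_i\}$ and feeds into the sharper constant stated in Theorem~\ref{thm:GaussMain}; your bound gives $|\Omega|\le|\mathcal{V}|-1$, which suffices for the scalar lemma as stated but would produce only $\sum_i M_i$ in the vector case.
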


Finally, using Lemmas  \ref{lem:MainScalar}, \ref{lem:condMI} and \ref{lem:BeamForming}, we have
\begin{lemma} \label{lem:mainGauss} Assume all nodes perform the operation
  described in subsection \ref{subsec:EncLay} (a) and the inner code
  symbol $U$ is distributed uniformly over
  $\{1,\ldots,2^{R_{\text{in}}T}\}$. Then

  {\footnotesize
\begin{equation}\frac{1}{T}  I(u ;[\ybf_D]|F_\mathcal{V}) \geq  R_{\text{in}} - 12 |\mathcal{V}| - (\frac{1}{T}+\min \{1,  2^{|\mathcal{V}|} 2^{-T(\overline{C}-3|\mathcal{V}| -R_{\text{in}})}  \}R_{\text{in}}) \end{equation} }
where $\overline{C}$ is the cut-set upper bound on the capacity of
$\mathcal{G}$.
\end{lemma}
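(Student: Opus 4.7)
The plan is to chain together Lemmas \ref{lem:MainScalar}, \ref{lem:condMI}, and \ref{lem:BeamForming} through the entropy decomposition stated in (\ref{eq:MILB}). Since each of the three preceding lemmas already controls one of the three quantities involved, the proof of this lemma is essentially arithmetic bookkeeping rather than a new estimate.

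First I would use (\ref{eq:MILB}) as the starting point,
\[
\tfrac{1}{T}I(u;[\ybf_D]\mid F_\mathcal{V}) \;\geq\; \tfrac{1}{T}I(u;[\ybf_D]\mid \zbf_\mathcal{V}, F_\mathcal{V}) \;-\; \tfrac{1}{T}H([\ybf_D]\mid u, F_\mathcal{V}),
\]
and bound the two terms on the right separately. Conditioning on $\zbf_\mathcal{V}$ freezes the channel noise and makes the network deterministic, which is precisely the regime in which Lemma \ref{lem:MainScalar} applies and gives
\[
\tfrac{1}{T}I(u;[\ybf_D]\mid \zbf_\mathcal{V}, F_\mathcal{V}) \;\geq\; R_{\text{in}} \,-\, \tfrac{1}{T} \,-\, \min\{1,\,2^{|\mathcal{V}|}2^{-T(\overline{C}_{iid}-|\mathcal{V}|-R_{\text{in}})}\}R_{\text{in}}.
\]
Lemma \ref{lem:condMI} in turn bounds the noise-forwarding penalty uniformly by $\tfrac{1}{T}H([\ybf_D]\mid u, F_\mathcal{V}) \leq 12|\mathcal{V}|$. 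Combining the two substitutions yields exactly the target inequality, except with $\overline{C}_{iid}-|\mathcal{V}|$ in the exponent where $\overline{C}-2|\mathcal{V}|$ is required.

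To close this last gap I would apply Lemma \ref{lem:BeamForming}, which gives $\overline{C}-\overline{C}_{iid} < |\mathcal{V}|$ and hence $\overline{C}_{iid}-|\mathcal{V}| > \overline{C}-2|\mathcal{V}|$. Since $2^{-Tx}$ is decreasing in $x$, this enlarges the quantity inside the $\min\{\cdot\}$, which in turn enlarges the subtracted term and therefore yields a valid (weaker) lower bound of exactly the stated form. The two copies of $|\mathcal{V}|$ appearing in $\overline{C}-2|\mathcal{V}|$ come respectively from the $|\mathcal{V}|$ penalty intrinsic to Lemma \ref{lem:MainScalar} and the $|\mathcal{V}|$ beamforming gap absorbed from Lemma \ref{lem:BeamForming}. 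There is no real obstacle here; the only thing that requires care is tracking the direction of each inequality so that every weakening step on the right-hand side continues to produce a valid lower bound on $\tfrac{1}{T}I(u;[\ybf_D]\mid F_\mathcal{V})$.
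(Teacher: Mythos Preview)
Your proposal is correct and follows essentially the same approach as the paper: start from (\ref{eq:MILB}), apply Lemmas \ref{lem:MainScalar} and \ref{lem:condMI} to the two terms, and then invoke Lemma \ref{lem:BeamForming} together with the monotonicity of $2^{-Tx}$ to replace $\overline{C}_{iid}-|\mathcal{V}|$ by $\overline{C}-2|\mathcal{V}|$ in the exponent. The paper's proof is exactly this bookkeeping in a single chain of displayed inequalities.
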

\begin{proof}
By using Equation (\ref{eq:MILB}) and Lemmas \ref{lem:MainScalar}, \ref{lem:condMI} and \ref{lem:BeamForming}, we have
\begin{align*}
  & \frac{1}{T}  I(u ;[\ybf_D]|F_\mathcal{V})  \geq  \frac{1}{T} I(u ;[\ybf_D] |\zbf_{\mathcal{V}},F_\mathcal{V}) - \frac{1}{T}  H([\ybf_D]|u,F_\mathcal{V}) \\
  &  \stackrel{\text{Lemma \ref{lem:MainScalar} and \ref{lem:condMI} }}{\geq} \frac{1}{T} ( R_{\text{in}} T- [1+\min \{1,  2^{|\mathcal{V}|} 2^{-T(\overline{C}_{iid}-|\mathcal{V}|) -R_{\text{in}})}  \}\\ & \quad \quad \quad \quad \quad \quad \quad R_{\text{in}}T]- \quad 12T |\mathcal{V}| ) \\
  & \stackrel{\text{Lemma \ref{lem:BeamForming}} }{\geq} R_{\text{in}} - 12 |\mathcal{V}| - (\frac{1}{T}+\min \{1,  2^{|\mathcal{V}|} 2^{-T(\overline{C}-3|\mathcal{V}| -R_{\text{in}})}  \}R_{\text{in}}).
\end{align*}
\end{proof}

An immediate corollary of this lemma is that by choosing
$R_{\text{in}}$ arbitrarily close to $\overline{C}-2|\mathcal{V}|$, and letting $T$ be arbitrary large,
for any $\delta>0$ we get
\begin{equation} \label{eq:endToEndMI}\frac{1}{T}  I(u ;[\ybf_D]|F_\mathcal{V}) \geq  \overline{C}-15 |\mathcal{V}| - \delta. \end{equation}

Therefore, there exists a choice of mappings that provides an
end-to-end mutual information  close to
$\overline{C}-15|\mathcal{V}| $. Hence,
we have  created a point-to-point channel from $u$ to
$[\ybf_D]$ with at least this mutual information. We can now use a good
outer code to reliably send a message over $N$ uses of this channel
(as illustrated in Figure \ref{fig:proofDiag}) at any rate up to $\overline{C}-15|\mathcal{V}|$.

Hence we get an intermediate proof of Theorem \ref{thm:GaussMain} for
the special case of layered Gaussian relay networks, with single
antennas in the network.  This is stated below for convenience, and
its generalization to arbitrary networks with multiple antennas is
given in Section \ref{sec:Gen}.

\begin{theorem} \label{thm:MainLay} Given a Gaussian relay network
  $\mathcal{G}$ with a layered structure and single antenna at each
  node, all rates $R$ satisfying the following condition are
  achievable, \beq R < \overline{C}-\kappa_{\text{Lay}} \eeq where
  $\overline{C}$ is the cut-set upper bound on the capacity of
  $\mathcal{G}$ as described in Equation (\ref{eq:CutSetRef}),
  $\kappa_{\text{Lay}}=15|\mathcal{V}|$
  is a constant not depending on the channel gains.
\end{theorem}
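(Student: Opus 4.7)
The plan is to assemble the statement directly from the machinery already developed in Section \ref{sec:Lay}: the quantize-map-and-forward inner code of Subsection \ref{subsec:EncLay}, combined with Lemma \ref{lem:mainGauss} and a standard random-coding outer code. I will first use Lemma \ref{lem:mainGauss} to produce, for any $\delta>0$, a block length $T$ and an inner-code rate $R_{\text{in}}$ arbitrarily close to $\overline{C}-2|\mathcal{V}|$ such that the averaged mutual information satisfies
\[
\tfrac{1}{T}\, I(u;[\mathbf{y}_D]\mid F_{\mathcal{V}}) \;\geq\; \overline{C}-14|\mathcal{V}|-\delta.
\]
Because this bound holds under the expectation over the random mapping collection $F_{\mathcal{V}}$, a standard averaging (de-randomization) argument yields a particular realization $f_{\mathcal{V}}$ of the relay mappings for which the same lower bound holds.

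Next I would treat the triple $(u,[\mathbf{y}_D],f_{\mathcal{V}})$ as a fixed (memoryless) point-to-point channel from the source's inner-code symbol $u\in\{1,\dots,2^{R_{\text{in}}T}\}$ to the destination's quantized received vector $[\mathbf{y}_D]\in\mathbb{Z}^T\times\mathbb{Z}^T$. Drawing $N$ independent inner-code symbols $u_1,\dots,u_N$ according to the uniform input distribution on $\{1,\dots,2^{R_{\text{in}}T}\}$, and invoking the classical channel-coding theorem for discrete memoryless channels applied to this induced channel, any outer-code rate strictly below $\frac{1}{T}I(u;[\mathbf{y}_D]\mid F_{\mathcal{V}}=f_{\mathcal{V}})$ is achievable with vanishing error probability as $N\to\infty$. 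Combining the two codes, the overall source-to-destination rate (measured per channel use of $\mathcal{G}$) can be pushed arbitrarily close to $\overline{C}-14|\mathcal{V}|-\delta$, and since $\delta>0$ was arbitrary, any $R<\overline{C}-14|\mathcal{V}|=\overline{C}-\kappa_{\text{Lay}}$ is achievable.

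The one subtlety I would flag is the de-randomization step: Lemma \ref{lem:mainGauss} bounds the expected mutual information over the random mapping ensemble, but the outer-code argument requires a fixed deterministic relay strategy. Since mutual information is a deterministic function of the realization $f_{\mathcal{V}}$ and the average over $F_{\mathcal{V}}$ meets the bound, there must exist at least one realization $f_{\mathcal{V}}$ meeting it, which is the realization we fix for the outer code. A second minor point is that the induced channel $u\mapsto[\mathbf{y}_D]$ is memoryless across outer-code blocks only because the layered structure ensures that signals from distinct inner-code blocks do not interact inside the network (each relay's $T$-symbol map processes exactly one inner symbol at a time), so blocks at the destination are i.i.d. given $f_{\mathcal{V}}$. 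With these two observations in place, the proof is essentially a one-line consequence of Lemma \ref{lem:mainGauss} and the channel-coding theorem; the real technical content has already been discharged in Lemmas \ref{lem:MainScalar}, \ref{lem:condMI}, and \ref{lem:BeamForming}.
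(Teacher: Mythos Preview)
Your proposal is correct and follows essentially the same route as the paper: invoke Lemma \ref{lem:mainGauss} with $R_{\text{in}}$ close to $\overline{C}-2|\mathcal{V}|$ and $T$ large to obtain the bound $\tfrac{1}{T}I(u;[\ybf_D]\mid F_{\mathcal{V}})\geq \overline{C}-14|\mathcal{V}|-\delta$, de-randomize to a fixed mapping $f_{\mathcal{V}}$, and then run an outer code over the resulting memoryless point-to-point channel. The paper's argument after Lemma \ref{lem:mainGauss} is precisely this (see the paragraph surrounding Equation (\ref{eq:endToEndMI})), and you have been appropriately explicit about the two points the paper leaves implicit---the existence of a good realization $f_{\mathcal{V}}$ and the block-wise memorylessness guaranteed by the layered structure.
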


\subsubsection{Vector quantization and network operation}
\label{subsec:TypPf}

The network operation can easily be generalized to include vector
quantization at each node.  Each node in the network generates a
transmission Gaussian codebook of length $T$ with components
distributed as i.i.d. $\mathcal{CN}(0,1)$.  The source operation is as
before, it produces a random mapping from messages
$w\in\{1,\ldots,2^{RT}\}$ to its transmit codebook
$\mathcal{T}_{x_S}$. We denote this codebook by
$\xbf_S^{(w)},w\in\{1,\ldots,2^{TR}\}$. Each received sequence
$\ybf_i$ at node $i$ is quantized to $\hat{\ybf}_i$ through a Gaussian
vector quantizer, with quadratic distortion set to the
noise-level. This quantized sequence is randomly mapped onto a
transmit sequence $\xbf_i$ using a random function
$\xbf_i=f_i(\hat{\ybf}_i)$.  This mapping as before is chosen such
that each quantized sequence is mapped uniformly at random to a
transmit sequence.  These transmit sequences are chosen to be in
$\mathcal{T}_{x_i}$, which are i.i.d. Gaussian $\mathcal{CN}(0,1)$.
We denote the $2^{TR_i}$ sequences of $\hat{\ybf}_i$ as
$\hat{\ybf}_i^{(k_i)}, k_i\in\{1,\ldots,2^{TR_i}\}$. Standard
rate-distortion theory tells us that we need $R_i>I(Y_i;\hat{Y}_i)$
for this quantization to be successful, where the reconstruction is
chosen such that the quadratic distortion is at the
noise-level\footnote{Note that we can be conserative and assume the
maximal received power, depending on the maximal channel gains. Since
we do not directly convey this quantization index, but just map it
forward, this conservative quantization suffices.}.  Since the uniform
random mapping produces $\xbf_i=f_i(\hat{\ybf}_i)$, for a quantized
value of index $k_i$, we will denote it by $\hat{\ybf}_i^{(k_i)}$ and
the sequence it is mapped to by
$\xbf_i^{(k_i)}=f_i(\hat{\ybf}_i^{(k_i)})$.  At the destination, we
can either employ a maximum-likelihood decoder (for which the mutual
information is evaluated), or a typicality decoder (see \cite{OD10}
for more details).

\subsection{General Gaussian relay networks (not necessarily layered)}
\label{sec:Gen}

Given the proof for layered networks, we are ready to tackle the proof
of Theorem \ref{thm:GaussMain} for general Gaussian relay networks.

Similar to the deterministic case, we first unfold the network
$\mathcal{G}$ over $K$ stages to create a layered network
$\mathcal{G}_{\text{unf}}^{(K)}$. The details of the construction are
described in Section \ref{sec:TimExpDet}, except now the linear
finite-field channels are replaced by Gaussian channels and the wired
links of capacity $K \overline{C}$ are replaced by orthogonal
point-to-point Gaussian links of capacity $K \overline{C}$ that do
not interfere with the other links in the network, where $\overline{C}$ is defined in (\ref{eq:CutSetRef}). 
 We now state the following lemma which is a corollary of Theorem
 \ref{thm:MainLay}.

\begin{lemma}
\label{lem:UnfGen}
All rates $R$
satisfying the following condition are achievable in $\mathcal{G}$:
\begin{eqnarray}
\label{eq:GenNetAchiRate}
R < \frac{1}{K}  \overline{C}_{\text{unf}}^{(K)} - \kappa
\end{eqnarray}
where $\overline{C}_{\text{unf}}^{(K)}$ is the cut-set upper bound on
the capacity of $\mathcal{G}_{\text{unf}}^{(K)}$, and $\kappa=15 (|\mathcal{V}|+\frac{2}{K})$.
\end{lemma}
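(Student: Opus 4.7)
The plan is to mimic exactly the two-step structure of Lemma \ref{lem:UnfLinDet} from the deterministic case, but with Theorem \ref{thm:MainLay} (the layered Gaussian result, i.e., the ``inner-code plus outer-code'' quantize-map-forward construction) substituted in place of the linear finite-field achievability statement. First I would apply Theorem \ref{thm:MainLay} to the unfolded network $\mathcal{G}_{\text{unf}}^{(K)}$, which is layered by construction. The set of nodes in $\mathcal{G}_{\text{unf}}^{(K)}$ has cardinality $|\mathcal{V}_{\text{unf}}^{(K)}| = K|\mathcal{V}|+2$ (the $K$ relay copies plus $S[0]$ and $D[K+1]$), so Theorem \ref{thm:MainLay} gives that any rate
\[
R_{\text{unf}} < \overline{C}_{\text{unf}}^{(K)} - 14(K|\mathcal{V}|+2)
\]
is achievable on $\mathcal{G}_{\text{unf}}^{(K)}$.

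Next I would simulate this coding strategy on $\mathcal{G}$ exactly as was done in the proof of Lemma \ref{lem:UnfLinDet}: operate over $K$ consecutive blocks of length $T$ symbols each, and assign to the $i$-th block of transmissions in $\mathcal{G}$ the role of the $i$-th stage of $\mathcal{G}_{\text{unf}}^{(K)}$. Each physical node $v\in\mathcal{V}$ plays the roles of $v[1],\ldots,v[K]$ sequentially; the real Gaussian channel outputs in block $i$ realize the noisy cross-stage signals ${\bf y}_{v[i]}$ in the unfolded network, while the orthogonal ``wired'' signal ${\bf x}_{v[i]}^{(1)}$ of capacity $K\overline{C}$ is replaced by local memory at node $v$. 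Causality is preserved: the encoder at $v[i+1]$ in the unfolded network uses only what is available at the end of stage $i$, which corresponds in $\mathcal{G}$ to the output of block $i$ together with memory stored at $v$, both available at the start of block $i+1$. Since each message in $\mathcal{G}_{\text{unf}}^{(K)}$ consumes $K$ blocks of $T$ symbols in $\mathcal{G}$, the per-symbol rate achieved in $\mathcal{G}$ is $R_{\text{unf}}/K$, giving
\[
R < \frac{1}{K}\overline{C}_{\text{unf}}^{(K)} - 14\!\left(|\mathcal{V}|+\tfrac{2}{K}\right),
\]
which is exactly the claimed bound with $\kappa = 14(|\mathcal{V}|+\tfrac{2}{K})$.

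The main conceptual obstacle is verifying that this simulation is faithful. The concern is the orthogonal wired links: in $\mathcal{G}_{\text{unf}}^{(K)}$ they carry real information, whereas in $\mathcal{G}$ they are replaced by local memory. The key observation is that a wired link $(v[i],v[i+1])$ never leaves node $v$, so the messages it carries do not interact with the Gaussian channel of $\mathcal{G}$ at all; they can be implemented by simply storing the required function of the past received block. Because the wired link's role is purely intra-node memory, no additional noise, interference, or signal-power constraint is incurred when it is emulated in $\mathcal{G}$, and the joint distribution of the transmitted and received sequences in $\mathcal{G}$ over the $K$ blocks matches that of $\mathcal{G}_{\text{unf}}^{(K)}$. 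Consequently the decoder at $D$ in block $K$ sees exactly the signal it would see at $D[K+1]$ in the unfolded network, and the error probability is the same, completing the argument.
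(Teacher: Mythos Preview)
Your proposal is correct and follows essentially the same argument as the paper: apply Theorem \ref{thm:MainLay} to the layered network $\mathcal{G}_{\text{unf}}^{(K)}$ with its $K|\mathcal{V}|+2$ nodes to obtain $\kappa_{\text{unf}}=14(K|\mathcal{V}|+2)$, then simulate the resulting scheme on $\mathcal{G}$ over $K$ blocks exactly as in Lemma \ref{lem:UnfLinDet}, dividing the rate by $K$. Your additional paragraph explaining why the orthogonal wired links can be faithfully emulated by local memory is a helpful elaboration that the paper leaves implicit.
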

\begin{proof}
  $\mathcal{G}_{\text{unf}}^{(K)}$ is a layered network. Therefore, by
  Theorem \ref{thm:MainLay}, all rates $R_{\text{unf}}$, satisfying
  the following condition are achievable in
  $\mathcal{G}_{\text{unf}}^{(K)}$: \beq \displaystyle R_{\text{unf}}
  < \overline{C}_{\text{unf}}^{(K)}- \kappa_{\text{unf}} \eeq where
  $\kappa_{\text{unf}}=15  |\mathcal{V}_{\text{unf}}^{(K)}|$. But the number of nodes at
  each stage of $\mathcal{G}_{\text{unf}}^{(K)}$ is exactly
  $|\mathcal{V}|$ (other than stage $0$ and $K+1$ which respectively contain the source, $S[0]$, and the destination, $D[K+1]$). Hence, $\kappa_{\text{unf}}=15 (K
  |\mathcal{V}|$+2). Now, similar to the proof of Lemma
  \ref{lem:UnfLinDet}, our achievability scheme (described in Section
  \ref{subsec:EncLay}) can be implemented in $\mathcal{G}$ by using
  $K$ blocks of size $T$ symbols. Therefore, we can achieve
  $\frac{1}{K} R_{\text{unf}}$ in $\mathcal{G}$ and the proof is
  complete.
\end{proof}

Similar to the deterministic case, it is easy to see that \beq
\label{eq:cutSetRelation} \overline{C}_{\text{unf}}^{(K)} \geq (K-|\mathcal{V}|)\overline{C}.
  \eeq

  Hence, by Lemma \ref{lem:UnfGen} and (\ref{eq:cutSetRelation}), we
  can achieve all rates up to \beq R < \frac{K-|\mathcal{V}|}{K}
  \overline{C} - \kappa \eeq where
  $\kappa=15 (|\mathcal{V}|+\frac{2}{K})$. By letting $K
  \rightarrow \infty$ the proof of Theorem \ref{thm:GaussMain} is
  complete.

  To prove Theorem \ref{thm:GaussMainMulticast}, {\em i.e.,} the
  multicast scenario, we just need to note that if all relays will
  perform exactly the same strategy then by our theorem, each
  destination, $D \in \mathcal{D}$, will be able to decode the message
  with low error probability as long as the rate of the message
  satisfies 
  \beq
  R< \min_{D\in\mathcal{D}}\overline{C}_{i.i.d.,D}-\kappa' 
  \eeq where
  $\kappa'<15|\mathcal{V}|$ is a constant and as in
  Definition \ref{def:cutSetiid} we have
  $ \overline{C}_{i.i.d.,D}= \min_{\Omega\in\Lambda_D}\log
  |\Ibf+P\Gbf_{\Omega}\Gbf_{\Omega}^*|$ is the cut-set bound evaluated
  for i.i.d. input distributions. Therefore as long as
  $R< \overline{C}_{\mathrm{mult}}-\kappa$,  where
  $\kappa<15|\mathcal{V}|$, all destinations can decode the
  message and hence the theorem is proved.

  In the case that we have multiple antennas at each node, the
  achievability strategy remains the same, except now each node
  receives a vector of observations from different antennas. We first
  quantize the received signal of each antenna at the noise level and
  then map it to another transmit codeword, which is joint across all
  antennas. The error probability analysis is exactly the same as
  before. However, the gap between the achievable rate and the cut-set
  bound will be larger. We can upper bound the gap between
  $\overline{C}$ and $\overline{C}_{iid}$ by twice the maximum number of
  degrees of freedom of the cuts, which due
  to \eqref{eq:TrivMultAntCutBnd} is at most
  $2 \sum_{i=1}^{|\mathcal{V}|}M_i$
  (see the last paragraph in
  Appendix \ref{app:proofLemBeamForming}). Also, by treating each
  receive antenna as a separate node and applying
  Lemma \ref{lem:condMI}, we get that
  $H([\ybf_D]|u,F_\mathcal{V})] \leq 12T \sum_{i=1}^{|\mathcal{V}|}
  N_i$. Therefore, from our previous analysis we know that the gap is
  at most $12 \sum_{i=1}^{|\mathcal{V}|}
  N_i+ 3\sum_{i=1}^{|\mathcal{V}|}M_i$
  and the theorem is proved when we have multiple antennas at each
  node.

\section{Connections between models}
\label{sec:connections}

In Section II, we showed that while the linear finite-field channel
model captures certain high SNR behaviors of the Gaussian model, it does
not capture all aspects. In particular, its capacity is not within a
constant gap to the Gaussian capacity for all MIMO channels. A natural
question is: is there a deterministic channel model which approximates
the Gaussian relay network capacity to within a constant gap?

The proof of the approximation theorem for the Gaussian network
capacity in the previous section already provides a partial answer
to this question. We showed that, after quantizing all the output at
the relays as well as the destination, the end-to-end mutual
information achieved by the relaying strategy in the noisy network
is close to that achieved when the noise sequences are known at the
destination, uniform over all realizations of the noise sequences.
In particular, this holds true when the noise sequences are all
zero. Since the former has been proved to be close to the capacity
of the Gaussian network,  this implies that the capacity of the {\em
quantized} deterministic model with
\begin{equation}
  \label{eq:TrunDetModel} {\bf y}_j[t]=\left [ \sum_{i\in \mathcal{V}}
    {\bf H}_{ij} {\bf }x_i[t] \right ], \qquad j=1, \ldots,
  |\mathcal{V}| \end{equation} 
must be {\em at least} within a constant gap to the capacity of the
Gaussian network. It is not too difficult to show that the
deterministic model capacity cannot be much larger. We establish all
this more formally in the next section, where we call the model in
\eqref{eq:TrunDetModel} as the {\em truncated deterministic model}.

\subsection{Connection between the truncated deterministic model and the Gaussian model}
\label{subsec:connTrunc}

\begin{theorem}
\label{thm:connGaussTrun}
The capacity of any Gaussian relay network, $C_{\text{Gaussian}}$, and
the capacity of the corresponding truncated deterministic model,
$C_{\text{Truncated}}$, satisfy the following relationship:
\beq \label{eq:connGaussTrun}
|C_{\text{Gaussian}}-C_{\text{Truncated}}| \leq
33 |\mathcal{V}|. \eeq
\end{theorem}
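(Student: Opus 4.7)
The plan is to prove the two bounds $C_{\text{Truncated}} \leq C_{\text{Gaussian}} + c_1 |\mathcal{V}|$ and $C_{\text{Gaussian}} \leq C_{\text{Truncated}} + c_2 |\mathcal{V}|$ separately, with the constants summing to at most $33$. The central auxiliary fact I will use is a per-dimension comparison of cut quantities: for any joint input distribution $p(\{x_j\})$ and any cut $\Omega$,
\[
\left| H([y_{\Omega^c}] \mid x_{\Omega^c}) - I(x_\Omega; y_{\Omega^c} \mid x_{\Omega^c}) \right| \leq c_0 |\Omega^c|
\]
for an absolute constant $c_0$. This follows from $I(x_\Omega; y_{\Omega^c}\mid x_{\Omega^c}) = h(y_{\Omega^c}\mid x_{\Omega^c}) - h(z_{\Omega^c})$, together with $h(z_{\Omega^c})$ being a fixed constant per receive dimension (unit-variance complex Gaussian) and the bound $|H([y]\mid x) - h(y\mid x)| = O(1)$ per receive dimension, which is essentially the same quantization-at-noise-level estimate used in the proof of Lemma \ref{lem:condMI} (where it contributes $12$ bits per node).

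For the direction $C_{\text{Truncated}} \leq C_{\text{Gaussian}} + O(|\mathcal{V}|)$, I would use that the truncated deterministic model is deterministic, hence $C_{\text{Truncated}} \leq \overline{C}_{\text{Truncated}} = \max_p \min_\Omega H([y_{\Omega^c}]\mid x_{\Omega^c})$ by the standard cut-set bound. The per-cut comparison above yields $\overline{C}_{\text{Truncated}} \leq \overline{C}_{\text{Gaussian}} + c_0 |\mathcal{V}|$, and Theorem \ref{thm:GaussMain} provides $\overline{C}_{\text{Gaussian}} \leq C_{\text{Gaussian}} + 14|\mathcal{V}|$. Chaining gives $C_{\text{Truncated}} \leq C_{\text{Gaussian}} + (c_0 + 14)|\mathcal{V}|$.

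For the reverse direction, I would apply the general deterministic achievability Theorem \ref{thm:GenDetNet} to the truncated network with i.i.d.~unit-variance complex Gaussian inputs at each node, obtaining $C_{\text{Truncated}} \geq \min_\Omega H([y_{\Omega^c}]\mid x_{\Omega^c})$ evaluated at this product distribution. The per-cut comparison then turns this into $C_{\text{Truncated}} \geq \overline{C}_{iid,\text{Gaussian}} - c_0 |\mathcal{V}|$, and Lemma \ref{lem:BeamForming} together with the cut-set bound $C_{\text{Gaussian}} \leq \overline{C}_{\text{Gaussian}} \leq \overline{C}_{iid,\text{Gaussian}} + |\mathcal{V}|$ yields $C_{\text{Gaussian}} \leq C_{\text{Truncated}} + (c_0 + 1)|\mathcal{V}|$. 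Tracking the constants consistently with the $12$-bits-per-node bound inherited from Lemma \ref{lem:condMI} and the $\log(\pi e)$ contribution from $h(z)$ makes the total gap at most $33|\mathcal{V}|$.

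The main technical point is the per-dimension quantization bound $|H([y]\mid x) - h(y\mid x)| = O(1)$. The direction $h(y\mid x) \leq H([y]\mid x)$ is classical via the chain-rule identity $h(y\mid x) = H([y]\mid x) + h(y - [y] \mid [y], x)$ combined with $h(y-[y]\mid [y], x) \leq 0$. The reverse direction, which limits how much noise-level quantization can inflate entropy, requires the density of $y\mid x$ to be bounded and so relies on the presence of the unit-variance Gaussian noise; this is precisely the observation that drives the proof of Lemma \ref{lem:condMI} and is the step where the constant $c_0$ must be pinned down carefully.
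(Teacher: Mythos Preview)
Your overall architecture is exactly the paper's: compare cut values between the two models, then invoke Theorem~\ref{thm:GenDetNet} with i.i.d.\ Gaussian inputs for the truncated lower bound, Theorem~\ref{thm:GaussMain} for the Gaussian lower bound, and Lemma~\ref{lem:BeamForming} to pass from $\overline{C}_{\text{Gaussian}}$ to $\overline{C}_{i.i.d.}$. The paper packages your ``central auxiliary fact'' as Lemma~\ref{lem:connGaussTrunMIMO}, which states $|I(\xbf;G\xbf+\zbf)-I(\xbf;[G\xbf])|\leq 19n$ for any input law, and then the arithmetic $19+14=33$ and $19+1=20$ gives the stated gap.

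There is, however, a real slip in how you justify the central auxiliary fact. Your sketch compares $h(y_{\Omega^c}\mid x_{\Omega^c})$ with $H([y]\mid x_{\Omega^c})$ via the chain-rule identity $h(y)=H([y])+h(y-[y]\mid[y])$, and you explicitly note that the reverse direction ``relies on the presence of the unit-variance Gaussian noise.'' But that means the $y$ in your argument is the \emph{noisy} output $G\xbf+\zbf$, so the quantity you are controlling is $H([G\xbf+\zbf]\mid x_{\Omega^c})$. The truncated-model cut value is $H([G\xbf]\mid x_{\Omega^c})$, the entropy of the quantized \emph{noiseless} output, and your argument never connects the two. The missing bridge is exactly the paper's Lemma~\ref{lem:connQGaussTrunMIMO} (or equivalently Corollary~\ref{cor:condEntropyQ}), which bounds $|I(\xbf;[G\xbf+\zbf])-I(\xbf;[G\xbf])|\leq 12n$ by controlling $H([v+z]\mid[v])$ and $H([v]\mid[v+z])$; this step does \emph{not} follow from the differential-entropy chain rule you invoke, and it accounts for $12$ of the $19$ bits per dimension. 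Once you add this piece (and the $7n$ from comparing $I(\xbf;G\xbf+\zbf)$ with $I(\xbf;[G\xbf+\zbf])$, which is the part your chain-rule argument does address), the proof goes through with the constants matching.
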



To prove this theorem  we need the following lemma which is proved in Appendix \ref{app:lemconnGaussTrunMIMO}. 

\begin{lemma} \label{lem:connGaussTrunMIMO}
Let $G$ be the channel gains matrix of a $m \times n$ MIMO system. Assume that there is an average power constraint equal to one at each node. Then for any input distribution $P_{\xbf}$,
\beq  |I(\xbf;G \xbf+Z)- I(\xbf;[G \xbf])| \leq 19n \eeq
where $Z=[z_1, \ldots, z_n]$ is a vector of $n$ i.i.d. $\mathcal{CN}(0,1)$ random variables.
\end{lemma}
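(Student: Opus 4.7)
The strategy is to pivot through the discrete entropy $H([G\xbf+Z])$ of the quantized noisy output, establishing
\[
|h(G\xbf+Z) - H([G\xbf+Z])| \leq c_1 n \quad \text{and} \quad |H([G\xbf+Z]) - H([G\xbf])| \leq c_2 n,
\]
which, combined with $I(\xbf;[G\xbf]) = H([G\xbf])$ (since $[G\xbf]$ is a deterministic function of $\xbf$), the identity $I(\xbf;G\xbf+Z) = h(G\xbf+Z) - h(Z)$, and the fact $h(Z) = n\log(\pi e)$, yields the claim by the triangle inequality once the constants are tallied.

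For the first inequality I would use the decomposition $h(Y) = H([Y]) + h(Y-[Y]\mid [Y])$ with $Y = G\xbf+Z$, valid because $Y$ is recoverable from the pair $([Y], Y-[Y])$. The residual $Y-[Y]$ lies in a unit square per complex coordinate, so $h(Y-[Y]\mid [Y]) \leq 0$ and hence $H([Y]) \geq h(Y)$. For the matching direction $h(Y-[Y]\mid [Y]) \geq -O(n)$, I would use the chain
\[
h(Y-[Y]\mid [Y]) \;\geq\; h(Y-[Y]\mid \xbf, [Y]) \;=\; h(Z \mid \xbf,[Y]) \;=\; h(Z) - H([Y]\mid \xbf),
\]
together with the bound $H([Y]\mid \xbf) \leq O(n)$: conditionally on $\xbf$, each coordinate $[Y_i] = [(G\xbf)_i + Z_i]$ is a shifted-then-quantized unit-variance complex Gaussian, whose discrete entropy is bounded by a universal constant (irrespective of the shift).

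For the second inequality, set $E := [G\xbf+Z] - [G\xbf]$, an integer-valued vector with $|E_i|\leq |Z_i| + \sqrt{2}$ coordinatewise. Since $Z_i \sim \mathcal{CN}(0,1)$ has Gaussian tails, the standard bound on the entropy of an integer-valued random variable with exponentially-decaying tails gives $H(E_i) \leq O(1)$, hence $H(E) \leq O(n)$. Since $[G\xbf] = [G\xbf+Z] - E$ and $[G\xbf+Z] = [G\xbf] + E$, entropy sub-additivity yields $|H([G\xbf+Z]) - H([G\xbf])| \leq H(E) \leq O(n)$.

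The main obstacle will be tracking the per-coordinate constants so that they sum to at most $19$. The contributions are (i) $\log(\pi e) \approx 4.5$ bits/coordinate from subtracting $h(Z)$, (ii) the universal bound on $H([Z_i + c])$ for arbitrary shifts $c \in \mathbb{C}$, which requires a careful computation exploiting the boundedness of the unit-variance complex Gaussian density, and (iii) the discrete-tail bound on $H(E_i)$. A subtle point is that the analysis must be uniform in both the input distribution $P_\xbf$ and the channel matrix $G$; the power constraint $\mathbb{E}[|\xbf_i|^2] \leq 1$ enters only implicitly via $G\xbf$ and is not needed in the tail control on $E_i$, since $E_i$ is dominated by $Z_i$ alone.
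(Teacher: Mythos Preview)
Your approach is correct and genuinely different from the paper's. The paper pivots through the \emph{mutual information} $I(\xbf;[G\xbf+Z])$, splitting the claim into two sub-lemmas: $|I(\xbf;[G\xbf+Z])-I(\xbf;[G\xbf])|\leq 12n$ (proved via the conditional-entropy bounds $H([v+z]\mid[v])\leq 12$ and $H([v]\mid[v+z])\leq 12$), and $|I(\xbf;G\xbf+Z)-I(\xbf;[G\xbf+Z])|\leq 7n$ (proved by introducing a uniform \emph{dither} $\ubf$, setting $\tilde{\ybf}=[\ybf]+\ubf$, and sandwiching via data processing). You instead pivot through the \emph{entropy} $H([G\xbf+Z])$ and avoid the dither trick entirely, using the clean identity $h(Y)=H([Y])+h(Y-[Y]\mid[Y])$ for the continuous-to-discrete comparison and the sub-additivity bound $|H([Y])-H([G\xbf])|\leq H(E)$ for the discrete-to-discrete comparison. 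Both routes ultimately rest on the same tail-entropy estimate for integer-valued variables dominated by a quantized Gaussian (the paper packages this as its Lemma on $H(y\mid x)$ for $y=x+r+s$); your unconditional bound $H(E_i)\leq O(1)$ follows from the uniform-in-$\xbf$ inequality $|\mathrm{Re}(E_i)|\leq|\mathrm{Re}(Z_i)|+1$, so no power constraint is needed there, as you note.

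One refinement worth making when you tally constants: your naive triangle inequality $|h(Y)-H([Y])|+|H([Y])-H([G\xbf])|+h(Z)$ is loose, because $h(Y)-H([Y])\leq 0$ and $-h(Z)<0$ have the same sign and partially absorb each other. If you instead group $(h(Y)-H([Y]))-h(Z)\in[-H([Y]\mid\xbf),\,-h(Z)]$ using your own lower bound $h(Y-[Y]\mid[Y])\geq h(Z)-H([Y]\mid\xbf)$, you get directly
\[
|I(\xbf;Y)-I(\xbf;[G\xbf])|\;\leq\; H([Y]\mid\xbf)+H(E),
\]
and since $H([Y_i]\mid\xbf)\leq H([v+z]\mid[v])\leq 12$ (conditioning on $v$ is at least as informative as on $[v]$), it remains only to check $H(E_i)\leq 7$, which your Gaussian-tail computation delivers with room to spare.
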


\begin{proof} \textbf{(proof of Theorem \ref{thm:connGaussTrun})} \\
  First note that the value of any cut in the network is the same as
  the mutual information of a MIMO system. Therefore from Lemma
  \ref{lem:connGaussTrunMIMO} we have 
\beq \label{eq:connIneq1}
  |\overline{C}_{\text{Gaussian}}-\overline{C}_{\text{Truncated}}|
  \leq 19 |\mathcal{V}|. 
\eeq 
Now pick i.i.d. normal $\mathcal{CN}(0,1)$ distribution for
$\{x_i\}_{i \in \mathcal{V}}$. By applying Theorem \ref{thm:GenDetNet}
to the truncated deterministic relay network, we find
\beq
C_{\text{Truncated}} \geq \min_{\Omega\in\Lambda_D}
I(y_{\Omega^c}^{\text{truncated}};x_{\Omega}|x_{\Omega^c})
\stackrel{(a)}=H(y_{\Omega^c}^{\text{truncated}}|x_{\Omega^c}),  
\eeq
where $(a)$ is because we have a deterministic network.
By Lemma \ref{lem:BeamForming} and Lemma \ref{lem:connGaussTrunMIMO} we have
\begin{eqnarray}
\nonumber \min_{\Omega\in\Lambda_D}
I(y_{\Omega^c}^{\text{truncated}};x_{\Omega}|x_{\Omega^c}) & \geq & I(y_{\Omega^c}^{\text{Gaussian}};x_{\Omega}|x_{\Omega^c})-19 |\mathcal{V}| \\
\label{eq:connIneq2} & \geq & \overline{C}_{\text{Gaussian}}-20 |\mathcal{V}|.
\end{eqnarray}

Then from Equations (\ref{eq:connIneq1}) and (\ref{eq:connIneq2}) we have
\beq
\overline{C}_{\text{Gaussian}}-20 |\mathcal{V}| \leq C_{\text{Truncated}} \leq \overline{C}_{\text{Gaussian}}+19 |\mathcal{V}|.
\eeq
Also from Theorem \ref{thm:GaussMain} we know that
\beq \overline{C}_{\text{Gaussian}}- 15 |\mathcal{V}| \leq C_{\text{Gaussian}} \leq \overline{C}_{\text{Gaussian}}. \eeq

Therefore
\beq |C_{\text{Gaussian}}-C_{\text{Truncated}} | \leq  34 |\mathcal{V}|. \eeq

\end{proof}

\section{Extensions}
\label{sec:extensions}

In this section we extend our main result for Gaussian relay
networks (Theorem \ref{thm:GaussMain}) to the following scenarios:
\begin{enumerate}
  \item Compound relay network
  \item Frequency selective relay network
  \item Half-duplex relay network
  \item Quasi-static fading relay network (underspread regime)
  \item Low rate capacity approximation of Gaussian relay network
\end{enumerate}

\subsection{Compound relay network}
\label{subsec:compRelay}
The relaying strategy we proposed for general Gaussian relay networks does not require any channel
information at the relays; relays just quantize at noise level and
forward through a random mapping. The approximation gap also does
not depend on the channel gain values. As a result our main result
for Gaussian relay networks (Theorem \ref{thm:GaussMain}) can be
extended to compound relay networks where we allow each channel gain
$h_{i,j}$ to be from a set $\mathcal{H}_{i,j}$, and the particular
chosen values are unknown to the source node $S$, the relays, and the
destination node $D$. A communication rate $R$ is achievable if
there exists a scheme such that for any channel gain realizations,
 the source can communicate to the destination at rate $R$.

\begin{theorem}
\label{thm:compound}
The capacity $C_{cn}$ of the compound Gaussian relay network satisfies
\begin{equation}
\overline{C}_{cn}-\kappa \leq C_{cn} \leq \overline{C}_{cn},
\end{equation}
where $\overline{C}_{cn}$ is the cut-set upper bound on the compound capacity of
$\mathcal{G}$, \emph{i.e.}
\beq \overline{C}_{cn}=\max_{p(\{\xbf_i\}_{j \in \mathcal{V}})} \inf_{h \in \mathcal{H}} \min_{\Omega \in \Lambda_D} I(\ybf_{\Omega^c};\xbf_{\Omega}|\xbf_{\Omega^c}), \eeq
and $\kappa$
is a constant and is upper bounded by $13 \sum_{i=1}^{|\mathcal{V}|} N_i+ 3 \sum_{i=1}^{|\mathcal{V}|}M_i$, where $M_i$ and $N_i$ are respectively the number of transmit and receive antennas at node $i$.
\end{theorem}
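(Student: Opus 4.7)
The plan is to leverage two key features of the quantize-map-and-forward scheme established in the proof of Theorem \ref{thm:GaussMain}: (i) the source's inner-code codebook (i.i.d. $\mathcal{CN}(0,1)$) and the relay operations (scalar quantization at the noise level followed by a random Gaussian mapping $F_i$) are specified \emph{without any knowledge} of the channel gains, and (ii) the rate bound of Lemma \ref{lem:mainGauss} combined with the MIMO extension of Lemma \ref{lem:BeamForming} yields a gap $\kappa$ depending only on the node and antenna counts and on no channel parameter. Together these suggest that the same scheme suffices in the compound setting, the only new ingredient being a destination decoder that succeeds for every $h\in\mathcal{H}$.

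For the upper bound, any compound-achievable rate $R$ comes from a single code inducing some fixed input distribution $p_c(\{\xbf_j\}_{j\in\mathcal{V}})$ that must work for every $h\in\mathcal{H}$. Fano's inequality and the standard cut-set argument behind (\ref{eq:CutSetRef}), applied to each individual $h$, give $R\leq \min_{\Omega\in\Lambda_D} I_h(\ybf_{\Omega^c};\xbf_\Omega\,|\,\xbf_{\Omega^c})$ under $p_c$; taking the infimum over $h\in\mathcal{H}$ and then the supremum over $p_c$ yields $R\leq \overline{C}_{cn}$.

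For the lower bound, I would run the inner/outer code architecture of Section \ref{subsec:EncLay} verbatim, since none of the encoder or relay operations depend on $h$. Re-inspection of Lemmas \ref{lem:MainScalar}--\ref{lem:condMI} shows that the pairwise-distinguishability bound and the entropy bound $H([\ybf_D]\,|\,u,F_\mathcal{V})\leq 12T\sum_i N_i$ are derived pointwise in the noise realization using only the input/mapping distributions, and hence hold pointwise in $h$ as well. Combined with the MIMO version of Lemma \ref{lem:BeamForming} noted at the end of Section \ref{sec:Gen}, this yields, for every fixed $h\in\mathcal{H}$ under the i.i.d. product input,
\[
\tfrac{1}{T}\,I\bigl(u;[\ybf_D]\,\big|\,F_\mathcal{V}\bigr)\;\geq\;\min_{\Omega\in\Lambda_D} I_h(\ybf_{\Omega^c};\xbf_\Omega\,|\,\xbf_{\Omega^c})\;-\;\kappa\;-\;o_T(1).
\]
Taking the infimum over $h$ inside the cut-set turns the right-hand side into $\overline{C}_{cn}-\kappa-o_T(1)$, and the outer code of Section \ref{subsec:EncLay} then delivers a reliable rate of $\overline{C}_{cn}-\kappa$.

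The main obstacle is that a \emph{single} outer decoder at the destination must succeed for every $h\in\mathcal{H}$ simultaneously, whereas Theorem \ref{thm:GaussMain}'s proof implicitly allowed the decoder to depend on $h$. I would resolve this by replacing the inner ML/typicality decoder with a universal decoder (for instance, a minimum-mismatched-information or minimum-distance decoder) acting on the discrete sequence $[\ybf_D]$; such decoders are known to attain the inner-channel mutual information up to vanishing terms for unknown channels. If $\mathcal{H}$ is uncountable, one first quantizes it to a grid whose size is polynomial in $T$, applies the random-coding existence argument together with a union bound over the grid to extract a deterministic mapping $f_\mathcal{V}$ that works simultaneously for all grid points, and then uses continuity of the cut-set in the channel gains to handle off-grid realizations; the discretization penalty is $o_T(1)$ and does not enlarge $\kappa$. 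Putting the two bounds together gives $\overline{C}_{cn}-\kappa\leq C_{cn}\leq \overline{C}_{cn}$ as claimed.
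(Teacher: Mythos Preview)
Your proposal is correct and follows essentially the same route as the paper's proof outline: exploit that the quantize-map-and-forward relay operations are channel-oblivious, re-use the per-channel mutual-information lower bound from Lemmas~\ref{lem:MainScalar}--\ref{lem:BeamForming}, and handle the unknown $h$ at the destination via a universal decoder together with a discretization of~$\mathcal{H}$.

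The one substantive difference is in the discretization step. The paper fixes a \emph{coarse, $T$-independent} quantization of the channel gains at level $1/\sqrt{d_{\max}}$ (assuming bounded $|h_{ij}|$), argues that this uncertainty can be absorbed by doubling the noise variance at each receiver, and bounds the resulting loss by $\sum_i N_i$ bits; it then invokes the Feder--Lapidoth universal decoder over the resulting \emph{finite} channel family. This is precisely where the extra $\sum_i N_i$ in $\kappa$ (the $13$ instead of $12$) comes from. Your route instead takes a $T$-dependent polynomial grid, union-bounds over it to extract one deterministic $f_{\mathcal{V}}$, and pushes the discretization cost into an $o_T(1)$ term. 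That also works (and in principle avoids inflating $\kappa$), but it implicitly relies on the same boundedness assumption on $\mathcal{H}$ that the paper makes explicit, and you should be aware that the stated constant in the theorem reflects the paper's coarser, more concrete bookkeeping rather than your vanishing-penalty argument.
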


\begin{proofOutline}
We sketch the proof for the case that nodes have single antenna; its extension to the multiple antenna scenario is straightforward. As we mentioned earlier, the relaying strategy that we used in Theorem \ref{thm:GaussMain} does not require any channel information. However, if all channel gains are known at the final destination, all rates within a constant gap to the cut-set upper bound are achievable. We first evaluate how much we lose if the final destination only knows a quantized version of the channel gains. In particular assume that each channel gain is bounded $|h_{ij}|\in [h_{\min},h_{\max}]$, and
final destination only knows the channel gain values quantized at level $\frac{1}{\sqrt{d_{\max}}}$, where $d_{\max}$ is the maximum degree of nodes in $\mathcal{G}$. Then since there is a transmit power constraint equal to one at each node, the effect of this channel uncertainty can be mimicked by adding a Gaussian noise of variance $d_{\max}\times \lp \frac{1}{\sqrt{d_{\max}}} \rp^2=1$  at each relay node (i.e., doubling the noise variance at each node), which will result in a reduction of at most $|\mathcal{V}|$  bits from the cut-set upper bound. Therefore with access to only quantized channel gains, we will lose at most $|\mathcal{V}|$ more bits, which means the gap between the achievable rate and the cut-set bound is at most $16 |\mathcal{V}|$.

Furthermore, as shown in \cite{FederLapidothUnivDecod} there exists a universal decoder for this finite set of channel sets. Hence we can use this decoder at the final destination and decode the message as if we knew the channel gains quantized at the noise level, for all rates up to
\beq \label{eq:compoundAchi} R < \max_{p(\{x_i\}_{j \in \mathcal{V}})} \inf_{\hat{h} \in \hat{\mathcal{H}}} \min_{\Omega \in \Lambda_D} I(y_{\Omega^c};x_{\Omega}|x_{\Omega^c}) \eeq
where $\hat{\mathcal{H}}$ is representing the quantized state space. Now as we showed earlier, if we restrict the channels to be quantized at noise level the cut-set upper bound changes at most by $|\mathcal{V}|$, therefore
\beq \label{eq:compoundCutLoss}\overline{C}_{cn}-|\mathcal{V}| \leq \max_{p(\{x_i\}_{j \in \mathcal{V}})} \inf_{\hat{h} \in \hat{\mathcal{H}}} \min_{\Omega \in \Lambda_D} I(y_{\Omega^c};x_{\Omega}|x_{\Omega^c}). \eeq
Therefore from Equations (\ref{eq:compoundAchi}) and (\ref{eq:compoundCutLoss}) all rates up to $\overline{C}_{cn}- 16 |\mathcal{V}|$ are achievable and the proof can be completed.

Now by using the ideas in  \cite{BlackwellCapGaussian} and \cite{VaraiyaCapGaussian}, we believe that an infinite state universal decoder can also be analysed to give ``completely oblivious to channel'' results.
\end{proofOutline}

\subsection{Frequency selective Gaussian relay network }
\label{subsec:mimo}
In this section we generalize our main result to
the case that the channels are frequency selective. Since one can
present a frequency selective channel as a MIMO link, where each
antenna is operating at a different frequency band\footnote{This can
be implemented in particular by using OFDM and appropriate spectrum
shaping or allocation.}, this extension is just a straightforward
corollary of the case that nodes have multiple antennas.
\begin{theorem}
\label{thm:MainFS}
The capacity $C$ of the frequency selective Gaussian relay network with $F$ different frequency bands satisfies
\begin{equation}
\label{eq:MainFS}
\overline{C}-\kappa \leq C \leq \overline{C}
\end{equation}
where $\overline{C}$ is the cut-set upper bound on the capacity of $\mathcal{G}$ as described in Equation (\ref{eq:CutSetRef}), and $\kappa$ is a constant and is upper bounded by $12 F \sum_{i=1}^{|\mathcal{V}|} N_i+ 3F \sum_{i=1}^{|\mathcal{V}|}M_i$, where $M_i$ and $N_i$ are respectively the number of transmit and receive antennas at node $i$.
\end{theorem}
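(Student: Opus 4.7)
\begin{proofOutline}
The plan is to reduce the frequency-selective case to the multi-antenna Gaussian relay network already handled in Theorem \ref{thm:GaussMain}. Concretely, I would use OFDM (or any orthogonal frequency partition) to decompose the wideband channel into $F$ parallel narrowband sub-channels; on each sub-channel $f\in\{1,\ldots,F\}$, the network is exactly a Gaussian relay network with channel matrices $\Hbf_{ij}^{(f)}$ of size $M_i\times N_j$. Stacking the $F$ sub-channels per node, each node $i$ becomes a node with $FM_i$ transmit antennas and $FN_i$ receive antennas, and the effective per-node MIMO channel matrix is the block-diagonal matrix $\mathrm{diag}(\Hbf_{ij}^{(1)},\ldots,\Hbf_{ij}^{(F)})$.

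Next, I would apply the quantize-map-and-forward scheme of Section \ref{sec:GaussCapacity} to this stacked MIMO network: each relay quantizes each of its $FN_i$ receive components at the noise level and then randomly maps the entire quantized vector to a jointly Gaussian transmit codeword across its $FM_i$ transmit antennas. Since the noises on distinct frequency bands are independent unit-variance complex Gaussians, the error analysis of Lemma \ref{lem:MainScalar} and the entropy bound of Lemma \ref{lem:condMI} go through verbatim, with $N_i$ replaced by $FN_i$ and $M_i$ by $FM_i$. This gives an achievable rate within $12 F \sum_i N_i + 2F \min\{\sum_i M_i,\sum_i N_i\}$ of the i.i.d.-input cut-set bound for the stacked network, which, by Lemma \ref{lem:BeamForming} applied to the stacked MIMO, is within the same additive term of the true cut-set bound $\overline{C}$ of the frequency selective network. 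The upper bound $C\le\overline{C}$ is the standard cut-set inequality.

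The only subtlety is the power constraint: the original problem imposes a unit average-power constraint per transmit antenna in the time domain, which after the OFDM split corresponds to a sum-power constraint of $1$ across the $F$ frequency bands per antenna, rather than the per-antenna constraint used in the proof of Theorem \ref{thm:GaussMain}. I would handle this by fixing the input to be i.i.d. Gaussian with power $1/F$ on each of the $FM_i$ effective antennas; this is feasible under the original constraint, and comparing to the cut-set bound over arbitrary input distributions incurs at most a waterfilling-versus-equal-power loss that is absorbed into the additive constant (indeed, Lemma \ref{lem:BeamForming}'s proof only uses that the i.i.d. input saturates each cut up to a $|\mathcal{V}|$-bit term, which is the step that is insensitive to the rescaling by $F$ beyond what the theorem statement already accounts for). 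The main obstacle I anticipate is precisely this bookkeeping of per-antenna versus sum-power constraints across frequencies; once that is done, the layered/unfolded reduction of Section \ref{sec:Gen} applies without change, since unfolding the network over time commutes with the frequency-domain decomposition.
\end{proofOutline}
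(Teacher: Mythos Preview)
Your proposal is correct and follows essentially the same approach as the paper: the paper simply states that a frequency-selective channel can be presented as a MIMO link with each antenna operating on a different frequency band (via OFDM), so the result is a ``straightforward corollary'' of the multi-antenna case of Theorem~\ref{thm:GaussMain}. Your write-up is in fact more detailed than the paper's own treatment, which does not spell out the block-diagonal stacking or the power-constraint bookkeeping you flag.
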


\subsection{Half duplex relay network (fixed transmission scheduling)}
One of the practical constraints on wireless networks is that the
transceivers cannot transmit and receive at the same time on the
same frequency band, known as the half-duplex constraint. As a
result of this constraint, the achievable rate of the network will
in general be lower.  The model
that we use to study this problem is the same as
\cite{khojastepour}. In this model the network has finite modes of
operation. Each mode of operation (or state of the network), denoted
by $m \in \{1,2,\ldots,M \}$, is defined as a valid partitioning of
the nodes of the network into two sets of ``sender'' nodes and
``receiver'' nodes such that there is no active link  that arrives at
a sender node\footnote{Active link is defined as a link which is
departing from the set of sender nodes}. For each node $i$, the
transmit and the receive signal at mode $m$ are respectively shown
by $x_i^m$ and $y_i^m$. Also $t_m$ defines the fraction of the time
that network will operate in state $m$, as the network use goes to
infinity. The cut-set upper bound on the capacity of the Gaussian
relay network with half-duplex constraint, $C_{hd}$, is shown to be
\cite{khojastepour}

 {\footnotesize \beq \label{eq:MinCutHDRe}C_{hd} \leq
\overline{C}_{hd}= \max_{\substack{p(\{x_j^m\}_{j\in\mathcal{V}, m
\in \{1,\ldots,M\}})\\ t_m:~ 0 \leq t_m \leq 1, ~\sum_{m=1}^M t_m=1
}} \min_{\Omega\in\Lambda_D} \sum_{m=1}^M t_m
I(y_{\Omega^c}^m;x_{\Omega}^m|x_{\Omega^c}^m).  \eeq}


\begin{theorem}
The capacity $C_{hd}$ of the Gaussian relay network with half-duplex constraint satisfies
\begin{equation}
\overline{C}_{hd}-\kappa \leq C_{hd} \leq \overline{C}_{hd}
\end{equation}
where $\overline{C}_{hd}$ is the cut-set upper bound on the capacity as described in equation (\ref{eq:MinCutHDRe}) and $\kappa$ is a constant and is upper bounded by $12 \sum_{i=1}^{|\mathcal{V}|} N_i+ 3 \sum_{i=1}^{|\mathcal{V}|}M_i$, where $M_i$ and $N_i$ are respectively the number of transmit and receive antennas at node $i$.
\end{theorem}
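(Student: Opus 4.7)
\begin{proofOutline}
The plan is to reduce the fixed-schedule half-duplex problem to the full-duplex analysis already carried out in Section \ref{sec:GaussCapacity}, rather than invoking the frequency-selective extension (Theorem \ref{thm:MainFS}) which would produce an undesirable factor of $M$ in the gap. The key observation is that in each mode $m$, a given physical antenna is exclusively a transmitter or a receiver, so its quantization penalty is incurred only during the time fraction when it is actually listening.

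First, I would design an inner code that operates over a super-block of length $T$, partitioned into $M$ sub-blocks of lengths $t_m T$. In sub-block $m$, each sender in the corresponding partition transmits an i.i.d. $\mathcal{CN}(0,1)$ Gaussian codeword, and each receiver quantizes its received signal at the noise level as in Definition \ref{defn:Quantization}. When a node switches to transmit mode in a later sub-block, it applies a random mapping $F_i$ to the concatenation of its previously quantized received signals to produce the Gaussian codeword. The source's message indexes a codebook that drives its transmissions across every mode in which it is active. This construction is the natural mode-by-mode analog of the quantize-map-and-forward scheme of Section \ref{subsec:EncLay}.

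Next, I would repeat the distinguishability argument of Lemma \ref{lem:MainScalar} on this augmented scheme. For a given cut $\Omega$, the probability that two messages are confused factorizes over modes exactly as it did over layers in the original proof, because the Markov structure is preserved in each mode. Invoking Lemma \ref{lem:quantScalar} per mode and summing the resulting exponents with weights $t_m$ yields an error exponent driven by $\min_{\Omega} \sum_{m=1}^{M} t_m I(y_{\Omega^c}^m;x_{\Omega}^m \mid x_{\Omega^c}^m)$ under the product i.i.d. Gaussian input; this is precisely $\overline{C}_{hd,\text{i.i.d.}}$. The gap between $\overline{C}_{hd}$ and $\overline{C}_{hd,\text{i.i.d.}}$ is handled by applying Lemma \ref{lem:BeamForming} mode-by-mode and again averaging with the weights $t_m$, which introduces at most $\min\{\sum_i M_i,\sum_i N_i\}$ (not $M$ times this, since $\sum_m t_m = 1$).

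The crucial step, and the main obstacle, is to show that the noise-forwarding penalty $H([\mathbf{y}_D]\mid u,F_{\mathcal V})$ is bounded by $12\,T\sum_i N_i$ rather than $12\,T\,M\sum_i N_i$. For this I would adapt Lemma \ref{lem:condMI}: in sub-block $m$, only the antennas of nodes that are \emph{receivers} in mode $m$ produce quantized outputs, each contributing at most $12\, t_m T$ to the conditional entropy of the destination's observations. Summing first over $m$ at a fixed antenna gives at most $12 T \sum_m t_m \leq 12T$ per antenna, and then summing over antennas yields the claimed $12T\sum_i N_i$. Combining this with the cut-set gap above recovers the constant $\kappa\leq 12\sum_i N_i + 2\min\{\sum_i M_i,\sum_i N_i\}$. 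Finally, the extension from layered to general networks and the outer-code argument are identical to Section \ref{sec:Gen}, so the full theorem follows.
\end{proofOutline}
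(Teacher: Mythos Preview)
Your direct mode-by-mode analysis is sound and would yield the stated gap, but it takes a different route from the paper. The paper does \emph{not} re-run Lemmas \ref{lem:MainScalar}--\ref{lem:BeamForming} per mode; instead it maps the half-duplex network onto a frequency-selective full-duplex network by placing each mode $m$ on its own frequency band of width $Wt_m$ (with $W$ the LCM of the denominators of the $t_m$), invokes Theorem \ref{thm:MainFS} as a black box, and then translates the resulting scheme back into the time domain over blocks of length $W$.

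Your stated motivation---that the frequency-selective route ``would produce an undesirable factor of $M$ in the gap''---is a misreading of how the reduction normalizes. Applying Theorem \ref{thm:GaussMain} to the expanded network indeed gives a gap proportional to the total effective receive antennas, which is at most $W\sum_i N_i$; but the cut-set of the expanded network is $W\overline{C}_{hd}$ as well, and one channel use of the expanded network corresponds to $W$ symbols of the original half-duplex network. Dividing by $W$ to return to bits per original channel use cancels the factor exactly, so the paper's argument does recover $\kappa\leq 12\sum_i N_i+2\min\{\sum_i M_i,\sum_i N_i\}$. Your approach makes the same cancellation explicit through the identity $\sum_m t_m\leq 1$ applied antenna by antenna, which is arguably more transparent; the paper's approach is more modular since it reuses Theorem \ref{thm:MainFS} without reopening the inner lemmas. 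Both are valid.
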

\begin{proof}
We prove the result for the case that nodes have single antenna; its extension to the multiple antenna scenario is straightforward. Since each relay can be either in a transmit or receive mode, we have a total of $M=2^{|\mathcal{V}|-2}$ number of modes. An example of a network with two relay and all four modes of half-duplex operation of the relays are shown in Figure \ref{fig:exHD}.
\begin{figure}
     \centering \subfigure[A two-relay network]{ \scalebox{0.75}{
       \includegraphics{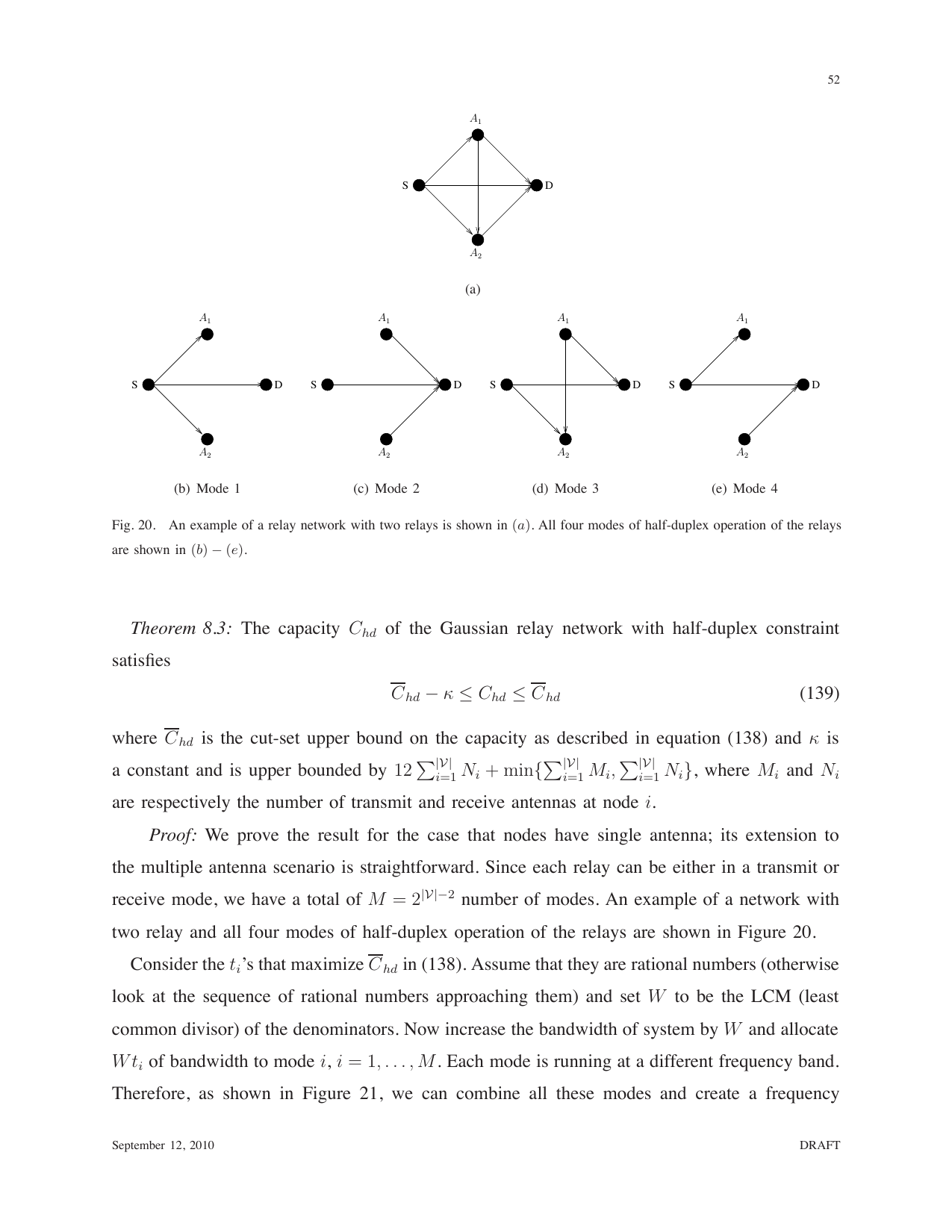}}
}
     \subfigure[Mode 1]{
       \scalebox{0.75}{\includegraphics{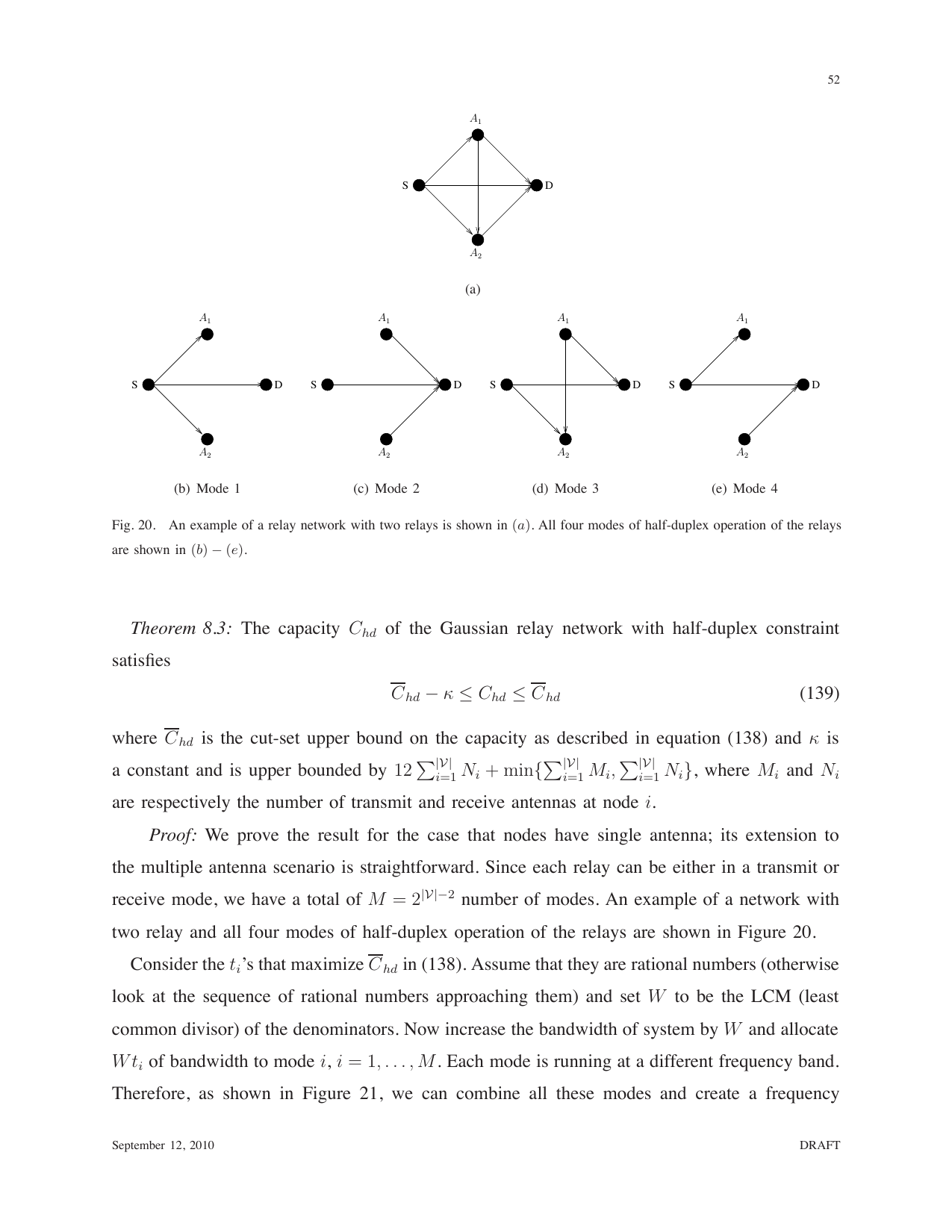}}
}
\hspace{0in}
     \subfigure[Mode 2]{
       \scalebox{0.75}{\includegraphics{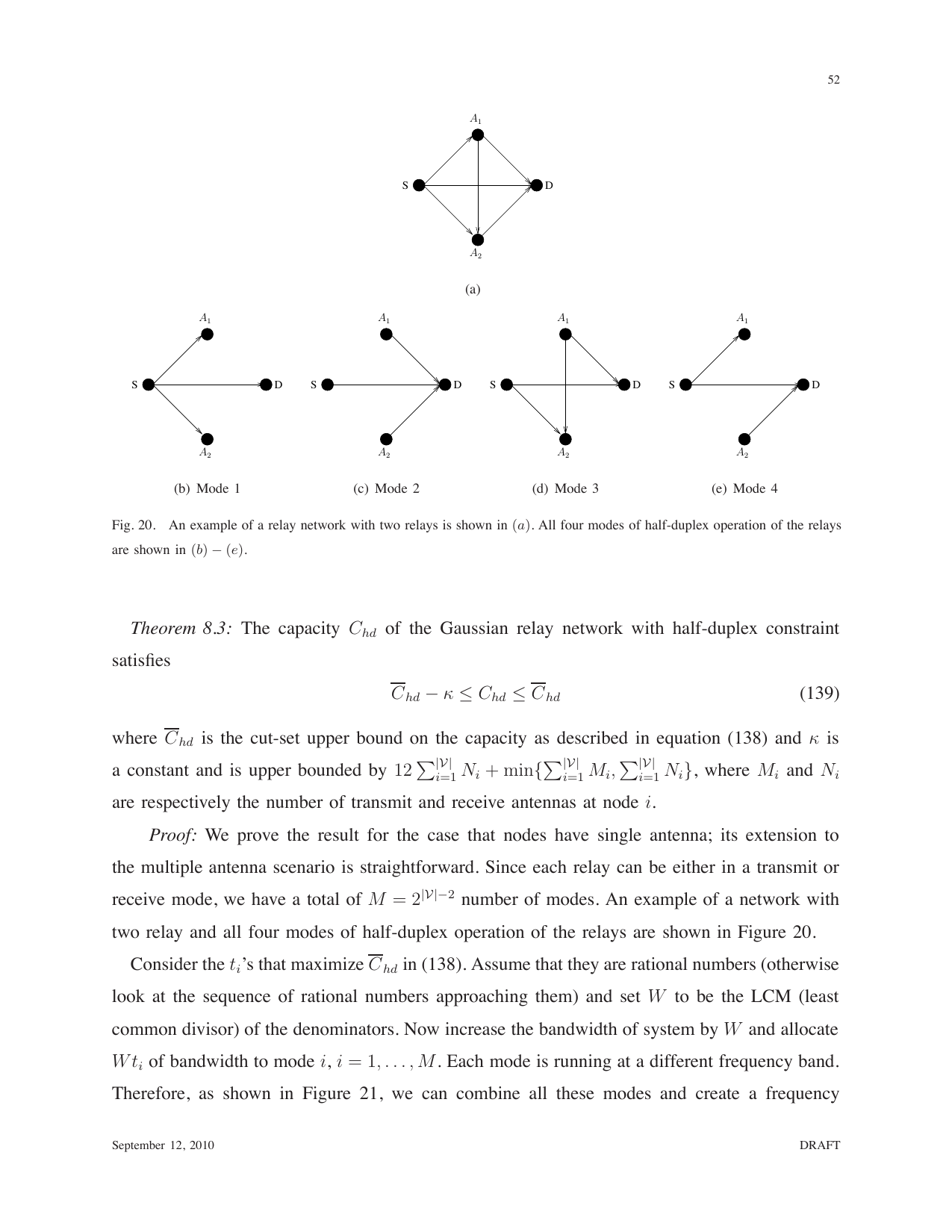}}
}
\hspace{-0.25in}
     \subfigure[Mode 3]{
    \scalebox{0.75}{\includegraphics{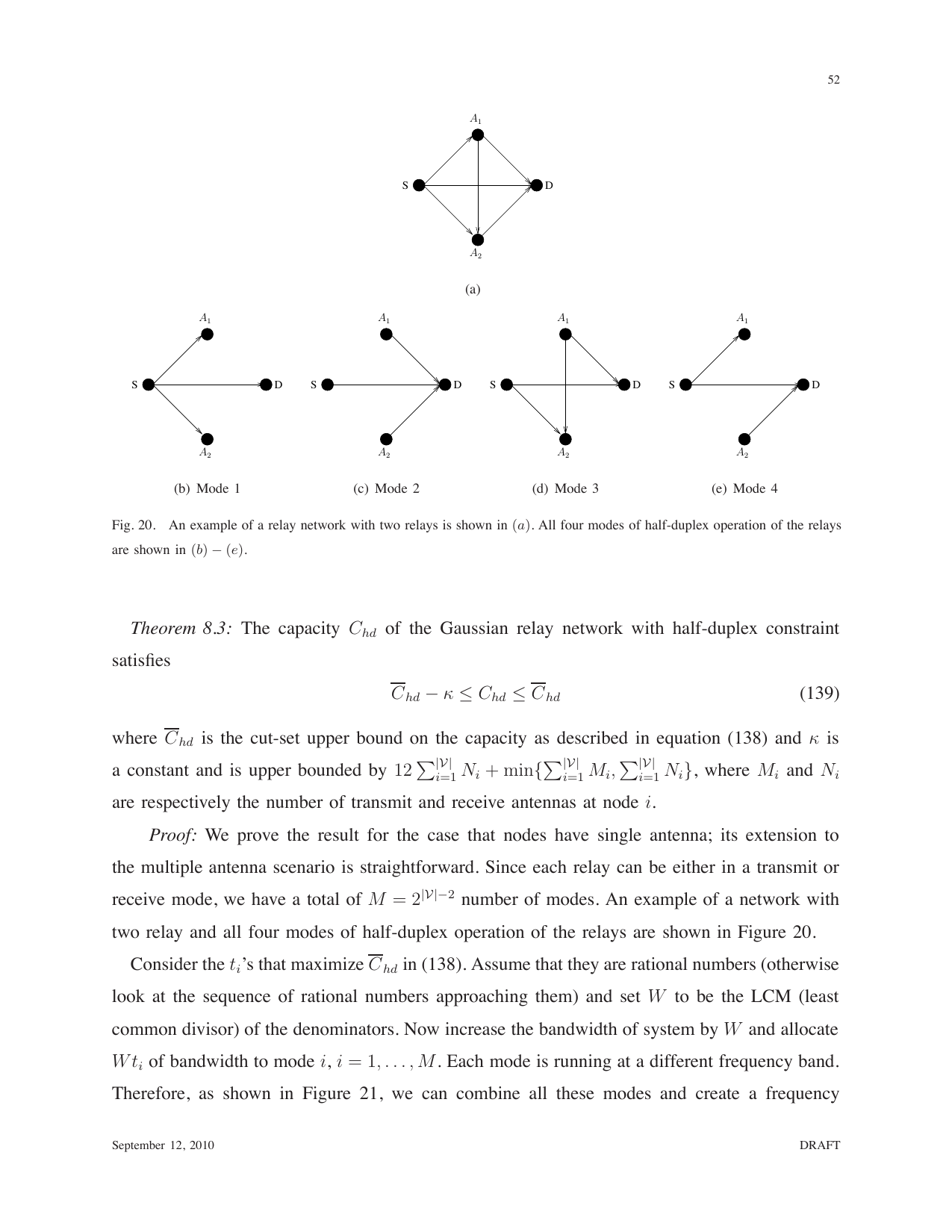}}
}
\hspace{-0.25in}
     \subfigure[Mode 4]{
    \scalebox{0.75}{\includegraphics{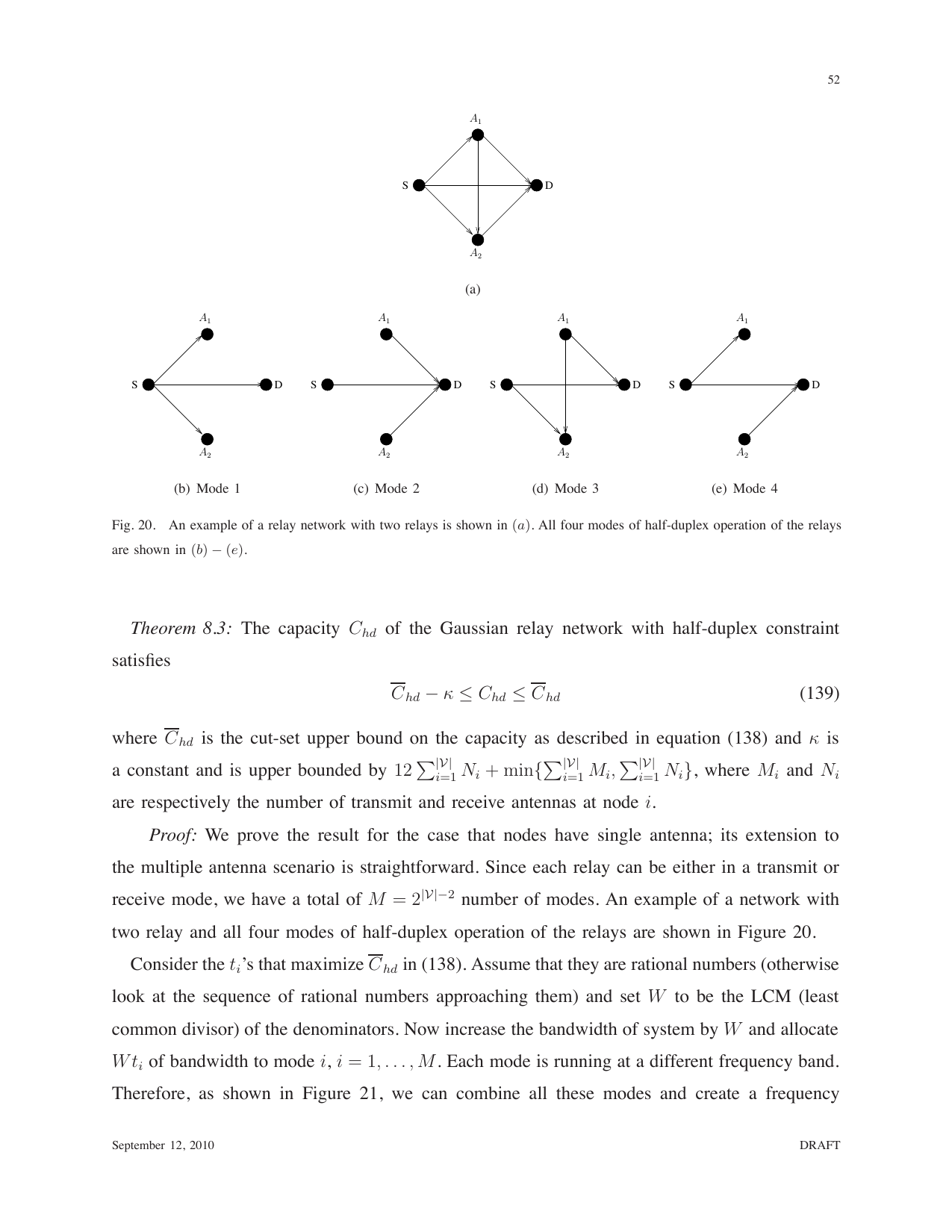}}
}
\caption{An example of a relay network with two relays is shown in $(a)$. All four modes of half-duplex operation of the relays are shown in $(b)-(e)$.
\label{fig:exHD}}
\end{figure}

Consider the $t_i$'s that maximize $\overline{C}_{hd}$ in
(\ref{eq:MinCutHDRe}). Assume that they are rational numbers
(otherwise look at the sequence of rational numbers approaching
them) and set $W$ to be the LCM (least common divisor) of the
denominators. Now increase the bandwidth of system by $W$ and
allocate $Wt_i$ of bandwidth to mode $i$, $i=1,\ldots,M$. Each
mode is running at a different frequency band. Therefore, as shown in
Figure \ref{fig:HDComb}, we can combine all these modes and create a
frequency selective relay network. Since the links are orthogonal to
each other, the cut-set upper bound on the capacity of this
frequency selective relay network (in bits/sec/Hz) is the same as
(\ref{eq:MinCutHDRe}). By theorem \ref{thm:MainFS} we know that
our quantize-map-and-forward scheme achieves, within a constant gap,
$\kappa$, of $\overline{C}_{hd}$ for all channel gains. In this
relaying scheme, at each block, each relay transmits a signal that
is only a function of its received signal in the previous block and
hence does not have memory over different blocks. We will
translate this scheme to a scheme in the original network that modes
are just at different times (not different frequency bands). The
idea is that we can expand exactly communication block of the
frequency selective network into $W$ blocks of the original network
and allocating $Wt_i$ of these blocks to mode $i$. In the
$Wt_i$ blocks that are allocated to mode $i$, all relays do exactly
what they do in frequency band $i$. This is described in
Figure \ref{fig:HDTimeExp} for the network of Figure
\ref{fig:HDComb}. This figure shows how one communication block of
the frequency selective network (a) is expanded over $W$ blocks of
the the original half-duplex network (b). Since the transmitted
signal at each frequency band is only a function of the data
received in the previous block of the frequency selective network,
the ordering of the modes inside the $W$ blocks of the original
network is not important at all. Therefore with this strategy we can
achieve within a constant gap, $\kappa$, of the cut-set bound of the
half-duplex relay network and the proof is complete.

One of the differences between this strategy and our original strategy
for full duplex networks is that now the relays might be required to
have a much larger memory. In the full duplex scenario, in the layered
case the relays had only memory over one block\footnote{This could be
  also done in the arbitrary networks but requires an alternative
  analysis. See footnote in Section \ref{sec:TimExpDet}.} (what they
sent was only a function of the previous block). However for the
half-duplex scenario the relays are required to have a memory over $W$
blocks and $W$ can be arbitrarily large.

\begin{figure}
\centering
\scalebox{0.7}{\includegraphics{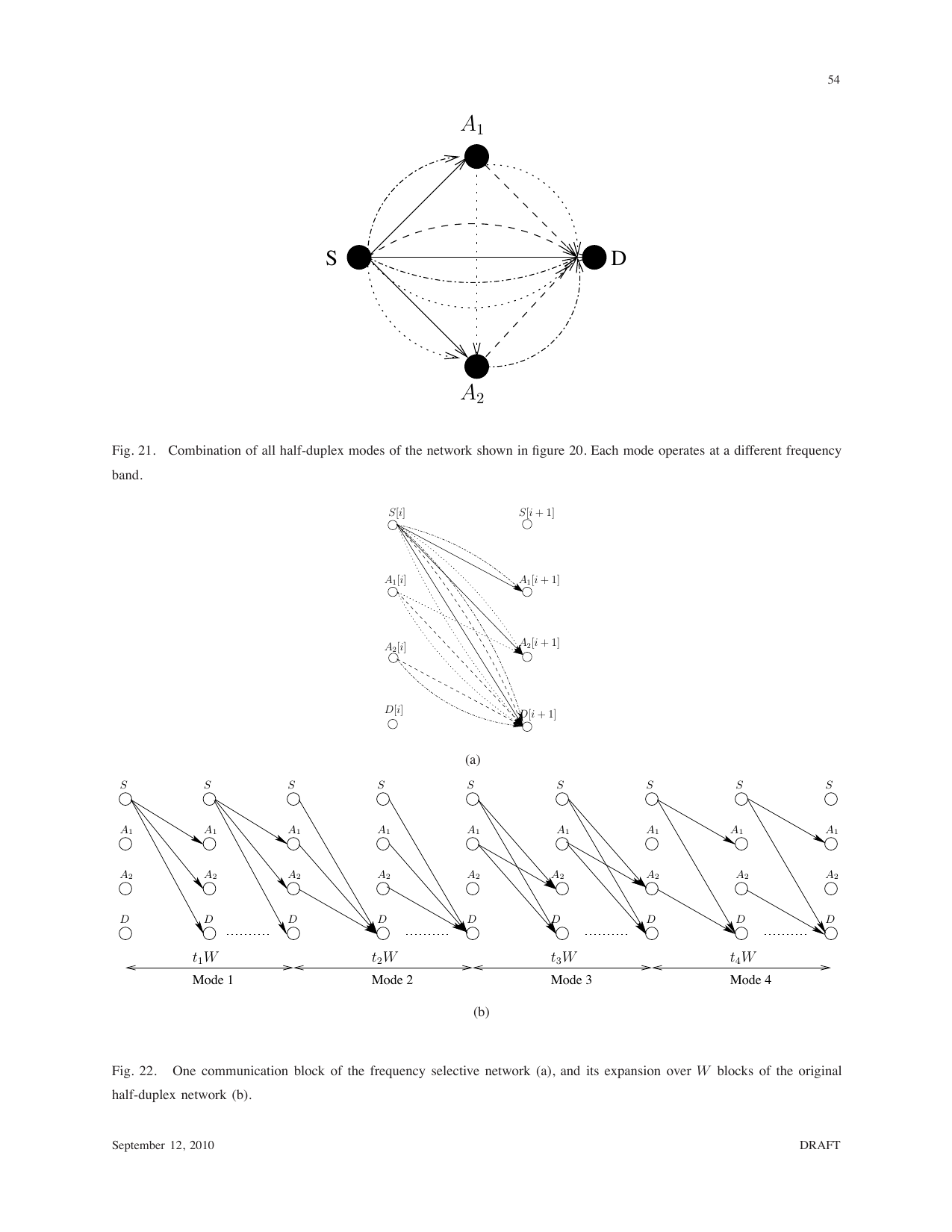}}
\caption{\label{fig:HDComb} Combination of all half-duplex modes of the network shown in figure \ref{fig:exHD}. Each mode operates at a different frequency band.}
\end{figure}

\begin{figure*}
     \centering
     \subfigure[]{
       \scalebox{0.7}{\includegraphics{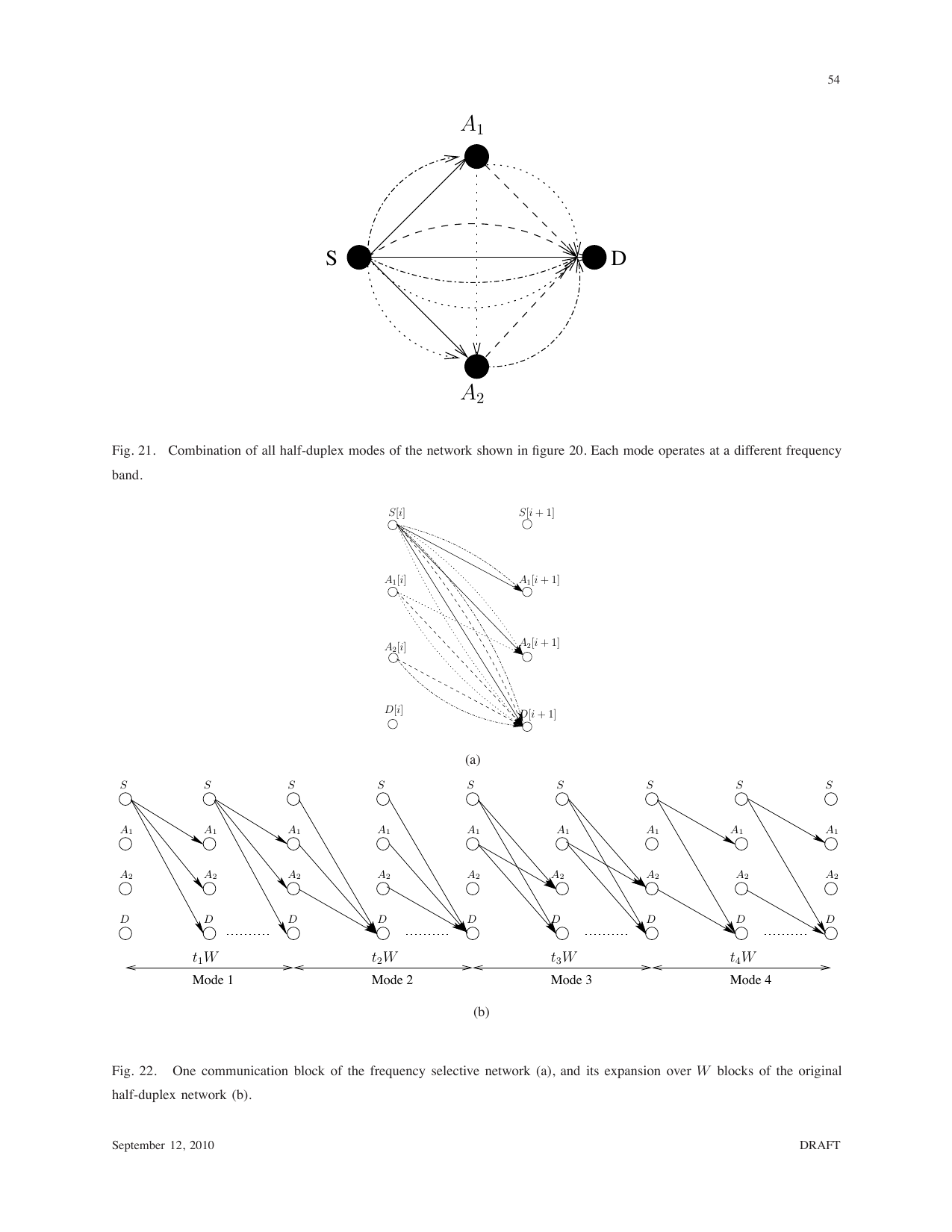}}
}
\vspace{0.2in}
     \subfigure[]{
        \scalebox{0.7}{\includegraphics{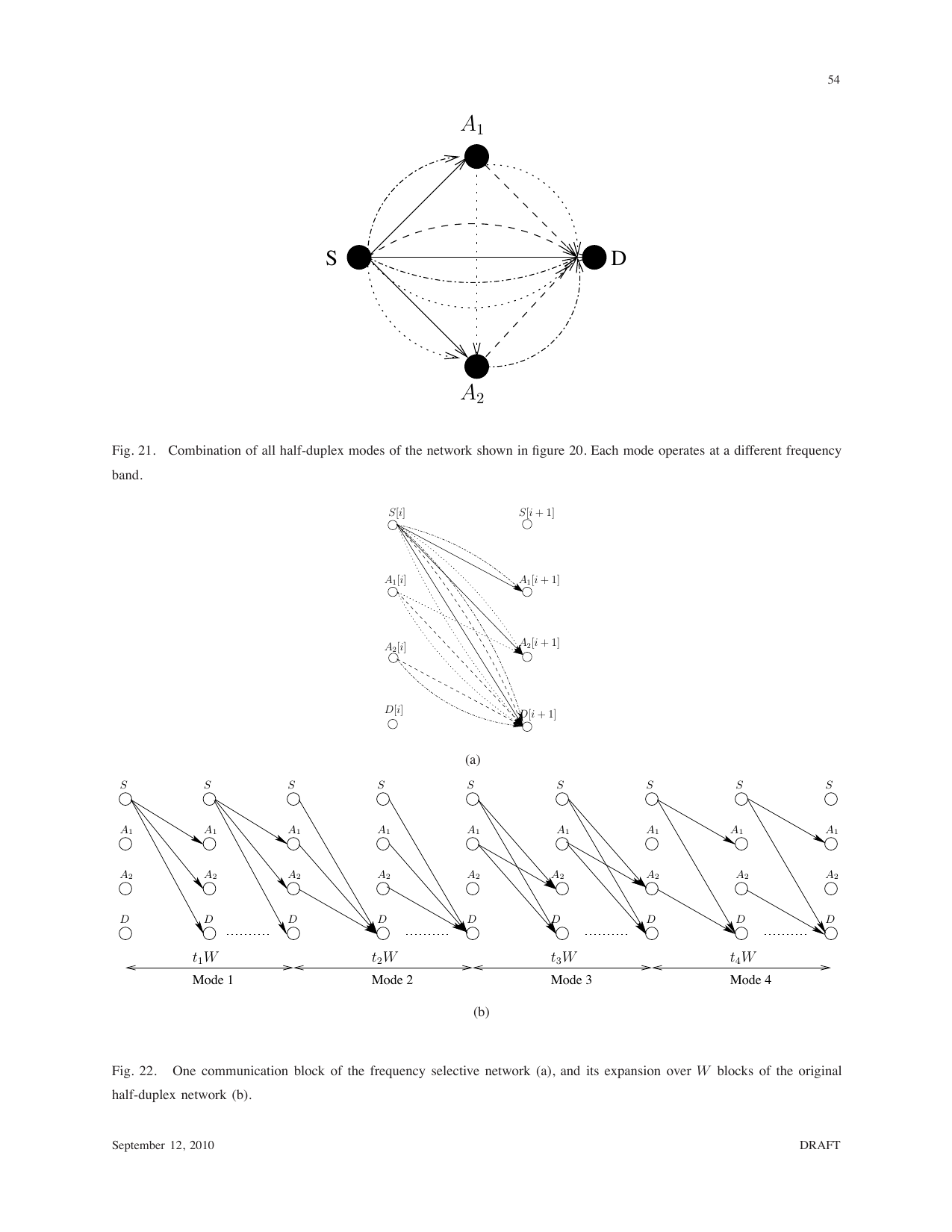}}
}
\caption{One communication block of the frequency selective network (a), and its expansion over $W$ blocks of the original half-duplex network (b).
\label{fig:HDTimeExp}}
\end{figure*}

\end{proof}

\subsection{Quasi-static fading relay network (underspread regime)}

In a wireless environment channel gains are not fixed and can change. In this section we consider a typical scenario in which
although the channel gains change, they can be considered time
invariant over a long time scale (for example during the
transmission of a block). This happens when the coherence time of
the channel ($T_c$) is much larger than the delay spread ($T_d$).
Here the delay spread is the largest extent of the unequal path
lengths, which is in some sense corresponding to inter-symbol
interference. Now, depending on how fast the channel gains are
changing compared to the delay requirements, we have two different
regimes: fast fading or slow fading scenarios. We consider each case
separately.

\subsubsection{Fast fading}
In the fast fading scenario the channel gains are changing much
faster compared to the delay requirement of the application
(\emph{i.e.}, coherence time of the channel, $T_c$, is much smaller
than the delay requirements). Therefore, we can interleave data and
encode it over different coherence time periods. In this scenario,
ergodic capacity of the network is the relevant capacity measure to
look at.


\begin{theorem}
The ergodic capacity $C_{\text{ergodic}}$ of the quasi-static fast fading Gaussian relay network satisfies
\begin{equation}
\mathcal{E}_{h_{ij}} \left [ \overline{C}(\{h_{ij}\}) \right ]- \kappa \leq C_{\text{ergodic}} \leq \mathcal{E}_{h_{ij}} \left [ \overline{C}(\{h_{ij}\}) \right ]
\end{equation}
where $\overline{C}$ is the cut-set upper bound on the capacity as
described in Equation (\ref{eq:CutSetRef}) and the expectation is
taken over the channel gain distribution. Also, the constant $\kappa$ is  upper bounded by $12 \sum_{i=1}^{|\mathcal{V}|} N_i+ 3 \sum_{i=1}^{|\mathcal{V}|}M_i$, where $M_i$ and $N_i$ are respectively the number of
transmit and receive antennas at node $i$.
\end{theorem}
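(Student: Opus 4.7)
The plan is to exploit two features already established: the universality of the quantize-map-and-forward scheme (Theorem \ref{thm:compound}), which does not require relays to know the channel gains, and the per-realization constant-gap guarantee (Theorem \ref{thm:GaussMain}, with the underlying Lemma \ref{lem:mainGauss}), which yields an end-to-end mutual information within $\kappa$ of $\overline{C}(\{h_{ij}\})$ for \emph{every} fixed channel realization. Coding across many coherence blocks then averages the per-realization achievable rate into its expectation.

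Concretely, first I would fix an inner-code blocklength $T$ as in Section \ref{sec:Lay}, and pick $N$ coherence blocks, letting $h^{(n)}=\{h_{ij}^{(n)}\}$ denote the channel realization on the $n$-th block, with the $h^{(n)}$ i.i.d.\ from the fading distribution (by the fast-fading assumption $T_c\ll$ delay requirement, we may interleave to realize this). In each block every relay executes the same quantize-map-and-forward operation of Section \ref{subsec:EncLay}: quantize the received vector at noise level and apply an independently drawn random Gaussian mapping. Because this operation is independent of the channel values, no relay needs channel state information; the destination, however, is assumed to know $\{h^{(n)}\}$ (coherent combining).

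Next, applying Lemma \ref{lem:mainGauss} separately in each block $n$, and choosing $R_{\text{in}}^{(n)}$ slightly below $\overline{C}(h^{(n)})-2|\mathcal{V}|$, the average end-to-end mutual information between the inner-code symbol and the quantized received sequence at the destination in block $n$ satisfies
\begin{equation}
\tfrac{1}{T} I\bigl(u_n;[\ybf_{D,n}]\,\big|\,F_\mathcal{V},h^{(n)}\bigr) \;\geq\; \overline{C}(h^{(n)}) - \kappa - \delta,
\end{equation}
where $\kappa$ is the constant of Theorem \ref{thm:GaussMain} and $\delta$ can be made arbitrarily small by letting $T$ be large. Because this bound is uniform in $h^{(n)}$ (the gap does not depend on channel gains), I can take expectations over the fading distribution and apply the ergodic theorem across the $N$ blocks to obtain an aggregate inner-code mutual information of at least $\mathcal{E}_{h_{ij}}[\overline{C}(\{h_{ij}\})]-\kappa-\delta$ per channel use. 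Wrapping an outer code over the $N$ inner-code symbols, exactly as in Figure \ref{fig:proofDiag}, and sending $N\to\infty$ followed by $\delta\to 0$, yields reliable communication at any rate below $\mathcal{E}_{h_{ij}}[\overline{C}(\{h_{ij}\})]-\kappa$. The converse is the standard ergodic cut-set upper bound, which gives $C_{\text{ergodic}}\leq \mathcal{E}_{h_{ij}}[\overline{C}(\{h_{ij}\})]$.

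The main technical obstacle I expect is justifying the exchange between the per-block averaging used to prove Lemma \ref{lem:mainGauss} (expectation over the random maps $F_\mathcal{V}$, with a fixed channel) and the outer averaging over the fading. This is resolved precisely because the gap in Lemma \ref{lem:mainGauss} is a constant independent of the channel gains, so the same $(T,R_{\text{in}})$ scaling works uniformly and one may legitimately take the channel-expectation of the per-realization bound. The multiple-antenna extension follows by applying the MIMO form of the main theorem block-by-block, producing the stated constant $12\sum_i N_i+2\min\{\sum_i M_i,\sum_i N_i\}$.
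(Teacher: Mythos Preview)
Your proposal is correct and follows essentially the same route as the paper. The paper's proof is terser: it simply notes that the quantize-map-and-forward relay operations are channel-independent and the gap $\kappa$ is uniform, so coding over $L$ realizations achieves $\tfrac{1}{L}\sum_{l}(\overline{C}(\{h_{ij}\}^{l})-\kappa)$, which converges to $\mathcal{E}_{h_{ij}}[\overline{C}]-\kappa$ by the law of large numbers; your version unpacks the same argument through the inner/outer code machinery of Lemma~\ref{lem:mainGauss}, and your final paragraph correctly identifies the one point that needs care (the uniformity of the gap in the channel gains), which is exactly what makes the averaging step legitimate.
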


\begin{proof}
We prove the result for the case that nodes have single antenna. Its extension to the multiple antenna scenario is straightforward. An upper bound is just the cut-set upper bound. For the achievability note that the relaying strategy we proposed for general relay networks does not depend on the channel realization, relays just quantize at noise level and forward through a random mapping. The approximation gap also does not depend on the channel parameters. As a result by coding data over $L$ different channel realizations the following rate is achievable
\beq \frac{1}{L} \sum_{l=1}^L \lp \overline{C}(\{h_{ij}\}^l) - \kappa \rp. \eeq
Now as $L \rightarrow \infty$,
\beq \frac{1}{L}  \sum_{l=1}^L \overline{C}(\{h_{ij}\}^l)  \rightarrow \mathcal{E}_{h_{ij}} \left [ \overline{C} \right ] \eeq and the theorem is proved.
\end{proof}

\subsubsection{Slow fading}
In a slow fading scenario the delay requirement does not allow us to
interleave data and encode it over different coherence time periods.
We assume that there is no channel gain information available at the
source, therefore there is no definite capacity and for a fixed
target rate $R$ we should look at the outage probability, \beq
\mathcal{P}_{out}(R) = \prob{ C(\{h_{ij}\}) < R} \eeq where the
probability is calculated over the distribution of the channel gains
and the $\epsilon$-outage capacity is defined as \beq
C_{\epsilon}=\mathcal{P}_{out}^{-1} (\epsilon). \eeq Here is our main
result to approximate the outage probability.

\begin{theorem}
\label{thm:outage}
The outage probability $\mathcal{P}_{out}(R)$ of the quasi-static slow fading Gaussian relay network satisfies
\begin{equation}
\label{eq:outageprob}
\prob{ \overline{C}(\{h_{ij}\}) < R}  \leq \mathcal{P}_{out}(R) \leq \prob{ \overline{C}(\{h_{ij}\}) < R+\kappa}
\end{equation}
where $\overline{C}$ is the cut-set upper bound on the capacity as described in Equation (\ref{eq:CutSetRef}) and the probability is calculated over the distribution of the channel gains. The constant $\kappa$ is upper bounded by $12 \sum_{i=1}^{|\mathcal{V}|} N_i+ 3 \sum_{i=1}^{|\mathcal{V}|}M_i$, where $M_i$ and $N_i$ are respectively the number of
transmit and receive antennas at node $i$.
\end{theorem}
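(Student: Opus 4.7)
\begin{proofOutline}
The plan is to prove the two inequalities in \eqref{eq:outageprob} separately, with the lower bound following from a standard cut-set argument and the upper bound exploiting the universality of the quantize-map-and-forward (QMF) strategy.

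For the lower bound, fix any channel realization $\{h_{ij}\}$. By the information-theoretic cut-set upper bound \eqref{eq:CutSetRef}, the capacity of the network for this realization satisfies $C(\{h_{ij}\}) \leq \overline{C}(\{h_{ij}\})$. Consequently, on the event $\{\overline{C}(\{h_{ij}\}) < R\}$, no coding scheme can achieve rate $R$ with vanishing error probability, so an outage must occur. Integrating this implication against the fading distribution gives $\prob{\overline{C}(\{h_{ij}\}) < R} \leq \mathcal{P}_{out}(R)$.

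For the upper bound, I would leverage the fact, already exploited in Theorem \ref{thm:compound}, that the QMF scheme of Theorem \ref{thm:GaussMain} does not require any channel state information at the source or the relays: each relay simply quantizes its received signal at the noise level and forwards a random Gaussian codeword indexed by the quantized value, and the additive gap $\kappa$ depends only on $|\mathcal{V}|$ and the antenna counts, not on the channel gains. Fix such a source/relay scheme once and for all. For any channel realization $\{h_{ij}\}$ satisfying $\overline{C}(\{h_{ij}\}) \geq R + \kappa$, the end-to-end mutual information analysis carried out in Section \ref{sec:GaussCapacity} (specifically Lemma \ref{lem:mainGauss} and its extension to non-layered networks via Lemma \ref{lem:UnfGen}) guarantees that the destination can decode at rate $R$ with vanishing error probability. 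Hence an outage can occur only on the event $\{\overline{C}(\{h_{ij}\}) < R + \kappa\}$, yielding $\mathcal{P}_{out}(R) \leq \prob{\overline{C}(\{h_{ij}\}) < R+\kappa}$.

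The main technical obstacle is justifying the decoding step uniformly over the uncountable family of possible channel realizations, since the decoder a priori does not know $\{h_{ij}\}$. In the standard slow-fading convention where the destination is assumed to have CSI, this is immediate: for each realization the destination runs a maximum-likelihood decoder tailored to the realized channel, and Theorem \ref{thm:GaussMain} applies pointwise. In the fully oblivious setting, I would follow the compound-channel reduction in the proof outline of Theorem \ref{thm:compound}: quantize the gains at level $1/\sqrt{d_{\max}}$, absorbing the quantization error into an extra unit of receiver noise per node at the cost of an extra additive $|\mathcal{V}|$ bits in $\kappa$, and then invoke a Feder--Lapidoth style universal decoder \cite{FederLapidothUnivDecod} over the discrete set of quantized channels (with the Blackwell/Varaiya extensions \cite{BlackwellCapGaussian,VaraiyaCapGaussian} handling the unboundedness of the gain range). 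Either way the gap remains of the advertised order, giving the desired outage bound.
\end{proofOutline}
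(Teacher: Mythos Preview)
Your proposal is correct and follows the same approach as the paper: the lower bound comes from the cut-set upper bound on capacity, and the upper bound comes from the channel-independent nature of QMF, which the paper packages by simply invoking Theorem~\ref{thm:compound}. Your discussion of the destination-CSI issue is more careful than the paper's three-sentence proof, but the compound-channel reduction you describe in the oblivious case is exactly the route the paper takes.
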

\begin{proof}
Lower bound is just based on the cut-set upper bound on the
capacity. For the upper bound we use the compound network result.
Therefore, based on Theorem \ref{thm:compound} we know that as long
as $  \overline{C}(\{h_{ij}\})-\kappa < R $ there will not be an
outage.
\end{proof}

\subsection{Low rate capacity approximation of Gaussian relay network}
In the low data rate regime, a constant-gap approximation of the
capacity may not be useful any more. A more useful kind of
approximation in this regime would be a universal multiplicative
approximation, where the multiplicative factor
does not depend on the channel gains in the network.

\begin{theorem}
The capacity $C$ of the Gaussian relay network satisfies
\begin{equation}
\lambda \overline{C} \leq C \leq \overline{C}
\end{equation}
where $\overline{C}$ is the cut-set upper bound on the capacity, as
described in Equation (\ref{eq:CutSetRef}), and $\lambda$ is a
constant and is lower bounded by $\frac{1}{2d\lp d+1 \rp}$, where $d$
is the maximum degree of nodes in $\mathcal{G}$.
\end{theorem}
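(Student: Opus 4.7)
The upper bound $C\leq \overline{C}$ is immediate from the cut-set bound (\ref{eq:CutSetRef}). For the lower bound my plan is a routing argument: time-share a non-interfering scheduling of the individual Gaussian links so as to reduce $\mathcal{G}$ to an equivalent wired network with explicitly scaled link capacities, and then apply the classical Ford--Fulkerson max-flow min-cut theorem to that wired network.

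Concretely, I would introduce a conflict graph $\mathcal{K}$ on the directed edges of $\mathcal{G}$ in which $(i,j)\sim(i',j')$ whenever activating both simultaneously causes interference at some receiver, i.e., $i'\in N(j)\setminus\{i\}$ or $i\in N(j')\setminus\{i'\}$. A direct neighborhood count bounds the degree of any vertex of $\mathcal{K}$ by $d(d-1)+(d-1)^2=(d-1)(2d-1)$, so the greedy algorithm produces a proper vertex coloring of $\mathcal{K}$ with $\chi(\mathcal{K})\leq 2d(d+1)$ color classes. Activating each color class for a $1/\chi(\mathcal{K})$-fraction of the time turns every directed link $(i,j)$ into a non-interfering AWGN point-to-point channel that attains its individual capacity $\log(1+|h_{ij}|^2)$ during that fraction, so the resulting wired network on $\mathcal{G}$ has effective link capacities $\tilde{C}_{ij}=\frac{1}{2d(d+1)}\log(1+|h_{ij}|^2)$. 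The Ford--Fulkerson max-flow min-cut theorem then yields a reliable end-to-end rate of at least $\frac{1}{2d(d+1)}\,\overline{C}^{\mathrm{w}}$, where $\overline{C}^{\mathrm{w}}=\min_{\Omega\in\Lambda_D}\sum_{i\in\Omega,\,j\in\Omega^c,\,(i,j)\in\mathcal{E}}\log(1+|h_{ij}|^2)$ is the wired min-cut.

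To close the loop one must show $\overline{C}^{\mathrm{w}}\geq \overline{C}$ up to whatever beamforming gain has already been absorbed into the coloring factor. For any cut $\Omega$, Hadamard's inequality applied to the PSD matrix $I+H_{\Omega,\Omega^c}H_{\Omega,\Omega^c}^*$ under i.i.d.\ unit-variance Gaussian input gives $\log\det(I+HH^*)\leq \sum_{j\in\Omega^c}\log(1+\sum_{i\in\Omega}|h_{ij}|^2)$, and the elementary inequality $1+\sum_k a_k\leq \prod_k(1+a_k)$ (valid because each receiver has at most $d$ incoming neighbors) yields $\overline{C}_{\mathrm{iid},\Omega}\leq \overline{C}_\Omega^{\mathrm{w}}$; Lemma \ref{lem:BeamForming} handles the gap to the full cut-set maximization. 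Putting the three pieces together produces the advertised bound $C\geq \overline{C}/[2d(d+1)]$.

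The hard part is reconciling the transmitter beamforming permitted by the cut-set maximization (which can multiplicatively inflate $\overline{C}_\Omega$ above $\overline{C}_\Omega^{\mathrm{w}}$ at an individual cut by a factor as large as the receiver in-degree) with the non-coherent link-by-link view of the TDMA realization. The naive Hadamard step loses a factor of $d$ while the conflict-graph coloring loses a factor of $O(d^2)$; coordinating the two so that their product is exactly $2d(d+1)$ is the only place the degree parameter $d$ enters the final constant, and tightening either step is what pins down the claimed $\lambda\geq \frac{1}{2d(d+1)}$.
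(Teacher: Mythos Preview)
Your TDMA-plus-Ford--Fulkerson skeleton is exactly the paper's route, but the closing step has a genuine gap. You compare the wired min-cut $\overline{C}^{\mathrm{w}}$ to $\overline{C}$ via Hadamard and then appeal to Lemma~\ref{lem:BeamForming}. That lemma gives only the \emph{additive} bound $\overline{C}-\overline{C}_{iid}<|\mathcal{V}|$, which is worthless for a multiplicative statement in the low-rate regime this theorem is meant to cover. If instead you bound the beamforming gain multiplicatively (each receiver has at most $d$ in-neighbors, so coherent combining inflates received SNR by at most $d$), you obtain $\overline{C}_{\Omega}\leq d\,\overline{C}^{\mathrm{w}}_{\Omega}$, and your achievable rate becomes $\frac{1}{2d(d+1)}\overline{C}^{\mathrm{w}}\geq \frac{1}{2d^2(d+1)}\overline{C}$, a factor of $d$ short of the claim. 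Your own last paragraph concedes that you have not actually ``coordinated'' the two losses; as written, they multiply.

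The paper avoids this extra factor by a different comparison step. After edge-coloring (Vizing, $2(d+1)$ classes), it further thins each link to a $\frac{1}{2d(d+1)}$ time-fraction but boosts the transmit power by $d$, so the per-link capacity is $\frac{1}{2d(d+1)}\log(1+d|h_{ij}|^2)$. The point of the boost is not an SNR argument but a \emph{simulation}: in the resulting orthogonal network, let every node send the same symbol on all outgoing links and average its incoming links with a $1/\sqrt{d}$ normalization. This reproduces the original Gaussian channel law exactly (the averaged noise is again $\mathcal{CN}(0,1)$), only running on a $\frac{1}{2d(d+1)}$ time budget. Hence the wired network's cut-set bound is at least $\frac{1}{2d(d+1)}\overline{C}$ with the full beamforming gain already inside, and Ford--Fulkerson finishes. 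To hit the stated constant you need this power-boost-and-simulate step (or something equivalent) rather than a separate Hadamard comparison.
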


\begin{proof}
First we use a time-division scheme and make all links in the network orthogonal to each other. By Vizing's theorem (e.g., see \cite{ModernGraphTheory} p.153) any simple undirected graph can be edge colored with at most $d+1$ colors, where $d$ is the maximum degree of nodes in $\mathcal{G}$. Since our graph $\mathcal{G}$ is a directed graph we need at most $2 (d+1)$ colors. Therefore we can generate $2(d+1)$ time slots and assign the slots to directed graphs such that at any node all the links are orthogonal to each other. Therefore each link is used a $\frac{1}{2 (d+1)}$ fraction of the time. We further impose the constraint that each of these links uses a total $\frac{1}{2 d(d+1)}$ of the time, but with a factor of $d$ more power. By coding we can convert each links $h_{i,j}$ into a  noise free link with capacity
\beq c_{i,j}=\frac{1}{2 d(d+1)} \log(1+d |h_{i,j}|^2). \eeq
By Ford-Fulkerson theorem we know that the capacity of this network is
\beq C_{\text{orthogonal}}=\min_{\Omega} \sum_{i,j: i \in \Omega, j \in \Omega^c} c_{i,j}\eeq
and this rate is achievable in the original Gaussian relay network. Now we will prove that
\beq C_{\text{orthogonal}} \geq \frac{1}{2 d(d+1)} \overline{C}. \eeq

To show this, assume that in the orthogonal network each node transmits the same signal on its outgoing links. Furthermore, each node $j$ takes the summation of all incoming signals (normalized by $\frac{1}{\sqrt{d}}$) and denotes it as its received signal $y_j$, \emph{i.e.} 
\begin{eqnarray} y_{j}[t]&=&\frac{1}{\sqrt{d}}  \sum_{i=1}^d \lp h_{ij} \sqrt{d} x_i[t] + z_{ij}[t] \rp \\
&=& \sum_{i=1}^d h_{ij}  x_i[t] + \tilde{z}_j[t] \end{eqnarray}
where
\beq \tilde{z}_j[t] = \frac{\sum_{i=1}^d z_{ij}[t] }{\sqrt{d}} \sim \mathcal{CN}(0,1). \eeq
Therefore we get a network which is statically similar to the original non-orthogonal network, however each time-slot is only a $\frac{1}{d(d+1)}$ fraction of the time slots in the original network. Therefore without this restriction the cut-set of the orthogonal network can only increase. Hence
\beq C_{\text{orthogonal}} \geq \frac{1}{2 d(d+1)} \overline{C}. \eeq
\end{proof}

\section{Conclusions}
\label{sec:conclusions}

In this paper we presented a new approach to analyze the capacity of
Gaussian relay networks. We start with deterministic models to build
insights and use them as the foundation to analyze Gaussian models. The
main results are a new scheme for general Gaussian relay networks
called quantize-map-and-forward and a proof that it can achieve to within
a constant gap to the cutset bound. The gap does not depend on the SNR
or the channel values of the network. No other scheme in the
literature has this property.

One limitation of these results is that the gap grows with the number
of nodes in the network. This is due to the noise accumulation
property of the quantize-map-and-forward scheme. It is an interesting
question whether there is another scheme that can circumvent this to
achieve a universal constant gap to the cutset bound, independent of
the number of nodes, or if this is an inherent feature of any
scheme. In this case a better upper bound than the
cutset bound is needed.

\appendices

\section{Proof of Theorem \ref{thm:oneRelay1Bit}}
\label{app:oneRelay1Bit}
If $|h_{SR}|<|h_{SD}|$ then the relay is ignored and a communication
rate equal to $R=\log(1+|h_{SD}|^2)$ is achievable. If
$|h_{SR}|>|h_{SD}|$ the problem becomes more interesting. In this case
by using the decode-forward scheme described in \cite{coverElgamal} we
can achieve \beq \nonumber R = \min \lp \log \lp 1+|h_{SR}|^2 \rp , \log \lp
1+|h_{SD}|^2+|h_{RD}|^2 \rp \rp. \eeq Therefore, overall the following
rate is always achievable
\begin{eqnarray*} \nonumber R_{\text{DF}} &=& \max \{ \log(1+|h_{SD}|^2),
  \min [ \log \lp 1+|h_{SR}|^2 \rp \\ \nonumber  && \quad \quad , \log \lp
  1+|h_{SD}|^2+|h_{RD}|^2  \rp ] \}. \end{eqnarray*} Now we compare
this achievable rate with the cut-set upper bound on the capacity of
the Gaussian relay network

\begin{eqnarray*}
\nonumber C \leq  \overline{C} &=& \max_{|\rho| \leq 1} \min \{ \log
\lp 1+(1-\rho^2)(|h_{SD}|^2+|h_{SR}|^2) \rp \\ \nonumber &&  \quad  ,  \log \lp
1+|h_{SD}|^2+|h_{RD}|^2+2\rho |h_{SD}||h_{RD}| \rp \}.
\end{eqnarray*}

Note that if $|h_{SR}|\geq |h_{SD}|$ then
 \beq \nonumber R_{DF}= \min \lp \log \lp
1+|h_{SR}|^2 \rp , \log \lp 1+|h_{SD}|^2+|h_{RD}|^2 \rp \rp \eeq and
for all $|\rho|\leq 1$ we have  \beq \nonumber \log \lp
1+(1-\rho^2)(|h_{SD}|^2+|h_{SR}|^2) \rp \leq \log \lp 1+|h_{SR}|^2
\rp+1 \eeq 
 \begin{eqnarray*}
&& \log \lp 1+|h_{SD}|^2+|h_{RD}|^2+2\rho |h_{SD}||h_{RD}| \rp \\ \leq
&& \quad \quad \quad \quad \quad \quad \quad \quad \quad \quad   \log \lp 1+|h_{SD}|^2+|h_{RD}|^2 \rp +1
\end{eqnarray*}
Hence \beq  \nonumber R_{\text{DF}} \geq \overline{C}_{\text{relay}}-1. \eeq Also
if $|h_{SD}|>|h_{SR}|$, \beq \nonumber R_{DF}=\log(1+|h_{SD}|^2) \eeq and {\small \beq
 \nonumber \log \lp 1+(1-\rho^2)(|h_{SD}|^2+|h_{SR}|^2) \rp \leq \log \lp
1+|h_{SD}|^2 \rp+1 \eeq } therefore again, \beq \nonumber R_{\text{DF}} \geq
\overline{C}_{\text{relay}}-1. \eeq

\section{Proof of Theorem \ref{thm:twoRelay1Bit}}
\label{app:twoRelay1Bit}
The cut-set upper bound on the capacity of diamond network is
\begin{eqnarray*}
\nonumber C_{\text{diamond}} & \leq & \overline{C} \leq \min \{ \log
\lp 1+ |h_{SA_1}|^2+|h_{SA_2}|^2 \rp, \\ \nonumber & &  \log \lp 1+
(|h_{A_1D}|+|h_{A_2D}|)^2 \rp, \\ \nonumber & &  \log
(1+|h_{SA_1}|^2)+\log (1+|h_{A_2D}|^2), \\ & &  \log
(1+|h_{SA_2}|^2)+\log (1+|h_{A_1D}|^2) \}. ~ ~~~~
\end{eqnarray*}
Without loss of generality assume $|h_{SA_1}| \geq  |h_{SA_2}|$. Then we have the following cases:
\begin{enumerate}
  \item $|h_{SA_1}| \leq  |h_{A_1D}|$:  In this case
  \beq \nonumber R_{PDF} \geq \log(1+|h_{SA_1}|^2) \geq \overline{C}-1. \eeq
  
\item $|h_{SA_1}| > |h_{A_1D}|$:\\
  Let $\alpha=\frac{|h_{A_1D}|^2}{|h_{SA_1}|^2}$, then 
  \begin{align*} R_{PDF} =
\nonumber &  \log(1+|h_{A_1D}|^2)+ \min  \{ \log \lp 1+\frac{\lp 1- \alpha
      \rp |h_{SA_2}|^2}{\alpha |h_{SA_2}|^2+1} \rp  \\ & \quad , \log \lp 1+\frac{
      |h_{A_2D}|^2}{1+ |h_{A_1D}|^2} \rp  \} \end{align*} or
  \begin{align} \nonumber &  R_{PDF} = \min  \{ \log \lp \frac{(1+
      |h_{SA_2}|^2)(1+|h_{A_1D}|^2)}{\alpha |h_{SA_2}|^2+1} \rp \\ & \label{eq:PDFRate} \quad \quad \quad \quad , \log
    \lp 1+|h_{A_1D}|^2+|h_{A_2D}|^2 \rp  \}. \end{align} Now if \begin{align*} &
  \log \lp \frac{(1+ |h_{SA_2}|^2)(1+|h_{A_1D}|^2)}{\alpha
    |h_{SA_2}|^2+1} \rp \geq \\ & \quad \quad \quad \quad \quad \quad \quad \log \lp 1+|h_{A_1D}|^2+|h_{A_2D}|^2 \rp
  \end{align*} we have
\begin{eqnarray*}
 R_{PDF} & = & \log \lp 1+|h_{A_1D}|^2+|h_{A_2D}|^2 \rp \\
 & \geq & \log \lp 1+\lp |h_{A_1D}|+|h_{A_2D}|\rp^2 \rp -1 \geq  \overline{C}-1.
\end{eqnarray*}
Therefore, the achievable rate of partial decode-forward scheme is
within one bit of the cut-set bound. So we just need to look at the
case that \beq \nonumber R_{PDF} = \log \lp \frac{(1+
  |h_{SA_2}|^2)(1+|h_{A_1D}|^2)}{\alpha |h_{SA_2}|^2+1} \rp. \eeq In
this case consider two possibilities:
\begin{itemize}
    \item $\alpha |h_{SA_2}|^2 \leq 1$: Here we have
    \begin{align*}
 &   R_{PDF}  =   \log \lp \frac{(1+ |h_{SA_2}|^2)(1+|h_{A_1D}|^2)}{\alpha |h_{SA_2}|^2+1}  \rp\\
    & \geq  \log \lp \frac{(1+ |h_{SA_2}|^2)(1+|h_{A_1D}|^2)}{2}  \rp \\
    &=\log(1+|h_{SA_2}|^2)+\log (1+|h_{A_1D}|^2)-1 \geq  \overline{C}-1.
    \end{align*}
    
    \item $\alpha |h_{SA_2}|^2 \geq 1$:\\
    In this case we will show that
    \begin{eqnarray*}
    R_{PDF} & = &  \log \lp \frac{(1+ |h_{SA_2}|^2)(1+|h_{A_1D}|^2)}{\alpha |h_{SA_2}|^2+1}  \rp\\
    & \geq & \log \lp 1+ |h_{SA_1}|^2+|h_{SA_2}|^2  \rp -1 \\
    & \geq & \overline{C}-1.
    \end{eqnarray*}
    To show this we just need to prove
   {\footnotesize  \beq  \nonumber \frac{(1+ |h_{SA_2}|^2)(1+|h_{A_1D}|^2)}{\alpha |h_{SA_2}|^2+1} \geq \frac{1}{2}\lp 1+ |h_{SA_1}|^2+|h_{SA_2}|^2  \rp. \eeq }
By replacing $\alpha=\frac{|h_{A_1D}|^2}{|h_{SA_1}|^2}$, we get
{\small \begin{align*} & 2 |h_{SA_1}|^2 (1+ |h_{SA_2}|^2)(1+|h_{A_1D}|^2) \geq \\ & \lp 1+ |h_{SA_1}|^2+|h_{SA_2}|^2  \rp \lp|h_{SA_1}|^2+|h_{SA_2}|^2|h_{A_1D}|^2 \rp. \end{align*}}
But note that
{\footnotesize
\begin{align*}
\nonumber & 2 |h_{SA_1}|^2 (1+ |h_{SA_2}|^2)(1+|h_{A_1D}|^2) - \\ & \lp 1+ |h_{SA_1}|^2+|h_{SA_2}|^2  \rp \lp |h_{SA_1}|^2+|h_{SA_2}|^2|h_{A_1D}|^2 \rp \\
 &= |h_{SA_1}|^2+|h_{SA_1}|^2|h_{A_1D}|^2 + (|h_{SA_1}|^2|h_{SA_2}|^2 - \\ & |h_{SA_2}|^4|h_{A_1D}|^2)
+(|h_{SA_1}|^2|h_{A_1D}|^2-|h_{SA_2}|^2|h_{A_1D}|^2)+ \\ & (|h_{SA_1}|^2|h_{SA_2}|^2|h_{A_1D}|^2-|h_{SA_1}|^4)\\
 &= |h_{SA_1}|^2+|h_{SA_1}|^2|h_{A_1D}|^2 + |h_{SA_2}|^2(|h_{SA_1}|^2- \\ & |h_{SA_2}|^2|h_{A_1D}|^2)
+|h_{A_1D}|^2(|h_{SA_1}|^2-|h_{SA_2}|^2)+ \\ & |h_{SA_1}|^2(|h_{SA_2}|^2|h_{A_1D}|^2-|h_{SA_1}|^2)\\ 
&= |h_{SA_1}|^2+|h_{SA_1}|^2|h_{A_1D}|^2 +(|h_{SA_1}|^2-|h_{SA_2}|^2) \\ & (|h_{SA_2}|^2|h_{A_1D}|^2-|h_{SA_1}|^2+|h_{A_1D}|^2) \geq 0
\end{align*}
}
where the last step is true since 
\begin{align*}
& |h_{SA_1}|^2  \geq   |h_{SA_2}|^2\\
& |h_{SA_2}|^2|h_{A_1D}|^2  \geq |h_{SA_1}|^2 \quad (\hbox{since } \alpha |h_{SA_2}|^2 \geq 1).
\end{align*}

\end{itemize}

\end{enumerate}

\section{Proof of Theorems \ref{thm:GenDetNet} and \ref{thm:GenDetNetMulticast}}
\label{app:GenDet}

In this appendix we prove Theorems \ref{thm:GenDetNet} and \ref{thm:GenDetNetMulticast}. We first generalize the encoding scheme to accommodate arbitrary deterministic functions of (\ref{eq:GenDetModel}) in Section \ref{subsec:EncGenDetLay}.
We then illustrate the ingredients of the proof using the same example
as in Section \ref{subsec:PfIdeaDet}. The complete proof of our result for
layered networks is proved in Section \ref{subsec:LayGen}. The extension to the non-layered case is very similar to the proof for linear finite-field model discussed in Section \ref{sec:TimExpDet}, hence is omitted.

\subsection{Encoding for layered general deterministic relay network}
\label{subsec:EncGenDetLay}
We have a single source $S$ with a sequence of messages $w_k \in\{1,2,\ldots,2^{TR}\}$, $k=1,2,\ldots$.
Each message is encoded by the source $S$ into a signal over $T$
transmission times (symbols), giving an overall transmission rate of
$R$.
We will use strong (robust) typicality as defined in \cite{OR01}.
The notion of joint typicality is naturally extended from Definition
\ref{def:RobTyp}.
\begin{defn}
\label{def:RobTyp}
We define $\underline{x}$ as $\delta$-typical with respect to distribution $p$, and denote it by $\underline{x}\in T_{\delta}$, if
\[
|\nu_{\underline{x}}(x)-p(x)| \leq \delta p(x), \quad \forall x
\]
where $\delta \in \mathbb{R}^+$  and $\nu_{\underline{x}}(x)=\frac{1}{T}|\{t: x_t=x\}|$, is the
empirical frequency.
\end{defn}

Each relay operates over blocks of time $T$ symbols, and uses a
mapping $f_j:\mathcal{Y}_j^T\rightarrow \mathcal{X}_j^T$ from its
previous block of received $T$ symbols to transmit
signals in the next block. In particular, block $k$ of $T$ received
symbols is denoted by
$\ybf_j^{(k)}=\{y[(k-1)T+1],\ldots,y[kT]\}$ and the transmit
symbols by $\xbf_j^{(k)}$.  Choose some product distribution
$\prod_{i\in\mathcal{V}}p(x_i)$. At the source $S$, map each of the
indices in $w_k\in\{1,2,\ldots,2^{TR}\}$, choose $f_S(w_k)$ onto a
sequence uniformly drawn from $T_{\delta}(x_S)$, which is the typical
set of sequences in $\mathcal{X}_S^T$.  At any relay node $j$ choose
$f_j$ to map each typical sequence in $T_{\delta}(y_j)$ onto the typical set of transmit sequences $T_{\delta}(x_j)$, as
\begin{equation}
\label{eq:GenEncFnLay}
\xbf_j^{(k)} = f_j(\ybf_j^{(k-1)}),
\end{equation}
where $f_j$ is chosen to map uniformly randomly each sequence in
$T_{\delta}(y_j)$ onto $T_{\delta}(x_j)$.  Each relay does the encoding prescribed by
(\ref{eq:GenEncFnLay}).  

\subsection{Proof illustration}
\label{subsec:PfIdeaGen}

Now, we illustrate the ideas behind the proof of Theorem
\ref{thm:GenDetNet} for layered networks using the same example as in
Section \ref{subsec:PfIdeaDet}, which was done for the linear
deterministic model.  Since we are dealing with deterministic
networks, the logic up to (\ref{eq:PwErr}) in Section
\ref{subsec:PfIdeaDet} remains the same. We will again illustrate the
ideas using the cut $\Omega=\{S,A_1,B_1\}$. As in Section
\ref{subsec:PfIdeaDet}, we can write
\begin{eqnarray*}
 \mathcal{P} &=& \mathbb{P} \{\mathcal{A}_2,\mathcal{B}_2,\mathcal{D},\mathcal{A}_1^c,\mathcal{B}_1^c\} \\
 &=& \mathbb{P} \{\mathcal{A}_2 \} \times \mathbb{P} \{\mathcal{B}_2,\mathcal{A}_1^c|\mathcal{A}_2 \}  \times \mathbb{P} \{ \mathcal{D},\mathcal{B}_1^c| \mathcal{A}_2,\mathcal{B}_2,\mathcal{A}_1^c \} \\
 & \leq &  \mathbb{P} \{\mathcal{A}_2 \} \times \mathbb{P} \{\mathcal{B}_2|\mathcal{A}_1^c,\mathcal{A}_2 \} \times \mathbb{P} \{ \mathcal{D}|\mathcal{B}_1^c, \mathcal{A}_2,\mathcal{B}_2,\mathcal{A}_1^c \} \\
 & = & \mathbb{P} \{\mathcal{A}_2 \} \times \mathbb{P} \{\mathcal{B}_2|\mathcal{A}_1^c,\mathcal{A}_2 \}  \times \mathbb{P} \{ \mathcal{D}|\mathcal{B}_1^c,\mathcal{B}_2\} 
\end{eqnarray*}
where the events $\{ \mathcal{A}_1,\mathcal{A}_2,\mathcal{B}_1,\mathcal{B}_2,\mathcal{D}\}$ are defined in (\ref{eq:EventsDetDef}), and the last step is true since there is an independent random mapping at each node and we have a Markovian layered structure in the network.

Note that since $\ybf_j\in T_{\delta}(y_j)$ with high probability, we
can focus only on the typical received signals. Let us first examine
the probability that $\ybf_{A_2}(w)=\ybf_{A_2}(w')$. Since $S$ can
distinguish between $w,w'$, it maps these messages independently
to two transmitted signals $\xbf_S(w),\xbf_S(w')\in T_{\delta}(x_S)$,
hence we can see that
{\small
\begin{equation}
\label{eq:ExGenEP1}
 \mathbb{P} \{\mathcal{A}_2 \} = \prob{(\xbf_S(w'),\ybf_{A_2}(w))\in T_{\delta}(x_S,y_{A_2})} \stackrel{\cdot}{=}
2^{-TI(x_S;y_{A_2})},
\end{equation}}
where $\stackrel{\cdot}{=}$ indicates exponential equality (where we
neglect subexponential constants).

 Now, in order to analyze the second probability, as seen in the linear model analysis, $\mathcal{A}_2$ implies
$\xbf_{A_2}(w)=\xbf_{A_2}(w')$, {\em i.e.,} the {\em same} signal is
sent under both $w,w'$. Therefore, since
$(\xbf_{A_2}(w),\ybf_{B_2}(w))\in T_{\delta}(x_{A_2},y_{B_2})$,
obviously, $(\xbf_{A_2}(w'),\ybf_{B_2}(w))\in
T_{\delta}(x_{A_2},y_{B_2})$ as well.  Therefore, under $w'$, we
already have $\xbf_{A_2}(w')$ to be jointly typical with the signal
that is received under $w$. However, since $A_1$ can distinguish
between $w,w'$, it will map the transmit sequence $\xbf_{A_1}(w')$ to
a sequence which is independent of $\xbf_{A_1}(w)$ transmitted under
$w$. Since an error occurs when
$(\xbf_{A_1}(w'),\xbf_{A_2}(w'),\ybf_{B_2}(w))\in
T_{\delta}(x_{A_1},x_{A_2},y_{B_2})$, and since $A_2$ cannot
distinguish between $w,w'$, we also have
$\xbf_{A_2}(w)=\xbf_{A_2}(w')$, we require that
$(\xbf_{A_1},\xbf_{A_2},\ybf_{B_2})$ generated like
$p(\xbf_{A_1})p(\xbf_{A_2},\ybf_{B_2})$ behaves like a jointly typical
sequence. Therefore, this probability is given by
\begin{align}
\nonumber &   \mathbb{P} \{\mathcal{B}_2|\mathcal{A}_1^c,\mathcal{A}_2 \} = \mathbb{P} \{(\xbf_{A_1}(w'),\xbf_{A_2}(w),\ybf_{B_2}(w))\in \\ \label{eq:ExGenEP2} & 
T_{\delta}(x_{A_1},x_{A_2}y_{B_2}) \} \stackrel{\cdot}{=}  
2^{-TI(x_{A_1};y_{B_2},x_{A_2})} \stackrel{(a)}{=}
2^{-TI(x_{A_1};y_{B_2}|x_{A_2})},
\end{align}
where $(a)$ follows since we have generated the mappings $f_j$ independently, it induces an independent
distribution on $x_{A_1},x_{A_2}$. Another way to see this is that the
probability (\ref{eq:ExGenEP2}) is 
$\frac{|T_{\delta}(\xbf_{A_1}|\xbf_{A_2},\ybf_{B_2})|}{|T_{\delta}(\xbf_{A_1})|}$,
which by using properties of (robustly) typical sequences \cite{OR01}
yields the same expression as in (\ref{eq:ExGenEP2}). Note that the
calculation in (\ref{eq:ExGenEP2}) is similar to one of the error
event calculations in a multiple access channel.

Using a similar logic we can write
{\footnotesize
\begin{align}
\nonumber &  \mathbb{P} \{ \mathcal{D}|\mathcal{B}_1^c,\mathcal{B}_2\} =
\prob{(\xbf_{B_1}(w'),\xbf_{B_2}(w),\ybf_{D}(w))\in
T_{\delta}(x_{B_1},x_{B_2}y_{D})} \stackrel{\cdot}{=}
\\ \label{eq:ExGenEP3} & 2^{-TI(x_{B_1};y_{D},x_{B_2})} \stackrel{(a)}{=}
2^{-TI(x_{B_1};y_{D}|x_{B_2})}.
\end{align}}
Therefore, putting (\ref{eq:ExGenEP1})--(\ref{eq:ExGenEP3}) together as
done in (\ref{eq:ConfProb}) we get
\[
\mathcal{P}\leq
2^{-T\{I(x_S;y_{A_2})+I(x_{A_1};y_{B_2}|x_{A_2})+I(x_{B_1};y_{D}|x_{B_2})\}}.
\]
Note that, for this example, due to the Markovian structure of the
network we can see that\footnote{Though in the encoding
scheme there is a dependence between $x_{A_1},x_{A_2},x_{B_1},x_{B_2}$
and $x_S$, in the single-letter form of the mutual information, under
a product distribution, $x_{A_1},x_{A_2},x_{B_1},x_{B_2},x_S$ are
independent of each other. Therefore for example, $y_{B_2}$ is
independent of $x_{B_2}$ leading to
$H(y_{B_2}|x_{A_2},x_{B_2})=H(y_{B_2}|x_{A_2})$.  Using this argument for
the cut-set expression $I(y_{\Omega^c};x_{\Omega}|x_{\Omega^c})$, we get
the expansion.}  $I(y_{\Omega^c};x_{\Omega}|x_{\Omega^c})
=I(x_S;y_{A_2})+I(x_{A_1};y_{B_2}|x_{A_2})+I(x_{B_1};y_{D}|x_{B_2})$,
hence as in (\ref{eq:RateBndLinEx}) we get 
\begin{equation}
\label{eq:RateBndGenEx}
\displaystyle
P_e \leq 2^{RT} |\Lambda_D| 2^{-T\min_{\Omega\in\Lambda_D}
I(y_{\Omega^c};x_{\Omega}|x_{\Omega^c})},
\end{equation}
and hence the error probability can be made as small as desired if
$R<\min_{\Omega\in\Lambda_D}H(y_{\Omega^c}|x_{\Omega^c})$.

\subsection{Proof of Theorems \ref{thm:GenDetNet} and \ref{thm:GenDetNetMulticast} for layered networks}
\label{subsec:LayGen}

As in the example illustrating the proof in Section
\ref{subsec:PfIdeaGen}, the logic of the proof in the general
deterministic functions follows that of the linear model quite
closely. 

For any such cut $\Omega$, define the following sets:
\begin{itemize}
\item $L_l (\Omega)$: the nodes that are in $\Omega$ and are at layer $l$, (for example $S \in  L_1 (\Omega)$),
\item $R_l (\Omega)$: the nodes that are in $\Omega^c$ and are at layer $l$, (for example $D \in  R_{l_D} (\Omega)$).
\end{itemize}

As in Section \ref{sec:LayFF} we can define
the bi-partite network associated with a cut $\Omega$. Instead of a
transfer matrix $\Gbf_{\Omega,\Omega^c}(\cdot)$ associated with the cut, we
have a transfer function $\tilde{\Gbf}_{\Omega}$.  Since we are still
dealing with a layered network, as in the linear model case, this
transfer function breaks up into components corresponding to each of
the $l_D$ layers of the network. More precisely, we can create $d=l_D$
disjoint sub-networks of nodes corresponding to each layer of the
network,  with
the set of nodes $L_{l-1}(\Omega)$, which are at distance $l-1$ from $S$ and are in $\Omega$, on one side and  the set of nodes
$R_l(\Omega)$, which are at distance $l$ from $S$ that are in $\Omega^c$, on the other side, for
$l=2,\ldots,l_D$. Each of these clusters have a
transfer function $\Gbf_l(\cdot),l=1,\ldots,l_D$ associated with them.

As in the linear model, each node $i$ sees a signal
related to $w=w_1$ in block $l_i=l-1$, and therefore waits to receive
this block and then does a mapping using the general encoding function
given in (\ref{eq:GenEncFnLay}) as
\begin{equation}
\label{eq:GenEncFnRep}
\xbf_j^{(k)}(w) = f_j^{(k)}(\ybf_j^{(k-1)}(w)).
\end{equation}
The received signals in the nodes $j\in R_l(\Omega)$ are
deterministic transformations of the transmitted signals from nodes
$\mathcal{T}_l=\{u: (u,v)\in \mathcal{E}, v\in R_l(\Omega)\}$. As
in the linear model analysis of Section \ref{sec:LayFF}, the
dependence is on all the transmitting signals at distance $l-1$ from
the source, not just the ones in $L_l (\Omega)$.  Since all the receivers in $R_l(\Omega)$ are at distance $l$ from $S$, they form the receivers of the layer $l$.

We now define the following events:
\begin{itemize}
\item $\mathcal{L}_l$: Event that the nodes in $L_l$ can distinguish between $w$ and $w'$, i.e. $\ybf_{L_l}(w) \neq \ybf_{L_l}(w')$,
\item $\mathcal{R}_l$: Event that the nodes in $R_l$  can not distinguish between $w$ and $w'$, i.e. $\ybf_{R_l}(w) = \ybf_{R_l}(w')$.
\end{itemize}

Similar to Appendix \ref{subsec:PfIdeaGen} we can write
\begin{align*}
& \mathcal{P} = \mathbb{P} \{ \mathcal{R}_l,\mathcal{L}_{l-1} ,l=2,\ldots,l_D  \} \\
 &= \prod_{l=2}^{l_D}  \mathbb{P} \{\mathcal{R}_l,\mathcal{L}_{l-1}|\mathcal{R}_j,\mathcal{L}_{j-1}, j=2,\ldots,l-1\} \\
 &\leq \prod_{l=2}^{l_D}  \mathbb{P} \{\mathcal{R}_l|\mathcal{R}_j,\mathcal{L}_{j}, j=2,\ldots,l-1\} = \prod_{l=2}^{l_D}  \mathbb{P} \{\mathcal{R}_l|\mathcal{R}_{l-1},\mathcal{L}_{l-1}\} .
 \end{align*}

Note that  for all the transmitting nodes in $R_{l-1}$ which cannot distinguish between $w,w'$ the transmitted
signal would be the same under both $w$ and $w'$, \emph{i.e.}
\[
\xbf_j(w)=\xbf_j(w'),\,\,\,j\in R_{l-1}.
\]
Therefore, since $(\{\xbf_j(w)\}_{j\in
R_{l-1}},\ybf_{R_l}(w))\in T_{\delta}$, we have that
\[
(\{\xbf_j(w')\}_{j \in R_{l-1}},\ybf_{R_l} (w))\in
T_{\delta}.
\]
Therefore, just as in Appendix \ref{subsec:PfIdeaGen}, we see that 
\begin{align}
\nonumber &   \mathbb{P} \{\mathcal{R}_l|\mathcal{R}_{l-1},\mathcal{L}_{l-1}\}  = \mathbb{P}\{(\xbf_{L_{l-1}}(w'),\xbf_{R_{l-1}}(w),\ybf_{R_l}(w))\in \\ \label{eq:ClusterProb} &
T_{\delta}(x_{L_{l-1}},x_{R_{l-1}},y_{R_l})\}  \stackrel{\cdot}{=}
2^{-TI(x_{L_{l-1}};y_{R_l}|x_{R_{l-1}})}.
\end{align}
Therefore  
\begin{eqnarray}
\label{eq:GenPEP1} \mathcal{P} \leq    \prod_{l=2}^d
2^{-TI(x_{L_{l-1}};y_{R_l}|x_{R_{l-1}})}  = 2^{-T\sum_{l=2}^d H(y_{R_l}|x_{R_{l-1}})}. 
\end{eqnarray}
Due to the Markovian nature of the layered network, $\sum_{l=2}^d H(y_{R_l}|x_{R_{l-1}})=H(y_{\Omega^c}|x_{\Omega^c})$.  From this point the
proof closely follows the steps from
(\ref{eq:RateBndGenEx}) onwards.  Similarly, in a multicast scenario  we declare an error if {\em any} receiver $D\in\mathcal{D}$ makes
an error. Since we have $2^{RT}$ messages, from the union
bound we can drive the error probability to zero if we have
\begin{equation}
\label{eq:LayRateBndGen}
\displaystyle
R < \max_{\prod_{i\in\mathcal{V}} p(x_i)} \min_{D\in\mathcal{D}} \min_{\Omega\in\Lambda_D}
H(y_{\Omega^c}|x_{\Omega^c}).
\end{equation}

We can use an argument similar to Section \ref{sec:TimExpDet} in the
linear deterministic case, to show that the layered proof for the
general deterministic relay network can be extended to arbitrary
(non-layered) deterministic networks. We also had an alternate proof
for this conversion in \cite{ADTAllerton07P2}, which used
submodularity properties of entropy to show the same result.

\section{Proof of Lemma \ref{lem:quantScalar}}
\label{app:proofLemQuantScalar}

Consider the SVD decomposition of $\Hbf$: $\Hbf=\Ubf \Sigma \Vbf^\dag
$, with singular values $\sigma_1,\ldots,\sigma_{\min(m,n)}$. Let us
define $K=\min\{m,n\}$ and
$\tilde{\xbf}_j=[\tilde{x}_{1,j},\cdots,\tilde{x}_{m,j}]$, which is
i.i.d. (over $1\leq j\leq T$) $\mathcal{CN}(0,\Ibf_m)$.

Therefore, if $||\Hbf\tilde{\xbf}_j||_{\infty} \leq \sqrt{2}$, then $ ||\Sigma \Vbf
\tilde{\xbf}_j||_{2} \leq \sqrt{2K}$, which means,
\begin{eqnarray}
\nonumber \prob{||\Hbf\tilde{\xbf}_j||_{\infty} \leq \sqrt{2}} & \leq & \prob{||\Sigma
  \Vbf \tilde{\xbf}_j||_{2} \leq \sqrt{2K}} \\ &=& \prob{||\Sigma
  \tilde{\xbf}_j||_{2} \leq \sqrt{2K}}
\label{eq:normRel}
\end{eqnarray}
where the last step is true since the distribution of $\tilde{\xbf}$ and $\Vbf\tilde{\xbf}$ are
the same.

Now by using (\ref{eq:normRel}),  we get
\begin{align*}
\nonumber & \prob{\forall 1 \leq j \leq T:
    ||\Hbf[\tilde{x}_{1,j},\cdots,\tilde{x}_{m,j}]^t||_{\infty} \leq \sqrt{2}} \\ \nonumber & \leq
  \prob{\forall 1 \leq j \leq T: ||\Sigma
    [\tilde{x}_{1,j},\cdots,\tilde{x}_{m,j}]^t||_{2} \leq \sqrt{2K}} \\ & \leq
  \prob{\forall 1 \leq j \leq T: \sum_{i=1}^{\min\{m,n\}}\sigma_i^2
    |\tilde{x}_{i,j}|^2\leq 2K} \\ &=
  \prod_{j=1}^T\prob{\sum_{i=1}^{\min\{m,n\}}\sigma_i^2 |\tilde{x}_{i,j}|^2
    \leq 2K} \\ &\stackrel{(a)}{\leq} \prod_{j=1}^T e^{-\left (
    \sum_{i=1}^{\min\{m,n\}}\log(1+\frac{1}{2}2\sigma_i^2)-K\right
    )}\\ &= e^{- T\left (
    \sum_{i=1}^{\min\{m,n\}}\log(1+\sigma_i^2)-K\right
    )},
\end{align*}
where $(a)$ follows from the Chernoff bound\footnote{We would like to
acknowledge useful discussions with A. Ozgur on sharpening the proof of this
result. It is also related to the proof technique in \cite{OD10}.}.

Since, $\sum_{i=1}^{\min\{m,n\}}\log(1+\sigma_i^2)=\log
\mathrm{det}(\Ibf+\Hbf\Hbf^*)=I(\xbf;\Hbf\xbf+\zbf)$, for
$\xbf\sim\mathcal{CN}(0,\Ibf_m),\zbf\sim\mathcal{CN}(0,\Ibf_n)$, we
get the desired result.

\section{Proof of Lemma \ref{lem:condMI}}
\label{app:proofLemCondMI}

We first prove the following lemmas.

\begin{lemma} \label{lem:conEntLem}
Consider integer-valued random variables $x$, $r$ and $s$ such that 
\begin{eqnarray*}
x & \bot & r \\
s & \in & \{-L,\ldots,0,\ldots,L\} \\
\prob{|r| \geq k} & \leq & e^{-f(k)}, \quad \hbox{for all $k \in \ZZ^+$}
\end{eqnarray*}
for some integer $L$ and a function $f(.)$. Let \beq \nonumber y=x+r+s \eeq Then
{\footnotesize
\begin{align*} & H(y|x)  \leq  2 \log_2 e \lp \sum_{k=1}^{\infty}f(k)e^{-f(k)} \rp + \frac{2L+1}{2}+N_f \\ & H(x|y)  \leq  \log \lp 2L+1 \rp+ 2 \log_2 e \lp \sum_{k=1}^{\infty}f(k)e^{-f(k)} \rp + \frac{2L+1}{2}+N_f \end{align*}}
where
\beq N_f=\left |  \{n \in \mathcal{Z}^+ | e^{-f(n)} >\frac{1}{2} \}  \right |. \eeq
\end{lemma}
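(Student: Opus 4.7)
The plan is to exploit the representation $y-x = r+s$ together with the independence $x \perp r$, reducing both bounds to a careful analysis of the entropy of $r$ plus a cheap handle on the bounded-support $s$. First I would bound $H(r)$ using only its tail decay. Since $\mathbb{P}\{r=k\} \leq \mathbb{P}\{|r|\geq |k|\} \leq e^{-f(|k|)}$ for $k\neq 0$, I would split the sum $-\sum_k p_k \log p_k$ into a ``bulk'' region where $e^{-f(|k|)} > 1/2$ and a ``tail'' region where $p_k \leq e^{-f(|k|)} \leq 1/e$. By the definition of $N_f$, the bulk contains at most $2N_f+1$ indices; each contributes at most a fixed constant via the trivial bound $-p\log_2 p \leq \log_2 e / e \leq 1/2$ (which is also where the $\frac{2L+1}{2}$-style penalty comes from). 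On the tail, the monotonicity of $p \mapsto -p\log p$ on $[0,1/e]$ lets me replace $p_k$ by its upper bound $e^{-f(|k|)}$, giving $-p_k \log p_k \leq f(|k|)\,e^{-f(|k|)}\log_2 e$; summing over both signs of $k$ produces the factor $2\log_2 e$ multiplying $\sum_{k\geq 1} f(k)e^{-f(k)}$.

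Next, for $H(y|x) = H(r+s \mid x)$, I would apply subadditivity and the independence $r \perp x$:
\[
H(r+s\mid x) \leq H(r\mid x) + H(s\mid x,r) = H(r) + H(s\mid x,r).
\]
The first term is handled by step one. The second is bounded by a direct entropy inequality on the $(2L+1)$-valued $s$; combined with the bulk contribution from $r$ above, the finite-support pieces can be collected into the joint $\tfrac{2L+1}{2}+N_f$ summand using the same $-p\log_2 p \leq 1/2$ bound term-by-term over the $2L+1$ possible values.

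For $H(x|y)$, the key identity is $H(x\mid y) = H(r+s \mid y)$, since $x = y-r-s$ is a deterministic function of $y$ and $r+s$. Applying the chain rule once more gives
\[
H(r+s\mid y) \leq H(s\mid y) + H(r\mid y,s) \leq \log(2L+1) + H(r\mid y-s).
\]
Because $y-s = x+r$ with $x \perp r$, data processing yields $H(r \mid y-s) \leq H(r)$, reducing this again to the tail-entropy estimate from step one. The extra $\log(2L+1)$ reflects the asymmetry of having to disentangle $r$ from $s$ when recovering $x$, whereas $H(y|x)$ has already absorbed $s$ into the increment $r+s$.

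The main obstacle I anticipate is the crossover region in the distribution of $r$: the monotonicity trick $-p\log p \leq f(|k|)e^{-f(|k|)}\log_2 e$ requires $p_k \leq 1/e$, so the indices where the tail bound $e^{-f(|k|)}$ is larger than $1/2$ must be handled separately through the $N_f$ bookkeeping. Getting the constants to collapse cleanly into $\tfrac{2L+1}{2} + N_f$ without inflating the tail term will require consolidating the bulk contributions from both $r$ (of size $\sim N_f$) and the shifted support from $s$ (of size $2L+1$) into a single uniform $-p\log_2 p \leq 1/2$ estimate rather than the sharper $\log_2 e / e$ bound.
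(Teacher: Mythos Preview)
Your $H(x\mid y)$ argument is clean and actually a bit sharper than the paper's: the paper shows $H(x\mid y)\leq \log(2L+1)+H(y\mid x)$ via an $I(x;x+r+s)$ manipulation, whereas you land on $\log(2L+1)+H(r)$ directly through $H(r\mid x+r)\leq H(r)$. Both work, and yours avoids re-paying the $s$ cost.

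The $H(y\mid x)$ part, however, does not reach the stated constant. Subadditivity $H(r+s\mid x)\leq H(r)+H(s\mid x,r)$ forces you to pay a bulk penalty twice: once for the $2N_f+1$ central indices of $r$ (you cannot avoid including $r=0$, which contributes an unavoidable $-p\log p\leq 1/2$ term), and once more for the $(2L{+}1)$-valued $s$. That totals $(2N_f{+}1)/2+(2L{+}1)/2=N_f+(2L{+}1)/2+1/2$, overshooting the target by exactly $1/2$. The ``consolidation'' you hope for in the last paragraph cannot happen after you have split $r$ from $s$.

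The paper avoids this by never splitting: it bounds $H(r+s)$ directly. The key observation is that $\{r+s=k\}$ with $|k|>L$ forces $|r|\geq |k|-L$, so $\prob{r+s=k}\leq e^{-f(|k|-L)}$. Thus the central region $|k|\leq L$ of $r+s$ (size $2L{+}1$) absorbs the crude $-p\log p\leq 1/2$ bound, and the tail of $r+s$ inherits the tail of $r$ shifted by $L$; the $N_f$ term then counts the transitional indices $|k|-L$ where $e^{-f}>1/2$. There is no separate $r=0$ term to handle. Replace your subadditivity step by this direct tail-of-the-convolution argument and the constant falls out exactly.

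One side remark: your instinct that monotonicity of $p\mapsto -p\log p$ requires $p\leq 1/e$ (not $1/2$) is correct; the paper's threshold $1/2$ in the definition of $N_f$ is slightly loose on this point, though it is harmless in the application.
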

\begin{proof}
By definition we have
\begin{align*} & H(y|x)=H(x+r+s|x)=H(r+s|x) \\
& \leq  H(r+s) = - \sum_{k} \prob{r+s=k} \log \prob{r+s=k}. \end{align*}
Now since  $-p\log p \leq \frac{1}{2}$ for $0 \leq p \leq 1$,  we have
\beq \label{eqn:conEntLemPart1} -\sum_{k=-L}^{L}\prob{r+s=k} \log \prob{r+s=k} \leq \frac{2L+1}{2} . \eeq

For $|k|>L$ we have
\beq \prob{r+s=k} \leq  \prob{|r| \geq |k|-L}  \leq e^{-f(|k|-L)}. \eeq
Since $p \log p$ is decreasing in $p$ for $p<\frac{1}{2}$ we have
\begin{align} 
\nonumber & -\sum_{k=L+1}^{\infty} \prob{r+s=k} \log \prob{r+s=k} \\ \nonumber \quad  & = -\sum_{\substack{k>L \\ k-L \in N_f}} \prob{r+s=k} \log \prob{r+s=k} \\ \nonumber  & \quad \quad -\sum_{\substack{k>L \\ k-L \notin N_f}} \prob{r+s=k} \log \prob{r+s=k} \quad \quad \\ \label{eqn:conEntLemPart2}
& \leq  \frac{N_f}{2}+\sum_{k=L+1}^{\infty}e^{-f(k-L)}f(k-L) \log e
\end{align}
and similarly 
\begin{align} 
\nonumber & -\sum_{k=-\infty}^{-L} \prob{r+s=k} \log \prob{r+s=k} \\ &\nonumber \quad = -\sum_{\substack{k<-L\\ |k|-L \in N_f}} \prob{r+s=k} \log \prob{r+s=k}\\ \nonumber \quad \quad  & -\sum_{\substack{k<-L\\ |k|-L \notin N_f}} \prob{r+s=k} \log \prob{r+s=k} \quad \quad \\ \label{eqn:conEntLemPart3}
& \quad \leq  \frac{N_f}{2}+\sum_{k=L+1}^{\infty}e^{-f(k-L)}f(k-L) \log e.
\end{align}
By combining (\ref{eqn:conEntLemPart1}), (\ref{eqn:conEntLemPart2}) and (\ref{eqn:conEntLemPart3}) we get
\beq H(y|x)  \leq  2 \log_2 e \lp \sum_{k=1}^{\infty}f(k)e^{-f(k)} \rp + \frac{2L+1}{2}+N_f . \eeq
Now we prove the second inequality:
\begin{eqnarray*}
H(x|y) &=& H(x|x+r+s)=  H(x)-I(x;x+r+s) \\
&=& H(x)-H(x+r+s)+H(x+r+s|x) \\
& \leq & H(x)-H(x+r+s|s)+ H(y|x) \\
&=& H(x)-H(x+r|s)+ H(y|x) \\ 
&=& H(x)-H(x+r)+I(x+r;s)+ H(y|x) \\ 
&\leq& H(x)-H(x+r)+H(s)+ H(y|x) \\ 
&\leq& H(x)-H(x+r|r)+\log\lp 2L+1 \rp+ H(y|x)\\
&=& H(x)-H(x)+\log\lp 2L+1 \rp+ H(y|x)\\
&=& \log\lp 2L+1 \rp+ H(y|x).
\end{eqnarray*}
Therefore 
{\small \beq H(x|y)  \leq  \log\lp 2L+1 \rp + 2 \log_2 e \lp \sum_{k=1}^{\infty}f(k)e^{-f(k)} \rp + \frac{2L+1}{2}+N_f.  \eeq}
\end{proof}

\begin{corollary}
\label{cor:condEntropyQ}
Assume $v$ is a continuous complex random variable, then
\begin{eqnarray*} H([v+z]||[v]) & \leq & 12 \\
H([v]||[v+z]) & \leq & 12
\end{eqnarray*}
where $z$ is a $\mathcal{CN}(0,1)$ random variable independent of $v$ and $[.]$ is defined in Definition \ref{defn:Quantization}.
\end{corollary}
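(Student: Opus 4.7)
The goal is to apply Lemma~\ref{lem:conEntLem} coordinate-wise. Write $v = a + i b$ and $z = z_1 + i z_2$ with $z_1, z_2$ real, i.i.d.\ $\mathcal{N}(0,1/2)$, independent of $(a,b)$. By Definition~\ref{defn:Quantization}, $[v] = ([a],[b])$ and $[v+z] = ([a+z_1],[b+z_2])$. Using the chain rule and the fact that conditioning reduces entropy,
\begin{align*}
H([v+z]\,|\,[v]) &\leq H([a+z_1]\,|\,[a]) + H([b+z_2]\,|\,[b]),\\
H([v]\,|\,[v+z]) &\leq H([a]\,|\,[a+z_1]) + H([b]\,|\,[b+z_2]).
\end{align*}
So it suffices to bound each real-scalar term by $6$.

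Next, I want to cast $[a+z_1]$ in the additive form required by Lemma~\ref{lem:conEntLem}. Write $a = [a] + \{a\}$ and $z_1 = [z_1] + \{z_1\}$ with fractional parts in $[-1/2,1/2)$. Then
\begin{equation*}
[a+z_1] \;=\; [a] + [z_1] + \bigl[\{a\} + \{z_1\}\bigr].
\end{equation*}
Set $x = [a]$, $r = [z_1]$, and $s = [\{a\}+\{z_1\}]$. Since $a$ is independent of $z_1$, we have $x \perp r$; since $\{a\}+\{z_1\} \in [-1,1)$, we have $s \in \{-1,0,1\}$, so $L=1$. For the tail of $r$, the Gaussian tail bound gives
\begin{equation*}
\prob{|[z_1]| \geq k} \leq \prob{|z_1| \geq k - 1/2} \leq e^{-(k-1/2)^2}, \qquad k \geq 1,
\end{equation*}
so we may take $f(k) = (k-1/2)^2$. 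A direct check gives $N_f = 1$ (only $k=1$ makes $e^{-f(k)} > 1/2$) and the tail sum $\sum_{k\geq 1} f(k) e^{-f(k)}$ is bounded by a small numerical constant (roughly $\tfrac12$). Plugging into Lemma~\ref{lem:conEntLem} yields $H([a+z_1]\,|\,[a]) \leq 2(\log_2 e)\bigl(\sum_k f(k)e^{-f(k)}\bigr) + 3/2 + 1 \leq 6$, and applying the second inequality of the lemma yields $H([a]\,|\,[a+z_1]) \leq \log 3 + H([a+z_1]\,|\,[a]) \leq 6$. Doing the same for the $(b,z_2)$ coordinate and summing gives both bounds by $12$.

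The plan is then essentially mechanical: set up the decomposition $[a+z_1] = [a] + [z_1] + s$, verify the three hypotheses of Lemma~\ref{lem:conEntLem} ($x \perp r$; $s$ bounded with $L=1$; Gaussian tail for $r$), and evaluate the resulting numerical constants. The only mildly delicate point — and where I expect to spend care — is justifying $L=1$ and the independence $x \perp r$ cleanly (the fractional part $\{a\}$ is \emph{not} independent of $[a]$, but it only appears inside $s$, which is handled by the ``bounded $s$'' slot of the lemma rather than the ``independent $r$'' slot). Once that separation is made, the numerical bookkeeping needed to confirm ``$\leq 6$ per real coordinate'' is routine, and the final constant $12$ follows from summing the two coordinates.
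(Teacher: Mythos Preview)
Your proposal is correct and follows essentially the same approach as the paper: decompose into real and imaginary parts, write $[a+z_1]=[a]+[z_1]+[\{a\}+\{z_1\}]$ to match the $x+r+s$ form of Lemma~\ref{lem:conEntLem} with $L=1$, apply a Gaussian tail bound for $r=[z_1]$, and sum the two coordinates. The only cosmetic difference is that you use the sharper tail $f(k)=(k-\tfrac12)^2$ (correctly accounting for $\mathrm{Var}(\mathrm{Re}\,z)=\tfrac12$) whereas the paper uses the looser $f(k)=(k-\tfrac12)^2/2$; both give a per-coordinate bound below $6$.
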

\begin{proof} We use lemma \ref{lem:conEntLem} with variables
\begin{eqnarray*}
x&=&[\text{Re}(v)] \\
r &=& [\text{Re}(z)] \\
s &=& [\{\text{Re}(v)\}+\{\text{Re}(z)\}]
\end{eqnarray*}
Then $L=1$ and since
\begin{eqnarray*}
\prob{|[\text{Re}(z)]|\geq k} & \leq & \prob{|[\text{Re}(z)]|-\frac{1}{2} \geq k}  =  2 Q(k-\frac{1}{2}) \\
& \leq & e^{-\frac{(k-\frac{1}{2})^2}{2}}  \end{eqnarray*}
We can use \beq \nonumber f(k)=\frac{(k-\frac{1}{2})^2}{2}.\eeq
Also since 
\beq \nonumber e^{-\frac{(k-\frac{1}{2})^2}{2}} < \frac{1}{2}, \quad \hbox{for $k\geq 2$}\eeq we have $N_f=1$.
Hence
\begin{eqnarray*}
&& \log\lp 2L+1 \rp + 2 \log_2 e \lp \sum_{k=1}^{\infty}f(k)e^{-f(k)} \rp + \frac{2L+1}{2}+N_f   \\ & &= 2 \log_2 e \lp \sum_{k=1}^{\infty}\frac{(k-\frac{1}{2})^2}{2} e^{-\frac{(k-\frac{1}{2})^2}{2}} \rp +2.5+\log_2 3  \\
&&\approx 5.89 < 6.
\end{eqnarray*}
As a result
\begin{eqnarray*} H([\text{Re}(v+z)]||[\text{Re}(v)]) & \leq & 6 \\
H([\text{Re}(v)]||[\text{Re}(v+z)]) & \leq & 6
\end{eqnarray*}
Similarly 
\begin{eqnarray*} H([\text{Im}(v+z)]||[\text{Im}(v)]) & \leq & 6 \\
H([\text{Im}(v)]||[\text{Im}(v+z)]) & \leq & 6
\end{eqnarray*}
Therefore
\begin{eqnarray*} H([v+z]||[v])&\leq & H([\text{Re}(v+z)]||[\text{Re}(v)])  \\ && + H([\text{Im}(v+z)]||[\text{Im}(v)])  \leq  12 \\
H([v]||[v+z]) & \leq & H([\text{Re}(v)]||[\text{Re}(v+z)])\\ && +H([\text{Im}(v)]||[\text{Im}(v+z)]) \leq 12
\end{eqnarray*}

\end{proof}

\begin{align*}
\nonumber &H([\ybf_D]|u,F_\mathcal{V})]  \leq  H([\ybf_{\mathcal{V}}]|w',F_\mathcal{V}) \\& = \sum_{l=2}^{l_D} H([\ybf_{\mathcal{V}_{l}}]|[\ybf_{\mathcal{V}_{l-1}},F_\mathcal{V}]) \\
\nonumber & = \sum_{l=2}^{l_D} H([\ybf_{\mathcal{V}_{l}}]|\xbf_{\mathcal{V}_{l-1}},F_\mathcal{V}) \\
\nonumber & = \sum_{l=2}^{l_D} H([\text{Re}( \ybf_{\mathcal{V}_{l}})]|\xbf_{\mathcal{V}_{l-1}},F_\mathcal{V})+H([\text{Im}( \ybf_{\mathcal{V}_{l}})]|\xbf_{\mathcal{V}_{l-1}},F_\mathcal{V}) \\
& \stackrel{\text{Corollary \ref{cor:condEntropyQ}}}{\leq}  \sum_{l=2}^{l_D} 12T |\mathcal{V}_{l}| \\
&= 12T |\mathcal{V}|.
\end{align*}

\section{Proof of Lemma \ref{lem:BeamForming}}
\label{app:proofLemBeamForming}

First note that $\overline{C}_{\Omega}$ is the capacity of the MIMO
channel that the cut $\Omega$ creates. Therefore intuitively we want
to prove that the gap between the capacity of a MIMO channel and its
capacity when it is restricted to have equal power allocation at the
transmitting antennas is upper bounded by a constant. Therefore
without loss of generality we just focus an $n \times m$ MIMO channel, with $K=\min\{m,n\}$,
\begin{equation} 
\ybf^n=\Gbf \xbf^m+ \zbf^n 
\end{equation} 
with average transmit power per antenna equal to $P$ and i.i.d complex
normal noise. We know that the capacity of this MIMO channel is
achieved with water filling, and
\begin{equation}
C=C_{wf}=\sum_{i=1}^K \log (1+\tilde{Q}_{ii} \lambda_i)
\end{equation}
where $\lambda_i$'s are the singular values of $\Gbf$ and
$\tilde{Q}_{ii}$ is given by water filling solution satisfying
\begin{equation} \sum_{i=1}^K \tilde{Q}_{ii}=mP. \end{equation}

Now with equal power allocation we have
\begin{equation} C_{ep}= \sum_{i=1}^K \log (1+P \lambda_i).  \end{equation}
Now note that

\begin{align*}
&C_{wf}-C_{ep}=   \log \lp \frac{\prod_{i=1}^K  (1+\tilde{Q}_{ii} \lambda_i)}{\prod_{i=1}^K  (1+P \lambda_i)}\rp  \\
& \leq   \log \lp \frac{\prod_{i=1}^K  (1+\tilde{Q}_{ii} \lambda_i)}{\prod_{i=1}^K  \max (1,P \lambda_i)}\rp \\
& =   \log \lp \prod_{i=1}^K \frac{1+\tilde{Q}_{ii} \lambda_i}{ \max (1,P \lambda_i)}\rp \\
& =   \log \lp \prod_{i=1}^K \lp \frac{1}{ \max (1,P \lambda_i)}+ \frac{\tilde{Q}_{ii} \lambda_i}{ \max (1,P \lambda_i)} \rp \rp \\
& \leq    \log \lp \prod_{i=1}^K \lp 1+ \frac{\tilde{Q}_{ii} \lambda_i}{P \lambda_i} \rp \rp =  \log    \lp \prod_{i=1}^K \lp 1+ \frac{\tilde{Q}_{ii} }{P} \rp \rp .
\end{align*}
Now note that 
\begin{equation} \sum_{i=1}^K (1+ \frac{\tilde{Q}_{ii} }{P})=K+m \end{equation} and therefore by arithmetic mean-geometric mean inequality we have
\begin{equation} \prod_{i=1}^K \lp 1+ \frac{\tilde{Q}_{ii} }{P} \rp \leq \lp \frac{ \sum_{i=1}^K (1+ \frac{\tilde{Q}_{ii} }{P})}{K} \rp ^K=(1+\frac{m}{K})^K  \end{equation}
and hence
{\small
\begin{eqnarray}  \label{eq:wfVep} 
C_{wf}-C_{ep} &\leq& K\log(1+\frac{m}{K}) \\ &=& 
K\log\left(\frac{m}{K}\right) + K\log\left(1+\frac{K}{m}\right)
\\ &\leq&  \underbrace{K\log\left(\frac{m}{K}\right)}_{\log\left(\frac{m}{K}\right)^K} + K 
\stackrel{(a)}{\leq}
\frac{m}{e} + K,
\end{eqnarray}
}
where $K = \min\{m,n\}$, and $(a)$ follows because
$\max_K\left(\frac{m}{K}\right)^K\leq e^{m/e}$ and we also take
natural logarithms.  Therefore the loss from restricting ourselves to
use equal transmit powers at each antenna of an $m \times n$ MIMO
channel is at most $\frac{m}{e} + \min\{m,n\}$ bits.

Now, let us apply \eqref{eq:wfVep} to prove Lemma
\ref{lem:BeamForming}. Note that the cut-set upper bound of
\eqref{eq:CutSetRef} when applied to the Gaussian network yields,
{\small
\begin{align}
\displaystyle
\label{eq:GaussCutSetRef0} 
&\overline{C}=
\max_{\substack{p(\{\Xbf_i\})\\\Qbf:\Qbf_{ii}\leq P,\forall i}} \min_{\Omega\in\Lambda_D}
\left \{ h(\Ybf_{\Omega^c}|\Xbf_{\Omega^c}) - 
h(\Ybf_{\Omega^c}|\Xbf_{\Omega^c},\Xbf_{\Omega}) \right \}\\ \nonumber
&=
\max_{\substack{p(\{\Xbf_i\}),\\\Qbf:\Qbf_{ii}\leq P,\forall i}} \min_{\Omega\in\Lambda_D}
\left \{ h(\Ybf_{\Omega^c}-\breve{\Gbf}_{\Omega^c,\Omega^c}\Xbf_{\Omega^c}|\Xbf_{\Omega^c}) - 
h(\Zbf_{\Omega^c}) \right \}\\ 
\label{eq:GaussCutSetRef} 
&\leq
\max_{\Qbf:\Qbf_{ii}\leq P,\forall i} \min_{\Omega\in\Lambda_D}
\log |\Ibf+\Gbf_{\Omega}\Qbf\Gbf_{\Omega}^*|,
\end{align} }
where $\Gbf_{\Omega}$ represents the network transfer matrix from
transmitting set $\Omega$ to receiving set $\Omega^c$ and 
$\breve{\Gbf}_{\Omega^c,\Omega^c}$ represents the transfer matrix
from set $\Omega^c$ to $\Omega^c$. The maximization
in \eqref{eq:CutSetRef} can be restricted to jointly Gaussian inputs represented
by covariance matrix $\Qbf$ with individual power constraints.
Now, clearly these constraints can be relaxed to the sum-power constraints yielding,
\begin{eqnarray}
\displaystyle
\nonumber
\overline{C} & \leq&
\max_{\Qbf:\Qbf_{ii}\leq P,\forall i} \min_{\Omega\in\Lambda_D}
\log |\Ibf+\Gbf_{\Omega}\Qbf\Gbf_{\Omega}^*|\\
\nonumber  &\leq &
 \min_{\Omega\in\Lambda_D}\max_{\Qbf:\mathrm{tr}(\Qbf)\leq |\Omega|P}
\log |\Ibf+\Gbf_{\Omega}\Qbf\Gbf_{\Omega}^*| \\ \label{eq:GaussCutSetRefSumPwr} &=&  \min_{\Omega\in\Lambda_D} \overline{C}_{\Omega}.
\end{eqnarray}

Now, let us define $\overline{C}_{\Omega}^{iid}$, to be the cut value
for $i.i.d.$ Gaussian inputs, {\em i.e.,} $\Qbf=\Ibf$. More precisely,
from Definition \ref{def:cutSetiid} we have for $p(\{\Xbf_i\})=\prod_i
p(\Xbf_i)$, and $\Xbf_i\sim \mathcal{CN}(0,1)$, {\em i.e.,} i.i.d.,
unit variance Gaussian variables, the cut value evaluated as
\begin{eqnarray}
\label{eq:IIDcutval}
\overline{C}_{i.i.d.} &=&
\min_{\Omega\in\Lambda_D}
\left \{ h(\Ybf_{\Omega^c}|\Xbf_{\Omega^c}) - 
h(\Ybf_{\Omega^c}|\Xbf_{\Omega^c},\Xbf_{\Omega}) \right \} \\ \nonumber
&=&
\min_{\Omega\in\Lambda_D}
\left \{ h(\Ybf_{\Omega^c}-\breve{\Gbf}_{\Omega^c,\Omega^c}\Xbf_{\Omega^c}|\Xbf_{\Omega^c}) - 
h(\Zbf_{\Omega^c}) \right \}\\ \nonumber
&\stackrel{(a)}{=}&\min_{\Omega} 
\underbrace{\log |\Ibf+P\Gbf_{\Omega}\Gbf_{\Omega}^*|}_{\overline{C}_{\Omega}^{iid}}= 
\min_{\Omega} \overline{C}_{\Omega}^{iid},
\end{eqnarray}
where $(a)$ follows because 
$\Ybf_{\Omega^c}-\breve{\Gbf}_{\Omega^c,\Omega^c}\Xbf_{\Omega^c}=\Gbf_{\Omega}\Xbf_{\Omega}+\Zbf_{\Omega^c}$ is independent of $\Xbf_{\Omega^c}$ due to i.i.d. choice of input
distrbutions.

By using \eqref{eq:wfVep}, we get,
\begin{eqnarray}
\label{eq:ConnectMIMOcut}
\overline{C}_{\Omega}-\overline{C}_{\Omega}^{iid}&\leq& 
\frac{|\Omega|}{e} + \min\{|\Omega|,|\Omega^c|\}\leq 2|\mathcal{V}|, \,\,\, \forall \Omega,\\ \nonumber
\mbox{or    }\overline{C}_{\Omega}\ &\leq& \overline{C}_{\Omega}^{iid} + 2|\mathcal{V}|, 
\,\,\, \forall \Omega.
\end{eqnarray}
Since 
$\displaystyle\min_{\Omega} \overline{C}_{\Omega} \leq \min_{\Omega} \overline{C}_{\Omega}^{iid} + 2|\mathcal{V}|$,
we get the claimed result in Lemma \ref{lem:BeamForming}, for the scalar case.

For the case with multiple antennas, we see that for any cut $\Omega$,
the number of degrees of freedom is
$\min\{\sum_{i\in\Omega}M_i,\sum_{i\in\Omega^c}N_i\}$.  Note that, 
$\max_{\Omega}\min\{\sum_{i\in\Omega}M_i,\sum_{i\in\Omega^c}N_i\}\leq
\sum_{i=1}^{|\mathcal{V}|}M_i$ and
$\max_{\Omega}\min\{\sum_{i\in\Omega}M_i,\sum_{i\in\Omega^c}N_i\}\leq
\sum_{i=1}^{|\mathcal{V}|}N_i$ and hence
$\max_{\Omega}\min\{\sum_{i\in\Omega}M_i,\sum_{i\in\Omega^c}N_i\}\leq\min\{\sum_{i=1}^{|\mathcal{V}|}M_i,\sum_{i=1}^{|\mathcal{V}|}N_i\}$
yielding
\begin{equation}
\label{eq:TrivMultAntUB}
\min\{\sum_{i\in\Omega}M_i,\sum_{i\in\Omega^c}N_i\}\leq
\min\{\sum_{i=1}^{|\mathcal{V}|}M_i,\sum_{i=1}^{|\mathcal{V}|}N_i\},\,\,\ \forall \Omega.
\end{equation}
For a trivial upper bound to use in an argument analogous to
\eqref{eq:ConnectMIMOcut}, we can use \eqref{eq:TrivMultAntUB} to see that 
\begin{equation}
\label{eq:TrivMultAntCutBnd}
\overline{C}_{\Omega}\ \leq \overline{C}_{\Omega}^{iid} + 
2\sum_{i=1}^{|\mathcal{V}|}M_i.
\end{equation}

\section{Proof of Lemma \ref{lem:connGaussTrunMIMO}}
\label{app:lemconnGaussTrunMIMO}
We first prove the following two lemmas:


\begin{lemma} \label{lem:connQGaussTrunMIMO}
Let $G$ be the channel gains matrix of a $m \times n$ MIMO system. Assume that there is an average power constraint equal to one at each node. Then for any input distribution $P_{\xbf}$,
\beq  |I(\xbf;[G \xbf+\zbf])- I(\xbf;[G \xbf])| \leq 12n \eeq 
where $\zbf=[z_1, \ldots, z_n]$ is a vector of $n$ i.i.d.  $\mathcal{CN}(0,1)$  random variables.
\end{lemma}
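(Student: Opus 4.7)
The plan is to compare the mutual informations $I(\xbf;[G\xbf+\zbf])$ and $I(\xbf;[G\xbf])$ by sandwiching them via the joint mutual information $I(\xbf;[G\xbf],[G\xbf+\zbf])$ and then controlling each direction with an application of Corollary \ref{cor:condEntropyQ} from Appendix \ref{app:proofLemCondMI} (which was already proved in this appendix and says $H([v+z]\mid[v])\leq 12$ and $H([v]\mid[v+z])\leq 12$ whenever $v$ is continuous complex and $z\sim\mathcal{CN}(0,1)$ is independent of $v$).

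Write $\ybf = G\xbf$ and $\tilde{\ybf} = G\xbf+\zbf$. By the chain rule,
\begin{equation}
I(\xbf;[\tilde{\ybf}]) \leq I(\xbf;[\ybf],[\tilde{\ybf}]) = I(\xbf;[\ybf]) + I(\xbf;[\tilde{\ybf}]\mid[\ybf]) \leq I(\xbf;[\ybf]) + H([\tilde{\ybf}]\mid[\ybf]),
\end{equation}
and by the symmetric computation, $I(\xbf;[\ybf]) \leq I(\xbf;[\tilde{\ybf}]) + H([\ybf]\mid[\tilde{\ybf}])$. So it suffices to upper bound the two conditional entropies $H([\tilde{\ybf}]\mid[\ybf])$ and $H([\ybf]\mid[\tilde{\ybf}])$ by $12n$ each.

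For the first one I would use the fact that the noise components $z_1,\ldots,z_n$ are independent, so conditioning and then subadditivity of entropy under conditioning gives
\begin{equation}
H([\tilde{\ybf}]\mid[\ybf]) \leq \sum_{i=1}^{n} H([\tilde{y}_i]\mid[y_i]) = \sum_{i=1}^{n} H([y_i+z_i]\mid[y_i]),
\end{equation}
where $y_i = (G\xbf)_i$ is a continuous complex random variable (after noting that the scalar quantization $[\cdot]$ acts coordinatewise). Now each term is at most $12$ by Corollary \ref{cor:condEntropyQ}, which yields $H([\tilde{\ybf}]\mid[\ybf])\leq 12n$. The same argument, with the roles of $[\ybf]$ and $[\tilde{\ybf}]$ swapped and using the other inequality in Corollary \ref{cor:condEntropyQ}, gives $H([\ybf]\mid[\tilde{\ybf}])\leq 12n$.

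The only subtlety I anticipate is making sure Corollary \ref{cor:condEntropyQ} is applied correctly at the vector level: strictly it was stated for one scalar $v$ with independent scalar $z$, so I would spell out why independence of the noise components $z_i$ across $i$ (and their independence from $\xbf$, hence from $\ybf$) lets the bound decompose additively across the $n$ receive antennas. Once that is settled, combining the two sandwich inequalities gives $|I(\xbf;[\tilde{\ybf}])-I(\xbf;[\ybf])|\leq 12n$, which is exactly the claim.
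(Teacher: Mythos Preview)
Your proposal is correct and follows essentially the same approach as the paper: both sandwich via the chain rule to reduce to bounding $H([G\xbf+\zbf]\mid[G\xbf])$ and $H([G\xbf]\mid[G\xbf+\zbf])$, then invoke Corollary~\ref{cor:condEntropyQ} coordinatewise to get the $12n$ bound. Your explicit remark about decomposing the vector conditional entropy into a sum over the $n$ receive coordinates (using independence of the $z_i$ from each other and from $\xbf$) makes precise a step the paper leaves implicit.
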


\begin{lemma} \label{lem:connGaussQGaussMIMO}
Let $G$ be the channel gains matrix of a $m \times n$ MIMO system. Assume that there is an average power constraint equal to one at each node. Then for any input distribution $P_{\xbf}$,
\beq  |I(\xbf;G \xbf+\zbf)- I(\xbf;[G \xbf+\zbf])| \leq 7n \eeq 
where $\zbf=[z_1, \ldots, z_n]$ is a vector of $n$ i.i.d. $\mathcal{CN}(0,1)$ random variables.
\end{lemma}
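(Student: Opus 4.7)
The plan is to identify the nonnegative gap in the statement with a conditional mutual information of the quantization residual and then bound each piece separately using Corollary~\ref{cor:condEntropyQ}.

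Set $\ybf = G\xbf + \zbf$, $\hat\ybf = [\ybf]$, and $\tilde\ybf = \ybf - \hat\ybf$; each complex component of $\tilde\ybf$ has real and imaginary parts in $[-1/2,1/2]$. Since $\hat\ybf$ is a deterministic function of $\ybf$, data processing gives $I(\xbf;\hat\ybf) \leq I(\xbf;\ybf)$, so the absolute value in the claim reduces to the nonnegative gap $I(\xbf;\ybf) - I(\xbf;\hat\ybf)$. The bijection $\ybf \leftrightarrow (\hat\ybf,\tilde\ybf)$ together with the chain rule gives
\beq
I(\xbf;\ybf) - I(\xbf;\hat\ybf) = I(\xbf; \tilde\ybf \mid \hat\ybf),
\eeq
so it suffices to prove $I(\xbf; \tilde\ybf \mid \hat\ybf) \leq 7n$.

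I would then expand $I(\xbf; \tilde\ybf \mid \hat\ybf) = h(\tilde\ybf \mid \hat\ybf) - h(\tilde\ybf \mid \hat\ybf, \xbf)$ and bound the two terms separately. The first satisfies $h(\tilde\ybf \mid \hat\ybf) \leq h(\tilde\ybf) \leq 0$ because $\tilde\ybf$ is supported on a set of volume $1$ (each complex coordinate living in a unit-area square). For the second term, the key tool is the discrete-continuous splitting identity $h(Y) = H([Y]) + h(Y - [Y] \mid [Y])$, valid whenever $Y$ has a density; applied conditionally on $\xbf$ it gives
\beq
h(\tilde\ybf \mid \hat\ybf, \xbf) = h(\ybf \mid \xbf) - H(\hat\ybf \mid \xbf) = n\log(\pi e) - H(\hat\ybf \mid \xbf),
\eeq
where I used $h(\ybf \mid \xbf) = h(\zbf) = n\log(\pi e)$ for the $n$-dimensional standard complex Gaussian $\zbf$.

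It remains to upper bound $H(\hat\ybf \mid \xbf) = \sum_{i=1}^n H([G_i\xbf + z_i] \mid \xbf)$. Since conditioning on $\xbf$ is at least as informative as conditioning on $[G_i\xbf]$, Corollary~\ref{cor:condEntropyQ} yields $H([G_i\xbf+z_i]\mid\xbf)\leq 12$ per complex dimension, so $H(\hat\ybf \mid \xbf) \leq 12 n$. Assembling the pieces gives $I(\xbf;\tilde\ybf\mid\hat\ybf) \leq (12-\log(\pi e))n$, which is slightly weaker than claimed; replacing the coarse per-dimension constant $12$ by the explicit sharper evaluation from Lemma~\ref{lem:conEntLem} for this specific quantized-Gaussian setting tightens the combined bound to $7n$. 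The main obstacle is not any single estimate but keeping the per-dimension bookkeeping tight enough to reach $7n$; once the identity $h(Y) = H([Y]) + h(Y - [Y] \mid [Y])$ is in hand, everything else is routine conversion between differential and discrete entropies.
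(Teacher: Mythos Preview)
Your approach is correct and genuinely different from the paper's. The paper does \emph{not} work with the quantization residual $\ybf-[\ybf]$ or the identity $h(Y)=H([Y])+h(Y-[Y]\mid[Y])$. Instead it introduces an auxiliary \emph{dithered} variable $\tilde\ybf=[\ybf]+\ubf$ with $\ubf$ i.i.d.\ uniform on the unit square, invokes data processing $I(\xbf;\ybf)\geq I(\xbf;\hat\ybf)\geq I(\xbf;\tilde\ybf)$, and bounds $I(\xbf;\ybf)-I(\xbf;\tilde\ybf)=h(\ybf\mid\tilde\ybf)+h(\tilde\ybf\mid\xbf)-n\log(\pi e)$ term by term: $h(\ybf\mid\tilde\ybf)\leq n\log(3\pi e)$ because $|\ybf-\tilde\ybf|$ is bounded, and $h(\tilde\ybf\mid\xbf)\leq n\log(\tfrac{11}{3}\pi e)$ by a direct conditional-variance computation. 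This yields $n\log(11\pi e)\approx 6.55n<7n$. The dithering trick lets the paper stay entirely within differential entropies and avoid the mixed discrete/continuous identity; your route is arguably cleaner once that identity is available.

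Your only soft spot is the last sentence. As written you obtain $(12-\log(\pi e))n\approx 8.9n$ and then assert that ``the sharper evaluation from Lemma~\ref{lem:conEntLem}'' closes the gap. That assertion is true, but you should say precisely which inequality you are sharpening: in Lemma~\ref{lem:conEntLem} the bound on $H(y\mid x)$ is smaller than the bound on $H(x\mid y)$ by $\log(2L+1)=\log 3$, whereas Corollary~\ref{cor:condEntropyQ} uses the larger of the two for both directions. Since you only need $H([G_i\xbf+z_i]\mid \xbf)\leq H([v+z]\mid[v])$, which is the $H(y\mid x)$ direction, the per-real-dimension constant drops from $6$ to roughly $4.3$, giving $H(\hat\ybf\mid\xbf)\lesssim 8.6n$ and hence a final gap of about $5.5n<7n$. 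State that explicitly and your proof is complete (and in fact slightly sharper than the paper's).
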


Note that Lemma  \ref{lem:connGaussTrunMIMO} is just a corollary of these two lemmas, which are proved next.
\begin{proof} \textbf{(proof of Lemma \ref{lem:connQGaussTrunMIMO})} \\
First note that
{\small
\begin{eqnarray}
\nonumber I(\xbf;[G \xbf]) & \leq & I(\xbf;[G \xbf+\zbf])+I(\xbf;[G \xbf]|[G \xbf+\zbf]) \\
\nonumber &=& I(\xbf;[G \xbf+\zbf])+ H([G \xbf]|[G \xbf+\zbf])\\
\label{eq:lemmaQGauss1} & \stackrel{\text{(Corollary \ref{cor:condEntropyQ})}}{ \leq} & I(\xbf;[G \xbf+\zbf])+ 12n.
\end{eqnarray}
\begin{eqnarray}
\nonumber I(\xbf;[G \xbf+\zbf]) & \leq & I(\xbf;[G \xbf])+I(\xbf;[G \xbf+\zbf]|[G \xbf]) \\
\nonumber &\leq & I(\xbf;[G \xbf])+ H([G \xbf+\zbf]|[G \xbf])\\
\label{eq:lemmaQGauss2} &  \stackrel{\text{(Corollary \ref{cor:condEntropyQ})}}{ \leq} & I(\xbf;[G \xbf])+ 12n.
\end{eqnarray}}
Now from equations (\ref{eq:lemmaQGauss1}) and (\ref{eq:lemmaQGauss2}) we have
\beq  |I(\xbf;[G \xbf+\zbf])- I(\xbf;[G \xbf])| \leq 12n.  \eeq 
\end{proof}

\begin{proof} \textbf{(proof of Lemma \ref{lem:connGaussQGaussMIMO})} \\
 Define the following random variables:
\begin{eqnarray*}
\ybf &=& G \xbf+\zbf \\
\hat{\ybf}&=&[G \xbf+\zbf] \\
\tilde{\ybf} &=& \hat{\ybf} + \ubf
\end{eqnarray*}
where $\ubf=[u_1, \ldots, u_n]$ is a vector of $n$ i.i.d. complex variables with distribution $\text{uniform}[0,1]$ on both real and complex components, independent of $\xbf$ and $\zbf$.

By the data processing inequality we have $I(\xbf;\ybf) \geq  I(\xbf;\hat{\ybf}) \geq  I(\xbf;\tilde{\ybf})$. Now, note that
\begin{align}
\nonumber & I(\xbf;\ybf)-I(\xbf;\tilde{\ybf}) = h(\ybf)-h(\tilde{\ybf})+h(\tilde{\ybf}|\xbf)-h(\ybf|\xbf)\\
\nonumber &= h(\ybf)-h(\tilde{\ybf})+h(\tilde{\ybf}|\xbf)-n \log \lp  \pi e\rp \\
\nonumber &= h(\ybf|\tilde{\ybf})-h(\tilde{\ybf}|\ybf)+h(\tilde{\ybf}|\xbf)-n \log \lp  \pi e \rp \\
\nonumber &= h(\ybf|\tilde{\ybf})-h(\ubf)+h(\tilde{\ybf}|\xbf)-n \log \lp  \pi e \rp \\
\label{eq:connGQG1} &= h(\ybf|\tilde{\ybf})+h(\tilde{\ybf}|\xbf)-n \log \lp  \pi e \rp
\end{align}
where the last step is true since $h(\ubf)=nh(u_1)=2n \log 1=0$.
Now 
\beq |\text{Re}(y)-\text{Re}(\tilde{y})| \leq \max_{x \in \CC} \lp |[\text{Re}(x)]-\text{Re}(x)| \rp + \max |\text{Re}(u)| =\frac{3}{2}  \eeq
and similarly 
\beq |\text{Im}(y)-\text{Im}(\tilde{y})| \leq \max_{x \in \CC} \lp |[\text{Im}(x)]-\text{Im}(x)| \rp + \max |\text{Im}(u)| =\frac{3}{2}  \eeq
Therefore
{\small
\begin{align}
\nonumber & h(\ybf|\tilde{\ybf}) = h(\ybf-\tilde{\ybf}|\tilde{\ybf})\\
\nonumber & \leq  n \log \lp 2 \pi e \sqrt{\max \lp|\text{Re}(y)-\text{Re}(\tilde{y})| \rp \max \lp|\text{Im}(y)-\text{Im}(\tilde{y})| \rp} \rp\\
\label{eq:connGQG2} &= n \log 3 \pi e.
\end{align}}
For the second term, lets look at the $i$-th element of $\tilde{y}$
\begin{eqnarray}  \tilde{y}_i = [\gbf_i \xbf+z_i]+u_i= \gbf_i \xbf+z_i + \delta(\gbf_i \xbf + z_i)+u_i   \end{eqnarray}
where $\tilde{y}_i$ is the $i$-th component of $\tilde{y}$, $\gbf_i$ is the $i$-th row of $G$, and $\delta(x)=x-[x] $. Clearly $|\text{Re}(\delta(x))|, |\text{Im}(\delta(x))| \leq \frac{1}{2}$  for all $x \in \CC$. Therefore given $x$ the variance of $\tilde{y}_i$ is bounded by
\begin{align}
\nonumber & \var{\text{Re}(\tilde{y}_i)|\xbf} = \var{ \text{Re}(z_i) + \text{Re}(\delta(\gbf_i \xbf+z_i))+ \text{Re}(u_i) }  \\
\nonumber & \leq  \var{\text{Re}(z_i)}+\var{\text{Re}( \delta(\gbf_I \xbf+z_i))|\xbf} +\\
\nonumber & \quad 2\cov{\text{Re}(z_i),\text{Re}(\delta(\gbf_i \xbf +z_i))|\xbf} +\var{\text{Re}(u)}  \\
\nonumber & \leq  \var{\text{Re}(z_i)}+|\max \text{Re}(\delta(.)) |^2 + \\
\nonumber & \quad 2\sqrt{\var{\text{Re}(z_i)}\times |\max \text{Re}(\delta(.))|} +\var{\text{Re}(u_i}) \\
\label{eq:appNum1}&= \frac{1}{2}+\frac{1}{4}+1+\frac{1}{12} =\frac{11}{6}.
\end{align}
Similarly 
\beq \var{\text{Im}(\tilde{y}_i)|\xbf}  \leq \frac{11}{6}. \eeq
Therefore
\begin{eqnarray}
\nonumber h(\tilde{\ybf}|\xbf) &\leq&  \sum_{i=1}^n h(\tilde{y}_i|\xbf) \leq  \sum_{i=1}^n \log 2\pi e \sqrt{|K_{\tilde{y}_i|X}|}  \\ \label{eq:connGQG3} &\stackrel{(\ref{eq:appNum1})} {\leq} & n \log  \frac{11}{3} \pi e .
\end{eqnarray}

Now from Equations (\ref{eq:connGQG1}), (\ref{eq:connGQG2}) and (\ref{eq:connGQG3}) we have
\begin{eqnarray*}
I(\xbf;\ybf)-I(\xbf;\tilde{\ybf}) &\leq &h(\ybf|\tilde{\ybf})+h(\tilde{\ybf}|\xbf)-\frac{n}{2} \log \lp 2 \pi e\rp \\
& \leq & n \log 11 \pi e \approx 6.55n <7n.
\end{eqnarray*}
\end{proof}

\section*{Acknowledgements}
We would like to thank Anant
Sahai for his insightful comments on an earlier draft of this work. In
particular, they motivated the simpler proof of the approximation
theorem presented in this manuscript for Gaussian relay networks.  We
would also like to thank several others for stimulating discussions on
the topic of this paper including C. Fragouli, S. Mohajer, A. Ozgur and
R. Yeung.


%

\begin{biography}[{\includegraphics[width=1in,height=1.25in,clip,keepaspectratio]{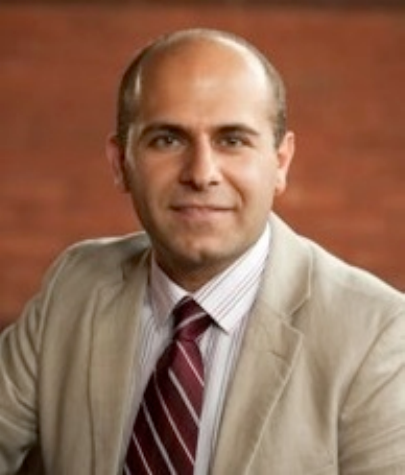}}]{A. Salman Avestimehr} is currently an assistant Professor at the School of Electrical and Computer Engineering at Cornell University, where he has co-founded the Foundations of Information Engineering (FoIE) center. He received his Ph.D. in 2008 and M.S. degree in 2005 in Electrical Engineering and Computer Science, both from the University of California, Berkeley. Prior to that, he obtained his B.S. in Electrical Engineering from Sharif University of Technology in 2003. He was also a postdoctoral scholar at the Center for the Mathematics of Information (CMI) at Caltech in 2008. He has received a number of awards including the NSF CAREER award (2010), the David J. Sakrison Memorial Prize from the U.C. Berkeley EECS Department (2008), and the Vodafone U.S. Foundation Fellows Initiative Research Merit Award (2005).  His research interests include information theory, communications, and networking.
\end{biography}

\begin{biography}[{\includegraphics[width=1in,height=1.25in,clip,keepaspectratio]{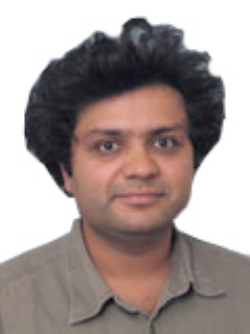}}]{Suhas N. Diggavi} received the B. Tech. degree in electrical
engineering from the Indian Institute of Technology, Delhi, India, and
the Ph.D. degree in electrical engineering from Stanford University,
Stanford, CA, in 1998.

After completing his Ph.D., he was a Principal Member Technical Staff
in the Information Sciences Center, AT\&T Shannon Laboratories, Florham
Park, NJ. After that he was on the faculty at the School of Computer
and Communication Sciences, EPFL, where he directed the Laboratory for
Information and Communication Systems (LICOS).  He is currently a
Professor, in the Department of Electrical Engineering, at the
University of California, Los Angeles.  His research interests include
wireless communications networks, information theory, network data
compression and network algorithms.

He is a recipient of the 2006 IEEE Donald Fink prize paper award, 2005
IEEE Vehicular Technology Conference best paper award and the Okawa
foundation research award.  He is currently an editor for ACM/IEEE
Transactions on Networking and IEEE Transactions on Information
Theory. He has 8 issued patents.
\end{biography}

\begin{biography}[{\includegraphics[width=1in,height=1.25in,clip,keepaspectratio]{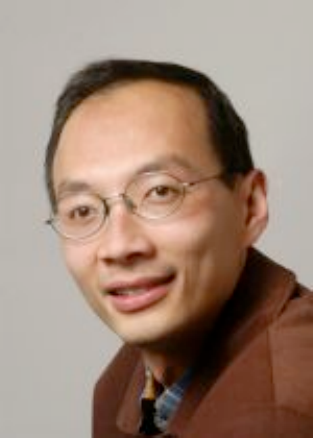}}]{David N. C. Tse} received the B.A.Sc. degree in systems design engineering from University of Waterloo, Canada in 1989, and the M.S. and Ph.D. degrees in electrical engineering from Massachusetts Institute of Technology in 1991 and 1994 respectively. From 1994 to 1995, he was a postdoctoral member of technical staff at A.T. \& T. Bell Laboratories. Since 1995, he has been at the Department of Electrical Engineering and Computer Sciences in the University of California at Berkeley, where he is currently a Professor. He received a 1967 NSERC 4-year graduate fellowship from the government of Canada in 1989, a NSF CAREER award in 1998, the Best Paper Awards at the Infocom 1998 and Infocom 2001 conferences, the Erlang Prize in 2000 from the INFORMS Applied Probability Society, the IEEE Communications and Information Theory Society Joint Paper Award in 2001, the Information Theory Society Paper Award in 2003, and the 2009 Frederick Emmons Terman Award from the American Society for Engineering Education. He has given plenary talks at international conferences such as ICASSP in 2006, MobiCom in 2007, CISS in 2008, and ISIT in 2009. He was the Technical Program co-chair of the International Symposium on Information Theory in 2004, and was an Associate Editor of the IEEE Transactions on Information Theory from 2001 to 2003. He is a coauthor, with Pramod Viswanath, of the text ``Fundamentals of Wireless Communication'', which has been used in over 60 institutions around the world.\end{biography}

\end{document}